\newcommand{\e}{{\mathrm e}}
\newcommand{\Qbold} {{\mathbb Q}}
\newcommand{\Bcal}   {\mathcal{B}}
\newcommand{\Ccal}   {\mathcal{C}}
\newcommand{\Gcal}   {\mathcal{G}}
\newcommand{\Lcal}   {\mathcal{L}}
\newcommand{\Ucal}   {\mathcal{U}}
\newcommand{\Wcal}   {\mathcal{W}}
\newcommand{\nin}{\not\in}
\def\1{{\mathchoice {1\mskip-4mu\mathrm l}      
{1\mskip-4mu\mathrm l}
{1\mskip-4.5mu\mathrm l} {1\mskip-5mu\mathrm l}}}
\newcommand{\indic}[1]{\1_{\{#1\}}}
\newcommand{\indAnimal}{\1^{a}}
\DeclareMathSymbol{\expect}        {\mathord}{AMSb}{"45}
\DeclareMathSymbol{\expec}        {\mathord}{AMSb}{"45}
\DeclareMathSymbol{\prob}        {\mathord}{AMSb}{"50}
\DeclareMathSymbol{\Ibold}        {\mathord}{AMSb}{"49}
\DeclareMathSymbol{\Nbold}        {\mathord}{AMSb}{"4E}
\DeclareMathSymbol{\Rbold}        {\mathord}{AMSb}{"52}
\DeclareMathSymbol{\Zbold}        {\mathord}{AMSb}{"5A}
\newcommand{\w}{\mathbf{w}}
\newcommand{\C}{\mathbb C}
\newcommand{\R}{\Rbold}
\newcommand{\Z}{\Zbold}
\newcommand{\Zd}{\Zbold^d}
\newcommand{\sss}   { \scriptscriptstyle }
\newcommand{\conn}{\longleftrightarrow}
\newcommand{\dbc}{\Longleftrightarrow}
\newcommand{\dbct}[1]     { \stackrel{#1}{\dbc} }
\newcommand{\bb}{\underline{b}}
\newcommand{\tb}{\overline{b}}
\newcommand{\eqarray}   {\begin{eqnarray}}
\newcommand{\enarray}   {\end{eqnarray}}
\newcommand{\lbeq}[1]  {\label{e:#1}}
\newcommand{\refeq}[1] {\eqref{e:#1}}
\newcommand{\eq}{\begin{equation}}
\newcommand{\en}{\end{equation}}
\newcommand{\ben}{\begin{enumerate}}
\newcommand{\een}{\end{enumerate}}
\newcommand{\eqn}[1]{\begin{equation} #1 \end{equation}}
\newcommand{\eqan}[1]{\begin{align} #1 \end{align}}
\newcommand{\nn}{\nonumber}
\newcommand{\nnb}{\nonumber\\}
\renewcommand{\to}{\rightarrow}
\def\Zd{\mathbb{Z}^d}
\def\mA[#1]{ {\bf A}(#1)}
\def\miA[#1]{ {\bf A}^{-1}(#1)}
\def\mD[#1]{{\bf \hat D}(#1)}
\def\mE[#1]{{\bf \hat E}(#1)}
\def\mM[#1]{{\bf \hat M}_1(#1)}
\def\mMa[#1]{{\bf \hat M}_2(#1)}
\def\mE[#1]{{\bf E}_{#1}}
\def\ve[#1]{ {e}_{#1}}
\def\Isupx{\mathcal{J}}
\def\v1{{\vec 1}}
\def\mJ{{\bf J}}
\def\mI{{\bf I}}
\newcommand{\dminanimal}{17}
\newcommand{\dmintree}{16}
\def\gj{g^{\iota}_z}
\def\mPi[#1]{\hat {\bf \Pi}(#1)}
\def\mPiwoz[#1]{\hat {\bf \Pi}(#1)}
\def\mPiM[#1]{\hat {\bf \Pi}_{{\sss M}}(#1)}
\def\vXi[#1]{\vec {\hat \Xi}(#1)}
\def\vXiz[#1]{\vec {\hat \Xi}_z(#1)}
\def\vXiM[#1]{\vec {\hat \Xi}_{\sss M}(#1)}
\def\vRM[#1]{\vec {\hat R}_{\sss M}(#1)}
\def\vPsi[#1]{\vec {\hat \Psi}(#1)}
\def\vPsiT[#1]{\vec {\hat \Psi}^T(#1)}
\def\vPsiz[#1]{\vec {\hat \Psi}_z(#1)}
\def\vPsiMT[#1]{\vec {\hat \Psi}^T_{\sss M}(#1)}
\def\vPsiM[#1]{\vec {\hat \Psi}_{\sss M}(#1)}
\def\hatPhiM[#1]{\hat {\Phi}_{\sss M}(#1)}
\def\lowK{\underline {K}_{\sss \Delta F}}
\def\betaaa{\beta_{\sss \mu}}
\def\lowaf{\underline \beta_{\sss  \alpha,F}}
\def\betaap{\beta_{\sss  |\alpha,\Phi|}}
\def\betaRp{\beta_{\sss  |R,\Phi|}}
\def\lowcp{\underline \beta_{\sss  c,\Phi}}
\def\upcp{\overline \beta_{\sss  c,\Phi}}
\def\aaz{\mu_z}
\def\aabz{\bar {\mu}_z}
\def\aa{\mu}
\def\aab{\bar \mu}
\def\afz{\alpha_{{\sss F},z}}
\def\betadeltaRfzlow{ \underline{\beta}_{\sss \Delta R,F}}
\def\ssss[#1]{{\sss \text{\rm #1}}}
\def\ssc[#1]{{\sss( \text{\rm #1})}}
\def\genC[#1]{\hat {C}_{\mu_z}(#1)}
\def\genG[#1]{\hat {G}_{z}(#1)}
\def\bvtheta[#1]{\vec  {\boldsymbol \theta}_{#1}}
\def\diagRepulsiveLetter{\mathscr}
\def\absT{\mathcal{T}}
\newcommand{\threecolomntable}[3]{
\begin{table}[h!]
\caption{#1}
\label{#2}
{\small\centering
\hspace*{2cm}\begin{tabular}
{|
>{\centering\arraybackslash} m{35mm}|
>{\centering} m{35mm}|
>{\centering\arraybackslash} m{35mm}|
}
  \hline
  Condition & Diagram & Definition \\
  \hline
#3
  \hline
\end{tabular}
}
\end{table}
}
\def\projIndexsetdir[#1]{{\mathsf {IK} }(#1)}
\def\projIndexsetPoints[#1]{{\mathsf {AB} }(#1)}
\def\projIndexsetNumber[#1]{{\mathsf {NM} }(#1)}
\def\projIndexsetPointsTwo[#1]{{\mathsf {X} }(#1)}
\newcommand{\ii}{{\mathrm i}}
\newtheorem{theorem}{Theorem}[section]
\newtheorem{corollary}[theorem]{Corollary}
\newtheorem{lemma}[theorem]{Lemma}
\newtheorem{prop}[theorem]{Proposition}
\newtheorem{definition}[theorem]{Definition}
\numberwithin{equation}{section}
\numberwithin{theorem}{section}
\newtheorem{remark}[theorem]{Remark}
\title{NoBLE for lattice trees and lattice animals:\\
\iflongversion
Extended version
\fi}
\author{
        Robert Fitzner\thanks{Department of Mathematics and
        Computer Science, Eindhoven University of Technology,
        5600 MB Eindhoven, The Netherlands.
        {\tt math@fitzner.nl},{\tt rhofstad@win.tue.nl}}
        \and
        Remco van der Hofstad$^*$
}
\begin{document}
\maketitle
\begin{abstract}
We study lattice trees (LTs) and animals (LAs) on the nearest-neighbor lattice $\Zd$  in high dimensions. We prove that LTs and LAs display mean-field behavior above dimension $\dmintree$ and $\dminanimal$, respectively. Such results have previously been obtain by Hara and Slade in sufficiently high dimensions. The dimension above which their results apply was not yet specified. We rely on the non-backtracking lace expansion (NoBLE) method that we have recently developed. The NoBLE makes use of an alternative lace expansion for LAs and LTs that perturbs around non-backtracking random walk rather than simple random walk, leading to smaller corrections. The NoBLE method then provides a careful computational analysis that improves the dimension above which the result applies. Universality arguments predict that the upper critical dimension, above which our results apply, is equal to $d_c=8$ for both models, as is known for sufficiently spread-out models by the results of Hara and Slade mentioned earlier.

The main ingredients in this paper are  (a) to derive a non-backtracking lace expansion for the LT and LA two-point functions;  (b) to bound the non-backtracking lace-expansion coefficients, thus showing that our general NoBLE methodology can be applied; and (c) to obtain sharp numerical bounds on the coefficients. Our proof is complemented by a computer-assisted numerical analysis that verifies that the necessary bounds used in the NoBLE are satisfied.
\iflongversion
This extended version includes the definition of all bounding diagrams, see Appendix C.
\fi
\end{abstract}

\tableofcontents

\section{Introduction}
\label{sec-intro}
\subsection{Motivation}
\label{sec-motiv}

Lattice trees (LTs) and lattice animals (LAs) are models for branched polymers, where excluded volume creates a self-avoidance interaction between different pieces (vertices or bonds), while the branching nature corresponds to polymers whose building blocks can have covalent bonds to more than two other building blocks. They are the branching equivalents of self-avoiding walks (see \cite{MadSla93} for a detailed introduction, and a brief introduction to LTs and LAs as well). The study of LTs and LAs in low dimensions is quite hard and few results exist. Exceptions are the beautiful relation between  LTs in dimension $d$ and a hard-core lattice gas in dimension $d+2$, as predicted by Parisi and Sourlas \cite{ParSou81} and shown for a continuum model by Brydges and Imbrie \cite{BryImb03} (see also \cite{KenWin09} for a simple and elegant proof). As a result, also LTs and LAs in other dimensions have been investigated.

Like many statistical physics models, LTs and LAs are expected to have a so-called {\em upper critical dimension}, above which their behavior should be similar to a simpler model having less intricate interactions. For LTs and LAs, the upper critical dimension is believed to be $d_c=8$, while this simpler model is branching random walk (BRW). BRW has gained enormous popularity in the probability community, and its critical behavior can be proved rigorously. See for example Perkins \cite{Perk02} for super-process limits of BRWs. The main tool to prove that LTs and LAs show mean-field behavior is the so-called {\em lace expansion}, a perturbation expansion that compares LTs and LAs to BRWs. Many results have so far been proved for LTs, including the identification of some of their critical exponents by Hara and Slade \cite{HarSla92c} (which also applies to LAs), scaling limits of lattice trees of a given number of vertices by Derbez and Slade \cite{DerSla97,DerSla98}, and the structure of lattice trees that are conditioned to have long paths in them by Holmes \cite{Holm08b} (see also \cite{HofHolPer15} for a recent tightness result). Almost all of these results apply to the technically simpler setting of {\em spread-out} LTs and LAs. In this paper, our goal is to study nearest-neighbor LTs and LAs above the upper critical dimension $d_c=8$. Nearest-neighbor models are the simplest models to formulate, which makes them quite popular.

The \emph{lace expansion} was first used by Brydges and Spencer in 1985 \cite{BrySpe85} to prove mean-field behavior for weakly self-avoiding walk. Since then, it has been applied (strictly) self-avoiding walks (SAW), percolation, and lattice trees and animals \cite{HarSla90a,HarSla90b,HarSla92a,Slad87}. It has become one of the key tools (and often the only tool available) to prove mean-field behavior of statistical mechanical models above the upper critical dimension of these models. More recent extensions include oriented percolation \cite {HofSla02, NguYan93,NguYan95}, the contact process \cite{HofSak04,Saka01}, and the Ising model \cite{Saka07}.

The lace expansion is a \emph{perturbative method} in nature, and therefore, applications of the lace expansion typically necessitate a small parameter. This small parameter tends to be the degree of the underlying base graph. This is the reason why it is often convenient to work with so-called \emph{spread-out} models, where long- but finite-range connections over a (large) distance $L$ are possible, as then the degree of the base graph can be made large independently of the dimension. Thus, results can often be proved to hold, for $L$ sufficiently large, all the way up to the critical dimension of the corresponding model. The simplest, and most often studied, version of these models is, however, the nearest-neighbor model. For the nearest-neighbor model, the degree of a vertex is $2d$. Taking $2d$ large in order to obtain a small perturbation parameter then necessitates to take $d$ large in order to prove mean-field results. This leads to suboptimal results in terms of the dimension above which the results hold. A seminal exception is SAW, where Hara and Slade \cite{HarSla92b} have proved that $d\geq 5$ is sufficient for their perturbation analysis to hold, using a computer-assisted method and a detailed analysis of the perturbation terms arising through the lace expansion. For SAW, mean-field results are expected to be false in dimension $d=4$. See the work using the renormalization group to identify the logarithmic corrections to mean-field behavior by Bauerschmidt, Brydges and Slade in \cite{BauBrySla15b, BauBrySla15a} and the references therein.

In this paper, we apply lace-expansion methodology to nearest-neighbor LTs and LAs. Hara and Slade \cite{HarSla90b} derived a lace expansion for LTs and LAs, and used it to prove mean-field behavior for sufficiently spread-out LTs and LAs in the optimal $d>8$, while their results also apply to the nearest-neighbor setting in \emph{sufficiently high} dimensions. Which dimension suffices was not answered by Hara and Slade, and we revisit this question.  We derive a so-called {\em non-backtracking lace expansion} (NoBLE) and build upon the techniques of \cite{HarSla92a} to prove mean-field behavior for LTs in $d\geq \dmintree$ and for LAs for $d\geq \dminanimal$. This extends our recent work deriving general conditions under which such a NoBLE can be applied \cite{FitHof13b}, and its application to percolation for $d>10$  \cite{FitHof13d}.

The main differences of the NoBLE to the classical lace expansion method are that (i) we perturb around {\em non-backtracking random walk,} rather than around simple random walk, so that the lace-expansion coefficients are significantly smaller than in the classical lace expansion as used by Hara and Slade; (ii) our bounds on the lace-expansion coefficients are {\em matrix-based,} so as to profit maximally from the fact that loops present in the lace-expansion coefficients consist of at least four bonds in the NoBLE; and (iii) we use and provide Mathematica notebooks that implement the bounds, and that can be downloaded from the first author's website. We prove that the LT and LA two-point functions satisfy an {\em infra-red bound} that describes its singularity for small Fourier variables. Such an infrared bound immediately implies the finiteness of the so-called {\em square diagram}, and thus implies the existence of certain critical exponents, that take on mean-field values.

Our proof is computer-assisted, and relies on two key ingredients, that were also used to analyze percolation in high-dimensions \cite{FitHof13d}:
\begin{enumerate}
\item[(I)] Rigorous upper bounds on various simple random walk integrals, as first proved by Hara and Slade in \cite{HarSla92a}. This part of the analysis is {\em unchanged} compared to the Hara-Slade proof for SAWs. The crucial reason why we can use these integrals is that the non-backtracking random walk Green's function can be explicitly described in terms of the simple random walk Green's function. Our analysis requires us to compute $140$ such integrals, corresponding to convolutions of random walk Green's functions with itself at various values in $\Z^d$. We further need to compute the number of simple random walks of lengths up to $10$ ending at various points in $\Z^d$, as well as the number of related self-avoiding walks and bond-self-avoiding walks.  These bounds are performed in one Mathematica notebook;
\item[(II)] Two additional Mathematica notebooks. The first implements the computations in our general approach to the non-backtracking lace expansion (NoBLE) in \cite{FitHof13b}, and the second computes the rigorous bounds on the lace-expansion coefficients provided in the present paper, both for LTs as well as for LAs. These notebooks do nothing else than implement the bounds proved here and in \cite{FitHof13b}, and rely on nothing but (many) multiplications, additions, as well as diagonalisations of two five-by-five matrices. These computations {\em could} be performed by hand, but the use of the notebooks tremendously simplifies them.
\end{enumerate}

Let us remark that in dimensions $d\geq 100$, say, we require only the values of $5$ integrals, and no computations concerning the number of random walks ending at various locations. The additional computations are used to obtain a sharper bound on the perturbations. We next introduce the nearest-neighbor LTs and LAs that we investigate, and state our main results. Our methods initially apply to any {\em specific} dimension, but we also derive bounds that are monotone in the dimension and can be applies to all $d\geq 30$ at once. Thus, we can check the necessary dimensions up to $d=29$ one by one, and then analyse $d\geq 30$ in one go.

\subsection{The model}
\label{sec-mod}
A nearest-neighbor \emph{lattice tree} on $\Zd$ is a finite connected set of nearest-neighbor bonds that contains no cycles (closed loops).
A nearest-neighbor \emph{lattice animal} on $\Zd$ is a finite connected set of nearest-neighbor bonds, which may or may not contain cycles.
Although a tree/animal $A$ is defined as a set of bonds, for $x\in\Zd$, we write $x\in A$ to denote that $x$ is an element of a bond of $A$. The number of bonds in $A$ is denoted by $|A|$. We define $t^{\sss(a)}_n(x)$ and $t^{\sss(t)}_n(x)$ to be the number of LAs and LTs, respectively, that consist of exactly $n$ bonds and contain the origin and $x\in\Zd$.
We study LAs and LTs using the \emph{one-point function} $g_z$ and the \emph{two-point function} $\bar G_z$, that are defined by
	\begin{align}
	\lbeq{defLTLAOnePoint}
	&g^{\sss(a)}_z=\bar G^{\sss(a)}_z(0)=\sum_{A\colon A\ni 0}z^{|A|}, &g^{\sss(t)}_z={\bar G}^{\sss(t)}_z(0)=\sum_{T\colon 0\in T}z^{|T|},\\
	\lbeq{defLTLATwoPoint}
	&\bar G^{\sss(a)}_z(x)=\sum_{n=0}^\infty t^{\sss(a)}_n(x) z^n=\sum_{A\colon 0,x\in A}z^{|A|},&\bar G^{\sss(t)}_z(x)=\sum_{n=0}^\infty t^{\sss(t)}_n(x) z^n=\sum_{x\in\Zd}\sum_{T\colon 0,x\in T}z^{|T|},
	\end{align}
where we sum over animals $A$ and trees $T$ respectively, and we choose $z$ such that the above sums make sense. In what follows, we drop the superscripts $(a)$ and $(t)$ when we speak about both models simultaneously. Only when we discuss specific statements for LAs or LTs separately, the superscripts will be shown. We define the \emph{susceptibility} of LAs and LTs by
	\begin{align}
	\lbeq{defLTLAsusceptibility}
	\chi(z)=\sum_{x}\sum_{A\colon 0,x\in A} z^{|A|}
	\end{align}
and denote the radii of convergence of these sums by $z_c=z_c(d)$. As for SAW, $1/z_c$ describes the exponential growth of the number of LTs/LAs as $n$ grows.

We use the notation $\bar G_z$ for the two-point function, and use a normalized two-point function $G_z(x)=\bar G_z(x)/g_z$ for our analysis. We give the reason for this in Section \ref{sec-defineCoeff}. Intuitively, this is due to the fact that it is convenient to have that $G_z(0)=1,$ which is true for the normal two-point functions of SAW and percolation, but not for those of LTs and LAs. Since $\bar G_z(0)=g_z$ by definition, we thus prefer to work with $G_z(x)=\bar G_z(x)/g_z$ instead. Dealing properly with such one-point functions is a major ingredient of our proof.

\paragraph{Fourier transforms.}
Our analysis makes heavy use of Fourier analysis. Unless specified otherwise, $k$ always denotes an arbitrary element from the Fourier dual of the discrete lattice, which is the torus $[-\pi,\pi]^d$. The Fourier transform of a function $f\colon\Z^d\to\C$ is defined by
    \eqn{
    \lbeq{def-FourTrans}
    \hat f(k)=\sum_{x\in\Zd}f(x)\,\e^{\ii k\cdot x}.
    }
For two summable function $f,g\colon \Zd \mapsto \Rbold$, we let $f\star g$ denote their \emph{convolution}, i.e.,
    \eqn{
    \lbeq{definition-convolution}
    (f\star g)(x)= \sum_{x\in\Zd} f(y)g(x-y).
    }
We note that the Fourier transform of $f\star g$ is given by the product of $\hat f$ and $\hat g$. In particular, let $D(x)=\indic{|x|=1}/(2d)$ be the nearest-neighbor random walk transition probability, so that
    \eqn{
    \lbeq{def-Dhat}
    \hat{D}(k)=\frac{1}{2d}\sum_{x\colon |x|=1} \e^{\ii k\cdot x} = \frac{1}{d} \sum_{i=1}^d \cos(k_i).
    }

\paragraph{Critical exponents.}
It is believed that the asymptotic behavior of lattice trees and lattice animals can be described by a small number of {\em critical exponents}.
These critical exponents describe the growth of $t_n$ as $n\to\infty$ and $\bar G_z$ as $z$ approaches its radius of convergence $z_c$.
Further, they describe the typical length-scale of a lattice tree/animal, as characterized by the \emph{average radius of gyration} $R_n$
and the \emph{correlation length of order two} $\xi_2$, that are defined by
	\eqn{
	\lbeq{genTreestatement-R}
	R_{n}^2= \frac {1} {2 \sum_{x\in\Zd} t_n(x)} \sum_{x\in\Zd} |x|^2 t_n(x)
	= \frac {1} {2 \hat{t}_n(0)} \sum_{x\in\Zd} |x|^2 t_n(x),
	}
and
	\eqn{
	\lbeq{genTreestatement-xi}
	\xi_2(z)^2=\frac {\sum_{x}\sum_{A\ni 0,x}|x|^2z^{|A|}}{\sum_{x}\sum_{A\ni 0,x}z^{|A|}}
	=\frac{\sum_{n\geq 0} R_{n}\hat{t}_n(0) z^n}{\sum_{n\geq 0}\hat{t}_n(0) z^n},
	}
where $|x|=(\sum_{i=1}^d x_i^2)^{1/2}$ denotes the Euclidean norm of $x\in \Zd$.

It is believed that there exist $\gamma,\eta,\nu$ such that
	\begin{align}
	\lbeq{genTreestatement1}
	\chi(z)&\sim \frac {1} {(1-z/z_c)^\gamma},
	\qquad
	\text{ and } \qquad \xi_2(z)\sim \frac {1} {(1-z/z_c)^\nu} \text{ as } z\nearrow z_c,\\
	\lbeq{genTreestatement2}
	\bar G_{z_c}(x)&\sim \frac {1} {|x|^{2-d-\eta}},\qquad\text{ and }\qquad \hat {\bar G}_{z_c}(k)\sim |k|^{\eta-2} ,
	\end{align}
as $|x|\to\infty$ and $k\to 0$. These exponents are believed to be {\em universal}, in the sense that they do not depend on the detailed structure of the lattice. In particular, it is believed that the values of $\gamma,\eta$ and $\nu$ are the same in the nearest-neighbor setting, that we consider here, and in the spread-out setting. 
The symbol $\sim$ in \refeq{genTreestatement1}-\refeq{genTreestatement2} can have several meanings,
and we shall always mean that the critical exponent exists in the \emph{bounded-ratio sense}, meaning that there exist $0<c_1<c_2<\infty$ such that, uniformly for $z\in(0,z_c)$,
    \eqn{
    \lbeq{def-bounded-ratios}
    \frac {c_1} {(1-z/z_c)^\gamma}\leq \chi(z) \leq \frac {c_2} {(1-z/z_c)^\gamma}.
    }
As for SAW it is believed that the critical exponents $\gamma, \nu, \eta$ are related by the Fisher relation $\gamma=(2-\eta)\nu$ and it has been proven in \cite{BovFroGla86b} that $\gamma\geq 1/2$ in all dimensions. Further, it is believed that there exists an upper critical dimension $d_c$ such that the critical exponents of LT/LA in $d>d_c$ take their mean-field values, which are $\gamma=1/2, \nu=1/4, \eta=0$. These values correspond to the mean-field model of LTs and LAs, studied in \cite{BorChaHofSla99}.
It is conjectured in \cite{LubIsa79} that the upper critical dimensions of LTs and LAs are $d_c=8$. In \cite{HsuNadGra05}, site LAs and LTs are simulated and the conjectured values of the critical exponents consistent with these simulations (see \cite{HsuNadGra05} or \cite{Jens08} for the precise definition of site LAs and LTs). These computations are only done for site LTs, but as the critical exponents are expected to be universal, the values should be the same as for the bond tree/animal discussed here.
This conjecture is supported by rigorous work in \cite{HarTas87}, where it is shown that if the ``square diagram'' is finite at the critical point, as is believed for $d>8$, then the critical exponent $\gamma$ is at most $1/2$. Hara and Slade \cite{HarSla90b} proved that mean-field behavior holds for LTs and LAs in the \emph{spread-out setting} with $L$ sufficiently large and $d>8$, or for nearest-neighbor LTs and LAs in sufficiently high dimensions. The main aim of this paper is to put an exact value to what `sufficiently large' means. The authors have learnt through private communication with Takashi Hara that Hara and Slade expected the classical lace expansion to only be successful in dimensions {\em much} larger than $d_c=8$.

\subsection{Results}
\label{sec-result}
The main result of this paper is the following \emph{infrared bound:}
\begin{theorem}[Infrared bound]
\label{thm-IRB}
For nearest-neighbor lattice trees and lattice animals in dimensions $d\geq \dmintree$ and $d\geq \dminanimal$, respectively, there exist constants $\bar{A}_1(d)$ and $\bar{A}_2(d)$ such that
    	\eqn{
    	\lbeq{IRB}
    	|\hat{\bar{G}}_z(k)|\leq \frac{\bar{A}_1(d)}{\chi(z)^{-1}+z[1-\hat{D}(k)]}
	\quad\qquad
	\text{and}
	\quad\qquad
	|\hat{\bar{G}}_z(k)|\leq \frac{\bar{A}_2(d)}{1-\hat{D}(k)},
    	}
uniformly for $z\leq z_c(d)$.
\end{theorem}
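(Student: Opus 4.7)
The proof will follow the non-backtracking lace expansion (NoBLE) framework established in \cite{FitHof13b}, specialized to the LT/LA setting. The plan has four main ingredients.

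First, I would derive a matrix-valued NoBLE identity for the LT/LA two-point function. Rather than expand $\bar G_z$ around the simple random walk Green's function as in the classical Hara--Slade approach, I would expand around the non-backtracking random walk (NBW) Green's function. This requires enriching $\bar G_z$ into a vector of ``directed'' two-point functions that record the direction of the first step (and possibly additional information about the local structure at the origin), leading to an identity schematically of the form
\begin{equation}
(\mathbf{I}-\hat{\mathbf{J}}_z(k)-\hat{\boldsymbol{\Xi}}_z(k))\,\hat{\mathbf{G}}_z(k)=\hat{\boldsymbol{\Psi}}_z(k),
\end{equation}
where $\hat{\mathbf{J}}_z(k)$ is the NBW step operator, and $\hat{\boldsymbol{\Xi}}_z,\hat{\boldsymbol{\Psi}}_z$ are the NoBLE coefficients. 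Because $\bar G_z(0)=g_z\neq 1$ for LTs and LAs, the proper treatment requires working with the normalized $G_z(x)=\bar G_z(x)/g_z$ and carefully tracking $g_z$ factors throughout the expansion; this is the point flagged as a major ingredient in Section \ref{sec-mod}.

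Second, I would produce diagrammatic bounds on the NoBLE coefficients $\hat{\boldsymbol{\Xi}}_z$ and $\hat{\boldsymbol{\Psi}}_z$. Each expansion coefficient is a sum over mutually ``interacting'' sub-trees or sub-animals, and can be bounded by products of $G_z$'s evaluated at various space points, i.e.\ by (generalized) square-type diagrams. The NBW structure ensures that every internal loop contains at least four bonds, which is what allows the estimates to be sharp in moderate dimensions. Third, I would close a bootstrap/forbidden-region argument: define a continuous function $f(z)$ that captures the ratio of $|\hat{\bar G}_z(k)|$ to the two infrared profiles on the right-hand side of \refeq{IRB}, show $f(0)$ is small, and use the NoBLE identity together with the diagrammatic bounds to prove an implication of the form ``$f(z)\leq K_1$ implies $f(z)\leq K_2<K_1$'' for all $z<z_c$. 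Continuity of $\hat{\bar G}_z(k)$ on $[0,z_c)$ then forces $f(z)\leq K_2$ throughout, which is exactly \refeq{IRB} with $\bar{A}_1(d),\bar{A}_2(d)$ extracted from the constants.

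Fourth, the diagrammatic bounds must be evaluated numerically and shown to be small enough for the bootstrap to close at the claimed dimensions $d\geq\dmintree$ (trees) and $d\geq\dminanimal$ (animals). For this I would use (a) Hara--Slade's rigorous bounds on SRW integrals from \cite{HarSla92a}, together with the explicit representation of the NBW Green's function in terms of the SRW one; (b) enumerations of short SAWs and bond-self-avoiding walks; and (c) the Mathematica notebooks described in ingredients (I)--(II) of Section \ref{sec-motiv}, which implement the bounds of \cite{FitHof13b} and of the present paper and perform the diagonalizations of the small matrices involved. The truly delicate part, and the one I expect to be the main obstacle, is (b)--(d) together: producing \emph{sharp} diagrammatic bounds on the NoBLE coefficients, taking full advantage of the matrix structure and the four-bond-loop improvement, to beat down the critical dimension from the very large values the classical lace expansion would require to the announced values $\dmintree$ and $\dminanimal$. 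Dimensions are then checked one at a time for $d\leq 29$, with monotone-in-$d$ bounds covering all $d\geq 30$ in one stroke.
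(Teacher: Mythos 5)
Your proposal matches the paper's strategy in every essential respect: a matrix-valued NoBLE identity perturbing around the NBW Green's function with careful $g_z$-normalization, diagrammatic bounds exploiting the four-bond loop structure, a forbidden-region bootstrap argument closed by continuity, and a computer-assisted numerical verification using Hara--Slade's SRW integrals plus monotone-in-$d$ bounds for $d\geq 30$. The only differences are cosmetic (the paper writes the expansion as the explicit ratio \refeq{lace-exp-eq} rather than a linear system, uses three separate bootstrap functions $f_1,f_2,f_3$ rather than one combined $f$, and initializes the bootstrap at $z_I$ defined in \refeq{DefinitionOfzI} rather than at $z=0$), and none of these affects the validity of the argument.
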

\medskip
Note that Theorem \ref{thm-IRB} implies that $\eta=0$ in Fourier-space bound \refeq{genTreestatement2}, as $[1-\hat{D}(k)]\approx |k|^2$ for small $k$.
We prove Theorem \ref{thm-IRB} first for $\hat{G}_z(k)$ as this is the central quantity in the NoBLE analysis.
The analysis used for the proof  for $\hat{G}_z(k)$  creates bounds on the amplitudes $A_1(d)$ and $A_2(d)$ replacing $\bar{A}_1(d)$ and $\bar{A}_2(d)$. Since $\bar{G}_z(x)=g_z G_z(x)$, these bounds, together with an upper bound on $g_z$, imply upper bounds on $\hat{\bar{G}}_z(x)$ ,$\bar{A}_1(d)$ and $\bar{A}_2(d)$.

Our methods require a detailed analysis of both the critical value as well as the amplitudes $A_1(d)$ and $A_2(d)$. As a result, we obtain the following numerical bounds:

\begin{theorem}[Bounds on the critical value and amplitude]
\label{thm-bds-crit}
For nearest-neighbor lattice trees the following upper bounds hold:
{\rm \begin{center}
\begin{tabular}{|r||  c|c | c|c |c|c|}
  \hline
  $d$             & 16& 17   & 18 & 19  & 20  & 30 \\
  \hline
    $\max\{\bar{A}_1(d),\bar{A}_2(d)\}\leq$      &3.872 &3.501 & 3.37    &3.284 & 3.222 & 2.973 \\
    $\frac 1 {g_{z_c}} \max\{\bar{A}_1(d),\bar{A}_2(d)\}\leq$      &1.293&1.182 & 1.147   & 1.125& 1.11& 1.052  \\
    $g_{z_c}\leq$      &2.9963 & 2.9619 & 2.9383 & 2.9196 &  2.9043 & 2.8268  \\
  $(2d-1)g_{z_c} z_c(d)\leq$       &1.1023& 1.0897 &1.081 & 1.0741&  1.0685  & 1.04  \\
 \hline
\end{tabular}
\end{center}
}
\noindent
For nearest-neighbor lattice animals the following upper bounds hold:
{\rm \begin{center}
\begin{tabular}{|r|| c|  c|c |c|c|}
  \hline
  $d$              & 17 & 18 & 19  & 20  & 30 \\
  \hline
    $\max\{\bar{A}_1(d),\bar{A}_2(d)\}\leq$       &3.587& 3.41  &3.309 & 3.24 &  2.975\\
    $\frac 1 {g_{z_c}} \max\{\bar{A}_1(d),\bar{A}_2(d)\}\leq$      & 1.21&  1.158   &1.132 &1.114 &1.053  \\
    $g_{z_c}\leq$       & 2.9721&2.9454  &2.925 &  2.9086& 2.8277   \\
  $(2d-1)g_{z_c} z_c(d)\leq$      & 1.0934&1.0836 &1.0761& 1.07 & 1.0403  \\
 \hline
\end{tabular}
\end{center}
}
\end{theorem}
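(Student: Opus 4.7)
The plan is to deduce Theorem \ref{thm-bds-crit} as the numerical output of the same NoBLE machinery that will prove Theorem \ref{thm-IRB}, rather than as an independent calculation. In fact, the two theorems should be proved in one sweep: the bootstrap/forbidden-region argument from \cite{FitHof13b}, when fed the bounds on the non-backtracking lace-expansion coefficients derived in this paper, simultaneously yields an infrared bound for the normalized two-point function $\hat{G}_z(k)$ and explicit numerical values for the amplitudes $A_1(d),A_2(d)$ bounding it, together with a controlled interval containing $z_c(d)$. All of the entries in the two tables are then obtained from these outputs by elementary post-processing.

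First, I would run the NoBLE analysis on $\hat{G}_z(k)$. Concretely, the Schwinger--Dyson-type equations for the matrix-valued two-point function give
\[
\hat{G}_z(k)=\bigl[\mathbf{I}-z\hat{\mathbf{D}}(k)\hat{\mathbf{J}}(k)-\hat{\boldsymbol{\Xi}}_z(k)\bigr]^{-1}\hat{\boldsymbol{\Psi}}_z(k),
\]
after which the standard bootstrap argument of \cite{FitHof13b} shows that, provided the diagrammatic bounds on the NoBLE coefficients $\hat{\boldsymbol{\Xi}}_z,\hat{\boldsymbol{\Psi}}_z$ (and their derivatives and discrete Laplacians) are below certain explicit thresholds at the initial point of the bootstrap, the bounds propagate to all $z<z_c$. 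This step produces upper bounds $A_1(d),A_2(d)$ on the two versions of the infrared bound for $\hat{G}_z(k)$, together with an upper bound $\overline{z}_c(d)$ on $z_c$ and hence on $(2d-1)z_c$. I expect to verify the threshold conditions by implementing the diagrammatic bounds of the present paper in Mathematica and combining them with the 140 random-walk integrals, the short SAW counts and the short bond-SAW counts from Hara--Slade \cite{HarSla92a}, exactly as in the percolation application \cite{FitHof13d}.

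Next, I would convert statements about $\hat{G}_z$ to statements about $\hat{\bar{G}}_z$. Since $\bar{G}_z(x)=g_z G_z(x)$, an upper bound on $g_{z_c}$ gives $\bar{A}_j(d)\leq g_{z_c}A_j(d)$, so I need an independent upper bound on $g_{z_c}$. This will come from the NoBLE identity for the one-point function, which relates $g_z$ to the zero-Fourier-mode values of the NoBLE coefficients; inverting this identity at the bootstrap values yields the numerical bound on $g_{z_c}$ listed in the third row of each table. Combined with $\overline{z}_c(d)$ this yields the final row on $(2d-1)g_{z_c}z_c(d)$, and combined with the $A_j(d)$ it yields the first two rows. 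All of these manipulations are arithmetic once the bootstrap has closed; they are carried out in the two auxiliary Mathematica notebooks, the first of which implements the general NoBLE bookkeeping of \cite{FitHof13b} and the second the LT/LA-specific diagrammatic estimates of this paper.

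Finally, I would handle the infinite range $d\geq 30$ uniformly. For this I would, as announced at the end of Section~\ref{sec-motiv}, derive every diagrammatic bound in a form that is monotone non-increasing in $d$, so that the Mathematica verification at $d=30$ implies the corresponding bound at every $d\geq 30$; the entries in the $d=30$ column of the tables are then valid upper bounds for all larger $d$ as well. The remaining dimensions $d\in\{\dmintree,\dots,29\}$ (resp.\ $d\in\{\dminanimal,\dots,29\}$) are treated one by one by evaluating the notebooks at each integer $d$ separately. The genuinely hard step, and the place where failure would occur, is the bootstrap closing in the lowest dimensions $d=\dmintree$ for trees and $d=\dminanimal$ for animals: there the effective expansion parameter is largest, and the matrix-based bounds that exploit that loops in NoBLE diagrams contain at least four bonds must be sharp enough to beat the bootstrap threshold. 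Much of the subsequent sections of the paper is devoted precisely to producing diagrammatic estimates tight enough to make this threshold comparison go through at these minimal dimensions.
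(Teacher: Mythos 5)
Your overall plan---close the NoBLE bootstrap, read off the infrared-bound amplitudes, and post-process to fill in the tables, with a single monotone-in-$d$ verification covering all $d\geq 30$---is the same framework the paper uses. Where it goes wrong is in the specific derivation of the $g_{z_c}$ and $(2d-1)g_{z_c}z_c$ entries, and the gap is substantive.

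You propose to get an upper bound on $g_{z_c}$ from "the NoBLE identity for the one-point function, which relates $g_z$ to the zero-Fourier-mode values of the NoBLE coefficients," and then to combine it with an upper bound $\overline{z}_c(d)$ to control $(2d-1)g_{z_c}z_c$. No such one-point-function identity is established in this paper, and the route would in any case be circular, since the normalization $G_z=\bar G_z/g_z$ is what makes $g_z$ enter the NoBLE coefficients in the first place. What the paper actually does is simpler and goes in the opposite order. The bootstrap function $f_1(z)=\max\{(2d-1)zg_z,\,c_\mu(2d-1)zg_z^\iota\}$ is, by definition, already a bound on the product $(2d-1)zg_z$; once the bootstrap closes with $f_1(z)\leq \Gamma_1$ for all $z\in[z_I,z_c)$ and the left-continuity at $z_c$ is invoked, the fourth row of each table is just $(2d-1)z_cg_{z_c}\leq \Gamma_1$, with no need for a separate upper bound on $z_c$.

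The ingredient you are missing for the third row is the independently proven \emph{lower} bound $z_c\geq z_I\geq (2d-1)^{-1}\e^{-1}$, established in Lemma~\ref{lem-LBzI} via a Poisson-branching-process count of $n$-bond lattice trees and animals (Lemma~\ref{lemmaAnalysisLABound}). Combining this with $(2d-1)z_cg_{z_c}\leq\Gamma_1$ gives
\[
g_{z_c}\;\leq\;\frac{\Gamma_1}{(2d-1)z_c}\;\leq\;\e\,\Gamma_1,
\]
which is the source of the third row. The first two rows then follow, as you correctly anticipate, from $\bar G_z=g_zG_z$ together with $f_2(z)\leq\Gamma_2$: one obtains $[1-\hat D(k)]\hat{\bar G}_z(k)\leq g_z\tfrac{2d-2}{2d-1}\Gamma_2\leq \e\,\Gamma_1\Gamma_2$. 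In short, replace your imagined one-point-function expansion by the pair (bootstrap bound on $(2d-1)zg_z$, lower bound on $z_c$ from Lemma~\ref{lem-LBzI}), and your argument will match the paper's.
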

\bigskip

These upper bounds are a by-product of our analysis and we explain in Section \ref{sec-part-d} how they are computed using the Mathematica notebooks available at \cite{FitNoblePage}.
We are able to improve the numerical bounds in Theorem \ref{thm-bds-crit}, with reasonable efforts, but yielding an insignificant improvement. These efforts, however, would not reduce the minimal dimensions above which our results apply. It would, probably, be possible to improve these optimal dimensions slightly, but the effort required for this would be enormous. In particular, we do not expect that our results can be improved to {\em all} $d>d_c=8$ without using substantially new ingredients and insights that are way beyond our proof.

Since
    	\eqan{
    	\square(z_c)&=(\bar{G}_{z_c}\star \bar{G}_{z_c}\star \bar{G}_{z_c}\star  \bar{G}_{z_c})(0)
	=\lim_{z\nearrow z_c} (\bar{G}_{z}\star \bar{G}_{z}\star \bar{G}_{z}\star  \bar{G}_{z})(0)\\
	&=\lim_{z\nearrow z_c} \int_{(-\pi,\pi)^d} \frac {dk} {(2\pi)^d} \hat{\bar{G}}_{z}(k)^4,\nn
    	}
the infrared bound in Theorem \ref{thm-IRB} (which is uniform in $z<z_c$) immediately implies that the square condition holds, and by \cite{HarTas87} thus also that $\gamma=1/2$:

\begin{corollary}[Square condition and critical exponents]
\label{cor-TC-crit-exp}
For nearest-neighbor lattice trees and lattice animals in dimensions $d\geq \dmintree$ and $d\geq \dminanimal$, respectively, the square condition holds, i.e., $\square(z_c)<\infty$. As a result, the susceptibility and radius of gyration critical exponents $\gamma$ and $\nu$ exist in the bounded-ratio sense and takes on the mean-field values $\gamma=1/2$ and $\nu=1/4$.
\end{corollary}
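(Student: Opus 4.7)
The plan is to derive both statements directly from the infrared bound of Theorem \ref{thm-IRB}, combined with standard random walk estimates and two external results.

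First I would prove the finiteness of the square diagram. By Parseval's identity, the convolution identity for the Fourier transform, and the fact that $\bar{G}_z$ is real, non-negative and symmetric,
\begin{equation}
\square(z)=(\bar{G}_z\star\bar{G}_z\star\bar{G}_z\star\bar{G}_z)(0)=\int_{[-\pi,\pi]^d}\hat{\bar{G}}_z(k)^4\,\frac{dk}{(2\pi)^d}.
\end{equation}
The second infrared bound in \refeq{IRB} then gives
\begin{equation}
\square(z)\leq \bar{A}_2(d)^4\int_{[-\pi,\pi]^d}\frac{1}{[1-\hat{D}(k)]^4}\,\frac{dk}{(2\pi)^d}.
\end{equation}
The right-hand side is independent of $z$, so the estimate is uniform in $z<z_c$ and passes to $z=z_c$ by monotone convergence. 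It remains to check that this random walk integral is finite. Near $k=0$, the expansion $1-\hat{D}(k)=\tfrac{1}{2d}|k|^2+O(|k|^4)$ gives an integrand of order $|k|^{-8}$, which is integrable in a neighborhood of the origin precisely when $d>8$; away from the origin, $1-\hat{D}(k)$ is bounded away from $0$ on $[-\pi,\pi]^d$ except at $k=0$ (there is no issue at $k=(\pi,\ldots,\pi)$ because $\hat{D}$ hits $-1$ there, not $+1$). Since our dimensional thresholds $d\geq\dmintree$ for LTs and $d\geq\dminanimal$ for LAs both comfortably exceed $8$, the integral is finite and therefore $\square(z_c)<\infty$.

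Next I would deduce $\gamma=1/2$. Hara and Taylor \cite{HarTas87} proved that the finiteness of the square diagram at $z_c$ forces $\gamma\leq 1/2$ in the appropriate bounded-ratio sense; the matching lower bound $\gamma\geq 1/2$ holds unconditionally by Bovier, Fr\"ohlich and Glaus \cite{BovFroGla86b}, as already noted in the text. Combining these yields $\gamma=1/2$ in the bounded-ratio sense \refeq{def-bounded-ratios}. For $\nu=1/4$, I would compare the two forms of the infrared bound. Setting $k=0$ in the first inequality of \refeq{IRB} gives $\chi(z)\leq \bar{A}_1(d)\chi(z)$ (consistency), and more substantively, expanding $\hat{\bar{G}}_z(k)$ around $k=0$ and using $\hat{\bar{G}}_z(0)=\chi(z)$ together with \refeq{genTreestatement-xi} yields
\begin{equation}
\xi_2(z)^2=-\frac{1}{\chi(z)}\Delta_k\hat{\bar{G}}_z(k)\Big|_{k=0}.
\end{equation}
Comparing the small-$k$ behaviour of the two sides of \refeq{IRB} produces matching upper and lower bounds of the form $c_1\chi(z)\leq z\xi_2(z)^2\leq c_2\chi(z)$ (the lower bound coming from a standard second-moment comparison with the random walk Green's function, which is already controlled in our analysis). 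Since $\gamma=1/2$ and $z\to z_c>0$, this gives $\xi_2(z)\asymp (1-z/z_c)^{-1/4}$, i.e., $\nu=1/4$ in the bounded-ratio sense.

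The only subtle point is the lower bound that delivers $\nu\geq 1/4$: the infrared upper bound alone gives $\xi_2(z)\leq C(1-z/z_c)^{-1/4}$, but the matching lower bound requires either a diagrammatic lower estimate on the second moment of $\bar{G}_z$ or an application of the second moment method to the tree/animal two-point function. In practice this is the standard argument (identical to the one used for SAW and percolation in the lace expansion literature), and it is already implicit in the bounds on $\hat{G}_z$ obtained during the NoBLE analysis leading to Theorem \ref{thm-IRB}; the only real work is book-keeping to convert bounds on $G_z=\bar{G}_z/g_z$ into bounds on $\bar{G}_z$ using the explicit control of $g_{z_c}$ from Theorem \ref{thm-bds-crit}.
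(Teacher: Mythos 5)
Your treatment of the square condition and of $\gamma=1/2$ coincides with the paper's: Fourier inversion, the second bound in \refeq{IRB}, integrability of $[1-\hat{D}(k)]^{-4}$ for $d>8$, and the external implications $\square(z_c)<\infty\Rightarrow\gamma\leq 1/2$ from \cite{HarTas87} together with $\gamma\geq 1/2$ from \cite{BovFroGla86b}. That part is fine.

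The gap is in $\nu=1/4$, and you have in fact inverted which half is easy. A pointwise \emph{upper} bound on $\hat{\bar{G}}_z(k)$ controls $\chi(z)-\hat{\bar{G}}_z(k)=\sum_x[1-\cos(k\cdot x)]\bar{G}_z(x)$ from \emph{below}; choosing $k$ with $z[1-\hat{D}(k)]\asymp\chi(z)^{-1}$ (so $|k|^2\asymp\chi(z)^{-1}$) and using $1-\cos(k\cdot x)\leq\tfrac12|k|^2|x|^2$ yields $\sum_x|x|^2\bar{G}_z(x)\geq c\,\chi(z)^2$, i.e.\ the \emph{lower} bound $\xi_2(z)^2\geq c\,\chi(z)$. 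What \refeq{IRB} cannot give is the \emph{upper} bound $\sum_x|x|^2\bar{G}_z(x)\leq C\chi(z)^2$: a pointwise bound on $\hat{\bar{G}}_z$ does not control the second derivative $-\Delta_k\hat{\bar{G}}_z(k)|_{k=0}$, so your assertion that ``the infrared upper bound alone gives $\xi_2(z)\leq C(1-z/z_c)^{-1/4}$'' is false, and this is exactly the half your proposal leaves unproved (your ``comparison of the two sides of \refeq{IRB}'' has only one usable side). The paper obtains the missing bound not from the infrared bound but from the NoBLE rewrite $\hat G_z(k)=\hat\Phi_z(k)/(1-\hat F_z(k))$: differentiating twice at $k=0$ (odd derivatives vanish by symmetry) gives the exact identity
\begin{align*}
\xi_2(z)^2=\chi(z)\,\frac{\sum_x|x|^2F_z(x)}{\hat\Phi_z(0)}+\frac{\sum_x|x|^2\Phi_z(x)}{\hat\Phi_z(0)},
\end{align*}
and the finiteness (and strict positivity) of these ratios, uniformly in $z<z_c$, is supplied by the weighted diagrammatic bounds from the bootstrap (the function $f_3$ and the bounds on $\sum_x|x|^2\Xi_z(x)$, etc.), with Abel's theorem used to pass to $z=z_c$; then $\nu=1/4$ follows from $\gamma=1/2$. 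So completing your argument requires the finiteness of the weighted lace-expansion sums relative to $\chi(z)^2$ --- genuine additional input beyond Theorem \ref{thm-IRB}, not book-keeping with $g_{z_c}$.
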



We next move to some further extensions of our results, starting with {\em sharp} asymptotics of the Fourier transform of the critical two-point function $\hat{\bar{G}}_{z_c}(k)$ for $k$ small:

\begin{theorem}[Two-point function $k$-space asymptotics]
\label{thm-k-space}
For nearest-neighbor LTs and LAs in dimensions $d\geq \dmintree$ and $d\geq \dminanimal$, respectively, there exists a constant $\bar{A}(d)$ such that, as $|k|\rightarrow 0$,
    	\eqn{
    	\lbeq{k-space}
    	\hat{\bar{G}}_{z_c}(k)=\frac{\bar{A}(d)}{|k|^2}(1+o(1)).
    	}
\end{theorem}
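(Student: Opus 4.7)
The plan is to extract the sharp $|k|^{-2}$ asymptotic at criticality from the NoBLE representation of the two-point function. The NoBLE machinery of this paper and of \cite{FitHof13b} yields an identity of the schematic form $\hat{\bar G}_z(k) = g_z\,\hat{\mathcal N}_z(k)/\hat{\mathcal D}_z(k)$, where the denominator combines $1-z\hat D(k)$ with Fourier transforms of the (matrix-valued) NoBLE coefficients $\hat{\Pi}_z$. By the reflection and permutation symmetries of $\Zd$, both $\hat{\mathcal N}_z$ and $\hat{\mathcal D}_z$ are even in each $k_i$, and since $\chi(z_c)=\infty$ we have $\hat{\mathcal D}_{z_c}(0)=0$, while Theorem \ref{thm-IRB} bounds $\hat{\mathcal D}_{z_c}(k)$ below by a constant multiple of $1-\hat D(k)$.

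The central analytic input is the weighted-diagram bound
\[
\sum_{x\in\Zd}|x|^2\,|\Pi_{z_c}(x)|<\infty,
\]
together with the analogous bound for every coefficient entering $\hat{\mathcal N}_z$; these make $\hat{\mathcal D}_{z_c}$ and $\hat{\mathcal N}_{z_c}$ twice continuously differentiable at $k=0$. I would prove it by inserting $|x|^2$ into the diagrammatic bounds on $\Pi_z$ established in the main body of the paper: one distributes $|x|^2\leq C\sum_i|y_i|^2$ over the increments $y_i$ along the segments of each diagram, attaches the weight $|y_i|^2$ to one two-point-function leg, and estimates the resulting weighted diagram by Parseval combined with the infrared bound \refeq{IRB}. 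The relevant Fourier integrals are essentially second-moment versions of the square-type diagrams already controlled here, and converge in the dimensions $d\geq \dmintree$ (LT) and $d\geq \dminanimal$ (LA) precisely because the non-backtracking structure of the NoBLE forces every internal loop to contain at least four bonds, yielding the necessary extra decay. Passage from $z<z_c$ to $z=z_c$ is by dominated convergence, exploiting the $z$-uniformity of Theorem \ref{thm-IRB}.

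Given this smoothness, Taylor expansion at $k=0$ produces $\hat{\mathcal D}_{z_c}(k)=\tfrac12 \bar\sigma^2(z_c)|k|^2+o(|k|^2)$ with $\bar\sigma^2(z_c)>0$ (positivity from the lower bound on $\hat{\mathcal D}_{z_c}$ implicit in \refeq{IRB}), and $\hat{\mathcal N}_{z_c}(k)=\hat{\mathcal N}_{z_c}(0)+o(1)$. Substituting into the NoBLE identity yields
\[
\hat{\bar G}_{z_c}(k) = \frac{g_{z_c}\hat{\mathcal N}_{z_c}(0)+o(1)}{\tfrac12\bar\sigma^2(z_c)|k|^2+o(|k|^2)} = \frac{\bar A(d)}{|k|^2}\,(1+o(1)),
\]
with $\bar A(d):=2g_{z_c}\hat{\mathcal N}_{z_c}(0)/\bar\sigma^2(z_c)$; the bounds of Theorem \ref{thm-bds-crit} guarantee that both numerator and denominator constants are finite and strictly positive.

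The main obstacle is the matrix-valued second-moment bound on $\hat{\Pi}_{z_c}$. The NoBLE coefficients split into many diagrammatic contributions, and each must accommodate the extra weight $|x|^2$ without destroying convergence at the minimal dimensions $\dmintree$ and $\dminanimal$. The redeeming structural feature is exactly the non-backtracking constraint exploited throughout this paper, which guarantees enough loop length to push the relevant integrals below their singular threshold; bookkeeping the resulting weighted diagrams through the same matrix framework used for the unweighted bounds is therefore expected to be laborious but not qualitatively harder than the estimates already carried out.
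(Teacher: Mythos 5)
Your proposal is correct and follows essentially the same route as the paper: write $\hat{\bar G}_{z_c}(k)$ via the NoBLE identity as $\hat\Phi_{z_c}(k)/(1-\hat F_{z_c}(k))$, use the lattice symmetries to kill the first-order terms, and use the absolute convergence of $\sum_x|x|^2$ times the NoBLE coefficients (which in the paper is already available from the weighted diagrammatic bounds needed for the $f_3$ bootstrap, rather than requiring a fresh Parseval argument) to Taylor-expand the denominator to second order at $k=0$, with $1-\hat F_{z_c}(0)=0$ following from the divergence of $\chi(z)$ and positivity of the quadratic coefficient from the infrared bound.
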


We next investigate the asymptotics in $x$-space of $\bar{G}_{z_c}(x)$ for $x$ large, using the results of Hara in \cite{Hara08}:

\begin{theorem}[Two-point function $x$-space asymptotics]
\label{thm-x-space}
For nearest-neighbor LTs and LAs in dimensions  $d\geq 27$, 
and with the constant $\bar{A}(d)$ as in Theorem \ref{thm-k-space}, as $|x|\rightarrow \infty$,
    	\eqn{
    	\lbeq{x-space}
    	\bar{G}_{z_c}(x)=\frac{a_d\bar{A}(d)}{|x|^{d-2}}(1+O(|x|^{-2/d})),
	\qquad
	\text{with}
	\qquad
	a_d=\frac{d\Gamma(d/2-1)}{2\pi^{d/2}}.
    	}
\end{theorem}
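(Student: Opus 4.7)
The proposal is to deduce Theorem \ref{thm-x-space} by invoking the general Fourier-inversion theorem of Hara \cite{Hara08}, which converts an asymptotic of the form $\hat{\bar G}_{z_c}(k)=\bar A(d)|k|^{-2}(1+o(1))$ into a precise polynomial-decay asymptotic in $x$-space with error $O(|x|^{-(d-2)-2/d})$, provided that the two-point function satisfies a suitable convolution equation arising from a lace expansion and that the expansion coefficients enjoy sufficiently strong pointwise decay. The $k$-space input is exactly Theorem \ref{thm-k-space}, so the only remaining work is to verify the pointwise hypotheses of \cite{Hara08} for the NoBLE coefficients of LTs and LAs.

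First I would recall the NoBLE equation derived earlier in the paper, which expresses $\hat{\bar G}_z$ as a rational function of $\hat D$ and the Fourier transforms of the non-backtracking lace-expansion coefficients. Rewriting this in $x$-space gives a convolution identity of the form $\bar G_z = F_z \star (\delta + \Pi_z \star \cdots)$ with $F_z$ the (non-backtracking) random-walk Green's function. Hara's theorem applies to such equations once one controls the decay of $\Pi_{z_c}(x)$ (and of the other coefficients in the NoBLE bookkeeping, treated as a finite matrix of kernels) in the form
\begin{equation}
|\Pi_{z_c}(x)|\leq \frac{C}{(1+|x|)^{d+2+\varepsilon}}
\end{equation}
for some $\varepsilon>0$, together with analogous bounds on $z$-derivatives. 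These pointwise bounds are exactly the kind produced by the diagrammatic estimates used to bound $\Pi_z$ in the NoBLE: each diagram is a finite convolution of two-point functions evaluated at points whose pairwise displacements sum (in an appropriate sense) to $|x|$, and the infrared bound of Theorem \ref{thm-IRB} together with Theorem \ref{thm-k-space} yields the pointwise bound $\bar G_{z_c}(x)\leq C|x|^{-(d-2)}$ by Fourier inversion. Convolutions of such kernels give the required $|x|^{-(d+2+\varepsilon)}$ decay once enough factors are present in each diagram.

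The heart of the argument, and the reason the hypothesis sharpens to $d\geq 27$, is the diagrammatic verification of the last display uniformly in $z\leq z_c$. In the NoBLE framework each coefficient $\Pi_z^{(N)}$ at expansion order $N$ consists of diagrams containing at least four internal two-point functions between any pair of ``external'' vertices, and an elementary power-counting argument shows that a convolution of $m$ kernels of decay $|x|^{-(d-2)}$ decays like $|x|^{-(d-2)m+d(m-1)}=|x|^{-(2m-d(m-1)-\ldots)}$, which exceeds $d+2+\varepsilon$ once $d$ is large enough and enough constituent kernels are present. I would carry this out by: (i) isolating the worst-case diagrams for small $N$, (ii) using the Hara--Slade convolution bound (as already used to control the squarelike diagrams in Sections earlier in the paper) to dominate the remaining $N$, and (iii) summing over $N$ using the same small-parameter that drives the NoBLE convergence. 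The derivative bounds in $z$ are obtained by the standard trick of differentiating the finite diagrams and absorbing the extra factor into one additional two-point function, at the cost of reducing the decay exponent by at most $2$, which still leaves $|x|^{-(d+\varepsilon)}$ and suffices for Hara's theorem.

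The main obstacle is not conceptual but quantitative: pushing Hara's decay hypothesis through in every diagram requires the convolution-of-Green's-functions estimates to remain strictly summable with an extra $|x|^{2+\varepsilon}$ weight, and this weighted summability only begins to hold around $d\simeq 27$ given the explicit numerical control our Mathematica notebooks provide on the small-$N$ diagrams. For $d\geq 27$ the monotone-in-$d$ bounds used in Theorem \ref{thm-bds-crit} are comfortably satisfied with the extra polynomial weight, so a routine adaptation of the diagrammatic bounds already proved in the paper combined with Theorems \ref{thm-IRB} and \ref{thm-k-space} supplies the hypotheses of \cite{Hara08}, yielding \eqref{e:x-space} with the stated constant $a_d\bar A(d)$ (the factor $a_d=d\Gamma(d/2-1)/(2\pi^{d/2})$ being precisely the Fourier transform of $|k|^{-2}$ on $\R^d$).
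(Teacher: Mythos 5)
Your broad plan (invoke Hara's $x$-space theorem from \cite{Hara08} and verify its hypotheses diagrammatically) is the right one, but two things go materially wrong.

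First, you propose to feed the \emph{NoBLE} convolution identity into Hara's framework. The paper deliberately does not do this: Hara's theorem in \cite{Hara08} is set up for the scalar \emph{classical} lace expansion, so the paper reverts to the classical lace-expansion coefficients for exactly this step and verifies (via Hara's private communication) that the classical expansion converges geometrically — concretely that the supremum of the open repulsive square diagrams $\sup_{x\neq 0}\sum_{w,u,y}\diagRepulsiveLetter{S}_{1,0,0,0}(w,u,y,x)$ is strictly below $\tfrac12$, together with finiteness of the weighted triangles $T^{\sss(2,0)}$. The matrix-valued NoBLE bookkeeping would not plug into \cite[Proposition 1.3]{Hara08} without a nontrivial adaptation of that proposition, which the paper does not attempt. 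So the convolution identity you write down, $\bar G_z = F_z \star (\delta + \Pi_z \star \cdots)$ with $F_z$ the non-backtracking Green's function, is not the one actually used; the switch back to the classical expansion (and the resulting loss of the non-backtracking improvements) is the reason these bounds are carried out with different numerics from the rest of the paper.

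Second, and more seriously, you attribute the restriction $d\geq 27$ to a quantitative limitation in the diagrammatic estimates (``weighted summability only begins to hold around $d\simeq 27$''). This is incorrect: the paper's numerical verification of the hypotheses of \cite{Hara08} actually succeeds already for $d\geq 17$ (checked dimension by dimension for $17\leq d\leq 29$ and via the monotonicity arguments of Appendix \ref{sec-monoton-inD} for $d\geq 30$). The $d\geq 27$ restriction is an internal technical limitation of Hara's own proof — it is used to seed a recursion in \cite[Proof of (1.47)]{Hara08} and is independent of the quality of our diagram bounds. Your argument would thus try to derive a restriction from the wrong source, and an accurate power-counting along the lines you sketch would in fact kick in well below $d=27$, leaving you unable to explain why the theorem is stated only for $d\geq 27$.
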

\medskip

The restriction $d\geq 27$ originates as a condition in \cite{Hara08}, where it is used to start a recursion in \cite[Proof of (1.47), assuming Lemmas 1.8 and 1.9]{Hara08}. It is unclear to us whether this restriction can be avoided. We {\em can}, however, prove the convergence of the classical lace expansions for $d\geq \dmintree$ for LTs and $d\geq \dminanimal$ for LAs. Whether this is enough to obtain the $x$-space asymptotics in Theorem \ref{thm-x-space} is open.

In Section \ref{sec-overview}, we give an overview of the proof of Theorem \ref{thm-IRB}. We next discuss relations to the literature.

\subsection{Relations to the literature}
\label{sec-disc}

\paragraph{Critical exponents.} For spread-out lattice trees and animals, as well as for nearest-neighbor lattice trees in sufficiently high (unspecified) dimensions, the statements that $\gamma=1/2$ and $\nu=1/4$ have been improved to fixed $n$, as in \cite[Theorem 1.1]{HarSla92c}. This means that there exist constants $A_\gamma, A_\nu $such that
	\eqn{
	\sum_{x\in\Zd} t_n(x)=A_{\gamma} n^{\gamma-1}(1+o(1)),
	\qquad
	R_n =A_{\nu} n^{\nu}(1+o(1)).
	}
Here, the critical exponent $\gamma-1$ is sometimes called $\theta$. Such results are proved by using Tauberian Theorems on $\chi(z)=\sum_{n\geq 0} \sum_{x\in\Zd} t_n(x)z^n$ and $\sum_{n\geq 0}\sum_{x\in\Zd} |x|^2 t_n(x) z^n$, where $z\in \mathbb{C}$ with $|z|<z_c$. This requires careful Taylor expansions of these generating functions, with explicit control over the error terms. Possibly our results could be extended in this direction as well, but this would require a more detailed and extensive study of the analytic properties of $\chi(z)$ and $\hat{G}_z(k)$, which we refrain from.

Other interesting critical exponents exist. For example, the {\em correlation length} $\xi(z)$ is defined as
	\eqn{
	\lbeq{xi-def}
	\xi(z)^{-1}=-\lim_{x\rightarrow \infty}\frac{1}{|x|} \log G_z(x).
	}
It is predicted that $\xi(z)$ scales similarly as $\xi_2(z)$, so that, in particular,
	\eqn{
	\xi(z) \sim (z_c-z)^{-1/4},
	}
corresponding to $\nu=1/4$ also for this correlation length.

\paragraph{Asymptotics of $z_c$ and $g_{z_c}$.}
The asymptotic value of $z_c$ and $g_{z_c}=\bar G_{z_c}(0)$, are shown in \cite{MirSla12} to be given by
	\begin{align}
	\lbeq{conj-zc}
	z_c&=\e^{-1}\left[\frac 1 {2d} +\frac {\frac 3 2}{(2d)^2}+\frac {\frac {115}{24}- \indAnimal\frac 1 2 \e^{-1}}{(2d)^3}\right]+o\big((2d)^{-3}\big),\\
	g_{z_c}&=\e\left[\frac 1 {2d} +\frac {\frac 3 2}{(2d)^2}+\frac {\frac {263}{24}- \indAnimal\frac 1 2 \e^{-1}}{(2d)^2}\right]+o\big((2d)^{-2}\big),
	\lbeq{conj-gzc}
	\end{align}
where the factor $\indAnimal$ is $1$ for LA and $0$ for LT. The rigorous bound on the error terms was derived using the lace expansion. These asymptotic expansions have also been studied extensively in the physics literature, where results were obtained up to the order of $1/(2d)^6$, but with non-rigorous estimates on the error term. We refer the reader to \cite{MirSla12} for an overview of such results and a discussion of the $1/d$ expansions of the critical value $z_c$ of SAW, percolation, LT, and LA.

\paragraph{Relations to super-processes.} There has been a considerable effort to identify super-process limits of lattice trees and lattice animals, with a focus on lattice trees. Derbez and Slade \cite{DerSla97, DerSla98} show that the scaling limits of lattice trees of a {\em fixed size} scale to a finite measure called {\em integrated super-Brownian excursion} (ISE). For such a scaling limit, a lattice tree of size $n$ is seen as a random counting measure $X_n$, where $X_n(E)$ is the number of vertices of a random lattice tree that are in $E\subseteq \R^d$. When rescaling space by $n^{1/4}$ (which is related to $\nu=1/4$ in Corollary \ref{cor-TC-crit-exp}), and rescaling mass by $1/n$ (so that the total mass becomes 1, this random measure converges to ISE. Holmes and collaborators, instead consider the super-Brownian motion limits of lattice trees, proving convergence in finite-dimensional distributions  \cite{Holm08b, HofHol13} (see also \cite{HolPer07}) and tightness \cite{HofHolPer15}, in the context of spread-out lattice trees above 8 dimensions. Extensions of such results to nearest-neighbor lattice trees, as well as to lattice animals, are of great interest.

\section{Overview of the proof}
\label{sec-overview}
In this section, we give a brief overview of how we derive our main results. This overview is similar in spirit to that in \cite[Section 2]{FitHof13d}, where nearest-neighbor percolation was investigated. We repeat part of this discussion, as it provides the key ideas in our proof.

\subsection{Philosophy of the proof}
\label{sec-phil-proof}
We reduce the proof of Theorem \ref{thm-IRB} to three key propositions and a computer-assisted proof. These ingredients involve
	\begin{enumerate}[(a)]
  	\item the derivation of the non-backtracking lace expansion (NoBLE) in Proposition \ref{prop-LE};
  	\item the diagrammatic bounds on the NoBLE coefficients in Proposition \ref{prop-bds-LEC};
  	\item the analysis presented in \cite{FitHof13b} to obtain the infrared bound in Theorem \ref{thm-IRB} for all $z\leq z_c$, as stated in Proposition \ref{prop-analysis-is-success}; and
	\item a computer-assisted proof to verify the numerical conditions arising in the analysis in \cite{FitHof13b}.
	\end{enumerate}
\label{parts-proof}
The philosophy behind the proof of these parts is discussed in Sections \ref{sec-part-a}-\ref{sec-part-d}, respectively. In Sections \ref{sec-Expansion} and \ref{secBounds} we prove parts (a) and (b), respectively. In Section \ref{sec-part-c}, we explain how we obtain part (c) using the analysis of \cite{FitHof13b}.
The computer-assisted proof of part (d) is performed in \cite{FitNoblePage} using the results of this paper.
For the analysis in the generalized setting \cite{FitHof13b} we state assumptions, which we verify in  Section \ref{sec-gzAssumption} and in Section \ref{secBoundsSummary}.
 Part (d) is explained in detail in Section \ref{sec-part-d}, where we describe how the necessary computations are performed in several Mathematica notebooks. The mathematics behind the notebooks is explained in \cite{FitHof13b}.

 Let us now explain how our main results are proved. The notebooks contain a routine that verifies whether the numerical assumption on the expansions are satisfied and thereby whether the analysis of \cite{FitHof13b} yields the infrared bounds for a given dimension. The results in \cite{FitHof13b}, combined with a successful verification using the notebooks, thus prove Proposition \ref{prop-analysis-is-success}, which in turn implies Theorem \ref{thm-IRB}. See also Figure \ref{Struct-NoBLE} for a visual description of the proof of Theorem \ref{thm-IRB}.  Theorem \ref{thm-bds-crit} will follow from the numerical bounds derived in the proof. The proofs of Corollary \ref{cor-TC-crit-exp}, Theorem \ref{thm-k-space} and Theorem \ref{thm-x-space} are completed in Section \ref{sec-x-space}, using the results of Hara \cite{Hara08}, as well as an improvement of the analysis there by Hara that we learned about in private communication \cite{Hara08ext}, which yields an improved numerical condition for the convergence of the classical-lace expansion coefficients, allowing us to conclude the $x$-space asymptotics in Theorem \ref{thm-x-space} for $d\geq 27$. We close this section with a discussion of our method and results.

\vskip-1cm

\begin{center}
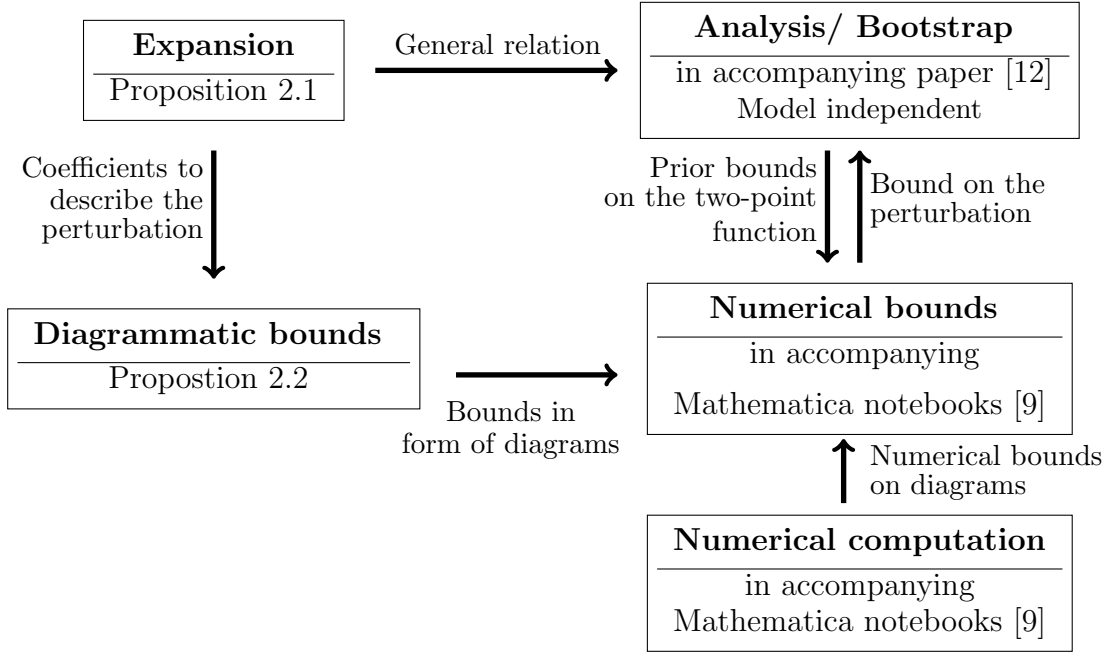
\begin{figure}[ht]
\begin{tikzpicture} [scale=0.85]

  \draw [->,line width=2] (2.5,0) to (6.25,0);
  \draw [->,line width=2] (3.75,-4.75) to (6.25,-4.75);
  \draw [<-,line width=2] (10,-1.25) to (10,-3);
  \draw [->,line width=2] (9.5,-1.25) to (9.5,-3);
  \draw [->,line width=2] (0,-1.25) to (0,-3.25);

  \draw [<-,line width=2] (9.75,-5.75) to (9.75,-6.75);

  \node[above]   at(4.4,0.1)     {General relation};
  \node[below]   at(4.6,-5)   {Bounds in};
  \node[below]   at(4.6,-5.5)   { form of diagrams};
  \node[left]   at(0,-1.5)        {Coefficients to};
  \node[left]   at(0,-2)        {describe the};
  \node[left]   at(0,-2.5)        {perturbation};
  \node[right]   at(10,-1.75)      {Bound on the};
  \node[right]   at(10,-2.25)      {perturbation};

  \node[left]   at(9.5,-1.5)      {Prior bounds};
  \node[left]   at(9.5,-2)       {on the two-point };
  \node[left]   at(9.5,-2.5)      {function};
  \node[right]   at(10,-6)      {Numerical bounds};
  \node[right]   at(10,-6.5)      {on diagrams};

\node [draw] {
\begin{tabular}{c}
{\bf \large Expansion} \\[1mm]
\hline
{\large Proposition \ref{prop-LE}}
\end{tabular}};

\node [draw]  at (10,0){
\begin{tabular}{c}
{\bf \large Analysis/ Bootstrap } \\[1mm]
\hline
\vspace{2mm}
{\large in accompanying paper \cite{FitHof13b}}\\[-2mm]
{\normalsize Model independent}
\end{tabular}};

\node [draw]  at (10,-4.5){
\begin{tabular}{c}
{\bf \large Numerical bounds } \\[1mm]
\hline
\vspace{2mm}
{\large in accompanying}\\
{\large Mathematica notebooks \cite{FitNoblePage}}
\end{tabular}};

\node [draw]  at (10,-8){
\begin{tabular}{c}
{\bf \large Numerical computation} \\[1mm]
\hline
{\large in accompanying}\\
{\large Mathematica notebooks \cite{FitNoblePage}}
\end{tabular}};

\node [draw]  at (0,-4.5){
\begin{tabular}{c}
{\bf \large Diagrammatic bounds } \\[1mm]
\hline
{\large Propostion \ref{prop-bds-LEC} }
\end{tabular}};
\end{tikzpicture}
\caption{Structure of the non-backtracking lace expansion.}
\label{Struct-NoBLE}
\end{figure}
\end{center}

\vskip-1cm

In the following, we explain the philosophy behind the non-backtracking lace expansion (NoBLE), and start by describing simple random walk and non-backtracking walk:
\paragraph{Simple random walk.}
An $n$-step \emph{nearest-neighbor simple random walk} (SRW) on $\Zd$ is an ordered $(n+1)$-tuple $\omega=(\omega_0,\omega_1,\omega_2,\dots, \omega_n)$, with $\omega_i\in\Zd$ and $|\omega_i-\omega_{i+1}|_1=1$, where $|x|_1=\sum_{i=1}^d |x_i|$. Unless stated otherwise, we take $\omega_0=(0,0,\dots,0)$. Then, for $n\geq 1$,
	\eqan{
	p_n(x)&=\#\{\text{$n$-step SRWs with $\omega_n=x$}\} =\sum_{y\in\Zd} 2d D(y)p_{n-1}(x-y)\\
    	&=2d (D \star p_{n-1})(x)=(2d)^{n} D^{\star n}(x),\nn
    	\lbeq{SRWRecScheme}
	}
with $D$ being defined in \refeq{def-Dhat} and $f^{\star n}$ denotes the $n$-fold convolution of a function $f$.
The SRW two-point function is defined by
\begin{eqnarray}
	\lbeq{genSRW}
	C_z(x)&=&\sum_{n=0}^\infty p_n(x)z^n,\qquad\text{and}
	\qquad \hat C_z(k) =\frac {1}{1-2dz\hat D(k)}
\end{eqnarray}
 for $|z|<1/(2d)$, in $x$-space and $k$-space, respectively. The SRW \emph{susceptibility} is given by
	\begin{eqnarray}
	\chi^{\sss\rm SRW}(z)= \hat C_z(0)= \frac {1} {1-2dz},
	\end{eqnarray}
for $|z|<z_c$, with \emph{critical point} $z_c=1/(2d)$.

\paragraph{Notations.}
Before introducing non-backtracking random walk, we introduce some notation that will be used throughout this document. We exclusively use the Greek letters $\iota$ and $\kappa$ for values in $\{-d,-d+1,\dots,-1,1,2,\dots,d\}$ and denote the unit vector in direction $\iota$ by $\ve[\iota]\in\Zd$, e.g. $(\ve[\iota])_i=\text{sign}(\iota) \delta_{|\iota|,i}$. We use ${\mathbb C}^{2d}$-valued and ${\mathbb C}^{2d}\times{\mathbb C}^{2d}$-valued quantities. For a clear distinction between scalar-, vector- and matrix-valued quantities, we always write ${\mathbb C}^{2d}$-valued functions with a vector arrow (e.g.\ $\vec v$) and matrix-valued functions with bold capital letters (e.g.\ ${\bf M}$). We do not use $\{1,2,\dots,2d\}$ as the index set for the elements of a vector or a matrix, but use $\{-d,-d+1,\dots,-1,1,2,\dots,d\}$ instead. We denote the identity matrix by $\mI\in{\mathbb C}^{2d\times 2d}$ and the all-one vector by $ \v1 =(1,1,\dots,1)^T\in{\mathbb C}^{2d}$. Moreover, we define the matrices $\mJ,\mD[k]\in{\mathbb C}^{2d\times 2d}$ by
	\begin{eqnarray}
  	\lbeq{J-D-matrix-def}
	(\mJ)_{\iota,\kappa}=\delta_{\iota,-\kappa}\qquad\text{ and }\qquad
	(\mD[k])_{\iota,\kappa}=\delta_{\iota,\kappa} \e^{\ii k_\iota},
	\end{eqnarray}
where $k\in[-\pi,\pi]^d$ and for negative index $\iota\in\{-d,-d+1,\dots,-1\}$, we write $k_\iota=-k_{|\iota|}$. Now we can introduce non-backtracking random walk and its Green's function.

\paragraph{Non-backtracking walk.}
If an $n$-step SRW $\omega$ satisfies $\omega_i\not=\omega_{i+2}$ for all $i=0,1,2,\dots,n-2$, then we call $\omega$ a \emph{non-backtracking walk} (NBW).  The NBW two-point function $B_z$ is defined, for $|z|<1/(2d-1)$, by
	\begin{align*}
	B_{z}(x)&=\sum_{n=0}^{\infty} \#\{\text{$n$-step NBW with $\omega_n=x$}\}z^n.
	\end{align*}
As derived in \cite[Section 1.2.2]{FitHof13b}, this two-point function satisfies
	\begin{eqnarray}
	\lbeq{NBWGen}
	\hat B_{z}(k)&=& \frac {1} {1 -z\v1^T\left[\mD[k]+z \mJ\right]^{-1}\v1}=\frac {1}{1-2d z\frac {\hat D(k)-z}{1-z^2}}.
	\end{eqnarray}
The critical NBW and SRW two-point functions are thus related by
	\eqn{
	\lbeq{SRWtoNBWlink}
	\hat B_{1/(2d-1)}(k)=\frac {2d-2}{2d-1}\hat C_{1/2d}(k)
	=\frac {2d-2}{2d-1} \cdot \frac 1 {1-\hat D(k)}.
	}
This link allows us to compute values for the NBW two-point function in $x$- and $k$-space, using the SRW two-point function. A detailed analysis of the NBW including a proof that the NBW, when properly rescaled, converges to Brownian motion can be found in \cite{FitHof13a}.

\subsection{Part (a): Non-Backtracking Lace Expansion (NoBLE)}
\label{sec-part-a}
In this section, we explain the shape of the Non-Backtracking Lace Expansion (NoBLE), which is a perturbative expansion of the two-point function. For this, we derive an equation alike \refeq{NBWGen} for the two-point function $G_z(x)$. This expansion is derived in Section \ref{secExp}.

Next to the usual two-point function $G_z$, see \refeq{defLTLATwoPoint},  we use a slight adaptation $G_z^\iota$, with $\iota\in\{\pm1,\pm2,\dots, \pm d\}$ being a distinct direction. We postpone the definition of $G_z^\iota$ to Section \ref{secExp}, see \refeq{defGzIotaanimal}. Intuitively, $G_z^\iota$ considers only LTs and LAs that do not contain $e_\iota$, but for technical reasons, the precise definition is somewhat more involved. Our analysis relies on two expansion identities relating $G_z(x)$ and $G_z^\iota(x)$ that are formalized in the following proposition:

\begin{prop}[Non-backtracking lace expansion]
\label{prop-LE}
For every $x\in \Z^d$, $\iota,\kappa\in \{\pm1,\pm2, \dots, \pm d\}$, the following recursion relations hold:
    \eqan{
    G_z(x)&=\delta_{0,x}+\aaz\sum_{y\in\Zd, \kappa\in\{\pm 1,\dots,\pm d\}} (\delta_{0,y}+\Psi_{z}^\kappa(y))G^{\kappa}(x-y+\ve[\kappa])+\Xi_{z}(x),
    \lbeq{Gx-in-ex-1}\\
    G_z(x)&=G^\iota_z(x)+\aaz G^{-\iota}_z( x-\ve[\iota])
        +\sum_{y\in\Zd, \kappa\in\{\pm 1,\dots,\pm d\}} \Pi_{z}^{\iota,\kappa}(y) G_z^{\kappa}(x-y+\ve[\kappa])+\Xi_{z}^{\iota}(x),
    \lbeq{Gx-in-ex-2}
    }
where
    \eqan{
    \lbeq{PhiPsialt}
    \Pi_{z}^{\iota,\kappa}(x) &= \sum_{N=0}^{\infty} (-1)^N    \Pi^{\ssc[N],\iota,\kappa}(x),
    \qquad
    \Xi_{z}(x)= \sum_{N=0}^{\infty} (-1)^N \Xi^{\ssc[N]}_z(x),\\
   \lbeq{PhiPsialt-2}
    \Psi_{z}^\kappa(x)&=\sum_{N=0}^{\infty} (-1)^N \Psi^{\ssc[N],\kappa}_z(y),
    \qquad
    \Xi_{z}^\iota(x)=\sum_{N=0}^{\infty} (-1)^N \Xi^{\ssc[N],\iota}(x),\\
    \aaz &=  zg^\iota_z,\qquad\qquad\qquad\qquad\qquad \aabz =  zg_z.
   \lbeq{Definition-aaz}
    }
Explicit formulas for the above lace-expansion coefficients are given in Section \ref{sec-defineCoeff}.
\end{prop}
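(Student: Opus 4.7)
}

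The plan is to adapt the classical lace expansion of Hara and Slade \cite{HarSla90b} for lattice trees and animals, and then refine it to the non-backtracking setting in order to separate the contribution of the very first bond from the rest. For a tree or animal $A$ with $0,x\in A$, the starting point is to decompose $A$ at the origin. For lattice trees, there is a canonical way to do this: pick a bond incident to $0$ (there is a first bond on the unique path from $0$ to $x$ when $x\neq 0$, and otherwise we sum over ribs at $0$). Removing this bond breaks the tree into two disjoint trees, one rooted at $0$ (a ``rib'' at $0$) and one rooted at the other endpoint. This decomposition, together with inclusion-exclusion accounting for the interaction between the two pieces, is the source of the factor $g_z^\iota$ in $\mu_z=zg_z^\iota$ and of the expansion coefficients. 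For lattice animals we use the analogous decomposition into a backbone of cycle-free ``skeletons'' with attached animals, the extra cycles in the animals giving rise to additional terms in $\Xi_z,\Xi_z^\iota$ and $\Pi_z^{\iota,\kappa}$.

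First I would derive the backbone-based identity in the classical (SRW-based) setting. Writing $G_z(x)$ as a sum over $(A,\,\pi)$ where $\pi$ is the skeleton from $0$ to $x$, a repeated application of inclusion-exclusion on the mutual avoidance constraints between successive skeleton steps and between skeleton and ribs produces the standard lace expansion with coefficient $\Pi_z^{\sss\rm clas}(x)$. This is the bookkeeping step where one introduces \emph{connected graphs} on the skeleton vertices, and the sum over such graphs is reorganized into a sum over \emph{laces} with compatible edges. I would carry this out once to fix notation, following \cite{HarSla90b}, with extra care to track which ``rib at $0$'' is attached to the bond leaving $0$ (this is why the $\delta_{0,y}$ appears, and why $\mu_z=zg_z^\iota$ rather than $zg_z$).

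Next I would refine the expansion to the non-backtracking version. The key observation is that the standard expansion still allows the skeleton to make a backtracking step immediately after the first bond $e_\kappa$; these backtracking configurations must be removed from the ``free'' term $G^\kappa_z(x-y+e_\kappa)$, where the superscript $\kappa$ enforces that the next step is not $-e_\kappa$. The price paid for this removal is captured by the additional NoBLE coefficients $\Psi_z^\kappa$ and $\Xi_z$ for equation~\refeq{Gx-in-ex-1}, and by $\Pi_z^{\iota,\kappa}$ and $\Xi_z^\iota$ for equation~\refeq{Gx-in-ex-2}. Concretely, for \refeq{Gx-in-ex-1} I would expand the first step, then use inclusion-exclusion to separate the ``non-backtracking'' term $\mu_z G^\kappa_z(x-e_\kappa)$ (the $\delta_{0,y}$ contribution) from corrections in which the first step interacts non-trivially with what follows; these corrections are grouped by the number $N$ of ``laces'' needed to describe the interaction, producing the alternating series~\refeq{PhiPsialt}. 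For \refeq{Gx-in-ex-2} I would start from $G_z^\iota(x)$ and add back the configurations forbidden by the $\iota$-constraint, which are precisely those using the bond $e_\iota$ first; iterating the same inclusion-exclusion on the interaction of this recovered bond with the rest of the animal yields the diagonal term $\mu_z G_z^{-\iota}(x-e_\iota)$ and the off-diagonal corrections $\Pi_z^{\iota,\kappa}$, with the exceptional (small-skeleton) terms collected in $\Xi_z^\iota$.

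The main obstacle, I expect, will be managing the combinatorics for lattice animals, where cycles in $A$ prevent the neat ``path-plus-ribs'' decomposition used for trees and force one to work with a canonical spanning structure and additional loop-correction terms. In particular, ensuring that every animal is counted exactly once through the lace expansion, while simultaneously respecting both the NoBLE direction constraint and the separation between $0$-ribs and $x$-ribs, is where most of the care is needed. Once these identities are established, the explicit formulas for $\Pi_z^{\iota,\kappa}, \Psi_z^\kappa, \Xi_z$ and $\Xi_z^\iota$ in terms of sums over laces (with ribs attached at the lace endpoints) will follow by definition, and will be recorded in Section~\ref{sec-defineCoeff}; the bounds needed later come from Proposition~\ref{prop-bds-LEC}, not from the algebraic derivation here.
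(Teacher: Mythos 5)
Your proposal assembles the right ingredients (backbone plus ribs or sausages, pair-interaction indicators, connected graphs reorganized into laces with compatible edges, the alternating series indexed by the number of lace edges, and the observation that the rib at the origin must avoid the endpoint of the first backbone bond, whence $\aaz=z\gj$), but it is organized differently from the paper, and the difference is where the work lies. The paper does \emph{not} first derive the classical expansion and then ``remove backtracking configurations''. It defines \emph{non-backtracking} rib-/sausage-walks from the outset (Definitions \ref{defLTRibwalks}(v) and \ref{defLASausagewalks}(v): $\bar b_{i+1}\nin S_i$, $\underline b_i\nin S_i$ and $\bar b_{i+1}\neq \underline b_i$), writes $\bar G_z(x)$ exactly as a sum over such walks weighted by the interaction $K[0,|\omega|]$ of \refeq{defKanimal}, and then applies the purely algebraic identity \refeq{Krecrelation}. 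When the walk is cut at the backbone bond $b_i=(y,y-\ve[\kappa])$, the second piece automatically lies in $\Wcal^{\kappa}$ because the non-backtracking constraints at the cut point are part of the walk space; this is the step \refeq{GenanimalSplit} that produces $G_z^{\kappa}(x-y+\ve[\kappa])$ with no further correction. Your ``classical-then-correct'' route would have to move the local non-backtracking constraints out of the interaction and into the reference measure after the fact, and show that the resulting corrections reassemble into coefficients with exactly the structure of $\Psi_z^{\kappa}$ and $\Pi_z^{\iota,\kappa}$ (note the indicator $\indic{x-\ve[\kappa]\nin S^{\omega}_{|\omega|}}$ built into both). That conversion is the entire content of the proposition and is not carried out in your sketch, so as it stands this is a gap rather than merely an alternative route.

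A second concrete problem: the complement of the $\iota$-constraint is not ``precisely those configurations using the bond $e_\iota$ first''. By \refeq{def-Eiota-forLTLA}, a walk in $\Wcal(x)\setminus\Wcal^{\iota}(x)$ either has $\bar b_1=\ve[\iota]$ \emph{or} has $\ve[\iota]\in S_0^{\omega}$ (the two cases are mutually exclusive precisely because of the non-backtracking condition). It is the second alternative, fed through the $J[0,i]K[i+1,|\omega|]$ splitting, that generates the off-diagonal coefficients $\Pi_z^{\iota,\kappa}$; the first alternative with $i=0$ yields only the diagonal term $\aaz G_z^{-\iota}(x-\ve[\iota])$. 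As written, your derivation of \refeq{Gx-in-ex-2} would miss all the $S_0\ni\ve[\iota]$ contributions. Finally, the ``canonical spanning structure'' you anticipate needing for lattice animals is unnecessary: the paper takes the backbone to be the ordered sequence of pivotal bonds for $0\conn x$, with the doubly connected pieces between consecutive pivotal bonds playing the role of sausages, and the rest of the argument is verbatim the tree case.
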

\medskip

Equations \refeq{Gx-in-ex-1} and \refeq{Gx-in-ex-2} are similar in spirit to the equations satisfied by the NBW two-point function, where $G_z(x)$ is replaced by $B_z(x)$ and $G^\iota_z(x)$ by $B^\iota_z(x)$, which is the Green's function of all NBWs for which the first step is unequal to $e_{\iota}$ (recall \cite{FitHof13b}). For this, the similar equation to \refeq{Gx-in-ex-1} is obtained by conditioning on the first step, while \refeq{Gx-in-ex-2} is obtained by splitting depending on whether the first step equals $e_{\iota}$ or not. Equations \refeq{Gx-in-ex-1} and \refeq{Gx-in-ex-2} reduce to the equations for NBW when we set $\Psi^{\kappa}_z=\Xi_{z}=\Pi_{z}^{\iota,\kappa}\equiv 0$, and replace $\aaz$ by $(2d-1)z$ and $\aabz$ by $2dz$. Thus, we can think of  \refeq{Gx-in-ex-1} and \refeq{Gx-in-ex-2} as a perturbations around NBW, and of $\Psi^{\kappa}_z=\Xi_{z}=\Pi_{z}^{\iota,\kappa}$ as the perturbation coefficients, of which we hope to prove that they are `small' in an appropriate sense.

Of course, the precise formulas for the lace-expansion coefficients are crucial for our analysis to succeed. However, at this stage, we refrain from stating their forms explicitly, and refer to Section \ref{sec-defineCoeff} instead where they are derived while performing the NoBLE.

Applying the Fourier transforms to \refeq{Gx-in-ex-1} and \refeq{Gx-in-ex-2} and some simple rearrangements, see \cite[Section 1.3]{FitHof13b}, we derive that
	\begin{align}
    	\lbeq{lace-exp-eq}
    	\hat{G}_z(k)=\frac{1+\hat{\Xi}_{z}(k)-\aaz(\v1+\vPsi[k])^T\big[\mD[k]+\aaz\mJ  +\mPi[k]\big]^{-1} \vXi[k]}
    {1-\aaz(\v1 +\vPsi[k])^T\big[\mD[k]+\aaz \mJ+\mPi[k]\big]^{-1}{\v1}},
	\end{align}
with
    \eqn{
    (\vec {\hat \Psi}(k))_\kappa=\hat \Psi^{\kappa}_z(k),\qquad
    (\vXi[k])_\iota=\hat \Xi^{\iota}_{z}(k), \qquad     (\mPi[k])_{\iota,\kappa}=\hat \Pi^{\iota,\kappa}_{z}(k),
    }
Equation \refeq{lace-exp-eq} is the NoBLE equation, and is the workhorse behind our proof. The goal of the NoBLE is now to show that \refeq{lace-exp-eq} is indeed a perturbation of \refeq{NBWGen}. This amounts to proving that $\hat{\Xi}_{z}(k)$,$\vXi[k]$, $\vPsi[k]$ and $\mPi[k]$ are small, which will only be true in sufficiently high dimensions.

We continue by discussing how to bound the NoBLE coefficients.

\subsection{Part (b): Bounds on the NoBLE}
\label{sec-part-b}
The content of the second key proposition is that the NoBLE coefficients can be bounded by combinations of simple diagrams. Simple diagrams are combinations of two-point functions, alike the following examples:
	\begin{align}
	(2d\aaz)^2 (G_z\star D\star D\star G_z)(e_\iota),   \quad
	\sup_{x\in\Zd\colon \ |x|^2>2}  \sum_{y\in\Zd} |y|^2G_z(y) (G_z\star D)(x-y).
	\end{align}
\begin{prop}[Diagrammatic bound on the NoBLE coefficients]
\label{prop-bds-LEC}
For each $N\geq 0$, the NoBLE coefficients
$ \Pi^{\ssc[N],\iota,\kappa}(y),$ $\Xi^{\ssc[N]}(x),$ $\Psi^{\ssc[N],\kappa}(y)$ and $\Xi^{\ssc[N],\iota}(x)$  can be bounded by a finite combination of sums and products of simple diagrams.
\end{prop}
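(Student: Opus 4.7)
The plan is to derive the bounds by exploiting the explicit combinatorial formulas for the NoBLE coefficients that will be given in Section \ref{sec-defineCoeff}. Each of $\Pi^{\ssc[N],\iota,\kappa}(y)$, $\Xi^{\ssc[N]}(x)$, $\Psi^{\ssc[N],\kappa}(y)$ and $\Xi^{\ssc[N],\iota}(x)$ is a weighted sum over $(N{+}1)$-fold configurations of lattice trees (or animals) that pass through prescribed vertices (``pivots'') and are subject to certain mutual-avoidance constraints at those pivots. Since the weight is $z^{|A|}$ for the total number of bonds, once the configuration is decomposed along its pivots the sum will reduce naturally to products of two-point functions indexed by the pivot-to-pivot displacements.

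The central combinatorial tool will be the tree-graph / BK-type inequality for LT and LA developed by Hara and Slade, which upper bounds the weighted sum of several mutually avoiding sub-trees or sub-animals by the product of the weights of the individual pieces. I would apply this decoupling recursively to the $N{+}1$ pieces that make up each NoBLE coefficient, expressing the coefficient as a finite sum of products of two-point functions $G_z$ (or $\bar G_z$) convolved across the pivot locations. The resulting objects are exactly the simple diagrams alluded to in the statement: convolutions such as $(G_z\star D\star D\star G_z)(\ve[\iota])$ together with the weighted-sup type objects that arise whenever one vertex of the diagram has to carry an extra factor of $|y|^2$ or is held fixed.

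The directional indices $\iota,\kappa$ require me to introduce decorated two-point functions that record information about the entering or leaving step at a pivot; these are built from $G_z$, $D$ and indicator factors, and are naturally packaged as components of $2d$-dimensional vectors and $2d\times 2d$ matrices, compatible with the matrix framework of \cite{FitHof13b}. The non-backtracking constraint at each pivot is encoded by inserting factors that forbid the reversal of direction, implemented through $\mJ$ and its complement. Once the correct decorations are in place, pivot-by-pivot decoupling proceeds exactly as in the undecorated setting. The $N=0$ contributions have a different shape (shorter, degenerate diagrams) and will need to be treated as special base cases.

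The main obstacle will not be the existence of such bounds, since tree-graph and BK-type inequalities are by now standard for LT and LA, but rather their \emph{sharpness}. To reach the low dimensions $d\ge\dmintree$ and $d\ge\dminanimal$ the diagrams must be organized so that the shortest loop appearing in each of them has length at least four, which is precisely the structural gain of the NoBLE over the classical lace expansion. The hard part will be the bookkeeping: enumerating the finitely many topologies for each $N$, grouping them so that the matrix indices track the pivot directions correctly, separating the special low-$N$ contributions, and writing the resulting bounds in a form that can be fed into the Mathematica notebooks of \cite{FitNoblePage} with sufficient tightness to close the bootstrap in the claimed dimensions.
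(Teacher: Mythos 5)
Your overall strategy --- decompose each coefficient at the pivots/intersection points and decouple the mutually avoiding pieces into a product of two-point functions evaluated at the pivot displacements, with degenerate base cases for small $N$ --- is the same basic route the paper takes, and for the literal (rather weak) statement of the proposition it would succeed. The paper, however, does not invoke a BK/tree-graph inequality as a black box: it performs an explicit, deterministic decomposition of each configuration via \emph{first intersection points} and \emph{connecting planted animals} (Definitions \ref{defPlantedAnimal}--\ref{defLTFirstPointIntersect}) and then simply drops the residual mutual-avoidance constraints. This constructive route buys two things that a generic decoupling does not. First, it controls the one-point functions: a naive bound puts a full two-point function $\bar G_z$ (hence a one-point function $g_z\approx\e$ at \emph{both} endpoints) on every line, overcounting the weight of the ribs/sausages at each shared vertex by a factor close to $\e$ per split; the paper's decomposition lets all but one line be bounded by the ``planted'' two-point function $\tilde G_z$, saving a factor of order $\e^{4N-2}$ per diagram. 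Second, it keeps the connecting paths bond-disjoint, so the bounding objects are the \emph{repulsive} diagrams $\diagRepulsiveLetter{B},\diagRepulsiveLetter{T},\diagRepulsiveLetter{S}$ rather than free products of $G_z$'s. You correctly flag sharpness as the real issue, but these two mechanisms are the specific content of ``sharpness'' here, and they are what the detailed proof in Sections \ref{secBoundsExplained}--\ref{secBoundProof} is actually about.

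One point in your plan is a genuine mischaracterization rather than just an omission: the matrix structure of the bounds is \emph{not} a $2d\times 2d$ direction-tracking device threaded through every pivot. The matrices $\mJ$ and $\mD[k]$ live in the NoBLE equation \refeq{lace-exp-eq} itself, not in the diagrammatic bounds. In the bounds, the matrices are $5\times 5$, indexed by the classification $\{-2,-1,0,1,2\}$ of the length and backbone-role of the line shared by two consecutive loops; this is how the ``every loop has at least four bonds'' gain is propagated through all $N$ loops. The directional constraints $\iota,\kappa$ only affect the first and last building blocks ($P^{\ssc[1],\iota,m}$ and the $\Pi$-type terminal piece) and are tamed by summing over $\iota$ to restore total rotational symmetry, not by decorating every intermediate two-point function. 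Carrying $2d\times 2d$ decorations through the whole diagram would not make the proof fail, but it is unnecessary and would not by itself deliver the loop-length bookkeeping that the $5\times 5$ structure is designed for.
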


The explicit form of the bounds in Proposition \ref{prop-bds-LEC} is given in Lemmas \ref{BoundNZero-Xi}-\ref{lemmapercboundXi0minus1}. In Section \ref{secBoundsSummary} we summarize the bounds as required for the analysis in \cite{FitHof13b}. As the complete proof of the bounds on the NoBLE coefficients is highly elaborate, we only sketch the proof.
\iflongversion
We describe the used diagrams used in the bounds in Section \ref{secBuildingBlocks} and give the formal definition in the appendix.
\else
In this paper, we only informally define the diagrams that we use to bound the coefficients.
A complete definition and additional details can be found in the extended version \cite{FitHof13g-ext}.
\fi

\paragraph{Matrix-valued bounds.}
The lace-expansion coefficients arising in the NoBLE describe contributions created by multiple mutually intersecting paths, in the LTs and LAs. We will informally call such path-intersections {\em loops}, and this notion will be made precise below. In the NoBLE, these loops require at least $4$ bonds by design, as direct reversals are excluded by the way that we set up the NoBLE expansion. These lace-expansion coefficients can be bounded in terms of certain Feynman diagrams, of which the lines correspond to various two-point functions, and these lines are required to have various intersections. We will refer to the two-point functions arising in these bounds informally as `lines'.

The diagrams turn out to have the general property that lines can be part of at most two loops. To optimally use the information that loops contain at least 4 bonds, we distinguish five cases for the length and function of paths used by two loops.
Then, we bound the contribution of each loop of the lace-expansion diagram one-by-one, using the information on the lines shared with the previous and preceding loops. We explain this in detail in Section \ref{secBoundsExplained}. This procedure naturally gives rise to a bound on the NoBLE coefficients in terms of matrix products, as formalized in Lemma \ref{LemmaBoundXiLTTwo}. For example, our proof yields that
	\begin{align*}
      \sum_{x\in\Zd}\Xi^{\ssc[N]}_z(x)&\leq \frac {g_z}{\gj} \vec P^T{\bf A}^{N-2}{\vec E}^{\rm{open}},
	\end{align*}
for $N\geq 3$, see \refeq{Bound-Xi2}, for certain vectors $\vec P, {\vec E}^{\rm{open}}$ and 5 by 5 matrices ${\bf A}$. We will give an interpretation of the elements in these vectors and matrices in Section \ref{secBuildingBlocks}. For our analysis, we require a bound on this when summed over $N$. To compute this bound numerically, we perform an eigenvector decomposition of $\vec P$, in terms of the eigenvectors $(\vec v_i)_{i=1}^5$ of ${\bf A}$ with corresponding eigenvalues $(\lambda_i)_{i=1}^5$. In this decomposition, we write $\vec P=\sum_{i=1}^5 \vec v_i,$ so that the eigenvectors used are not normalized\footnote{We do account for the possibility that ${\bf A}$ is not diagonalizable.}.
Then,
	\begin{align*}
	\hat \Xi^{\ssc[N]}_{z}(0)\leq& \sum_{i=1}^5 \vec v_i \lambda^{N-2}_i{\vec E}^{\rm{open}}.
	\end{align*}
The sum of this over $N$ is computed using the geometric sum, see \cite[Section 5.4]{FitHof13b} for more details.

The order of this bound is to a large extent given by the largest eigenvalue of ${\bf A}$. For example, for LTs in $d=16$,
	\begin{align*}
	{\bf A}=\left(
\begin{array}{ccccc}
 0.0325769 & 0.0151778 & 0.0209234 & 0.0151778 & 0.0325769 \\
 0.0196345 & 0.00927948 & 0.0131029 & 0.00927948 & 0.0196345 \\
 0.0271802 & 0.0143934 & 0.0201978 & 0.0160374 & 0.0313802 \\
 0.0196345 & 0.00927948 & 0.0131029 & 0.00927948 & 0.0196345 \\
 0.0325769 & 0.0151778 & 0.0209234 & 0.0151778 & 0.0325769 \\
\end{array}
\right)
	\end{align*}
with largest eigenvalue $\lambda_1=0.103089$. In the classical lace expansion, the corresponding $N$th lace-expansion diagram bound also decays exponentially in $N$, where the base of the exponent is a non-trivial square. We can bound the non-trivial square, as it arises in the classical lace expansion, by $2.53461$.
 Revisiting the classical lace expansion and then using all our improvements, which we explain in Section \ref{sec-BoundIdea-Heuristic}, we can improve the base of the exponent to a value of $0.130586$.
 Using our matrix-valued bounds, reduce the exponent by $30\%$. The gain in the explicite contribution is usually even larger.

\subsection{Part (c): The NoBLE analysis}
\label{sec-part-c}
At the center of our analysis are the so-called {\em bootstrap functions}, which we next discuss.

\paragraph{Bootstrap functions.}
For the bootstrap, we define the following functions:
	\begin{eqnarray}
	\lbeq{defFunc1}
	f_1(z)&:=&\max\left\{(2d-1)zg_z,c_\mu (2d-1)zg_z^\iota\right\},\\
	\lbeq{defFunc2}
	f_2(z)&:=& \sup_{k\in(-\pi,\pi)^d}
	\frac{\hat G_z(k)}{\hat B_{1/(2d-1)}(k)} = \frac{2d-1}{2d-2}\sup_{k\in(-\pi,\pi)^d} [1-\hat D(k)]\ \hat G_z(k),\\
	\lbeq{defFunc3}
	f_3(z)&:=& \max_{\{n,l,S\}\in \mathcal{S}} \frac{\sup_{x\in S}
	\sum_{y}|y|^2G_z(y)(G_z^{\star n}\star D^{\star l})(x-y)}{c_{n,l,S}},
	\end{eqnarray}
where $c_\mu>1$ and $c_{n,l,S}>0$ are some well-chosen constants and $\mathcal{S}$ is some finite set of indices. Let us now discuss the choice of these functions, and explain their use.

The functions $f_1$ and $f_3$ can been seen as combinations of multiple functions.  We group these functions together as they play a similar role and are analyzed in the same way. We do not expect that the values of the bounds on the individual functions that constitute $f_1$ and $f_3$ are comparable. This is the reason that we introduce the constants $c_\mu$ and $c_{n,l,S}$.

The choice of point-sets $S\in \mathcal{S}$ improves the numerical accuracy of our method considerably. For example, we obtain much better estimates in the case when $x=0$ (leading to a closed diagram), than for $x\neq 0$. For $x$ being a neighbor of the origin, we can further use symmetry to improve our bounds significantly. To obtain the infrared bound as stated in Theorem \ref{thm-IRB}, we use the following set $\mathcal{S}$:
	\begin{align}
	\lbeq{calS-def}
  	\mathcal{S}=\big\{
  	&\{1,0,\mathcal{X}\},\{1,1,\mathcal{X}\},\{1,2,\mathcal{X}\},\{1,3,\mathcal{X}\},\{1,6,\{0\}\},\\
  	&\qquad \{2,0,\mathcal{X}\},\{2,1,\mathcal{X}\},\{2,2,\mathcal{X}\},\{2,3,\mathcal{X}\},\{2,6,\{0\}\}\big\},\nn
	\end{align}
with $\mathcal{X}=\{x\in\Zd\colon |x|_2>1\}$. This turns out to be sufficient for our main results.

We apply a \emph{forbidden region or bootstrap argument} that is based on three claims:
	\begin{enumerate}
	\item[(i)] $z\mapsto f_i(z)$ is \emph{continuous} for all $z\in [z_{I}, z_c)$ and $i=1,2,3$ and for some $z_{I}\in(0,z_c)$;
	\item[(ii)] $f_i(z_{I})\leq \gamma_i$ holds for $i=1,2,3$; and
	\item[(iii)] if $f_i(z)\leq \Gamma_i$ holds for $i=1,2,3$, then, in fact, also $f_i(z)\leq \gamma_i$ holds for every $i=1,2,3$, where $\gamma_i<\Gamma_i$ for every $i=1,2,3$.
	\end{enumerate}
Together, these three claims imply that $f_i(z_{I})\leq \gamma_i$ holds for every $i=1,2,3$ and all $z\in [z_{I}, z_c)$. This in turn implies the statement of Theorem \ref{thm-IRB}  for all $z\in [z_{I}, z_c)$.

The continuity in Claim (i) is proven in \cite{FitHof13b} under some assumption that we explain and prove  below. The proofs of the initialization of the bootstrap in Claim (ii) as well as the improvement of the bounds in Claim (iii) use the following relations that are also sketched in Figures \ref{fig-Heuristic-bootstrap-initial} and \ref{fig-Heuristic-bootstrap}:
\begin{enumerate}[(1)]
\item simple diagrams can be bounded by a combination of two-point functions, see \cite[Section 4]{FitHof13b};
\item the NoBLE coefficients can be bounded by a combination of simple diagrams, that, in turn, can be bounded using the bounds on the two-point functions that we obtain from the bootstrap assumption, see Section \ref{secStatingtheBounds};
\item bounds on the NoBLE coefficients imply bounds on the two-point function, see \cite[Section 2]{FitHof13b}, which allows us to improve upon the bounds that we assumed.
\end{enumerate}
Thus, whenever we have numerical bounds on simple diagrams, or on NoBLE coefficients, or on the two-point function, we can also conclude bounds on the other two quantities.

We choose $z_{I}$ such that $G_{z_I}(x)\leq B_{1/(2d-1)}(x)$ holds pointwise in $x$.
Then, using that $G_{z_I}(x)\leq B_{1/(2d-1)}(x)$ and that we can compute $B_{1/(2d-1)}(x)$ numerically, we verify the initialization of the bootstrap in Claim (ii) (i.e., $f_i(z_I)\leq \gamma_i$ for $i=1,2,3$) numerically, see Figure \ref{fig-Heuristic-bootstrap-initial}.

\begin{figure}
 \begin{center}
\begin{tikzpicture}[line width=1pt,auto,scale=0.7]
 \node   at ({3*cos(153)},{3*sin(153)})     {$f_i(z_I)\leq \gamma_i$};
 \node   at ({3*cos(0)},{3*sin(0)})      {Bounds on simple diagrams};
 \node   at ({3*cos(230)},{3*sin(230)+0.2})      {Bounds on coefficients};
 \draw [->,very thick] ({3*cos(215)-0.2},{3*sin(215)})  arc  (215:165:3);

\draw [->,very thick] ({3*cos(350)},{3*sin(350)})  arc  (350:240:3);
 \draw [->,very thick] (4,2.2) to (3,0.5);
 \color{black}
 \node  at(4,2.5)      {$G_{z_I}(x)\leq B_{1/(2d-1)}(x)$};
 \node[left]   at(-3.1,0)   {Conclude a bound};
\end{tikzpicture}
\end{center}
 \caption{Initialization of the bootstrap: proof that $f_i(z_I)\leq \gamma_i$ holds for $i=1,2,3$. Here $\gamma_1,\gamma_2,\gamma_3$ are appropriately and carefully chosen constants.}
 \label{fig-Heuristic-bootstrap-initial}
\end{figure}
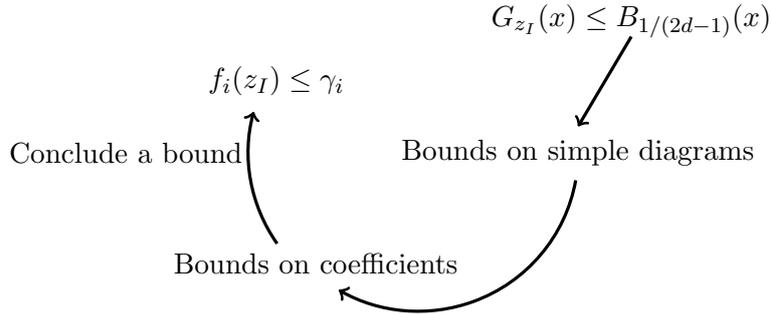

The proof of Claim (iii) is the most elaborate step of our analysis. Its structure is shown in Figure \ref{fig-Heuristic-bootstrap}. We start from the assumption that $f_i(z)\leq \Gamma_i$ holds for every $i=1,2,3$. The function $f_1$ gives a bound on $zg_z$ and $zg_z^\iota$. The function $f_2$ allows us to bound the two-point function in Fourier space by $\hat B_{1/(2d-1)}(k)$, which we can integrate numerically to obtain numerical bounds on simple diagrams. Further, $f_2$ allows us to conclude the infra-red bound in Theorem \ref{thm-IRB}. These, in turn, imply bounds on the NoBLE coefficients, which we use to improve our bounds on the bootstrap functions.

\emph{In the case that} the computed bounds are small enough, we can conclude that $f_i(z)\leq \gamma_i$ holds and thereby that the improvement of the bounds in Claim (iii) holds. Whether we can indeed prove that Claim (iii)  holds depends on the dimension we are in, the quality of the bounds and the analysis used to conclude bounds for the bootstrap function. In high enough dimensions (e.g.\ $d\geq 1000$) the perturbation is rather small so that it is
relatively easy to prove Claim (iii). Proving the claim in lower dimensions is only possible  when the bounds on the lace-expansion coefficients and the analysis are sufficiently sharp and hence sophisticated. It is here that it pays off to apply the NoBLE compared to the classical lace expansion. The third step in the proof of Theorem \ref{thm-IRB} is formalized in the following proposition:

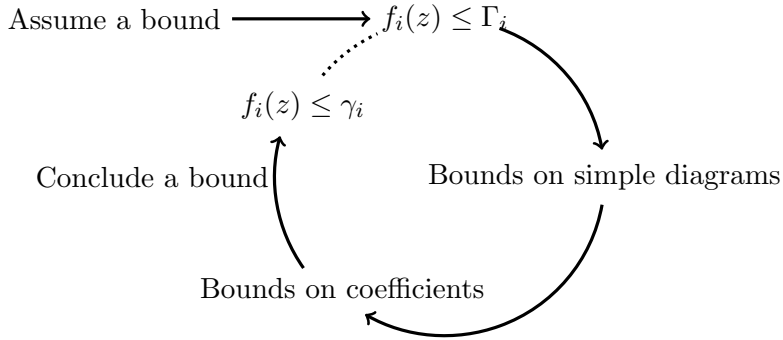
\begin{figure}
 \begin{center}
\begin{tikzpicture}[line width=1pt,auto,scale=0.7]
\node   at ({3*cos(90)},{3*sin(90)})     {$f_i(z)\leq \Gamma_i$};
 \node   at ({3*cos(153)},{3*sin(153)})     {$f_i(z)\leq \gamma_i$};
 \node   at ({3*cos(0)},{3*sin(0)})      {Bounds on simple diagrams};
 \node   at ({3*cos(230)},{3*sin(230)+0.2})      {Bounds on coefficients};
 \draw [->,very thick] ({3*cos(215)-0.2},{3*sin(215)})  arc  (215:165:3);
 \draw [->,very thick] ({3*cos(70)},{3*sin(70)})  arc  (70:10:3);
 \draw [dotted,very thick] ({3*cos(140)},{3*sin(140)})  arc  (140:115:3);
\draw [->,very thick] ({3*cos(350)},{3*sin(350)})  arc  (350:240:3);
 \draw [->,very thick] (-4,3) to ({3*cos(90)-1.4},{3*sin(90)}) ;
  \color{black}
 \node[left]   at(-4,3)      {Assume a bound};
 \node[left]   at(-3.1,0)   {Conclude a bound};
\end{tikzpicture}
\end{center}
 \caption{ Proof of claim (iii): $f_i(z)\leq \Gamma_i$ for $i=1,2,3$ implies that $f_i(z)\leq \gamma_i$ for $i=1,2,3$.}
 \label{fig-Heuristic-bootstrap}
\end{figure}

\begin{prop}[Successful application of NoBLE analysis]
\label{prop-analysis-is-success}
For nearest-neighbor LTs and LAs, in  $d\geq \dmintree$ and $d\geq \dminanimal$, the NoBLE analysis of \cite{FitHof13b} applies and proves the infrared bound in Theorem \ref{thm-IRB}. In particular, there exist constants $\Gamma_1,\Gamma_2,\Gamma_3$ and $\gamma_1,\gamma_2,\gamma_3$ such that $f_i(z_I)\leq \gamma_i$ for $i=1,2,3$ and, for every $z\in [z_I,z_c)$, the bounds $f_i(z)\leq \Gamma_i$ for $i=1,2,3$ imply that $f_i(z)\leq \gamma_i$ for $i=1,2,3$.\footnote{In the percolation paper, we erroneously wrote $p<p_c$ instead, but also there we need that $p\in[p_I,p_c)$.}
\end{prop}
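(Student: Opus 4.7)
The plan is to reduce the assertion to a finite list of numerical inequalities that can be fed into the general NoBLE analysis of \cite{FitHof13b}. That analysis takes as input (a) an abstract expansion of the form \refeq{lace-exp-eq}, (b) diagrammatic bounds on the NoBLE coefficients in terms of products of simple diagrams, and (c) quantitative checks on certain integrals of random-walk Green's functions. Input (a) is supplied by Proposition \ref{prop-LE}, input (b) by Proposition \ref{prop-bds-LEC} (together with the explicit forms proved in Section \ref{secBounds}), and input (c) by the Hara--Slade SRW integral bounds of \cite{HarSla92a}. What remains is to package these inputs into the assumptions of \cite{FitHof13b} and to verify the resulting numerical conditions via the Mathematica notebooks of \cite{FitNoblePage}.

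The continuity of $z \mapsto f_i(z)$ on $[z_I,z_c)$ (Claim (i)) is internal to \cite{FitHof13b}: once the NoBLE series for $\mPi[k]$, $\vXi[k]$, $\vPsi[k]$ and $\hat{\Xi}_z(k)$ converge uniformly on compacts of $[0,z_c)$, the representation \refeq{lace-exp-eq} expresses $\hat{G}_z(k)$ as a ratio of analytic functions of $z$ whose denominator stays away from zero throughout the bootstrap region. For the initialization (Claim (ii)), I would take $z_I$ small enough that $G_{z_I}(x) \leq B_{1/(2d-1)}(x)$ holds pointwise in $x$; this is possible by standard comparison between LT/LA configurations and NBWs. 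The link \refeq{SRWtoNBWlink} then majorizes each $f_i(z_I)$ by an explicit combination of the SRW integrals tabulated in \cite{HarSla92a}, reducing the inequalities $f_i(z_I)\leq \gamma_i$ to a finite numerical check for each specific $d$.

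The main obstacle is Claim (iii), the bootstrap improvement: assuming $f_i(z) \leq \Gamma_i$ for $i=1,2,3$, one must show that in fact $f_i(z) \leq \gamma_i$ with strict slack $\gamma_i < \Gamma_i$. The strategy is the three-step loop pictured in Figure \ref{fig-Heuristic-bootstrap}. First, the bootstrap hypothesis gives a quantitative bound $\hat{G}_z(k) \leq \frac{2d-1}{2d-2}\Gamma_2 \hat{C}_{1/(2d)}(k)$ together with a pointwise bound derived from the definitions of $f_1$ and $f_3$, which in turn yields numerical values for every simple diagram appearing in the bounds of Section \ref{secBounds} via convolutions of SRW Green's functions. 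Second, Proposition \ref{prop-bds-LEC}, combined with the matrix-valued summation over $N$ (an eigendecomposition of the $5\times 5$ matrix ${\bf A}$ followed by a geometric series in its top eigenvalue $\lambda_1$), produces quantitative bounds on $\mPi[k]$, $\vXi[k]$, $\vPsi[k]$ and $\hat{\Xi}_z(k)$. Third, inserting these into \refeq{lace-exp-eq} and applying the inversion and bootstrap-improvement arguments of \cite[Section~2]{FitHof13b} delivers the improved bounds on $\hat{G}_z(k)$, on $\aaz$ and $\aabz$, and hence on all three $f_i$. The delicate point is that the correction matrix must remain a strict perturbation of the NBW kernel; this is precisely what the matrix-valued NoBLE bound buys us (e.g.\ $\lambda_1 \approx 0.103$ for LTs in $d=16$, versus $\approx 2.53$ in the classical lace expansion) and is what makes the scheme succeed down to $d=\dmintree$ and $d=\dminanimal$ rather than merely in "sufficiently high" dimensions.

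Finally, the numerical verification is carried out by the Mathematica notebooks at \cite{FitNoblePage}, which implement the bounds of this paper and \cite{FitHof13b} and test the resulting inequalities dimension by dimension. To cover all relevant $d$ in a finite computation, I would exploit the monotonicity of the bounds in $d$: perform explicit checks for $d \in \{\dmintree,\dots,29\}$ for LTs and $d \in \{\dminanimal,\dots,29\}$ for LAs, and a single uniform check for all $d\geq 30$ using monotone majorants of the diagrams. A successful run of the notebook in each case closes the bootstrap and, via the general machinery of \cite{FitHof13b}, produces the constants $\Gamma_i,\gamma_i$ of Proposition \ref{prop-analysis-is-success} together with the infrared bound of Theorem \ref{thm-IRB}.
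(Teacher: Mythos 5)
Your proposal is correct and takes essentially the same approach as the paper: reduce the statement to the assumptions of the general NoBLE analysis in~\cite{FitHof13b}, supply those assumptions via Propositions~\ref{prop-LE} and~\ref{prop-bds-LEC} together with the Hara--Slade SRW integral bounds, verify Claims (i)--(iii) of the bootstrap (continuity, initialization at $z_I$ via $G_{z_I}\leq B_{1/(2d-1)}$, and the matrix-valued improvement step), and close the argument with the computer-assisted numerical verification in the Mathematica notebooks, covering $d\in\{\dmintree,\dots,29\}$ (resp.\ $\{\dminanimal,\dots,29\}$) case by case and all $d\geq 30$ at once by monotone majorants. The only small divergence is cosmetic: rather than "take $z_I$ small enough," the paper pins down $z_I$ by $z_I g_{z_I}^\iota = (2d-1)^{-1}$ and separately proves $z_I < z_c$, which is what makes the initialization check and the lower bound $z_I\geq (2d-1)^{-1}\e^{-1}$ concrete.
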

As shown in Figure \ref{fig-Heuristic-bootstrap}, Proposition \ref{prop-analysis-is-success} is proved using the results of Propositions \ref{prop-LE}-\ref{prop-bds-LEC}, the analysis of \cite{FitHof13b} and the computer-assisted proof performed in the Mathematica notebook that can be found on \cite{FitNoblePage}.  To apply the general NoBLE analysis of \cite{FitHof13b}, we need to prove that the assumptions formulated in \cite{FitHof13b} hold.
All but one of these assumptions are proven in Section \ref{sec-gzAssumption}.
The one remaining assumption, involving the bounds on the NoBLE coefficients, is addressed in Section \ref{secBounds}.
\subsection{Part (d): Numerical analysis}
\label{sec-part-d}
In this section, we explain how the numerical computations are performed using Mathematica notebooks that are available from the first author's homepage \cite{FitNoblePage}.

\paragraph{Simple-random walk computations.}
The procedure starts by evaluating the notebook \verb|SRW|.
The file computes the value of the SRW, and thereby the NBW, two-point function for various locations in $\Zd$.
This computation uses pre-computed values of the number SRWs paths and SRW integrals based on numerical integration of certain Bessel functions.
These computations rely on ideas from \cite[Appendix B]{HarSla92a}, and are explained in detail in \cite[Section 5]{FitHof13b}.
The SRW integrals provide {\em rigorous numerical upper bounds} on various convolutions of SRW Green's functions with themselves, evaluated at various $x\in \Z^d$. We rely on 140 of such integrals.

Running these programs takes several hours. For this reason, once computed, the results are saved in two files, \verb|SRWCountData.nb| and \verb|SRWIntegralsData.nb| and are loaded automatically when the notebook is evaluated a second time for the same dimension. 
Alternatively, these two files can also be downloaded directly from the home page of the first author, and put in one's own home directory.\footnote{In Mathematica, the command {\tt \$InitialDirectory} will tell you what this directory is.}

\paragraph{Implementation of the NoBLE analysis for LT and LA.}
After having computed all the simple random walk ingredients, we evaluate the notebook \verb|General|, that implements the bounds of the NoBLE analysis \cite{FitHof13b}. After this, we are ready to perform the NoBLE analysis for the model by evaluating the notebook \verb|LA| and \verb|LA|, respectively. In these notebooks, we implement all the bounds proven in this paper. The computations in \verb|General|, \verb|LT| and \verb|LT| merely implements the bounds proven in this paper and in \cite{FitHof13b}, and rely on many multiplications and additions, as well as the diagonalization of two 5-by-5 matrices. These computations could in principle be done by hand (even though we prefer a computer to do them).

\paragraph{Output of Mathematica Notebooks.} After having evaluated the Mathematica notebooks, we can verify whether the analysis has worked with the chosen constants $\Gamma_1,\Gamma_2, \Gamma_3$.
In Figure \ref{fig-LT-successful} shows the relevant part of the output of the \verb|LT| notebook.
Let us explain this output in more detail. The green dots mean that the bootstrap has been successful for the parameters as chosen. When evaluating the notebook, it is possible that some red dots appear, and this means that these improvements were not successful. The first 3 dots in the first table are the verifications that $f_i(z_I)\leq \gamma_i$ for $i=1,2,3$. The next three dots show that the improvement has been successful for any $z\in(z_I,z_c(d))$. The values for $\Gamma_1, \Gamma_2,\Gamma_3$ are indicated in the second row.
For example, $\Gamma_1=1.10225007$ means that we assumed that $f_1(z)=(2d-1)zg_z\leq1.10225007$.
Using this assumption we concluded that $f_1(z)\leq 1.102250067$ from the improvement step (which relies on the NoBLE analysis), so that we can choose  $\gamma_1=1.02250069$ for our analysis in Proposition \ref{prop-analysis-is-success}. Since this is true for all $z\in(z_I,z_c(d))$, we obtain that $(2d-1)z_cg_{z_c}\leq 1.0225007$. This explains the value in the table of Theorem \ref{thm-bds-crit}.
The stated upper bound on $g_{z_c}$ follow from $g_{z_c}\leq \e\Gamma_1$, which we obtain by combining $f_1(z)<\Gamma_1$ with $z_c>(2d-1)^{-1}\e^{-1}$, which is proven in Section \ref{sec-gzAssumption}.

Similarly $\Gamma_2=1.335307$ implies that
	\begin{align}
  	&\frac{2d-1}{2d-2}\sup_{k\in(-\pi,\pi)^d} [1-\hat D(k)]\frac 1 {g_z}\hat {\bar G}_z(k)\leq \Gamma_2,
	\end{align}
which, in turn, implies that
  	\begin{align}[1-\hat D(k)]\hat {\bar G}_z(k) \leq g_z\frac {2d-2}{2d-1}\Gamma_2
    	\leq g_z\frac {2d-2}{2d-1}\Gamma_2\leq \Gamma_1\Gamma_2\e\leq 3.955 :=\bar{A}_2(d).
	\end{align}
Using these computation, we have computed the bounds stated in Theorem \ref{thm-bds-crit}.
Anyone interested in obtaining improved bounds on $g_{z_c}$, $g_{z_c}z_c$ or $A_2$ for other values of $d$ can play with the notebooks to optimize them. The second and third table in Figure \ref{fig-LT-successful} provides  details on the improvement of $f_3(z)$, which, as indicated in \refeq{defFunc3} and \refeq{calS-def} consists of several contributions, over which the maximum is taken. The assumed bound correspond to the constants $c_{n,l,S}$, with $S\in {\mathcal S}$ in \refeq{calS-def}. The notebooks \verb|LT| and \verb|LA| also includes a routine that optimizes the choices of $\Gamma_i$ and $c_{n,l,S}$. This makes it possible to efficiently find values for which the analysis works (when these exist).

\begin{figure}
\begin{center}
\includegraphics[width=0.75\textwidth]{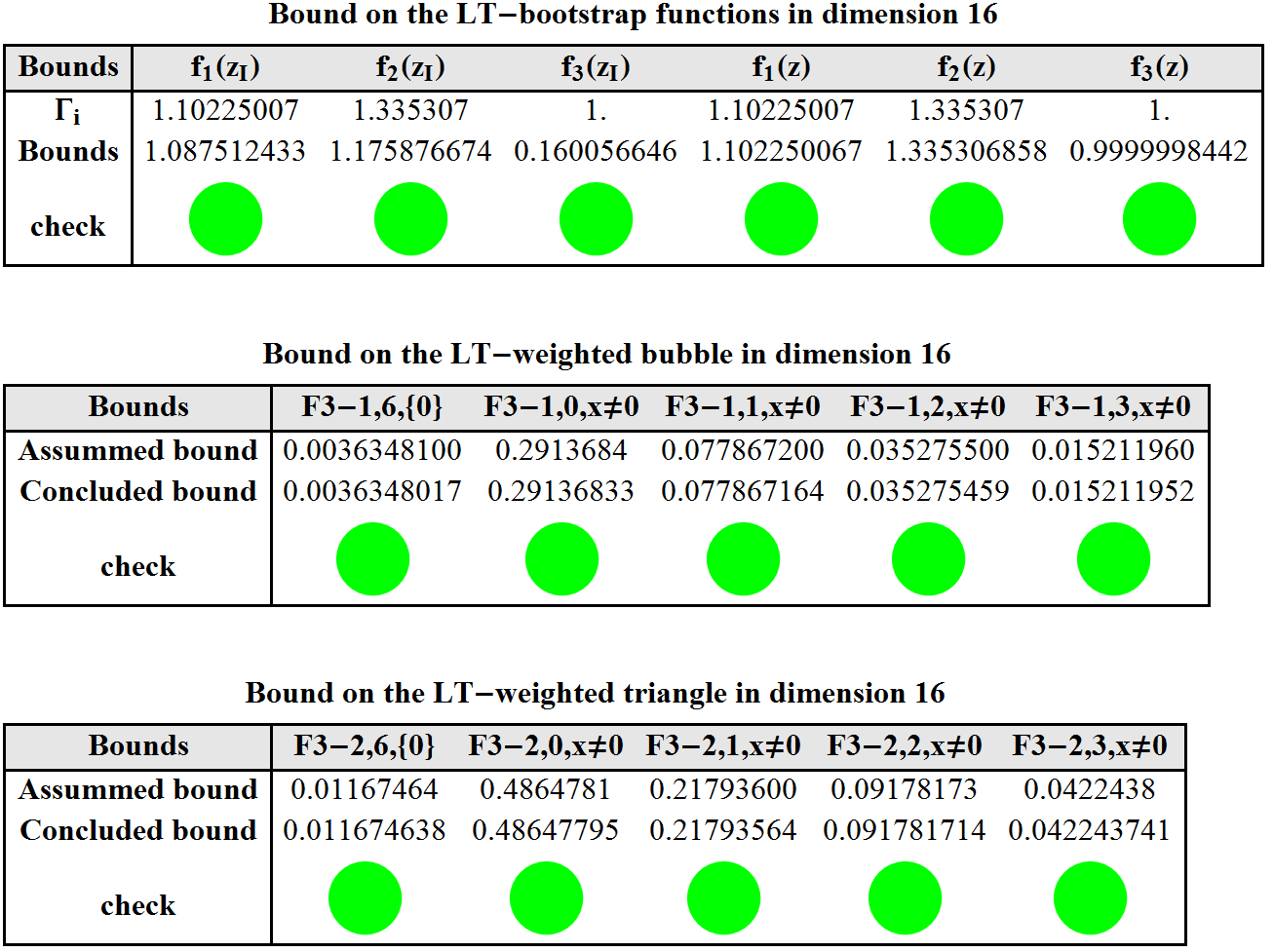}
\end{center}
 \cprotect\caption{\label{fig-LT-successful} Output of the Mathematica notebook {\verb|LT|}, where we can see that the bootstrap argument can be successfully initialized, as well as successfully improved. For completeness, we also show the various bounds of the components of $f_3$.}
\end{figure}

\begin{figure}
\begin{center}
\includegraphics[width=0.75\textwidth]{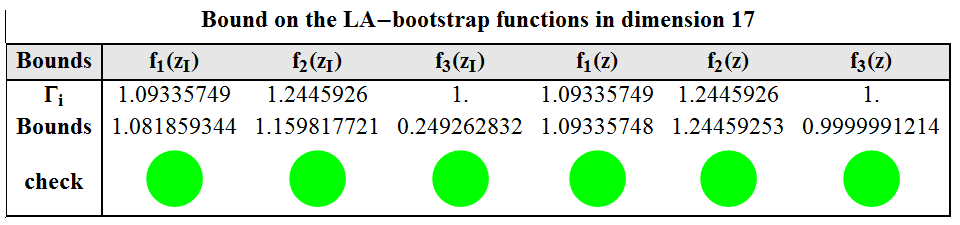}
\end{center}
 \cprotect\caption{\label{fig-LA-successful} Output of the Mathematica notebook {\verb|LA|}, where we can see that the bootstrap argument can be successfully initialized, as well as successfully improved.}
\end{figure}

\subsection{Structure of the NoBLE proof and related results}
\label{sec-summ-proof}
\paragraph{Summary of the proof of the infrared bound in Theorem \ref{thm-IRB}.}
We have already seen how delicately the four parts of the proof described in Section \ref{sec-phil-proof}
 are intertwined. The expansion in part (a) provides a characterisation of $\hat G_z(k)$ as a perturbation of $\hat B_{1/(2d-1)}(k)$ involving the NoBLE coefficients. The analysis in part (c) allows us to compute bounds on $\hat G_z(k)$, when numerical bounds on the coefficients are available.  To obtain such bounds we need to derive diagrammatic bounds, as formulated in part (b), that bound the NoBLE coefficients by simple diagrams.  However, we rely on bounds on $G_z$ to bound such simple diagrams numerically. Thus, we obtain a circular reasoning.

Using the bootstrap argument we can complete the circle, see Figures \ref{fig-Heuristic-bootstrap-initial}-\ref{fig-Heuristic-bootstrap}, to obtain a bound on $\hat G_z(k)$ for all $z\in [z_I, z_c)$. For the bootstrap argument, we need to show that $f_i(z_I)\leq \gamma_i$, as well as the fact that $f_i(z)\leq \Gamma_i$ implies that $f_i(z)\leq \gamma_i$, for appropriately chosen $\gamma_i$ and $\Gamma_i$ for all $z\in(z_I, z_c)$. The verification whether $f_i(z_I)\leq \gamma_i$ holds for $i=1,2,3$. Whether we can conclude from $f_i(z)\leq \Gamma_i$ for $i=1,2,3$ that also $f_i(z)\leq \gamma_i$ holds for $i=1,2,3$ requires a computer-assisted proof as indicated in Section \ref{sec-part-d}. Starting from $G_{z_I}(x)\leq  B_{1/(2d-1)}(x)$, $f_i(z)\leq \Gamma_i$ for $i=1,2,3$ and explicit computations of $B_{1/(2d-1)}(x)$, we obtain numerical bounds on simple diagrams.  These are then used to obtain numerical bounds on the NoBLE coefficients, which we in turn use to verify whether we can actually conclude from $f_i(z)\leq \Gamma_i$ for $i=1,2,3$ that $f_i(z)\leq \gamma_i$ for $i=1,2,3$ holds.

Combining these steps yields the required results for $z\in [z_I, z_c)$. We obtain the statement also for $z=z_c$ by using that $\hat G_z(k)/\hat B_{1/(2d-1)}(k)$ and the NoBLE-coefficients are continuous and uniformly bounded for $z\in [z_I, z_c)$ as well as left-continuous in $x$-space at $z_c$.

\paragraph{Extension to all $d\geq 20$.} Theorems \ref{thm-IRB} and \ref{thm-bds-crit} prove the infrared bound, as well as estimates on amplitudes and critical values, for $d=\dmintree, \ldots, 20$ for lattice trees, and for $d=\dminanimal, \ldots, 20$ for lattice animals. The Mathematica notebooks, and the numerical estimates that these rely upon, are designed to prove our results for any {\em specific} dimension. We successfully execute them for $d$ up to $29$.

To prove that the statements hold for {\em all} higher dimensions $d\geq 30$, without spending eternity checking the condition one dimension at time, we further modify our numerical verifications so that they only use quantities that are monotone decreasing in the dimension. It results in bounds on the NoBLE coefficients and related quantities that hold uniformly in {\em all} $d\geq 30$, implying that the bootstrap succeeds in all these dimensions in {\em one} go. This numerical verification is performed in the Mathematica notebook \verb|LAmonotone| for LA. This analysis also immediately applies to LTs, as all LT quantities are bounded by their LA analogues.

The idea behind these uniform bounds, used in the improvement of the bootstrap bounds, is the following:
\begin{enumerate}[a)]
  \item The SRW quantities that we rely upon, such as the Green's function at various points in $\Z^d$ and contributions of short paths, are monotone decreasing in $d$.
  \item Simple diagram are bounded using the bootstrap assumption $f_i(z)\leq \Gamma_i$ and these SRW quantities.
  Using the same $\Gamma_i$ for all $d\geq 30$, the bounds on simple diagrams are monotone decreasing in $d$ as well.
  \item Our NoBLE coefficients are bounded using simple diagrams, which implies that these bounds are also monotone decreasing in $d$.
  \item Our bounds on the NoBLE coefficients decrease in $d$, so that the bounds on the coefficients appearing in the analysis of \cite{FitHof13b} are monotone decreasing as well. In the end, this guarantees that the improvement of the bootstrap bounds succeeds for all $d\geq 30$ at once.
\end{enumerate}
In Appendix  \ref{sec-monoton-inD}, we explain how we modify our numerical verifications, show that the SRW quantities that we rely upon are monotone decreasing in $d$, and discuss the additional steps needed to make the analysis work for all $d\geq 30$.

\paragraph{Organization of this paper.}
The remainder of this paper is organised as follows. In Section \ref{sec-Expansion}, we perform the NoBLE, and thus prove Proposition \ref{prop-LE}. There, we also give explicit formulas for the NoBLE coefficients. In Section \ref{secBoundsExplained}, we explain how diagrammatic bounds on the NoBLE coefficients can be obtained. These diagrammatic bounds are phrased in terms of various building blocks that are informally defined in Section \ref{secBuildingBlocks}. In Section \ref{secBoundProof}, we explain how such diagrammatic bounds are proved.


\def\ClusterPictureWithPivotalAndLabeling[#1]{
\begin{figure}[ht]
\begin{center}
\begin{tikzpicture}[line width=1pt,auto,scale=#1]

   \node[right]   at(10,1)      {$y$};
   \node[below]   at(0,0)      {$x$};
   \node[below]   at(1,0)      {$w$};
   \node[above]   at(-1,1)      {$S_0$};
  \node  at(5,-2)      {Further, $S_3=S_4=S_6=S_7=S_9=\varnothing$.};
 \fill (10,1) circle (3pt);
  \fill (0,0) circle (3pt);

  \color{red}
  \draw [-] (0,0) to  (1,0);
  \node[below]   at(0.5,0)      {$b_1$};
  \draw [-] (2,1) to  (4,1);
  \node[below]   at(2.5,0)      {$b_2$};
  \node[below]   at(3.5,0)      {$b_3$};
  \draw [-] (4,1) to  (4,2);
  \node[right]   at(4,1.5)      {$b_4$};
  \draw [-] (4,2) to  (5,2);
  \node[below]   at(4.7,2)      {$b_5$};
  \draw [-] (5,2) to  (5,0);
  \node[right]   at(5,1.5)      {$b_6$};
  \node[right]   at(5,0.5)      {$b_7$};
  \draw [-] (5,0) to  (6,0);
  \node[below]   at(5.5,0)      {$b_8$};
  \draw [-] (7,0) to  (7,1);
  \node[right]   at(7,0.5)      {$b_9$};
  \draw [-] (7,1) to  (8,1);
  \node[below]   at(7.7,1)      {$b_{10}$};

 \color{black}
 \path[draw] (3,2) to (0,2) to  (0,0) to  (-1,0) to  (-1,1) to  (-2,1);
  \draw [-] (1,2) to  (1,3);

 \path[draw] (1,0) to  (1,1) to  (2,1) to  (2,0)to  (1,0);
 \draw [-] (2,0) to  (2,-1);

 \path[draw] (1,0) to  (1,1) to  (2,1) to  (2,0)to  (1,0);
 \draw [-] (2,0) to  (2,-1);
- \node[above]   at(1.5,1)      {$S_1$};

 \node[above]   at(3,1)      {$S_2=\varnothing$};
  \draw [-] (5,2) to  (6,2);
 \node[above]   at(5.5,2)      {$S_5$};

 \node[above]   at(6.5,0)      {$S_8$};

  \path[draw]    (6,0)to  (6,-1)to  (7,-1)to  (7,0) to (6,0);
  \draw [-] (7,-1) to  (8,-1);

  \path[draw] (8,1)to  (9,1)to  (9,2)to  (8,2) to  (8,1);
  \path[draw] (9,0) to  (9,1) to  (10,1) to  (10,0) to (9,0) ;
  \node[above]   at(9,2)      {$S_{10}$};

 \end{tikzpicture}
 \caption{\label{fig-Animal-structure} A LA containing $x$ and $y$. All bonds of the backbone $(b_i)_{i=1,\dots,10}$ and all sausages $(S_i)_{i=0,\dots,10}$ are labeled in the picture.}
\end{center}
\end{figure}}

\section{The non-backtracking lace expansion}
\label{sec-Expansion}
\subsection{Derivation of the expansion}
\label{secExp}
In this section we derive the NoBLE for LTs and LAs using an algebraic expansion. The basic idea is similar to the classical lace expansion applied to SAWs derived first in \cite{BrySpe85}, and is an adaptation of the classical lace expansion for LTs and LAs derived in \cite{HarSla92c}. Let us consider a LT that contains $0,x$. As a LT does not contain any loops there exists a unique path connecting $0$ and $x$. We call this path the \emph{backbone}. A LA containing $0,x$ can contain loops. Therefore, a connection between $0$ and $x$ is not necessarily unique. The analogue to the backbone for LAs is formed by the {\em pivotal bonds}, which are defined as follows:

\begin{definition}[Double connections and pivotal elements]
Let $A$ be a LA that contains $x,y\in\Zd$. We say that $x$ and $y$ are \emph{doubly connected} in $A$ if there exist two edge-disjoint paths $p_1,p_2\subset A$ connecting $x$ and $y$. By convention, we say that $x\in A$ is doubly connected to itself. We call a bond $b\in A$ \emph{pivotal} for $x$ and $y$ if the removal of $b$ would disconnect $A$ into two disjoint animals, one containing $x$ and the other $y$.
\end{definition}

\ClusterPictureWithPivotalAndLabeling[1]

Let us consider a LA containing $x$ and $y$ as given in Figure \ref{fig-Animal-structure}. Then, $x$ and $y$ are either doubly connected or there exists at least one pivotal bond $b$. If there are multiple pivotal bonds, then they have a natural order for the connection from $x$ to $y$, because every self-avoiding path from $x$ to $y$ has to use the pivotal bonds $(b_i)_i$ in the same order.

The removal of the bonds $(b_i)_{i\geq 1}$ disconnects the animal into mutually non-intersecting pieces, which we denote in Figure \ref{fig-Animal-structure} by $(S_i)_{i\geq 1}$. We will consider bonds to be directed, and write $b=(x,y)$ for the bond directed from $x$ to $y$. For a directed bond $b=(x,y)$, we write $\bar{b}=y$ and $\underline{b}=x$ for the two ends of the bond. The pieces $(S_i)_{i\geq 1}$ form double connections between the end of one pivotal bond $\bar b_i$ to the beginning of the following pivotal bond $\underline b_{i+1}$.
We call these doubly connected pieces $(S_i)_{i\geq 1}$ {\em sausages}, and the sequence of pivotal bonds $(b_i)_{i\geq 1}$ the {\em backbone} of the animal.
For LTs double connections are not possible, so that  $S_i=\{\bar b_i\}=\{\underline b_{i+1}\}$.

So far, we follow the classical lace expansion as in \cite{HarSla92c}. Now we start to deviate, and we wish to do so in a way that is closer to the NBW.
In order to do this, we define the \emph{rib-walk} and \emph{sausage-walk} to characterize the combination of a backbone and a set of ribs/sausages, that are {\it non-backtracking}.
Using these walks we derive the NoBLE for LTs and LAs, the steps of the classical lace expansion, by expanding a graph-based description of the avoidance constraint of the sausages and of the backbone.
\begin{definition}[Ribs and rib-walks for LTs]\ \\[-6mm]
\label{defLTRibwalks}
\begin{enumerate}[(i)]
\item We call a LT $S$ that contains $x\in\Zd$ a \emph{rib} for $x$ and define that the empty set is also a rib for all $x\in\Zd$ (recall that a LT is a collection of {\em bonds}).

\item For $x,y\in \Zd$ and $n\geq 1$, we call a collection of $n$ oriented nearest-neighbor bonds $(b_i)_{i=1}^n$ and $n+1$ ribs $(S_i)_{i=0}^n$ an \emph{$n$-step rib-walk} from $x$ to $y$, if $S_0$ is a rib for $\underline b_1=x$, $S_n$ is a rib for $\bar b_n=y$, and, for $i=1,\dots,n-1$, the LT $S_i$ is a rib for $\bar b_i=\underline b_{i-1}$. We call $(b_i)_{i=1}^n$ the backbone of the rib-walk.

\item For a rib-walk $\omega=((b_i)_{i=1}^n, (S_i)_{i=0}^n)$, we define $|\omega|$ to be the number of bonds in the backbone. We denote the $i$th bond of the backbone by $b^{\omega}_i$  and the $i$th rib of $\omega$ by $S^{\omega}_i$.

\item We say that any LT containing the origin is a zero step rib-walk to the origin.

\item We call a rib-walk $\omega$ \emph{non-backtracking} if $\bar b^\omega_{i+1}\nin S^\omega_{i},\underline b^\omega_{i}\nin S^\omega_{i}$ and $\bar b^\omega_{i+1}\neq \underline  b^\omega_{i}$ for all $i$.

\item We define $\Wcal^\ssss[T](x)$ as the set of all rib-walks from the origin $0$ to $x$ and
 $\Wcal^{\ssss[T],\iota}(x)$ to be the set of all rib-walks $\omega$ from $0$ to $x$ such that $\ve[\iota]\nin S^\omega_0$ and $\bar b^\omega_1\neq \ve[\iota]$.
\end{enumerate}
\end{definition}
\medskip

We point out that a bond could be part of multiple ribs, so that there is no bijection between rib-walks and LTs containing $0$ and $x$.
This bijection is however possible if we restrict to {\em self-avoiding} rib-walks. The non-backtracking condition is part of this necessary self-avoidance constraint. It rules a specific notion of immediate reversals out. Thus, we can think of a LT as a non-backtracking rib-walk with extra mutual avoidance constraints between the ribs. We continue by defining similar quantities for LAs, to set the stage for an expansion that can treat LTs and LAs at the same time:

\begin{definition}[Sausages and sausage-walks for LAs]\ \\[-6mm]
\label{defLASausagewalks}
\begin{enumerate}[(i)]
\item We call a LA $S$ a \emph{sausage} for $(x,y)\in\Zd\times\Zd$, if $x$ and $y$ are doubly connected in $S$.
Further, we define that the empty set is a sausage for every $(x,x)$ with $x\in\Zd$.

\item For $x,y\in \Zd$ and $n\geq 1$ we call a collection of $n$ oriented nearest-neighbor bonds $(b_i)_{i=1}^n$ and $n+1$ sausages $(S_i)_{i=0}^n$ an \emph{$n$-step sausage-walk} from $x$ to $y$, when $S_0$ is a sausage for $(x,\underline b_1)$, $S_n$ is a sausage for $(\bar b_n,y)$ and $S_i$ is a sausage for $(\bar b_i,\underline b_{i+1})$ for $i=1,\dots,n-1$.

\item For a sausage-walk $\omega=((b_i)_{i=1}^n, (S_i)_{i=0}^n)$ we define $|\omega|$ to be the number of steps of $\omega$. We denote the $i$th bond of the backbone by $b^{\omega}_i$ and  the $i$th sausage of $\omega$ by $S^{\omega}_i$. We call $(b_i^\omega)_{i=1}^n=(b_i)_{i=1}^n$ the {\em backbone} of the sausage-walk.

\item For $x\in\Zd$ we define any sausage for $(0,x)$ to be a zero step sausage-walk from the origin to $x$.

\item We call a sausage-walk \emph{non-backtracking}, if $\bar b^\omega_{i+1}\nin S^\omega_{i},\underline b^\omega_{i}\nin S^\omega_{i}$ and $\bar b^\omega_i\neq \underline  b^\omega_{i-1}$ for all $i$.

\item We define $\Wcal^\ssss[A](x)$ as the set of all sausage-walks from the origin $0$ to $x$ and $\Wcal^{\ssss[A],\iota}(x)$ to be the set of all sausage-walks $\omega$ from $0$ to $x$ such that $\ve[\iota]\nin S^\omega_0$ and $\bar b_1^\omega\neq \ve[\iota]$.
\end{enumerate}
\end{definition}
\medskip

Similarly to Definition \ref{defLTRibwalks}, a key point in Definition \ref{defLASausagewalks} is that a sausage-walk does not necessarily lead to a LA. For a rib-/sausage-walk $\omega$, we define
	\begin{eqnarray}
	\lbeq{defKanimal}
	K[a,b](\omega)=\prod_{s=a}^{b-1}\prod_{s=t+1}^{b} \left(1 +\Ucal_{s,t}(\omega)\right),
	\end{eqnarray}
where $-\Ucal_{s,t}(\omega)$ is the indicator that the ribs/sausages $S_s^\omega$ and $S_t^\omega$ intersect at some point in $\Zd$. Then, $K[0,|\omega|]$ is the indicator that all ribs/sausages of the walk are self-avoiding. Thus, if $K[0,|\omega|](\omega)=1$, then the union of the oriented bonds and all ribs/sausages of $\omega$ is a disjoint union and the resulting object is a LT/LA. The pair interaction in \refeq{defKanimal} thus gives a convenient description of when rib-/sausage-walks lead to a LT/LA, and this representation allows us to expand out this pair interaction in a convenient way.

To capture the contribution of the ribs/sausages we define
	\begin{eqnarray}
	\lbeq{defZanimal}
	Z[a,b](\omega):=\prod_{i=a}^b z^{|S^{\omega}_i|},
	\end{eqnarray}
and remark that $Z[a,b](\omega)=Z[a,c](\omega)Z[c+1,b](\omega)$ for $c\in[a,b)$. We often drop the argument $\omega$ for $\Ucal_{st}, K[a,b]$ and $Z[a,b]$ when this can cause no confusion. Further, we drop the superscript $A$ and $T$ for $\Wcal$ when we consider both models simultaneously. We can now write the two-point function as
	\begin{eqnarray}
	\lbeq{expanimalGen}
	\bar G_{z}(x)&=&\sum_{n=0}^{\infty } t_{n}(x)z^n=\sum_{\omega\in\Wcal(x)}z^{|\omega|}Z[0,|\omega|]K[0,|\omega|].
	\end{eqnarray}
In the NoBLE, we use an adaptation of the two-point function given by
	\begin{eqnarray}
	\lbeq{defGzIotaanimal}
	\bar G^{\kappa}_z(x)&=&\sum_{\omega\in\Wcal^\kappa(x)}z^{|\omega|} Z[0,|\omega|]K[0,|\omega|].
	\end{eqnarray}
We expand $\bar G^{\kappa}$ using the same set of graphs and laces as used by Hara and Slade in \cite{HarSla90b}:

\begin{definition}[Graphs and connected graphs]
Let $a,b\in \Nbold$ with $a<b$. For $s,t\in [a,b]\cap\Nbold$ with $s<t$, the \emph{edge} between $s$ and $t$ is the tuple $(s,t)$. We abbreviate $st$ to denote $(s,t)$. We call a set of edges a \emph{graph}. We call a graph \emph{connected}, if for all $c\in[a,b]$ there exists an edge $st\in\Gamma$ such that $s\leq c\leq t$.
Let $\Bcal[a,b]$ be the set of all graphs on $[a,b]$ and $\Gcal[a,b]$ the set of all connected graphs on $[a,b]$.
\end{definition}

\begin{definition}[Laces and compatible edges]
We call a graph \emph{minimally connected} or a \emph{lace} if the removal of any edge would disconnect the graph and define $\Lcal[a,b]$ to be the set of all minimally connected graphs on $[a,b]$. We define the function ${\rm L}\colon \Gcal[a,b] \mapsto \Lcal[a,b]$ in a constructive manner as follows: For $\Gamma\in\Gcal[a,b]$, we let
	\begin{eqnarray*}
	&s_1=a, &t_1=\max \{t: at\in\Gamma\},  \\
	&t_i=\max \{t\colon \exists s \leq t_{i-1}\text{ such that } st\in\Gamma\},  &s_i=\min \{s\colon  st_i\in\Gamma\}.
	\end{eqnarray*}
This procedure ends after a finite number of steps $N$. We denote the resulting lace $L=\{s_1t_1,s_2t_2,\dots,s_Nt_N\}$ by ${\rm L}(\Gamma)$.
We call an edge $st\not\in L$ {\em compatible} to a lace $L$ if ${\rm L}(L\cup \{st\})=L$. We denote by $\Ccal(L)$ the set of all edges that are compatible with $L$.
\end{definition}
\hspace{-7mm}
We define, for $a>b$,
	\begin{eqnarray}
	\lbeq{defJanimal}
	J[a,b]&=&\sum_{\Gamma\in\Gcal[a,b]}\prod_{st\in \Gamma}\Ucal_{st}=\sum_{L\in\Lcal[a,b]}\prod_{st\in L}\Ucal_{st}\prod_{s't'\in \Ccal(L)}(1+\Ucal_{s't'}).
	\end{eqnarray}
Further, we see that
	\begin{eqnarray}
	K[a,b]&=&\prod_{t=a}^{b-1}\prod_{s=t+1}^{b} \left(1 +\Ucal_{s,t}(\omega)\right)=\sum_{\Gamma\in\Bcal[a,b]}\prod_{st\in \Gamma}\Ucal_{st}
	\end{eqnarray}
and $K[a,a]=J[a,a]=1$. The key observation in the lace expansion is that, for $a<b$, we can write
	\begin{eqnarray}
	\lbeq{Krecrelation}
	K[a,b]&=&\sum_{i=a}^{b-1}J[a,i]K[i+1,b]+J[a,b],
	\end{eqnarray}
see e.g. \cite[Lemma 3.4]{HarSla90b}. We apply \refeq{Krecrelation} to \refeq{expanimalGen} with $a=0$ and $b=|\omega|>0$ to obtain
	\begin{eqnarray}
	\lbeq{Genanimal1}
	\bar G_{z}(x)&=& \sum_{\omega\in\Wcal(x)} z^{|\omega|} Z[0,|\omega|] \sum_{i=0}^{|\omega|-1}J[0,i]K[i+1,|\omega|]\\
	\lbeq{Genanimal2}
	&&+\sum_{\omega\in\Wcal(x)} z^{|\omega|} Z[0,|\omega|]  J[0,|\omega|].
	\end{eqnarray}
Here, the second term also contains the case where $|\omega|=0$.
We define the contribution of \refeq{Genanimal2} to be $\bar \Xi_z(x)$, i.e.,
	\begin{eqnarray}
	\lbeq{defXianimal}
	\bar \Xi_z(x)&=&\sum_{\omega\in\Wcal(x)} z^{|\omega|} Z[0,|\omega|]  J[0,|\omega|].
	\end{eqnarray}
To further rewrite \refeq{Genanimal1}, we cut the non-backtracking rib-/sausage-walk at the $i$th bond of the backbone, $b_i=(y,y-\ve[\kappa])$, into a walk $\omega^1\in\Wcal$ from $0$ to $y$ (which could correspond to a one-point function for $|\omega^1|=0$), and a second walk $\omega^2\in\bigcup_{\kappa}\Wcal^{\kappa}$ from $y-\ve[\kappa]$ to $x$. This leads to
	\begin{align}
	 \sum_{\omega\in\Wcal(x)} z^{|\omega|} Z[0,|\omega|] \sum_{i=0}^{|\omega|-1}J[0,i]K[i+1,|\omega|]
	=& \sum_{y,\kappa}\sum_{\omega^1\in\Wcal(y)} z^{|\omega^1|+1} Z[0,|\omega^1|] J[0,|\omega^1|] \indic{y-\ve[\kappa]\nin S_{|\omega^1|}^{\omega^1}}\nnb
	&\qquad \times\sum_{\omega^2\in\Wcal^{\kappa}(x-y+\ve[\kappa])}z^{|\omega^2|} Z[0,|\omega^2|]K[0,|\omega^2|]\nnb
	=& z\sum_{y,\kappa}\bar \Psi^{\kappa}(y) \bar G_z^{\kappa}(x-y+\ve[\kappa]),
	\lbeq{GenanimalSplit}
	\end{align}
where
	\begin{eqnarray}
	\lbeq{defGzIotaanimal2}
	\bar G_z^{\kappa}(x)&=&\sum_{\omega\in\Wcal^{\kappa}(x)} z^{|\omega|} Z[0,|\omega|]K[0,|\omega|],\\
	\bar \Psi^{\kappa}_z(x)&=&\sum_{\omega\in\Wcal(y)} z^{|\omega|} Z[0,|\omega|] J[0,|\omega|] \indic{x-\ve[\kappa]\nin S_{|\omega|}^{\omega}}.
	\end{eqnarray}
In this way, we have obtained a recurrence relation for the two-point function given by
	\begin{eqnarray}
	\lbeq{animalbasic1}
	\bar G_z(x)&=&\bar \Xi_z(x)+z\sum_{y,\kappa}\bar \Psi^{\kappa}(y) \bar G_z^{\kappa}(x-y+\ve[\kappa]),
	\end{eqnarray}
which is the first step towards \refeq{Gx-in-ex-1}. For \refeq{Gx-in-ex-2}, we instead consider
	\begin{eqnarray}
	\bar G_{z}(x)- \bar G^{\iota}_{z}(x)&=&\sum_{\omega\in\Wcal(x)\setminus \Wcal^{\iota}(x)} z^{|\omega|} Z[0,|\omega|]K[0,|\omega|].
	\end{eqnarray}
As $\omega\in\Wcal(x)\setminus \Wcal^{\iota}(x)$ we know that $\ve[\iota]\in S_0^\omega$ or $\bar b_1=\ve[\iota]$.
For convenience, we define
	\begin{eqnarray}
	\lbeq{def-Eiota-forLTLA}
	\1_{\iota}(\omega)=\indic{|\omega|>0}\indic{\bar b_1=\ve[\iota]}+\indic{\ve[\iota]\in S_0^\omega},
	\end{eqnarray}
and remark that the non-backtracking condition of the rib-/sausage-walk excludes that $\ve[\iota]\in S_0^\omega$ and $\bar b_1=\ve[\iota]$ occur for the same walk. Therefore,
	\begin{align}
	\bar G_{z}(x)- \bar G^{\iota}_{z}(x)=&\sum_{\omega\in\Wcal(x)} z^{|\omega|} Z[0,|\omega|]K[0,|\omega|]\1_{\iota}(\omega)\nnb
	\stackrel{\refeq{Krecrelation}}=&\sum_{\omega\in\Wcal(x)} z^{|\omega|} Z[0,|\omega|]\1_{\iota}(\omega)
	\lbeq{ExpLTLATmp1}
	\left(\sum_{i=0}^{|\omega|-1}J[0,i]K[i+1,|\omega|]+J[0,|\omega|]\right).
	\end{align}
The contribution of $J[0,|\omega|]$ gives rise to
	\begin{align}
	\lbeq{defXiIotaanimal}
	\bar \Xi^{\iota}_z(x)&=\sum_{\omega\in\Wcal(x)} \1_{\iota}(\omega) z^{|\omega|} Z[0,|\omega|]J[0,|\omega|].
	\end{align}
Again, this term incorporates the contribution when $|\omega|=0$.
The dominant contribution of $J[0,i]K[i+1,|\omega|]$ in \refeq{ExpLTLATmp1} is given by $|\omega|\geq 1, i=0$ and $b^{\omega}_0=(0,\ve[\iota])$, for which we see that
	\begin{eqnarray*}
	\lbeq{defGzIotaanimal-rewrite}
	\sum_{\omega\in\Wcal(x)} \indic{b^{\omega}_1=(0,\ve[\iota])}z^{|\omega|} Z_R[0,|\omega|] J[0,0]K[1,|\omega|]=z\gj \bar G^{-\iota}_z(x-\ve[\iota]),
	\end{eqnarray*}
where $\gj=\bar G^{\iota}_z(0)$. We extract this contribution explicitly, and split the remainder at $b_i=(y,y-\ve[\kappa])$, as in \refeq{GenanimalSplit}, which leads to
	\begin{align}
	\lbeq{ExpLTLATmp2}
	\sum_{\omega\in\Wcal(x)}& z^{|\omega|} Z[0,|\omega|]\sum_{i=0}^{|\omega|-1}J[0,i]K[i+1,|\omega|]\1_{\iota}(\omega)(1-\delta_{i,0}\indic{b_1^\omega=(0,\ve[\iota])}) \\
	&=\sum_{y\in\Zd}\sum_\kappa\sum_{\omega^1\in\Wcal(y)}\sum_{\omega^2\in\Wcal^\kappa(x-y+\ve[\kappa])}
	z^{|\omega^1|+|\omega^2|+1} Z[0,|\omega^1|](\omega^1) Z[0,|\omega^2|](\omega^2) \nnb
	&\qquad\times J[0,|\omega^1|](\omega^1)K[0,|\omega^2|](\omega^2) (
	\1_{\iota}(\omega^1)+\indic{|\omega^1|=0} \indic{y\neq 0} \indic{y-\ve[\kappa]=\ve[\iota]})\indic{y-\ve[\kappa]\nin S^{\omega^1}_{|\omega^1|}}.\nn
	\end{align}
We define
	\begin{align}
	\lbeq{defPiIotaanimalIndic}
	\1^{\Pi}_{\iota}(\omega,x,\kappa)&=(\1_{\iota}(\omega)+ \indic{|\omega|=0}\indic{x\neq 0}\indic{x-\ve[\kappa]=\ve[\iota]}),\\
	\Pi^{\iota,\kappa}_z(x)&=\sum_{\omega\in\Wcal(x)}\1^{\Pi}_{\iota}(\omega,x,\kappa)
	\indic{x-\ve[\kappa]\nin S_{|\omega|}^{\omega}} z^{|\omega|+1} Z[0,|\omega|]J[0,|\omega|],
	\lbeq{defPiIotaanimal}
	\end{align}
and see that sums over $\omega^1$ and $\omega^2$ in \refeq{defGzIotaanimal-rewrite} factorize, to conclude that
	\begin{eqnarray}
	\lbeq{animalbasic2}
	\bar G_{z}(x)- \bar G^{\iota}_{z}(x) &=&\bar \Xi^{\iota}_z(x)+z\gj \bar G^{-\iota}_z(x-\ve[\iota])+\sum_{y,\kappa}\Pi_z^{\iota,\kappa}(y)\bar G_{z}^{\kappa}(x-y+\ve[\kappa]).
	\end{eqnarray}
This completes the derivation of the expansion for LT and LA for $\bar G_{z}(x)$. To obtain \refeq{Gx-in-ex-1} and \refeq{Gx-in-ex-2}, we need to divide \refeq{animalbasic1} and \refeq{animalbasic2} by $g_z$, as we will explain in more detail in the next section.
\qed

\subsection{Definitions used in the generalized analysis}
\label{sec-defineCoeff}
In this section we complete the expansion as stated in Proposition \ref{prop-LE} and used in the general analysis of \cite{FitHof13b}.
For the analysis, we use the {\em normalized} two-point functions LT and LA defined as
	\begin{align}
	G_z(x)&=\frac 1 {g_z} \bar G_z(x),\qquad\qquad G^\iota_z(x)=\frac 1 {g_z} \bar G^\iota_z(x).
	\end{align}
This has the advantage that the analysis and bounds on the lace-expansion coefficients are naturally divided into two parts.
The one-point functions $g_z$ and $g_z^\iota$ and their influence, see \refeq{conj-gzc}, are controlled using the bootstrap function $f_1$, see \refeq{defFunc1}.
The spatial dependence of the two-point functions is controlled using the bootstrap function $f_2$, see \refeq{defFunc2}.

To improve the performance of our analysis,
we extract some dominant contributions of the lace-expansion coefficients to be used explicitly within the analysis, so as to improve the numerical accuracy of the method. These explicit terms are defined in Section \ref{secSplitCoeff}. Before that we now complete the proof of Proposition \ref{prop-LE} by identifying the lace-expansion coefficients arising in it.

The NoBLE coefficients can be written into an alternating series of non-negative functions. The series arise in a natural way by the negative signs of the ${\mathcal U}_{st}$ terms in the expansion for LTs and LAs, see \refeq{defJanimal} as well as \cite[Chapter 2]{Fit13}, as we now explain in more detail:

\begin{definition}[Laces for with fixed number of edges] For $n\geq 1$ and $N\geq 1$,
let $\Lcal^{\ssc[N]}[0,n]\subseteq\Lcal[0,n]$ be the set of all laces $L\subseteq\Lcal[0,n]$ that consist of exactly $N$ edges.
\end{definition}
\noindent
Define
	\begin{align}
	\lbeq{defJNanimal}
	J^{\ssc[N]}[a,b](\omega)&= \sum_{L\in\Lcal^\ssc[N][a,b]}\prod_{st\in L}\Ucal_{st}\prod_{s't'\in \Ccal(L)}(1+\Ucal_{s't'}).
	\end{align}
A sausage/rib-walk $\omega$ for which the indicator $J^\ssc[N]$ equals one has $N$ intersecting sausages/ribs. These intersections are characterized by the lace $L$ in \refeq{defJNanimal}.
This gives a convenient interpretation to the NoBLE coefficients that allows for sharp bounds. For $N\geq 0$ and $x\in\Zd$, we define
	\begin{align}
	\lbeq{defXiNanimal}
	\bar \Xi^{\ssc[N]}_z(x) &= (-1)^N \sum_{\omega\in\Wcal(x)} z^{|\omega|} Z[0,|\omega|]  J^\ssc[N][0,|\omega|],\\
	\lbeq{defXiNIotaanimal}
	\bar \Xi^{\ssc[N],\iota}_z(x)&=(-1)^N \sum_{\omega\in\Wcal(x)} z^{|\omega|} Z[0,|\omega|]J^\ssc[N][0,|\omega|]\1_{\iota}(\omega),\\
	\lbeq{defPsiNanimal}
	\bar \Psi^{\ssc[N],\kappa}_z(x)&=(-1)^N\sum_{\omega\in\Wcal(y)} z^{|\omega|} Z[0,|\omega|] J^\ssc[N][0,|\omega|] \indic{x-\ve[\kappa]\nin S^{\omega}_{|\omega|}},\\
	\Pi^{\ssc[N],\iota,\kappa}_z(x)&=(-1)^N \sum_{\omega\in\Wcal(x)} z^{|\omega|+1} Z[0,|\omega|] J^\ssc[N][0,|\omega|]\indic{x-\ve[\kappa]\nin S_{|\omega|}^{\omega}}
		\1^{\Pi}_{\iota}(\omega,x,\kappa),
	\lbeq{defPiNanimal}
	\end{align}
to be the restrictions of the NoBLE coefficients to laces of fixed size, with the convention that $J^\ssc[0][0,|\omega|]=\delta_{0,|\omega|}$.

The dominant contributions of these coefficients are given by
	\begin{align}
	\lbeq{dominate-piece-part1}
	\bar \Xi^{\ssc[0]}_z(0) =&\bar G_z(0)=g_z,\quad \bar \Psi^{\ssc[0],\kappa}_z(0)= \bar G^{\iota}_z(0)=\gj,\quad
	\Xi^{\ssc[0],\iota}_z(0)=\bar G_z(\ve[\iota]).
	\end{align}
We use these terms explicitly in our analysis. We perform the NoBLE analysis using the following coefficients:
	\begin{align}
	\lbeq{defXiNanimalPrime}
	\Xi^{\ssc[N]}_z(x)&=\frac 1 {g_z} \big(\bar \Xi^{\ssc[N]}_z(x)-\delta_{0,x}\delta_{0,N}\bar \Xi^{\ssc[0]}_z(0)\big),\qquad \Xi^{\ssc[N],\iota}_z(x)=\frac 1 {g_z} \bar \Xi^{\ssc[N],\iota}_z(x),\\
	\lbeq{defPsiNanimalPrime}
	\Psi^{\ssc[N],\kappa}_z(0)&=\frac 1 {g^\iota_z} \big(\bar \Psi^{\ssc[N],\kappa}_z(x)-\delta_{0,x}\delta_{0,N}\bar \Psi^{\ssc[0],\kappa}_z(0)\big),\\
	\Xi_z(x)&=\sum_{N=0}^{\infty}(-1)^N \Xi^{\ssc[N]}_z(x),\qquad \Xi^{\iota}_z(x)=\sum_{N=0}^{\infty}(-1)^N \Xi^{\ssc[N],\iota}_z(x),\\
	\Psi^{\kappa}_z(x)&=\sum_{N=0}^{\infty}(-1)^N \Psi^{\ssc[N],\kappa}_z(x),\qquad \Pi^{\iota,\kappa}_z(x)=\sum_{N=0}^{\infty}(-1)^N \Pi^{\ssc[N],\iota,\kappa}_z(x),
	\lbeq{LALTCoefficientNormalization}
	\end{align}
where we explicitly subtract the dominant contributions, of which most of them are also present for the NBW. The situation simplifies considerably for LTs, due to the absence of double connections, so that	
	\eqn{
	\lbeq{split-LTs-N0}
	\Xi^{\ssc[0],\iota}_{z}(x)=\delta_{0,x} G_z(\ve[\iota]),
	\qquad
	\Xi^{\ssc[0]}_z(x)=\Psi^{\ssc[0],\iota}_z(x)=0,
	}
while, using \refeq{def-Eiota-forLTLA}, as well as \refeq{defPiIotaanimalIndic}--\refeq{defPiIotaanimal},
	\eqn{
	\lbeq{split-LTs-N0-Pi}
	\Pi^{\ssc[0],\iota,\kappa}_z(x)=\delta_{0,x} z \sum_{T\colon 0\in T} z^{|T|}\indic{x-\ve[\kappa]\nin T} \indic{\ve[\iota]\in T}.
	}
For LAs, several further contributions due to $x\neq 0$, which require a double connection, arise.
Using the above notation we obtain \refeq{Gx-in-ex-1}-\refeq{Gx-in-ex-2} by dividing the equations \refeq{animalbasic1} and \refeq{animalbasic2} by $g_z$. Thereby, we have completed the proof of Proposition \ref{prop-LE} and have identified the NoBLE coefficients appearing in it.
\qed

\subsection{Split for the coefficients}
\label{secSplitCoeff}
In this section, we define further terms that we extract from the NoBLE coefficients to be used in the analysis of \cite{FitHof13b}. By extracting these terms we reduce the size of the perturbation, which ultimately improves the numerical accuracy of our analysis. We give first the definition for LAs and then explain how terms simplify for LTs.

\paragraph{Definition of the explicit terms for $N=0$.}
For a LA $A$ and $x,y\in\Zd$, we denote by $x\dbct{A} y$ the event that $x$ and $y$ are doubly connected via bonds in $A$. By $d_A(x,y)$ we denote the intrinsic distance of $x,y$ in $A$, meaning the length of the shortest connection of $x$ and $y$ using bonds in $A$. For $N=0$, we define
	\begin{align}
	\lbeq{LA-Split-Def-1}
	\Xi^{\ssc[0]}_{\alpha,z}(x)&=(1-\delta_{0,x})\frac {1}{g_z}\sum_{A\colon 0,x\in A}\indic{0\dbct{A} x}\indic{d_A(0,x)=1}z^{|A|}, \\
   	 \Psi^{\ssc[0],\kappa}_{\alpha,{\sss I},z}(x)&=(1-\delta_{0,x})\frac {1}{g^\iota_z}\sum_{A\colon 0,x\in A}\indic{0\dbct{A} x}
    	\indic{x-\ve[\kappa]\nin A}\\
	&\qquad \times (\indic{d_A(0,x)=1}\indic{x=\ve[\kappa]}+\indic{|\ve[\kappa]-x|=1}\indic{d_A(0,x)=2}),\nn\\
    	\Psi^{\ssc[0],\kappa}_{\alpha,{\sss II},z}(x)&=(1-\delta_{0,x})\frac {1}{g^\iota_z}\sum_{A\colon 0,x\in A}\indic{0\dbct{A} x}\indic{d_A(0,x)=1} \indic{x-\ve[\kappa]\nin A},
	\end{align}
and
	\begin{align}
	\lbeq{LA-Split-Def-XiIotaAlphaa}
    	\Xi^{\ssc[0],\iota}_{\alpha,{\sss I},z}(x)&=\frac {1}{g_z}
    	\sum_{A\colon 0,x,\ve[\iota]\in A} z^{|A|}\indic{0\dbct{A} x} \left(\delta_{0,x}+\delta_{\ve[\iota],x} +(1-\delta_{0,x})\indic{d_A(\ve[\iota],x)=1}\right),\\
	\lbeq{LA-Split-Def-XiIotaAlphab}
    	\Xi^{\ssc[0],\iota}_{\alpha,{\sss II},z}(x)&=\frac {1}{g_z}\sum_{A\colon 0,x,\ve[\iota]\in A} z^{|A|}\indic{0\dbct{A} x} \left(\delta_{0,x}+\delta_{\ve[\iota],x}
	+(1-\delta_{\ve[\iota],x})\indic{d_A(0,x)=1}\right),\\
	\Pi^{\ssc[0],\iota,\kappa}_{\alpha,z}(x)&=
    	\sum_{A\colon 0,x,\ve[\iota]\in A} z^{|A|+1}\indic{0\dbct{A} x}\indic{x-\ve[\kappa]\nin A}\left(\delta_{\ve[\iota],x}+\delta_{\ve[\iota]+\ve[\kappa],x}\indic{d_A(\ve[\iota]+\ve[\kappa],\ve[\iota])=1}\right).
	\lbeq{LA-Split-Def-1a}
	\end{align}
For $N=1$, the terms are more involved and are given by
    	\begin{align}
    	\lbeq{LA-Split-Def-None}
    	\Xi^{\ssc[1]}_{\alpha,z}(x)&=\frac 1 {g_z}\sum_{\omega\in\Wcal(x)} z^{|\omega|} Z[0,|\omega|]J^{\ssc[1]}[0,|\omega|]\indic{\bb^\omega_0=0}\\
    	&\qquad\times \left(\delta_{0,x}+\delta_{|\omega|,1}\delta_{\tb^\omega_0,x}
    	+\indic{x\in S_0\cap S_{|\omega|}}\indic{d_{S_0}(0,x)=1} +
   	 \indic{0\in S_0\cap S_{|\omega|}\not \ni x}\indic{d_{S_{|\omega|}(0,x)=1}}\right),\nnb
    	\Psi^{\ssc[1],\kappa}_{\alpha,{\sss II},z}(x)&=\frac 1 {g^\iota_z}\sum_{\omega\in\Wcal(x)} z^{|\omega|} Z[0,|\omega|]J^{\ssc[1]}[0,|\omega|]\indic{\bb^\omega_0=0}
	\indic{x-\ve[\kappa]\nin S_{|\omega|}^{\omega}}\\
    	&\qquad \times\left(\delta_{0,x}+\delta_{|\omega|,1}\delta_{\tb^\omega_0,x}+\indic{x\in S_0\cap S_{|\omega|}}\indic{d_{S_0}(0,x)=1} +
    	\indic{0\in S_0\cap S_{|\omega|}\not \ni x}\indic{d_{S_{|\omega|}(0,x)=1}}\right),\nnb
     	\Psi^{\ssc[1],\kappa}_{\alpha,{\sss I},z}(x)&=(\delta_{0,x}+\delta_{x,\ve[\kappa]})
     	\Psi^{\ssc[1],\kappa}_{\alpha,{\sss II},z}(x)\lbeq{LA-Split-Def-10}\\
     	&\quad +\frac {2d D(x-\ve[\kappa])}{g^\iota_z}
     	\sum_{\omega\in\Wcal(x)} z^{|\omega|} Z[0,|\omega|]J^{\ssc[1]}[0,|\omega|]\indic{x-\ve[\kappa]\nin S_{|\omega|}^{\omega}}\delta_{|\omega|,2}
	\indic{\bb^\omega_0=0}\delta_{\tb^\omega_0,\bb^\omega_1}\delta_{\tb^\omega_1,x}.\nn
    	\end{align}
All sausage-walks contributing to these sums have the following in common: (a) The first step of the backbone $\bb^\omega_0$ starts at the origin.
(b) The sausages $S_0$ and $S_{|\omega|}$ intersect either at $0$ or at $x$. (c) At least one bond is explicitly present. Most of the times this bond is $(0,x)$, forced by $d_A(0,x)=1$.\\
We define the remainder terms arising through this split by
	\begin{align}
  	\Xi^{\ssc[0]}_{{\sss R},z}&=\Xi^{\ssc[0]}_{z}-\Xi^{\ssc[0]}_{\alpha,z},\qquad\qquad\quad
  	\Xi^{\ssc[1]}_{{\sss R},z}=\Xi^{\ssc[1]}_{z}-\Xi^{\ssc[1]}_{\alpha,z},\\
  	\Psi^{\ssc[0],\iota}_{{\sss R, I},z}&=\Psi^{\ssc[0],\iota}_{z}-\Psi^{\ssc[0],\iota}_{\alpha,{\sss I},z},\qquad\qquad
  	\Psi^{\ssc[0],\iota}_{{\sss R, II},z}=\Psi^{\ssc[0],\iota}_{z}-\Psi^{\ssc[0],\iota}_{\alpha,{\sss II},z},\\
  	\Psi^{\ssc[1],\iota}_{{\sss R, I},z}&=\Psi^{\ssc[1],\iota}_{z}-\Psi^{\ssc[1],\iota}_{\alpha,{\sss I},z},\qquad\qquad\
  	\Psi^{\ssc[1],\iota}_{{\sss R, II},z}=\Psi^{\ssc[1],\iota}_{z}-\Psi^{\ssc[1],\iota}_{\alpha,{\sss II},z},\\
  	\Xi^{\ssc[0],\iota}_{{\sss R, I},z}&=\Xi^{\ssc[0],\iota}_{z}-\Xi^{\ssc[0],\iota}_{\alpha,{\sss I},z},\qquad\qquad\ \
  	\Xi^{\ssc[0],\iota}_{{\sss R, II},z}=\Xi^{\ssc[0],\iota}_{z}-\Xi^{\ssc[0],\iota}_{\alpha,{\sss II},z},\\
  	\Pi^{\ssc[0],\iota,\kappa}_{{\sss R},z}&=\Pi^{\ssc[0],\iota,\kappa}_{z}-\Pi^{\ssc[0],\iota,\kappa}_{\alpha,z}.
	\end{align}
This completes the necessary split for LAs.

\paragraph{Lattice trees.}
We use the same definitions for LTs, where we sum over LTs $T$ instead of LAs $A$.
However, for LTs the terms simplify considerably, as double connections are not possible. This is especially true for $N=0$, for which, for all $x\in\Zd$ (recall \refeq{split-LTs-N0}),
	\begin{align}
	\lbeq{split-LTs-N0a}
	\Xi^{\ssc[0]}_{\alpha,z}(x)=\Psi^{\ssc[0],\iota}_{\alpha,{\sss I},z}(x)=\Psi^{\ssc[0],\iota}_{\alpha,{\sss II},z}(x)=0.
	\end{align}
Further, the split actually captures the complete contribution of $\Xi^{\ssc[0],\iota}_z$, since
	\begin{align}
	\lbeq{split-LTs-N0b}
	\Xi^{\ssc[0],\iota}_{z}(x)=\Xi^{\ssc[0],\iota}_{\alpha,{\sss I},z}(x)&=\Xi^{\ssc[0],\iota}_{\alpha,{\sss II},z}(x)=\delta_{0,x}  \frac {\bar G_z(\ve[\iota])}{g_z}=\delta_{0,x} G_z(\ve[\iota]).
	\end{align}
This completes the derivation of the split of the coefficients as used in \cite[Section 4]{FitHof13b}.

\subsection{Assumptions on the model}
\label{sec-gzAssumption}
In this section, we verify most of the assumptions necessary to apply the analysis of \cite{FitHof13b}.
We start by proving the assumption that are independent of the NoBLE: \cite[Assumptions 2.2, 2.3 and 2.4]{FitHof13b}.

\subsubsection{Assumption on the two-point function}
We begin with \cite[Assumption 2.4]{FitHof13b} as it will help us prove \cite[Assumptions 2.2]{FitHof13b}.
\paragraph{\cite[Assumption 2.4]{FitHof13b}:}  For $z\in[0,z_c)$, the functions $ z\mapsto\aabz$ and $z\mapsto\aaz$ are continuous.\\
To verify this assumption, we choose $\aabz=zg_z$ and $\aaz=zg_z^\iota$, with
	\begin{align}
	\gj=\sum_{A\colon A\ni 0}z^{|A|} \qquad \text{ and }\qquad \gj=\sum_{A\colon \ve[1]\nin A\ni 0}z^{|A|}.
	\end{align}
By Abel's Theorem, the one-point function is continuous within the radius of convergence of these power series, which is at most $z_c$, see \refeq{defLTLAsusceptibility} and the text thereafter. Thus, also $z\mapsto\aabz$ and $z\mapsto\aaz$ are continuous.

\paragraph{\cite[Assumption 2.2]{FitHof13b}:}There exists a $z_I\in[0,z_c)$ such that
	\begin{align}
	\lbeq{assInitialXBound}
	G_{z}(x)\leq B_{1/(2d-1)}(x)=\frac {2d-2}{2d-1} C_{1/2d}(x)
	\end{align}
for all $x\in \Zd$ and $z\in[0,z_I]$.\\

To prove this statement for LTs and LAs we adapt an argument used in \cite[Proof of Lemma 3.1]{HarSla90b}.
We know that each LT/LA containing $0$ and $x$ contains a path of bonds that connects $0$ and $x$. Each point of the path is connected to at most one rib/sausage.
The weight of all possible rib/sausages $z^{|S_i|}$ (see \refeq{defZanimal}) can be bounded by $g_z$, as each rib/sausage $S_i$ is itself a LT/LA.
For $x\neq 0$ we can improve this by bounding the weight by $\gj$ instead. This is possible as each rib/sausage needs to avoid at least the next/previous step of the path from $0$ to $x$. From this, we conclude for $x\neq 0$ that
	\begin{align}
	\lbeq{LTInitalBound-goal}
	\bar G_z(x)\leq \gj \sum_{\omega\in\Wcal^{\ssss[NBW]}(x)} (z\gj)^{|\omega|}=\gj B_{z\gj}(x),
	\end{align}
where $\Wcal^{\ssss[NBW]}(x)$ is the set of all NBWs (see the text above \refeq{NBWGen}) starting at $0$ and ending at $x$, and $|\omega|$ is the number of steps of the NBW $\omega$. The inequality \refeq{LTInitalBound-goal} then follows for all $z$ for which $z\gj\leq (2d-1)^{-1}$ and $x\neq 0$. We define
	\begin{align}
	\lbeq{DefinitionOfzI}
	z_I:=\sup\Big\{z\colon zg_{z}^\iota\leq \frac 1 {2d-1} \Big\}.
	\end{align}
This is well defined as $\aaz=zg_{z}^\iota$ is continuous and non-decreasing in $z$ with $zg_{z}^\iota=0$ when $z=0$. To complete the proof of \cite[Assumption 2.2]{FitHof13b}, we still need to prove that $z_I<z_c$.
For this we note that $g_z-g_z^\iota=\bar G_z(\ve[1])$ and use \refeq{LTInitalBound-goal} to obtain
	\begin{align}
	\lbeq{LTInitalBound-goal-step}
	\bar \chi(z)=\sum_x\bar G_z(x)\leq \bar G_z(0)+\sum_{x\neq 0} \gj B_{z\gj}(x)\leq \gj \big(1+ B_{z\gj}(\ve[1]) + \chi^{\sss \rm NBW}(z\gj)\big).
	\end{align}
From this we conclude that $z_I\leq z_c$ as otherwise $\chi(z_c)=\infty$, while $\chi^{\sss \rm NBW}(z_cg_{z_c}^\iota)$ on the right-hand side is finite.
To exclude that $z_I=z_c$, we see that \refeq{LTInitalBound-goal} also holds when we replace $\gj$ with
	\begin{align}
	\lbeq{gz'-def}
	g'_z=\min_{\kappa\neq 1} \sum_{A:\ve[1],\ve[\kappa]\not \in A\ni 0}z^{|A|}
	\end{align}
for all ribs/sausages except those at $0$ and $x$. We define $z'$ as the value $z$ such that $z'g'_{z'}=(2d-1)^{-1}$, and  conclude as in \refeq{LTInitalBound-goal-step} that $z_c \geq z'>z_I$.
Recalling $\bar G_z(x)=g_z G_z(x)$ and $\gj<g_z$ we  conclude \refeq{LTInitalBound-goal} from \refeq{assInitialXBound}. This concludes the proof of \cite[Assumption 2.4]{FitHof13b}.

Additionally, we prove the lower bound $z_I\geq (2d-1)^{-1}\e^{-1}$. As this uses  ideas not used elsewhere, we move the proof to Appendix \ref{sec-lemmaAnalysisLABound} (see Lemma \ref{lem-LBzI}).
\noindent

\paragraph{\cite[Assumption 2.3]{FitHof13b}: Growth of the two-point function.}
We need to show that for every $x\in \Zd$, the two-point functions $z\mapsto G_z(x)$ and $z\mapsto G^\iota_z(x)$
are non-decreasing and differentiable in $z\in(0,z_c)$. Further, we need to show that for all $\varepsilon>0$, there exists a constant $c_{\varepsilon}\geq 0$ such that for all $z\in(0,z_c-\varepsilon)$ and $x\in\Zd\setminus\{0\}$,
	\begin{eqnarray}
	\lbeq{assGzDiffBound}
	\frac d {dz} G_z(x)\leq c_{\varepsilon} (G_z\star D\star G_z)(x),
	\quad \text{ and therefore }\quad \frac d {dz} \hat G_z(0)\leq c_{\varepsilon} \hat G_z(0)^2.
	\end{eqnarray}
Finally, we need to show that for each $z\in(0,z_c)$, there exists a constant $K(z)<\infty$ such that $\sum_{x\in\Zd} |x|^2 G_{z}(x)<K(z)$. We will do this now.\\

As a generating function of a non-negative sequence (see \refeq{defLTLATwoPoint}), the two-point function is clearly non-decreasing in the parameter $z$
as well as differentiable in $z$ for $z\in(0,z_c)$. Next, we first prove the bound on the derivative in \refeq{assGzDiffBound} for LTs.
We know that a LT $T$ with $|T|$ edges contains $|T|+1$ vertices. We use this property to compute for $x\neq 0$
	\begin{align}
	\lbeq{LTDiffBound-tmp1}
	\sum_{T\ni 0,x} \frac d {dz} z^{|T|}=\sum_{T\ni 0,x} {|T|} z^{|T|-1}=\sum_{y\neq 0}\sum_{T\ni 0,x,y} z^{|T|-1}.
	\end{align}
As a LT $T\ni x,y$ contains no loops, the path from $x$ to $y$ is unique. We denote this path by $b^T(x,y)$.
By $u$ we denote the last vertex that the paths from $0$ to $x$ and from $0$ to $y$ have in common.
For $u\neq x$ we split the walk at $u$ into three individual trees and bound the contributions of these individual trees by two-point functions.
Doing this, we have to take into account that in \refeq{LTDiffBound-tmp1} the tree $T$ is only weighted by $z^{|T|-1}$,
so that we have to choose one bond of the tree that does not carry the weight $z$.
We choose the first step of the path from $u$ to $x$ to be this bond.
For $u=x$ there exists no first step. In this case, we choose the last step of the path from $0$ to $x$ to be the bond without weight $z$. Using that $\bar G_{z}(0)\geq 1$, we obtain the bound
	\begin{align}
	\sum_{T\ni 0,x} \frac d {dz} z^{|T|}\leq&\ \sum_{y\neq 0} \sum_{u\neq x,v}  \bar G^{(t)}_z(u)2d D(v-u)\bar G^{(t)}_{z}(x-v) \bar G^{(t)}_z(u-y)\lbeq{LTDiffBound-tmp2}\\
	&\ + \sum_{y\neq 0} \sum_{v} \bar G^{(t)}_z(v)2dD(v-x) \bar G^{(t)}_{z}(x-y) \nnb
	\leq &\ 4d (\bar G^{(t)}_z\star D\star \bar G^{(t)}_z)(x) \sum_y \bar G^{(t)}_z(y)= 4d \hat {\bar G}^{(t)}_z(0) (\bar G^{(t)}_z\star D\star \bar G^{(t)}_z)(x).
	\nn
	\end{align}
For the normalized two-point function, we conclude
	\begin{align}
	\frac d {dz} G^{(t)}_z(x)&= \frac d {dz} \frac {\bar G^{(t)}_z(x)}{g^{(t)}_z}=
	\frac 1 {g^{(t)}_z}\sum_{T\ni 0,x} \frac d {dz} z^{|T|} -\frac {\bar G^{(t)}_z(x)} {(g^{(t)}_z)^2} \sum_{T\ni 0} \frac d {dz} z^{|T|} \nnb
	&\leq 4d (g^{(t)}_z)^2 \hat G^{(t)}_z(0)(G^{(t)}_z\star D\star G^{(t)}_z)(x),
	\lbeq{LTDiffBound-tmp3}
	\end{align}
as required. To conclude such an inequality for LAs, we note that an animal with $|A|$ bonds contains at least $|A|/d$ vertices and as for the LT compute that
	\begin{align}
	\frac d {dz} G^{(a)}_z(x)&=\frac 1 {g^{(a)}_z}\sum_{A\ni 0,x} |A|z^{|A|-1}\leq \frac {d} {g^{(a)}_z}\sum_{A\ni 0,x,v} z^{|A|-1}\nnb
	&\leq 4d^2 (g^{(a)}_z)^2 \hat G^{(a)}_z(0)(G^{(a)}_z\star D\star G^{(a)}_z)(x).
	\lbeq{LTDiffBound-tmp4}
	\end{align}
	
As the last step we prove that for all $z< z_c$ there exists $K(z)<\infty$ such that
 $\sum_{x\in\Zd} |x|^2 G_{z}(x)<K(z)$. In \cite[(1.1)]{HarSla92c}, it is proved that the connectivity constants $\lambda=1/z_c$ can be used to prove that
	\begin{align}
	t_n(0) \leq  \lambda^n (n+1).
	\end{align}
Let $|x|_\infty:=\max_{i=1}^d |x_i|$ be the supremum norm and compute
	\begin{align}
	\bar G_z(x) = & \sum_{n=|x|_\infty}^\infty t_n(x)z^n\leq \sum_{n=|x|_\infty}^\infty t_n(0)z^n
	\leq \sum_{n=|x|_\infty}^\infty (z/z_c)^n (n+1).
	\end{align}
From this we conclude that $\bar G_z(x)$ decays exponentially, i.e., there exists $c,m(z)\in(0,\infty)$ such that
	\begin{align}
	\bar G_z(x) \leq&\sum_{n=|x|_\infty}^\infty (n+1)(z/z_c)^{n}\leq c \e^{-m(z) |x|_\infty}.
	\end{align}
We use this bound to conclude that
	\begin{align}
	\sum_{x\in\Zd} |x|^2\bar G_z(x) \leq&\sum_{x\in\Zd} |x|^2 c \e^{-m(z) |x|_\infty}
	\stackrel{|x|^2\leq d^2 |x|_\infty^2}\leq \sum_{n=1}^{\infty} (dn)^2c \e^{-m(z) n}\sum_{x\colon |x|_\infty=n}1\nnb
\leq&\ d^3 \sum_{n=1}^{\infty} n^{d+1}c \e^{-m(z) n}:=K(z)<\infty,
	\end{align}
which proves the desired statement.
\qed

\subsubsection{Assumptions on the NoBLE-coefficients}
In this section, we verify the assumptions on the NoBLE coefficients formulated in \cite[Assumptions 4.1, 4.2 and 4.4]{FitHof13b}.

\paragraph{\cite[Definition 2.5]{FitHof13b} Symmetry of the model.}{\it
We denote by $\mathcal{P}_d$ the set of all permutations of $\{1,2,\dots, d\}$. For $\nu\in \mathcal{P}_d$, $\delta\in\{-1,1\}^d$ and $x\in\Zd$, we define $p(x;\nu,\delta)\in\Zd$ to be the vector with entries $(p(x;\nu,\delta))_j=\delta_j x_{\nu_j}$. We say that a function $f\colon \Zd\mapsto \Rbold$ is {\em totally rotationally symmetric} (TRS) when $f(x)=f(p(x;\nu,\delta))$ for all $\nu\in \mathcal{P}_d$ and $\delta\in\{-1,1\}^d$.}\\[2mm]
Total rotational symmetry is natural on $\Zd$, e.g.\ the two-point function $G_z$ as well as the NBW and SRW two-point functions have this symmetry.
We next argue that the NoBLE coefficients have similar symmetries.\\[2mm]
{\bf \cite[Assumption 4.1]{FitHof13b}.} Let $\iota,\kappa\in\{\pm 1,\pm 2,\dots,\pm d\}$. {\it The following symmetries hold for all $x\in\Zd$, $z\leq z_c$, $N\in\Nbold$ and $\iota,\kappa$:
	\begin{eqnarray}
	\lbeq{Symmetrie-goal}
	\Xi^\ssc[N]_z(x)&=& \Xi^\ssc[N]_z(-x), \qquad  \qquad \qquad
	\Xi^{\ssc[N],\iota}_z(x)= \Xi^{\ssc[N],-\iota}_z (-x),\\
	\lbeq{Symmetrie-goal2}\Psi^{\ssc[N],\iota}_z(x)&=& \Psi^{\ssc[N],-\iota}_z(-x), \qquad \qquad\
	\Pi^{\ssc[N],\iota,\kappa}_z(x)= \Pi^{\ssc[N],-\iota,-\kappa}_z (-x).
	\end{eqnarray}
For all $N\in\Nbold$, the coefficients
	\begin{align}
	\lbeq{TRS-sums}
	\Xi^\ssc[N](x),\qquad \sum_{\iota}\Psi^{\ssc[N],\iota}_z(x),
	\qquad \sum_{\iota}\Xi^{\ssc[N],\iota}_z(x) \quad \text{and}\quad \sum_{\iota,\kappa}\Pi^{\ssc[N],\iota,\kappa}_z(x),
	\end{align}
as well as the remainder terms of the split
	\begin{align}
	\lbeq{TRS-sums2}\Xi^{\ssc[N]}_{{\sss R},z}(x),\quad \sum_{\iota}\Psi^{\ssc[N],\iota}_{{\sss R, I},z}(x),
	\quad \sum_{\iota}\Psi^{\ssc[N],\iota}_{{\sss R, II},z}(x),
	\quad \sum_{\iota}\Xi^{\ssc[0],\iota}_{{\sss R, I},z}(x)
	\quad \sum_{\iota}\Xi^{\ssc[0],\iota}_{{\sss R, II},z}(x),
	\quad \sum_{\iota,\kappa}\Pi^{\ssc[0],\iota,\kappa}_{{\sss R},z}(x),
	\end{align}
are totally rotationally symmetric functions of $x\in\Zd$.
Further, the dimensions are exchangeable, i.e., for all $\iota,\kappa$,
	\begin{align}
 	\lbeq{assCoefficentsDimInterchange}
	\hat \Psi^{\ssc[N],\iota}_z(0)=\ \hat \Psi^{\ssc[N],\kappa}_z(0),\qquad
	\hat \Xi^{\ssc[N],\iota}_z(0)=\ \hat \Xi^{\ssc[N],\kappa}_z(0),\qquad
	\sum_{\kappa'}\hat \Pi^{\ssc[N],\iota,\kappa'}_z(0)=\sum_{\iota'}\hat \Pi^{\ssc[N],\iota',\kappa}_z(0).
	\end{align}}
By the definition of the NoBLE coefficients in Section \ref{secExp}, it is easy to see that \refeq{Symmetrie-goal}, \refeq{Symmetrie-goal2} and \refeq{assCoefficentsDimInterchange} hold.
The TRS stated in \refeq{TRS-sums}, \refeq{TRS-sums2} might be less obvious.
The definition of $\Xi^\ssc[N]_z$ does not include constraints on specific directions, so it is not difficult to see that $x\mapsto \Xi^\ssc[N]_z(x)$ is TRS for all $N\geq 0$.
The other three NoBLE coefficients are not TRS as their definition includes constraints on one or two specific directions.
For example, the coefficient $\Psi^{\ssc[N],\kappa}_z(x)$ includes the constraint that $x-\ve[\kappa]$ is not in the last rib/sausage.
When we sum over $\kappa$, though, the directional constraint is averaged out and thus $\sum_{\kappa}\Psi^{\ssc[N],\kappa}_z(x)$ is TRS.
For the same reason, the sums over $\iota$ and $\iota,\kappa$ in \refeq{TRS-sums}, as well as the stated remainder terms, are TRS.
The arguments given above hold, when the coefficients are well defined, which is definitely the case for all $z<z_c$.

\cite[Assumption 4.1]{FitHof13b} states that the symmetry properties also hold for $z=z_c$, where it is not even obvious that these objects are well defined.
We verify the left-continuity in \cite[Assumption 4.4]{FitHof13b} below, from which the symmetries will follow also for $z=z_c$.
Further, inspection of the proof in \cite{FitHof13b} shows that the symmetries are only used for $z<z_c$.
\qed

\paragraph{\cite[Assumption 4.2]{FitHof13b} Relation between coefficients.}{\it
For all $x\in\Zd$, $z\leq z_c$, $N\in\Nbold$ and $\iota,\kappa\in\{\pm 1,\pm 2,\dots,\pm d\}$, the following bounds hold:
	\begin{align}
	\lbeq{XidominatespsiImproved}
	\Psi^{\ssc[N],\kappa}_z(x)\leq&\frac {\aabz}{ \aaz} \Xi^\ssc[N]_z(x),
	\qquad\qquad\Pi^{\ssc[N],\iota,\kappa}_z(x)\leq  \aabz \Xi^{\ssc[N],\iota}_z(x).
	\end{align}}
\medskip

We first compare $\bar \Xi^{\ssc[N]}$ and $\bar \Psi^{\ssc[N],\kappa}$, in \refeq{defXiNanimal},\refeq{defPsiNanimal} and see that they
differ by the additional condition $x-\ve[\kappa]\nin S^{\omega}_{|\omega|}$, so that $\bar \Xi^{\ssc[N]}(x)\leq \bar \Psi^{\ssc[N],\kappa}(x)$.
Considering the normalisation \refeq{LALTCoefficientNormalization}, we see that $\Xi^{\ssc[N]}$ was normalised using $g_z$, while $\Psi^{\ssc[N],\kappa}$ was normalised using $\gj$. As $\aabz/\aaz=g_z/\gj$ we conclude that the left inequality in \refeq{XidominatespsiImproved} holds.

Regarding $\Xi^{\ssc[N],\iota}$ and $\Pi^{\ssc[N],\iota,\kappa}$, defined in \refeq{defXiNIotaanimal} and \refeq{defPiNanimal}, note that $\Pi^{\ssc[N],\iota,\kappa}$ contains the extra factor $\1_{\iota}(\omega)=\1^{\Pi}_{\iota}(\omega,x,\kappa)$ for $N\geq 1$, as well as an additional factor $z$. Further, we do not normalize $\Pi^{\ssc[N],\iota,\kappa}$, while $\Xi^{\ssc[N],\iota}$ is normalized with a factor $1/g_z$, creating the stated factor $\aabz=zg_z$, which proves the bound for $N\geq 1$. To obtain the equation also for $N=0$, we review \refeq{defPiIotaanimal} and see that $\1_{\iota}(\omega)$ and $\1^{\Pi}_{\iota}(\omega,x,\kappa)$ only differ when $x\neq 0$, which is not possible for LTs, as double connections are not present. For LAs, we see that
	\begin{align}
	zg_z \Xi^{\ssc[0],\iota}(x)-\Pi^{\ssc[0],\iota,\kappa}(x)
	=&\sum_{A\ni 0,x} z^{|A|+1} \indic{0\dbct{A} x} \indic{\ve[\iota]\in A} \left( 1 -  \indic{x-\ve[\kappa]\nin A}\right)\nnb
	&-\indic{x-\ve[\kappa]=\ve[\iota]}\sum_{A\ni 0,x} z^{|A|+1} \indic{0\dbct{A} x} \indic{x\neq 0}\indic{x-\ve[\kappa]\nin A},
	\end{align}
and conclude the desired relation for $x\neq \ve[\iota]+\ve[\kappa]$, with $\iota\neq-\kappa$. For the remaining case $x=\ve[\iota]+\ve[\kappa]$ we compute
	\begin{align}
	zg_z \Xi^{\ssc[0],\iota}(\ve[\iota]+\ve[\kappa])-\Pi^{\ssc[0],\iota,\kappa}(\ve[\iota]+\ve[\kappa])
	=&\sum_{A\ni 0,\ve[\iota]+\ve[\kappa]} z^{|A|+1} \indic{0\dbct{A} \ve[\iota]+\ve[\kappa]} \left(\indic{\ve[\iota]\in A} - \indic{\ve[\iota]\nin A}\right).
	\end{align}
To verify that this is positive for all $d\geq 30$, we use the following very helpful rearrangement
	\begin{align}
	&\indic{\ve[\iota]\in A} - \indic{\ve[\iota]\nin A}=1 - \indic{\ve[\kappa]\nin A}- \indic{\ve[\iota]\nin A},\\
	&=1 -\indic{\ve[\iota]\nin A\ni \ve[\kappa]}-\indic{\ve[\kappa]\nin A\ni \ve[\iota]}-2\indic{\ve[\iota],\ve[\kappa]\in A}
	=\indic{\ve[\iota],\ve[\kappa]\in A}-\indic{\ve[\iota],\ve[\kappa]\nin A},\nn
	\end{align}
so that
	\begin{align}
	\lbeq{tech-cond-prob-iota-kappa}
	zg_z \Xi^{\ssc[0],\iota}(\ve[\iota]+\ve[\kappa])-\Pi^{\ssc[0],\iota,\kappa}(\ve[\iota]+\ve[\kappa])
	=&\sum_{A\ni 0,\ve[\iota]+\ve[\kappa]} z^{|A|+1} \indic{0\dbct{A} \ve[\iota]+\ve[\kappa]} \left(\indic{\ve[\iota],\ve[\kappa]\in A} - \indic{\ve[\iota], \ve[\kappa]\nin A}\right).
	\end{align}

While it is to be expected that this quantity is indeed positive, we could not find a direct injection between the different classes of LAs with the same number of bonds. Thus, we have to resort to a numerical verification of the inequality, using numerical upper and lower bounds.
We add this extra numerical verification in the LA mathematica notebook (see \cite[Improvement of Bounds, Technical condition]{FitNoblePage}), alongside three other numerical conditions that need to be verified for the analysis of \cite{FitHof13b}, alike $\hat G_z(k)\geq 0$ for all $k$.

 Alternatively, this condition could be dropped, at the expense of having to prove separate bounds for $\Pi^{\ssc[0],\iota,\kappa}(x)$, rather than relying on this simple relation to bound $\Pi^{\ssc[0],\iota,\kappa}(x)$ in terms of $ \Xi^{\ssc[0],\iota}(x)$. We refrain from following this route.
\qed

\paragraph{\cite[Assumption 4.4]{FitHof13b} Growth at the critical point.}{ \it
The functions $z\mapsto \hat \Xi_z(k),z\mapsto \hat  \Xi^{\iota}_z(k),z\mapsto \hat \Psi^{\kappa}_z(k),z\mapsto \hat  \Pi^{\iota,\kappa}_z(k)$ are continuous for $z\in(0,z_c)$. Further, let $\Gamma_1,\Gamma_2,\Gamma_3\geq 0$ be such that $f_i(z)\leq \Gamma_i$ and assume that \cite[Assumption 4.3]{FitHof13b} holds. Then, the functions stated above are left-continuous in $z_c$ with a finite limit when $z\nearrow z_c$ for all $x\in\Zd$. Further, for technical reasons, we assume that $z_c<1/2$.}
\medskip

The two-point function $G_z$ is defined as a power series that is clearly continuous in $z$ within its radius of convergence $z_c$. Since $G^\iota_z\leq G_z$, it has a radius of convergence that is at least $z_c$. The coefficients $\Xi^{\ssc[N]}_z,\Xi^{{\ssc[N]},\iota}_z,\Psi^{{\ssc[N]},\iota}_z$ and $\Pi^{{\ssc[N]},\iota,\kappa}_z$ are also power series and can be bounded in terms of many two-point functions. Since $\sum_x G_z(x)<\infty$ for all $z<z_c$, also the radii of convergence of these coefficients are at least $z_c$. [We believe that $G^\iota_z$ and the NoBLE coefficients have the same radii of convergence as $G_z$, but that is irrelevant here.]
Assuming that the bootstrap functions are uniformly bounded, all stated functions are uniformly bounded as well, which implies that they are left-continuous at $z_c$ by Abel's theorem.

To prove that $z_c<1/2$ consider a simple random walk that only takes steps in the positive direction. After $n$ steps there are $d^n$ possible trajectories, each being also a LT, so that $t_n(0)\geq d^n$.
As $z_c$ is defined as the radius of convergence of $\chi$ (see \refeq{defLTLAsusceptibility}), and
\begin{align*}
\chi(1/d)&=\hat {\bar G}_{1/d}(0)=\sum_{n=0}^\infty \sum_{x} t_n(x)d^{-n}\geq \sum_{n=0}^\infty t_n(0)d^{-n}= \sum_{n=0}^\infty d^n d^{-n}=\infty,
\end{align*}
we know that $z_c\leq 1/d<1/2$.


\def\picPiFourTreePictureRibs[#1]{
}

\section{An overview to the NoBLE diagrammatic bounds}
\label{secBoundsExplained}
In this section we discuss how we bound the NoBLE coefficients. We explain the concepts that we use to obtain sharp bound on these diagrams, each of which forms a numerical improvement on the bounds on classical lace-expansion coefficients. Then, we follow that up with the technical definitions of these concepts. We end this section with a discussion of how we bound double connections for LAs, as these are quite central in our bounds for LAs.

\subsection{Basic bounds and their improvements}
\label{sec-BoundIdea-Heuristic}
We explain here how we obtain sharp bounds on
	\begin{align}
	\bar \Xi^{\ssc[1]}_z(x) =& \sum_{\omega\in\Wcal(x)} z^{|\omega|} Z[0,|\omega|]  J^\ssc[1][0,|\omega|]\nnb
                                              =& \sum_{\omega\in\Wcal(x)}  z^{|\omega|}  \prod_{i=0}^{|\omega|} z^{|S^{\omega}_i|}
                                              \indic{S^{\omega}_s\text{ and }S^{\omega}_t\text{ are only intersecting for $s=0$ and $t=|\omega|$}   },
	\lbeq{bound-Xi1-1}
	\end{align}
by using four ideas, one at a time. We restrict here to LTs, as these give rise to the simplest diagrams, which already display all the main ingredients to our bounds.

We compare the improvement that we obtain in our bounds by providing numerical bounds for the initial point $z=z_I$ resulting from the numerical analysis performed in our Mathematica notebooks. Our proof is based upon the fact that the values at $z_I$ and at $z_c$  are actually quite close. Thus, even though $z_c>z_I$,
such bounds should give us a clear idea of how large such bounds are, and what the effect of the improvements is.

We recall \refeq{LTInitalBound-goal} and combine it with the bound $g^\iota_{z_I}\leq \e$, see \refeq{gjgzrelation-Initial}, to conclude, for $z_I=(2d-1)^{-1}\e^{-1}$,
	\begin{align}
	\lbeq{bound-Gzi}
	\bar G_{z_I}(x)\leq g_{z_I}^\iota B_{1/(2d-1)}(x)=\e B(x),
	\end{align}
where $\e$ is the Euler number and $B(x)$ critical two-point function of the NBW, see \refeq{NBWGen}. This bound is independent of the lace expansion and we can compute its value numerically. While this bound does not hold at the critical point $z_c$, its value is good enough to compare the bounds discussed in this section. All stated numerical results are computed for dimension $d=18$, and are aimed to convey the numerical improvements in the bounds.

\paragraph{Plain-vanilla bound.}
We focus on $\bar \Xi^{\ssc[1]}_z$ and start with the simplest possible approach, which relies on the bounds as they have been stated up to now in the literature.
The coefficient $\bar \Xi^{\ssc[1]}_z$ involves three points: $0$, $x$ and a point where $S^{\omega}_0$ and $S^{\omega}_{|\omega|}$ intersect, which we denote by $w$.
Bounding the connections between each of the points by individual independent LTs we obtain
	\begin{align}
	\lbeq{bound-Xi1-2pre}
    	\bar \Xi^{\ssc[1]}_z(x) \leq & \bar G_z(x)\sum_{w} \bar G_z(w-x) \bar G_z(w),
	\end{align}
which, combined with \refeq{bound-Gzi}, implies, for $z\leq z_I$,
	\begin{align}
	\lbeq{bound-Xi1-2}
	 \sum_x\bar \Xi^{\ssc[1]}_{z_I}(x) \leq & (\bar G_{z_I})^{\star 3}(0)\leq \e^3  B^{\star 3}(0)=22.322\dots
	\end{align}
This is of course a very bad bound that does not allow us to successfully apply the lace-expansion method, so we improve it using the four ideas explained below.

\paragraph{First improvement: extracting trivial lines.}
A simple way to improve this bound is to consider four cases depending on which lines are trivial: $x=0$, then $w=x\neq 0$ and $w=0\neq x$ and the remaining cases.
In the case $x=0$, we extract the contribution of the first rib $g_z^\iota$ and the first bond of the backbone
with all its possible directions $2dz$, and bound the remaining rib-walk, that goes back to the origin, by $\bar G_z(\ve[1])$.
Thus, the case $x=0$ is bounded by $2d zg_z^\iota(D\star \bar G_z)(0)$.

For $w=x\neq 0$ and $w=0\neq x$,  $2$ non-trivial two-point functions are required. For $w\nin\{0,x\}$, instead, all three connections are non-trivial.
We bound the non-trivial connections by $\bar G_{z_I}(x) \leq \e B(x)\leq \tfrac {2d\e}{2d-1}(D\star B)(x)$ for $x\neq 0$ and conclude that
	\begin{align}
 	\sum_x\bar \Xi^{\ssc[1]}_{z_I}(x) \leq &
 	\e\Big(\frac {2d}{2d-1}\Big)^2 (D^{\star 2}\star B)(0)+ 2\e^2 \Big(\frac {2d}{2d-1}\Big)^2 (D^{\star 2}\star B^{\star 2})(0)\nnb
 	&+ \e^3\Big(\frac {2d}{2d-1}\Big)^3 (D^{\star 3}\star B^{\star 3})(0)=0.85541\ldots
	\lbeq{bound-Xi1-3}
	\end{align}
This is an enormous improvement compared to  \refeq{bound-Xi1-2}. The simple idea of extracting trivial lines, by splitting between different cases, is a basic technique that is used for all coefficients in Section \ref{secBoundProof}.

\paragraph{Second improvement: improved counting of one-point functions by rib allocation.}
We observe that we extract a single one-point function at every line in an intersection point. This gives rise to the three factors of $\e$ in \refeq{bound-Xi1-3}.
However, when the different lines come from the {\em same} rib, there in fact is only {\em one} one-point function, so that we are overcounting one one-point function (leading to a factor $\e$)
every time we split a rib. Thus, for the diagram in \refeq{bound-Xi1-2}, we would need only {\em one} factor $\e$ instead of three.
Let us explain one way to obtain a bound without overcounting.

In \refeq{bound-Xi1-2pre}, the terms $\bar G_z(w)$ and $\bar G_z(w-x)$ bound the ribs $S^{\omega}_0$ and $S^{\omega}_{|\omega|}$, which, being ribs, are just LTs.
The indicator $J^\ssc[1][0,|\omega|]$ in \refeq{bound-Xi1-1} guarantees that only those $\omega$ contribute for which all in-between ribs do not intersect, so that the rib-walk between
$\tb_1^\omega=u$ and $\tb^{\omega}_{|\omega|}=v$ also describes a LT and can be bounded by $\bar G_z(v-u)$.
Extracting the case that $|\omega|=1$, for which $\tb_1^\omega=\tb^{\omega}_{|\omega|}$,
so that there are no in-between ribs, we obtain
	\begin{align}
	\lbeq{bound-Xi1-4pre}
    	\bar \Xi^{\ssc[1]}_z(x) \leq
    	&\indic{x=0}2d zg_z^\iota(D\star \bar G_z)(0) \\
    	&+\indic{x\neq 0} \sum_w \bar G_z(w) \bar G_z(w-x)
  	\Big( \indic{|x|=1} z+ z^2  \sum_{\stackrel{u,v\in\Zd}{|u|=|x-v|=1}}
      	\bar G_z(v-u)\Big),\nn
	\end{align}
where $z$ and $z^2$ correspond to the weight of the first and last bond of the original backbone.
Considering the four cases for $0,x,w$, used in the first improvement, we obtain
	\begin{align}
	\sum_x\bar \Xi^{\ssc[1]}_{z_I}(x) \leq &
 	3\e\Big(\frac {2d}{2d-1}\Big)^2 (D^{\star 2}\star B)(0) + \e \Big(\frac {2d}{2d-1}\Big)^3 (D^{\star 3}\star B^{\star 2})(0)\nnb
 	&+ 2\e\Big(\frac {2d}{2d-1}\Big)^3 (D^{\star 3}\star B^{\star 2})(0) +\e\Big(\frac {2d}{2d-1}\Big)^4 (D^{\star 3}\star B^{\star 3})(0)=0.5725\dots
	\lbeq{bound-Xi1-4}
	\end{align}
The improvement of \refeq{bound-Xi1-3} to \refeq{bound-Xi1-4} is mostly realized by the proper handling
of the weights of the one-point function $\gj\approx \e$ at the origin and at $x$, which are bounded twice in \refeq{bound-Xi1-3}.

\begin{remark}[Rib-weight allocation]
\label{rem-rib-weight}
{\rm In general, the diagram of $\Xi^{\ssc[N]}_z$ has $4N-1$ connections for LTs. Allocating the one-point functions properly will save us approximately a factor $\e^{4N-2}$ for all $N$. This allocation of rib weights is a central technical problem, especially as we require various different allocations for the weights
when we bound the coefficient with spatial weight $|x|^2$. Using splitting arguments of such spatial terms, that are standard in lace-expansion analyses, we split the weight along a path connecting $0$ and $x$,
e.g.\ using $|x|^2\leq N\sum_{i=1}^N |x_i|^2$, where the $(x_i)_{i=1}^N$ are the displacements along the bottom lines of the diagram,
so that $x=\sum_{i=1}^N x_i$. For each of these partial weights $|x_i|^2$, we want to bound the corresponding connection
by $\bar G_z$, as $f_3$ only provides us with a bound on $|x_i|^2 \bar G_z(x_i)$.
This imposes severe restrictions on our rib-weight allocation, and we need to be really careful in such arguments.
We discuss this in more detail in Section \ref{secBoundsOnePointF}, see also Remark \ref{rem-rib-weight2}.}
 \end{remark}

\paragraph{Third improvement: using the non-backtracking nature of the diagram.}
Now, we use the non-backtracking property of the rib-walk for the first time. This property implies that any loop needs at least $4$ bonds,
as direct reversals are prohibited and loops on the lattice require an even number of bonds.
For example the self-loop of the rib-walk back to $x=0$ will take at least $4$ steps and can be bounded by
	\begin{align}
	(2d {z_I}g_{z_I}^\iota)^3 (D^{\star 3}\star \bar G_{z_I})(0)
	\leq \left(\frac {2d }{2d-1}\right)^3 \frac {2d\e} {2d-1} (D^{\star 4}\star B)(0).
	\end{align}
Further, we can omit the case $w\in\{0,x\}$ for the case that $b_1^\omega=(0,x)$, see property (v) of Definition \ref{defLTRibwalks}.
Going through $8$ different combinations of how the $4$ steps can be distributed, we obtain
	\begin{align}
	\sum_x\bar \Xi^{\ssc[1]}_{z_I}(x) \leq &
 	4\e\big(\frac {2d}{2d-1}\big)^4 (D^{\star 4}\star B)(0) + 3\e \big(\frac {2d}{2d-1}\big)^4 (D^{\star 4}\star B^{\star 2})(0)\nnb
 	&+\e\big(\frac {2d}{2d-1}\big)^4 (D^{\star 3}\star B^{\star 3})(0)=0.0755\dots
	\lbeq{bound-Xi1-5}
	\end{align}
The improvement from \refeq{bound-Xi1-4} to \refeq{bound-Xi1-5} demonstrate the power of the NoBLE and its non-backtracking property. To use this property for the other diagrams, we will split depending on the lengths of lines shared by two loops. This splitting into a total of five cases of lengths of shared lines will create {\em matrix-based bounds.} This will become apparent in Section \ref{secBuildingBlocks}.

\paragraph{Fourth improvement: using repulsiveness in diagrams.} We can actually improve this bound even further. For this we have to understand that the paths connecting $0$, $x$ and $w$ do not intersect, so that the loop is actually a self-avoiding polygon. We call such a polygon a {\em repulsive diagram}. For this to work, we have to choose the point $w_1$ as a {\it first intersection point} and bound it by a {\it repulsive} diagram. Both terms are defined within the next two sections. Using these repulsive diagrams, we improve the bound \refeq{bound-Xi1-5} to
	\begin{align}
	\lbeq{bound-Xi1-6}
	\sum_x\bar \Xi^{\ssc[1]}_{z_I}(x) \leq & 0.02562\dots
	\end{align}
This is explained in more detail in Section \ref{secBoundsRepdia}.

\paragraph{Summary.} By the above four improvements combined, we have now improved our estimate on this simple diagram $\sum_x\bar \Xi^{\ssc[1]}_{z_I}(x)$ by a factor that is close to 1000. Since we need all the numerical precision that we can get our hands upon, so as to be able to apply our methods in the lowest possible dimensions, such estimates are crucial to the success of our method.

\paragraph{Extension to lattice animals.}
Lattice animals (LAs) allow for double connections. Despite this additional difficulty, for $\bb_0=0$ and $\tb_{|\omega|}=x$, we can use the same bounds \refeq{bound-Xi1-2pre}-\refeq{bound-Xi1-6} based on exactly the same arguments. For $\bb_0\neq 0$ we need to bound the sausage, which connects the points $0,\bb_0$, as well as the first intersection point $w_1$. We explain this in Section \ref{secBoundsOnePointF}, where we also define the required notions. In Section \ref{secBoundsDouble}, we discuss how we use a symmetry argument to improve the bound for the important case $w_1\in\{0,\bb_0\}$ of a double connection.

\subsection{Allocation of one-point functions}
\label{secBoundsOnePointF}
In the preceding section, we disassembled the rib-walk into up to three pieces: the initial rib, the final rib and the walk in between, and have argued that rib-weight allocation is a crucial ingredient to the success of our method. In the following, we also need to split individual ribs/sausages. Below we give a rigorous description of the split of a rib/LT and explain how we ensure that we do not overcount one-point functions by an optimal rib-weight allocation.

After this, we explain the allocation of one-point functions for the more involved diagrams. In the process we define the concept of \emph{first intersection point} of two sausages, that will allow use to choose bond-disjoint paths connecting the corner points of the loops. In the next section we use this choice of bond-disjoint paths to define {\em repulsive diagrams}.

\paragraph{Splitting the weight of a rib: planted animals/trees.}
Our aim is to bound the contribution $z^{|T|}$ of LTs that contain the vertices $0,x,v\in\Zd$ efficiently. We identify the last vertex $u$ that the path $0\leftrightarrow x$ and the path $0\leftrightarrow v$ have in common, split the tree into three trees and ignore the avoidance constraints between them to obtain
	\begin{align}
	\sum_{T} \indic{0,x,v\in T} z^{|T|}&\leq \sum_{T_1,T_2,T_3}\sum_{u\in\Zd}
	\indic{0,u\in T_1}\indic{x,u\in T_2}\indic{u,v\in T_3} z^{|T_1|+|T_2|+|T_3|}\nnb
	\lbeq{ribweightLT-1}
	&=\sum_{u} \bar G_{z}(u) \bar G_{z}(x-u)\bar G_{z}(v-u),
	\end{align}
which corresponds to the plain-vanilla bound of the last section. Thus, it also counts {\em three} ribs at $u$, whereas there really only is one.

To avoid this, we define an adapted version of the two-point function that only counts the trivial first rib/sausage $A^\omega_0=\varnothing$, i.e.,
	\begin{eqnarray}
	\lbeq{defTwoPointFunctionTildeTotal}
	\tilde G_{z}(x)&=&\sum_{\omega\in\Wcal(x)}z^{|\omega|} Z[0,|\omega|]K[0,|\omega|]\indic{A^\omega_0=\varnothing}.
	\end{eqnarray}
Using this two-point function, we obtain the bound
	\begin{align}
	\lbeq{ribweightLT-2}
	\sum_{T} \indic{0,x,v\in T} z^{|T|}\leq \sum_{u} \bar G_{z}(u)\tilde G_{z}(x-u)\tilde G_{z}(v-u),
	\end{align}
which is approximately $\e^2$ smaller than \refeq{ribweightLT-1}.

\paragraph{Planted animals.}
For LTs, the point $u$ at which we split the tree is unique, which is not the case for LA. We define the concepts of \emph{backbone}, \emph{connecting planted animal} and
 \emph{first point of intersection}, to provide a rigorous way to split the sausage for LAs. We only give the definition for LAs and sausage-walks, as each LT is also a LA and each rib-walk is also a sausage-walk.

Before defining these notions, we discuss the {\em lexicographic order} on paths, which is a way to identify paths uniquely. We say that $x\in \Zd$ has a lower lexicographic order than $y\in \Zd$, if there exists an $i\in \{1,2,\dots,d \}$ with $x_i<y_i$ and $x_j=y_j$ for $j=1,2,\dots, i-1$. A bond $b$ is defined as a tuple of two vertices $b=(\bb,\tb)$, so that we can view $b$ also to be a vector in $\Zbold^{2d}$ and use the same order relations as for vertices. A path $(b_1,b_2,\dots,b_n)$ has a lower lexicographic order than the path $(t_1,t_2,\dots,t_m)$, if either there exists a $i\in \{1,2,\dots, \min(n,m)\}$, such that $b_i$ has a lower order than $t_i$ and $b_j=t_j$ for $j=1,2,\dots, i-1$ or if $n<m$ and $b_i=t_i$ for $i=1,2,\dots, n$.

\begin{definition}[Backbone]
\label{defLTbackbone}
Let $A$ be a lattice animal containing $x,y\in\Zd$, with $x\neq y$.
A path from $x$ to $y$ in $A$ is a sequence of bonds $(b_i)_{i=1,\dots,N}$ such that $b_i\in A,
\tb_i\neq \tb_j$ for all $i\neq j$, and $\bb_1=x,\tb_i=\bb_{i+1},\tb_N=y$ for $i=1,\dots,N-1$.
We define $b^A(x,y)$ to be the path from to $x$ to $y$ in $A$ with the lowest lexicographic order.
For $x=y\in A$, we define $b^A(x,y)=\varnothing$.
\end{definition}

\begin{definition}[Connecting planted animal]
\label{defPlantedAnimal}
Let $A$ be a lattice animal containing $x,y\in\Zd$, with $x\neq y$, and let $p$ be a path from $x$ to $y$ in $A$.
Let $S$ be the subset of $A\setminus\{p\}$ of all bonds for which at least one vertex of the bond is connected to $x$ via bonds in $A\setminus \{ p\}$.
We call $B^A(x,y;p)=A\setminus S$ the \emph{connecting planted animal} from $x$ to $y$ along $p$.\\
Unless stated otherwise, we consider the path $p$ to be the backbone $p=b^A(x,y)$ and omit $p$ from the notation.
For a sausage-/rib-walk $\omega$, we define $B^\omega_i(x,y)=B^{A^\omega_i}(x,y)$.
\end{definition}
\begin{figure}[ht!]
\begin{center}
{\Large
\picContplantAnimal[0.85]
\\ }
\caption{Two examples of a connecting planted animal.}
\label{BoundLA-Figure-connectingtree}
\end{center}
\end{figure}

We next define the notion of a {\em first intersection point.} We do this in a more general setting, encoded by a lace $L$, where some of the LAs are forced to intersect each other, while others are not. This will make sure that we can use this definition for all NoBLE lace-expansion coefficients:

\begin{definition}[First intersection point]
\label{defLTFirstPointIntersect}
For a sausage-walk $\omega$, a lace $L=\{s_1t_1,s_2t_2,$ $\dots,$ $s_Mt_M\}$ and $i\leq M$, we define $W_i(\omega)$  to be the set of vertices $w$ that are contained in both $A^\omega_{s_i}$ and $A^\omega_{t_i}$, such that $b^{A^\omega_{s_i}}(\tb^\omega_{s_i},w)$ and $A^\omega_{t_i}(\tb^\omega_{t_i},w)$ only intersect at $w$. If $W_i(\omega)$ is non-empty, then we define $w_i(\omega)$ to be the smallest in the set $W_i(\omega)$.
We call $w_i(\omega)$ the \emph{first intersection point} of $A^\omega_{s_i}$ and $A^\omega_{t_i}$.
\end{definition}
Note that if the end of the $s_i$th pivotal bond $\tb^\omega_{s_i}$ is in $S^\omega_{t_i}$, then $W_i(\omega)=\{\tb^\omega_{s_i}\}$.
We illustrate the concept of a first intersection point in Figure \ref{BoundLT-Figure-connectingtree}.

\begin{figure}[ht!]
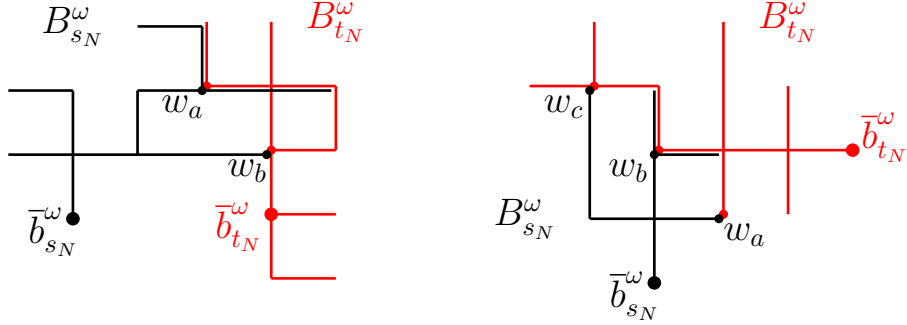

\begin{center}
{\Large
\picRootContreeAndFirstIntersection[0.85]}
\caption{
Two examples of the set of first intersection points $W_N(\omega)=\{w_a,w_b,w_c\}$.
In the picture the $s_N$th and $t_N$th sausage of the sausage-walk $\omega$ intersect.
We choose the point $w_N(\omega)$ to be the unique smallest representative of $W_N(\omega)$ in the lexicographic order.}
\label{BoundLT-Figure-connectingtree}
\end{center}
\end{figure}

\paragraph{Non-trivial sausages.}
If we wish to extend the analysis to LAs, then we need to deal with non-trivial sausages. A lattice animal $A$ is called a {\em non-trivial sausage} if it contains three vertices $0,x,w$ (with $x\neq 0$), such that $0$ and $x$ are doubly connected in $A$, as shown in the first image of Figure \ref{split-LA-sausage-Gbar-Table}. We now show how we bound these sausages.
In Section \ref{secBoundsDouble}, we explain how we obtain improved bounds for the special case $w\in\{0,x\}$, which is a major contribution for LAs.

We define the point $u$ to be a point that every path from $0$ to $w$ and $x$ to $w$ shares. By construction, the points $0,u,x$ are pairwise doubly connected, so that there exist four bond-disjoint paths: $p_1(u,w), p_2(u,x), p_3(0,u)$ and $p_4(0,x)$ connecting the indicated points in $A$. We bound the contribution of such LAs by
	\begin{align}
	\sum_{A\colon 0,x,w\in A} \indic{0\dbct{A} x}z^{|A|}\leq& 2d z \sum_{u}\tilde G_{z}(u-w)(D\star \tilde G_{z})(u-x)\tilde G_{z}(u)\bar G_{z}(x).
	\lbeq{split-LA-sausage-Gbar}
	\end{align}
We obtain this bound by decomposing each LA $A$ into four parts as shown in Figure \ref{split-LA-sausage-Gbar-Table}. Using the pictured decomposition, we know that $A_4$ is an animal containing $x$, which we bound by $\bar G_z(x)$. Further, $A_1$ and $A_3$, respectively, only include one bond that contains $u$ and $0$, respectively.
Thus, we bound them using the modified two-point function $\tilde G_z$. The animal $A_2$ is the interesting case as $u$ and $x$ are only connected by one bond each. If $u\neq x$, then we bound this contribution by the step involved by $2d z D(\cdot)$ and the remainder by $\tilde G_z$, creating the bound $2dz(D\star \tilde G_z)(x-u)$.
If $u=x$ then $A_2=\varnothing$ and $A_3$ only has one bond at the origin and one at $x$, so we bound $A_3$ by $2dz(D\star \tilde G_z)(x)$.

{\centering
\begin{figure}[ht]
\begin{tabular}{cc}
\begin{subfigure}{0.4\textwidth}\centering
\begin{tikzpicture}
    \node[anchor=south west,inner sep=0] at (0,0) {\includegraphics[height=6 cm]{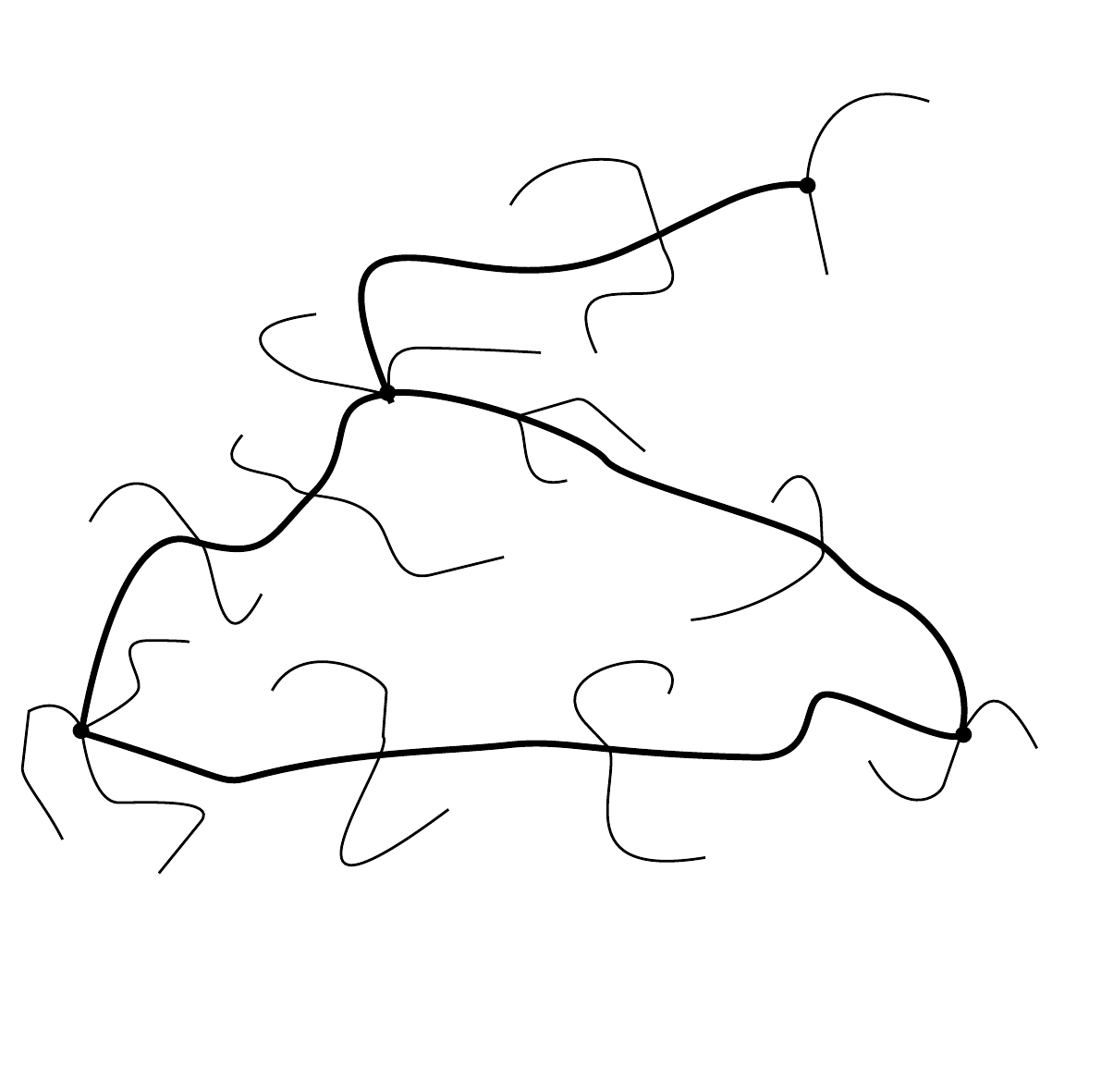}};
    \node at(0.1,2.4)   {$0$};
    \node at(5.5,1.7)   {$x$};
    \node at(2.2,3.6)   {$u$};
    \node at(4.8,5.1)   {$w$};
    \node at(1,4.5)   {$A$};
\end{tikzpicture}
\caption*{$A$ containing $0,x,w$ and the last common point $u$ on the paths from $0$ to $w$ and $x$ to $w$.}
\end{subfigure}&
\begin{subfigure}{0.4\textwidth}\centering
\begin{tikzpicture}
    \node[anchor=south west,inner sep=0] at (0,0) {\includegraphics[height=6 cm]{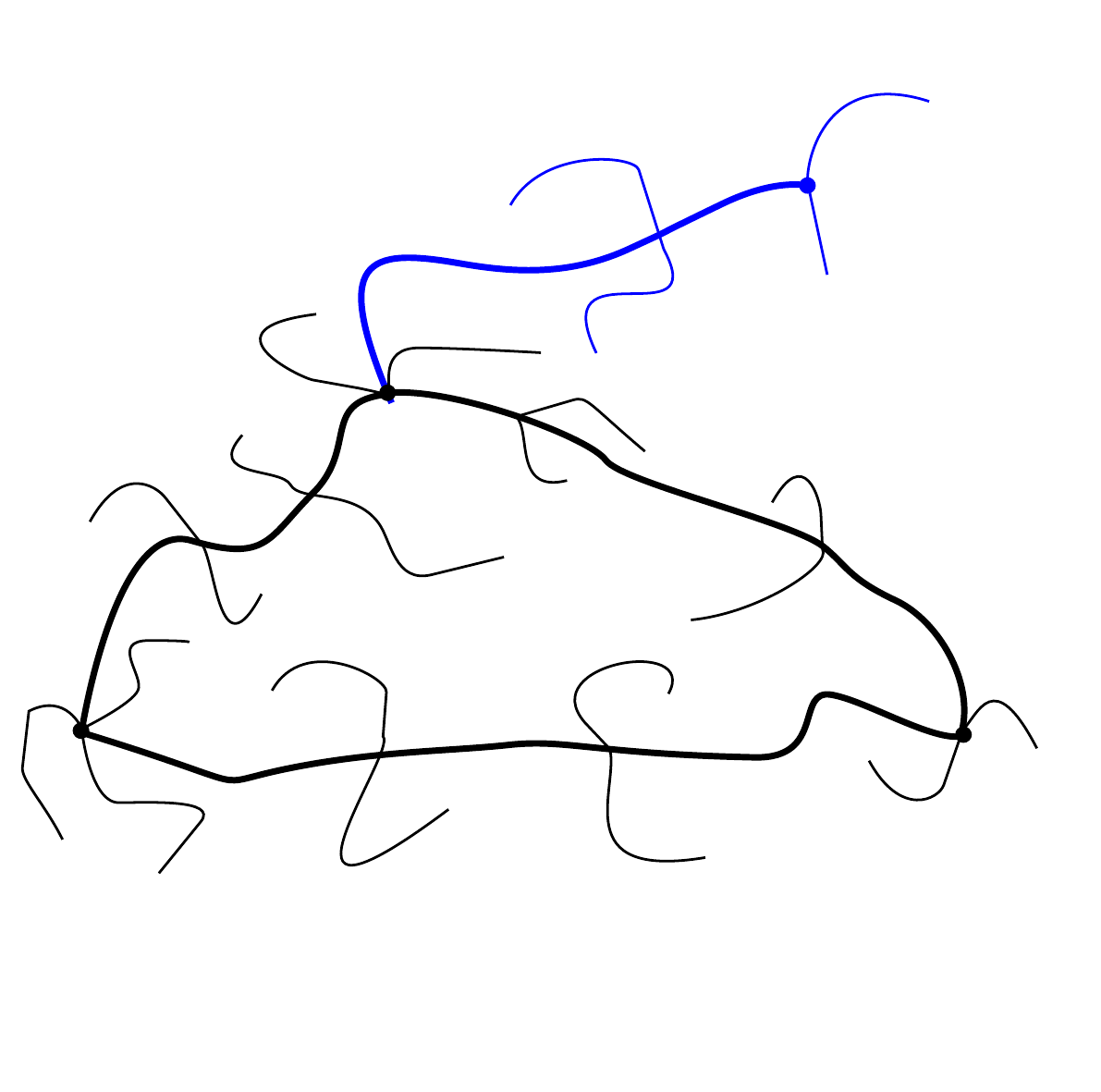}};
    \node at(0.1,2.4)   {$0$};
    \node at(5.5,1.7)   {$x$};
    \node at(2.2,3.6)   {$u$};
    \node at(4.8,5.1)   {$w$};
    \node[blue] at(2.2,5.0)   {$A_1$};
    \node at(0,3.1)   {$R_1$};
\end{tikzpicture}
\caption*{Let $A_1=B^{A}(u,w; p_1(u,w))$ and $R_1=A\setminus A_1$.}
\end{subfigure}\\
\newline
\begin{subfigure}{0.4\textwidth}\centering
\begin{tikzpicture}
    \node[anchor=south west,inner sep=0] at (0,0) {\includegraphics[height=6 cm]{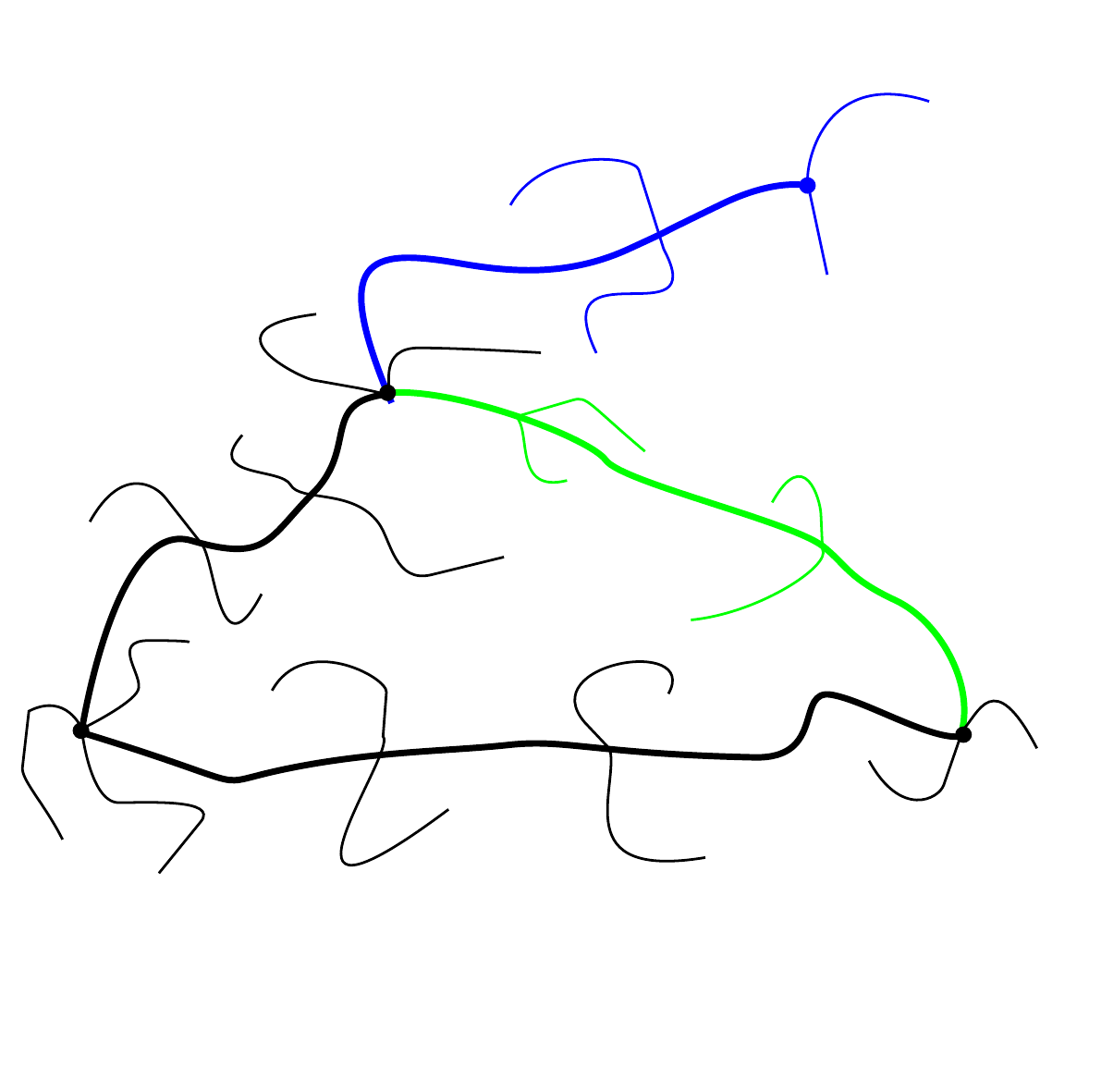}};
    \node at(0.1,2.4)   {$0$};
    \node at(5.5,1.7)   {$x$};
    \node at(2.2,3.6)   {$u$};
    \node at(4.8,5.1)   {$w$};
   \node[blue] at(2.2,5.0)   {$A_1$};
   \node[green] at(5,3)   {$A_2$};
    \node at(0,3.1)   {$R_2$};
\end{tikzpicture}
\caption*{Let $A_2=B^{R_1}(u,x; p_2(u,x))$ and $R_2=R_1\setminus A_2$.}
\end{subfigure}&
\begin{subfigure}{0.4\textwidth}\centering
\begin{tikzpicture}
    \node[anchor=south west,inner sep=0] at (0,0) {\includegraphics[height=6 cm]{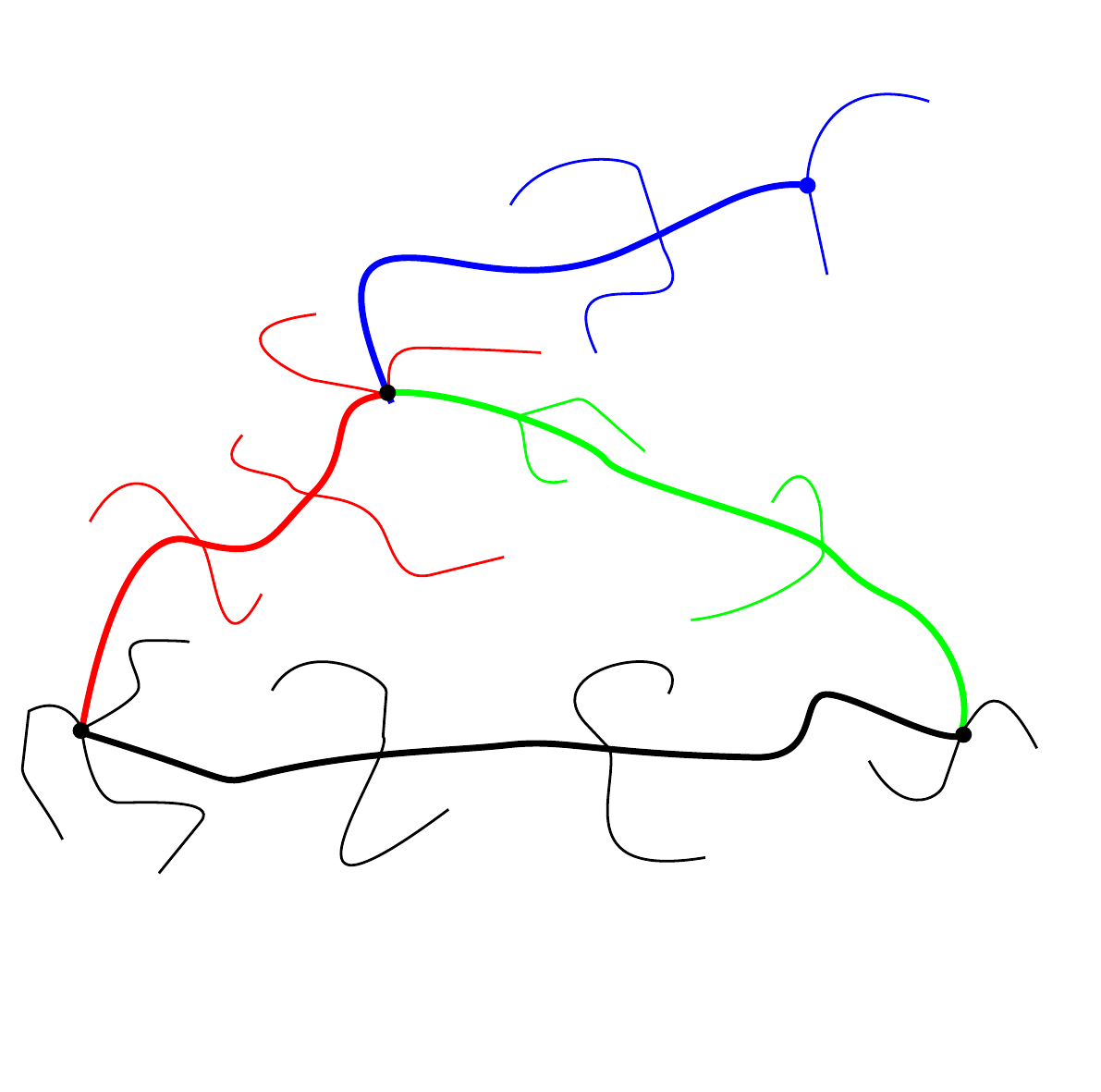}};
    \node at(0.1,2.4)   {$0$};
    \node at(5.5,1.7)   {$x$};
    \node at(2.2,3.6)   {$u$};
    \node at(4.8,5.1)   {$w$};
   \node[blue] at(2.2,5.0)   {$A_1$};
   \node[green] at(5,3)   {$A_2$};
   \node[red] at(0,3.1)   {$A_3$};
    \node at(2.7,1)   {$A_4$};
\end{tikzpicture}
\caption*{Let $A_3=B^{R_2}(0,u; p_3(0,u))$ and $A_4=A\setminus(A_1\cup A_2 \cup A_3)$.}
\end{subfigure}\\
\end{tabular}
\caption{The way that we split a LA to create the bound \refeq{split-LA-sausage-Gbar}.}
\label{split-LA-sausage-Gbar-Table}
\end{figure}
}

There are multiple ways to split such animals, and depending on the situation we might want to use a different combination of where to attribute the rib weights. To give an example, we can create animals as discussed above in which each line is bounded by $\tilde G_z$. For this, we modify the split shown in Figure \ref{split-LA-sausage-Gbar-Table} as follows:
\begin{enumerate}[i.)]
\item Let $R_3=B^{R_2}(x,0; p_4(x,0))$, and $R_4=R_2\setminus R_3$.
\item Let $\bar A_3=B^{R_2}(0,u; p_3(0,u))$ and $\bar A_4=R_3\setminus \bar A_3$.
\item Let $\bar A_2=R_4\cup A_2$.
\end{enumerate}
The split into $A_1,\bar A_2,\bar A_3$ and $\bar A_4$ is very much alike the one shown in Figure \ref{split-LA-sausage-Gbar-Table}, the only difference
being that every bond incident to $x$ that is not part of the path $p_4(x,0)$ is now part of $\bar A_2$ instead of $\bar A_4$.
This creates the bound
	\begin{align}
	\lbeq{split-LA-sausage-second}
	\sum_{A\colon 0,x,w\in A} \indic{0\dbct{A} x}z^{|A|}\leq& \sum_{u}\tilde G_{z}(u-w)\tilde G_{z}(u-x)\tilde G_{z}(u)\tilde G_{z}(x).
	\end{align}
Which of the different bounds is chosen depends on the precise nature of the bound that we are deriving. In fact, optimizing over such decisions is numerically an important ingredient of our method.

\paragraph{Bound on the sausage-walk in $\Xi^{\ssc[1]}_z(x)$.}
Here we continue our bound on $\Xi^{\ssc[1]}_z(x)$ started in Section \ref{sec-BoundIdea-Heuristic}, now specializing to LAs with a non-trivial first sausage.
We have bounded the case $\bb_1=0$ by splitting the sausage-walk into three parts:
\begin{enumerate}[(a)]
  \item The first sausage $S^\omega_0$ that connects $0$ and $w_1(\omega)$, bounded by $\bar G_z(w)$;
  \item The last sausage $S^\omega_{|\omega|}$ that connects $x$ and $w_1(\omega)$, bounded by $\bar G_z(w-x)$;
  \item The remaining sausage-walk without the first and last sausage, bounded by $2dz(D\star\tilde G_z)(x)$.
\end{enumerate}
For $\bb_1\neq 0$, which is not possible for LTs, we need to split the first sausage $S^\omega_0$.
The first sausage contains $0,\bb_1$ and $w_1(\omega)$ and doubly connects $0$ and $\bb_1$.
Thus, $S^\omega_0$ is a non-trivial sausage as discussed in the previous paragraph.
Combining these bounds we arrive at
	\begin{align}
	\lbeq{bound-xi1-structure}
	\Xi^{\ssc[1]}_z(x)\leq& 2dz \sum_{w} \bar G_{z}(w) \bar G_{z}(w-x)(D\star \tilde G_{z})(x)\\
	&+    2dz\sum_{w,\bb,u}\bar G_{z}(w-x)(D\star \tilde G_{z})(x-\bb)\tilde G_{z}(u-w)\tilde G_{z}(u-x)\tilde G_{z}(u)\tilde G_{z}(\bb),\nn
	\end{align}
where the second term is absent for LTs.

\paragraph{The allocation of one-point functions.}
As can be seen above, we have some choice of how we split the sausages. This becomes quite complex when more lines are involved.
Let us discuss the example $\Xi^{\ssc[2]}_z$, which is the simplest diagram of the more involved cases.
\begin{figure}[ht!]
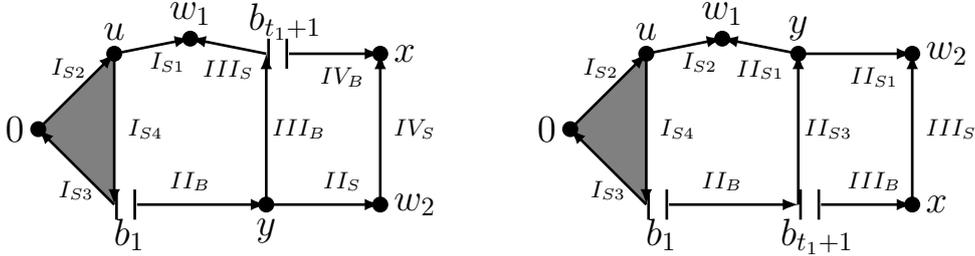

\begin{center}
{\Large
\picXiNtwoStructureLA[1]
}
\caption{The diagram of a contribution to $\Xi^{\ssc[2]}_z(x)$. We indicate the lace by $L=\{0t_1, s_2|\omega|\}$.
The left diagram correspond to $s_2<t_1$ and the right diagram to $s_2=t_1$.
For reference we label the individual lines by roman capital numbers.}
\label{BoundLA-Figure-Xitwo}
\end{center}
\end{figure}

In the following, we define how we split the diagram of $\Xi^{\ssc[2]}_z(x)$ for $s_2<t_1$, as shown in Figure \ref{BoundLA-Figure-Xitwo}.
Consider $\omega$ and a lace $L$ that contributes to $\Xi^{\ssc[2]}_z(x)$.
We know that $S^\omega_{0},S^\omega_{t_1}$ and $S^\omega_{s_2},S^\omega_{|\omega|}$ intersect,
at their first intersection points $w_1(\omega)$ and $w_2(\omega)$ (recall Definition \ref{defLTFirstPointIntersect}).

Let $u$ be the last common point of the paths $0\conn w_1(\omega)$ and $\bb_1\conn w_1(\omega)$ in $S_0^\omega$.
Let $y$ be the last point that the paths $\tb_1\conn w_2(\omega)$ and $\bb_{t_1+1}\conn w_1(\omega)$ in $\omega$ share.

For $\bb_1=0$, we have that $u=0$ and define $I_{S1}:=B^{S_0^{\omega}}_0(0,w_1)$ and $I_{S2}=I_{S3}=I_{S4}=\varnothing$.

For $\bb_1\neq 0$, the first sausage is non-trivial. We use the split that created \refeq{split-LA-sausage-second}, which formally corresponds to
	\begin{align*}
  	I_{S1}=&B^{S_0^{\omega}}_0(u,w_1), \quad R_1=S_0^{\omega}\setminus I_{S1},\quad  A_2=B^{R_1}_0(u,x),\\
      	R_2=&R_1\setminus A_2,\quad R_3=B^{R_2}_0(x,0), \quad R_4=R_2\setminus R_3,\\
      	I_{S2}=&B^{R_2}_0(0,u), \quad I_{S3}=R_3\setminus I_{S2}, \quad \text{and}\quad  I_{S4}=R_4\cup A_2.
	\end{align*}
Then, we split the three branches from the backbone, that create two intersections at $w_1$ and $w_2$, namely
$II_S:=B^ \omega_{t_1}(y,w_2)$, $III_S:=B^\omega_{s_2}(\bb_{s_2+1},w_1)$ and $IV_S:=B^\omega_{|\omega|}(x,w_2)$.
The trimmed sausage-walk then create $II_B$, $III_B$, $IV_B$.
By this construction, all ten paths connecting the labeled points are bond-disjoint and each piece is a LA.

All pieces with sub-index $S$ are planted animals and can be bounded by $\tilde G_z$.
For the backbone lines $II_B,III_B,IV_B$, we have to decide where to bound the sausages at the labeled points.
We can bound two of the pieces using $\tilde G_z$, taking into account only the weight of either the first or final sausage.
The third piece has to be bounded using $\bar G_z$, which bounds the sausages on both ends.
Thus, we bound only one connection by $\bar G_z$, rather than all four, which is a major improvement.
Using a slightly different split of the sausage-walk, we are actually free to choose any line to be bounded by $\bar G_z$.
This will be useful when dealing with weighted diagrams, we discuss this in detail in Section \ref{secLTProofBoundN1}.\\
For the case $t_1=s_2$, as shown in the right diagram of Figure \ref{BoundLA-Figure-Xitwo}, we need to
split the sausage $S^\omega_{t_1}$. For this we define $y$ to the be point that all connections in  $S^\omega_{t_1}$
from $\bb_{t_1+1}$ to $w_1$  and from $\bb_{t_1+1}$ to $w_2$ share and define
	\begin{align*}
  	II_{S1}=&B^{S_{t_1}^{\omega}}_0(y,w_1),\qquad II_{S2}=&B^{S_{t_1}^{\omega}}_0(y,w_2), \quad
 	II_{S3}=&B^{S_{t_1}^{\omega}\setminus(II_{S1}\cup II_{S2}) }_0(\bb_{t_1+1},y).
	\end{align*}
The rest is spit in the same way as described for the case $s_2<t_1$.
\tikzset{>=latex}
\begin{figure}[ht!]
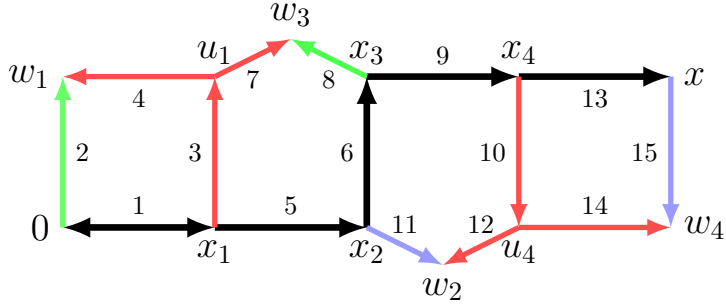

\begin{center}
{\Large
\picPiFourTreePictureRibs[1]
}
\caption{Picture of a possible $\Xi^{\ssc[4]}_z$ diagram. The backbone is marked by a thicker line and the ribs of $x_i$ have different colors. Arrow tips indicate that the one-point function is allocated at this endpoint. Thus, all lines except for one are $\tilde G$'s. Only one line has two arrow tips, which corresponds to a factor $\bar{G}$. This line can be located anywhere in the diagram. The numbers are labels for the corresponding two-point functions.
}
\label{LTXi4Ribweight}
\end{center}
\end{figure}

\begin{remark}[One-point function allocation]
\label{rem-rib-weight2}
{\rm In Figure \ref{BoundLA-Figure-Xitwo}, we have ten connections, nine of which can be bounded by $\tilde G_z$, and one by $\bar G_z(k)$.
In general, the diagram of $\Xi^{\ssc[N]}_z$ has $4N+2$ connections, and only one of them needs to be bounded by $\bar G_z$.
This means that by bounding each line by $\bar G_z$ instead of $\tilde G_z$,  as done previously in literature,
any bound on $\hat \Xi^{\ssc[N]}_z$ is unnecessarily a factor $\e^{4N+1}$ too large, at least for $\bb_1\neq 0$.
For $\bb_1=0$ three lines are trivial ($I_{S2}=I_{S3}=I_{S4}=\varnothing$), so that this unnecessary factor is still $\e^{4N-2}$.
We can and will choose which of the lines is bounded by $\bar G_z(k)$ in a way that sensitively depends on the precise structure of a diagram.
In Section \ref{secLTProofBoundN1}, we explain in detail how we do this for the lace-expansion coefficients with the spatial weight $|x|^2$,
for which it is the most relevant. For bounds without the spatial weights, we usually use the split shown in Figure \ref{LTXi4Ribweight}.
There, we use the bound $\bar G_z(x)\leq \gj \tilde G_z(x)$ for $x\neq 0$ for the single $\bar G_z$ factor, which simply extracts one
one-point function $\gj$, after which all connections of the diagram are given by $\tilde G_z$.
}
\end{remark}
\subsection{Repulsive diagrams}
\label{secBoundsRepdia}
In this section, we define the basic diagrams used to bound the NoBLE coefficients.
The ribs/sausages of the NoBLE coefficients have numerous avoidance constraints.
We incorporate some of these in our diagrams to improve our numerical bounds.
Here we explain how this can be done using so-called {\em repulsive diagrams}.

To motivate the precise definition of these diagrams, we review the avoidance structure of the first square $(1,2,3,4)$ in Figure \ref{LTXi4Ribweight}.
In the diagrams the vertices $x_i$ denote points on the backbone, the vertices $w_i$ denote the $i$th first intersection point (recall Definition \ref{defLTFirstPointIntersect}), and $u_i$ the vertices where we have to split a sausage/rib.
Alike the split of $\Xi^{\ssc[N]}_z$ given above, the split given in Figure \ref{LTXi4Ribweight} has the following interpretation:
\begin{itemize}
\item[$\rhd$] Line $2$ is the connected planted animal $B^{\omega}_0(0,w_1)$;\\[-6mm]
\item[$\rhd$] Line $3$ is the connected planted animal $B^{\omega}_{t_1}(x_1,w_1)$;\\[-6mm]
\item[$\rhd$] Line $4$ is the connected planted animal $B^{\omega}_{t_1}(w_1,u_1)$;\\[-6mm]
\item[$\rhd$] Line $1$ is the backbone $b^\omega(0,x_1)$, together with the sausages $(A^\omega_{i})_{i=1,\dots,t_1-1}$ and the (trimmed) animal
$A^{\omega}_{t_1}\setminus (B^{\omega}_{t_1}(x_1,w_1)\cup B^{\omega}_{t_1}(w_1,u_1))$.
\end{itemize}

These lines obey the following avoidance constraints:
\begin{itemize}
\item[$\rhd$] Relation between $1$ and $2$: By the definition of $J[0,|\omega|]$, the sausage $A_{t_1}^\omega$ is the first sausage to intersect with the sausage $A_0^\omega$.
    Thus, lines $2$ and $1$ only have the origin as starting point in common and are otherwise vertex-disjoint;\\[-6mm]
\item[$\rhd$] Relation between $1$ and $3,4$: The connections $3$ and $4$ are all in $A_{t_1}^\omega$.
    The indicator $J[0,|\omega|]$ ensures that $A_{t_1}^\omega$ does not intersect any of the intermediate sausages $(A_i^\omega)_{i=1,2,\dots,t_1-1}$.
    Further, $1$ bounds the part of $A_{t_1}^\omega$ that has not been split off as planted animal into 3, indicated by the tip of the arrow in  Figure \ref{LTXi4Ribweight}.
    Thus, $1$ is bond-disjoint from $3$ and $4$;\\[-6mm]
\item[$\rhd$] Relation between $3$ and $4$: Both are part of the sausage $A_{t_1}^\omega$ and are thus bond-disjoint;\\[-6mm]
\item[$\rhd$] Relation between $2$ and $3,4$: We know that $A_{0}^\omega$ and $A_{t_1}^\omega$ intersect in at least one point.
   We choose $w_1=w_1(\omega)$ to be the \emph{first intersection point}, as defined in Definition \ref{defLTFirstPointIntersect}.
   By this choice the backbones $b^{A^\omega_0}(0,w_1)$ and $b^{A^\omega_{t_1}}(x_1,w_1)$ only intersect at $w_1$.
\end{itemize}
We define repulsive diagrams to bound such diagrams. As seen for $2$ and $3,4$, the constraint that the pieces are bond-disjoint is too strong.
For this reason, we denote by repulsiveness of a diagram that we can find bond-disjoint paths connecting the corner points of the diagram.

In the following, we define the {\em skeleton} of a diagram that encodes the avoidance constraints of the backbone lines, the lengths of the backbone lines
and the information whether the start or the end points of a connection is the root of a planted animal. This is formalized as follows:

\begin{definition}[Mutually avoiding skeleton]
\label{defSkeletonLA}
Let $x_0=0\in\Zd$ and $n\in\{1,2,3,4\}$. For each $i\in\{1,\dots,n\}$, let $x_i\in\Zd, l_i\in\Nbold$, the index $j_i\in\{l_i,\underline l_i\}$, $s_i\in\{+,-\}$, and let $\omega^i$ be a sausage-walk from $x_{i-1}$ to $x_i$. We define $S^{s_1,\dots,s_n}_{j_1,\dots,j_n}(\omega^1,\dots,\omega^n)$ to be the indicator that the following holds:
\begin{enumerate}[(1)]
\item There exists a sequence of paths $(p^i)_{i=1}^n$ such that $p^i$ is a path from $x_{i-1}$ to $x_i$ and $p^i$ uses only edges in $\omega^i$, each path $p^i$ describes a self-avoiding walk and all paths are pairwise bond-disjoint;
\item For each $i\in\{1,\dots,n\}$, we can choose $p^i$ in item (1) such that, $|p^i|\geq l_i$ when $j_i=l_i$, while $|p^i|=l_i$ when $j_i=\underline l_i$;
\item For each $i\in\{1,\dots,n\}$, the sausages of the walk $(A^{\omega^i}_j)_{j=0,\dots,|\omega^i|}$ do not intersect, so that $\omega^i$ describes a LA.
\item For each $i\in\{1,\dots,n\}$,
if $s_i=+$ then $A^{\omega^i}_{0}=\varnothing$, while $A^{\omega^i}_{|\omega^i|}=\varnothing$ when $s_i=-$.
\end{enumerate}
\end{definition}

\noindent
We next use this skeleton to define the repulsive bubble, triangle and square diagrams as follows:
\begin{definition}[Repulsive diagrams]
\label{def-repulsive-diamgrams}
For $i=1,2,3,4$ let $x_i\in \Zd,\ l_i\in\Nbold,\ j_i\in\{l_i,\underline l_i\}$ and $x_0=0$. We define
	\begin{align}
	\diagRepulsiveLetter{B}_{j_1,j_2}&(x_1,x_2)
 	\lbeq{defLTBubble}
	=\max_{s_1,s_2\in\{+,-\}}\sum_{\omega^1\in\Wcal(x_0,x_1)} \sum_{\omega^2\in\Wcal(x_1,x_2)}
	\left(\prod_{i=1}^2 z^{|\omega^i|} Z[0,|\omega^i|]\right) S^{s_1,s_2}_{j_1,j_2}(\omega^1,\omega^2),\\
	\diagRepulsiveLetter{T}_{j_1,j_2,j_3}&(x_1,x_2,x_3)\nnb
	=&\max_{s_1,s_2,s_3\in\{+,-\}}
	\sum_{\footnotesize\begin{array}{c}
        \omega^1\in\Wcal(x_0,x_1)\\
        \omega^2\in\Wcal(x_1,x_2)\\
        \omega^3\in\Wcal(x_2,x_3)
      	\end{array}}
	\left(\prod_{i=1}^3 z^{|\omega^i|} Z[0,|\omega^i|]\right) S^{s_1,s_2,s_3}_{j_1,j_2,j_3}(\omega^1,\omega^2,\omega^3),
 	\lbeq{defLTTriangle} \\
	\diagRepulsiveLetter{S}_{j_1,j_2,j_3,j_4}&(x_1,x_2,x_3,x_4)\nnb
	=&\max_{s_1,s_2,s_3,s_4\in\{+,-\}}\sum_{\omega^1\in\Wcal(x_0,x_1)}\sum_{\omega^2\in\Wcal(x_1,x_2)}\sum_{\omega^3\in\Wcal(x_2,x_3)}\sum_{\omega^4\in\Wcal(x_3,x_4)}\nnb
	&\qquad\qquad\qquad\qquad\times\left(\prod_{i=1}^4 z^{|\omega^i|} Z[0,|\omega^i|]\right) S^{s_1,s_2,s_3,s_4}_{j_1,j_2,j_3,j_4}(\omega^1,\omega^2,\omega^3,\omega^4).
	 \lbeq{defLTSquare}
	\end{align}
Further, for $i\in\{1,\dots,n\}$ and $x\in\Zd$, we define
	\begin{align}
	 \lbeq{defLTDouble}
    	\diagRepulsiveLetter{D}_i(x)=\sum_{A\colon 0,x\in A} \indic{0\dbct{A} x}\indic{d_A(0,x)\geq i}z^{|A|}.
	\end{align}
\end{definition}
Each line of these repulsive diagrams in \refeq{defLTBubble}, \refeq{defLTTriangle} and \refeq{defLTSquare} are planted animals and can therefore be bounded using the modified two-point functions $\tilde G_{z}$. The double connection in \refeq{defLTDouble} is special, and we will treat this separately in Section \ref{secBoundsDouble} below.

To use the constraints on the lengths of connections arising from loops consisting of at least 4 bonds in the NoBLE, we further define, for $l\in\mathbb{N} $ and $j\in {l,\underline l}$,
	\begin{eqnarray}
	\tilde G_{j,z}(x)=\sum_{\omega\in\Wcal(x_0,x)}  z^{|\omega|} Z[0,|\omega|] S^{+}_{j}(\omega),
	\end{eqnarray}
and note that $\tilde G_{0,z}(x)=\tilde G_{z}(x)$.
Using $\tilde G_{j,z}$, we can obviously bound the repulsive diagrams by ignoring the self-avoidance constraints as
	\begin{eqnarray}
	\lbeq{badBoundSquareLTLA}
	\diagRepulsiveLetter{T}_{j_1,j_2,j_3}(x_1,x_2,x_3)&\leq &\tilde G_{j_1,z}(x_1)\tilde G_{j_2,z}(x_2-x_1)\tilde G_{j_3,z}(x_2-x_3).
	\end{eqnarray}
However, as demonstrated by the fourth improvement in Section \ref{sec-BoundIdea-Heuristic}, the use of
repulsiveness improves our numerical bounds drastically, so we do not use the worse bounds as in \refeq{badBoundSquareLTLA}.

\subsection{Bounds on double connections}
\label{secBoundsDouble}
The bound on contributions in which two-points are doubly connected, that is, connected by two-bond disjoint paths,
is of central interest for LAs. This is especially relevant since a large contribution to the NoBLE lace-expansion coefficients for LAs is given by
	\begin{align}
	\lbeq{split-LA-doublePrimitiveRecall}
	\bar \Xi^{\ssc[0]}_z(x)= \sum_{A\colon 0,x\in A} \indic{0\dbct{A} x}z^{|A|}.
	\end{align}
Using our notation we bound this $\diagRepulsiveLetter{D}_0(x)$, defined in \refeq{defLTDouble}.\\
Let us briefly discuss a numerical aspect of our bounds. Using the techniques described in the previous section, we obtain
\begin{align}
	\lbeq{split-LA-doublePrimitive}
	\Xi^{\ssc[0]}_z(x)\leq  \min\big\{ 2d z (D\star \tilde G_{z})(x)\bar G_{z}(x), \tilde G_{z}(x)^2\big\}.
	\end{align}

We now explain how we can improve this bound by a factor close to $2$. This is possible as the two paths that created the double connection are interchangeable, so that
\refeq{split-LA-doublePrimitive} bounds each possible animal {\em twice}.

For the technical derivation of this improvement, we consider yet another split of the LA $A$ given as follows:
\begin{enumerate}[(i)]
\item Let $p_1=(b^1_1,b^1_2,\dots,b^1_{n})$ and $p_2(0,x)=(b^2_1,b^2_2,\dots,b^2_{m})$ be two bond-disjoint paths between $0$ and $x$ in $A$,
that we choose in some unique way for each $A$, e.g.\ $p_1$ is the smallest path in the lexicographic order such that there exists such a bond-disjoint $p_2$, and then $p_2$ is the smallest of these bond-disjoint paths, so that $p_1$ is smaller in lexicographic order than $p_2$ as we will assume from now on.
\item Let $A_1=B^{A}(0,b^1_{n};p_1)$. 
\end{enumerate}
Then,
	\begin{align}
	\sum_{A\colon 0,x\in A} \indic{0\dbct{A} x}z^{|A|}=&\sum_{A} \indic{0\dbct{A} x}z^{|A_1|}z^{|A\setminus A_1|}
	\lbeq{split-LA-double-tmp}
	\leq 2d z (D\star \tilde G_{z})(x)\bar G_{z}(x).
	\end{align}
By symmetry, we could also have performed the split of the animal $A$ according to the path $p_2$, instead.

We next improve upon this bound. Since the paths $p_1$ and $p_2$ are bond-disjoint, we conclude that $\tb^1_1=\ve[\iota]\neq \tb^2_1=\ve[\kappa]$, with $\iota<\kappa$ since $p_1$ is smaller than $p_2$ in the lexicographic order. Remove the bonds in the paths $p_1$ and $p_2$ from $A$. That divides the LA into at most $n+m+2$ disjoint LAs containing the vertices of $p_1$ and $p_2$. Thus, now defining $A_0$ the LA containing $0$, $A_1=B^{A'}(b^1_1,x;p_1)\setminus A_0$, where $A'$ is the LA obtained from $A_2$ by removing $p_2$ and keeping the parts that do not contain vertices in $p_2$. Then, by definition of $B^{A'}(b^1_1,x;p_1)$, $A_1$ is disjoint from $p_2$, and the sausage at $\tb_1^1$ is trivial in $A_1$. Finally, define $A_2=A\setminus (A_0\cup A_1)$ (which contains the bonds in $p_2$). Note that, by construction,  the sausage of $x$ in $A_2$ is trivial, so that $A_2$ is a planted LA, while also the sausage at $0$ is trivial. Then,
	\begin{align}
	\sum_{A\colon 0,x\in A} \indic{0\dbct{A} x}z^{|A|}=
	&\sum_{\iota}\sum_{\kappa>\iota}  \sum_{A} \indic{0\dbct{A} x}\indic {\tb^1_1=\ve[\iota]}
	\indic {\tb^2_1=\ve[\kappa]}z^{|A_0|} z^{|A_1|} z^{|A\setminus (A_0\cup A_1)|}\nnb
	\leq & \sum_{\iota}\sum_{\kappa>\iota} z^2  \gj \tilde G_{z}(x-\ve[\iota]) G_{z}(x-\ve[\kappa])\nnb
	\leq & \frac{1}{2} (2d z  \gj)^2 (D\star \tilde G_{z})(x)(D\star \bar G_{z})(x),
	\lbeq{split-LA-double-tmp2}
	\end{align}
as the bound \refeq{split-LA-double-tmp} does not assume anything on the order of the paths. Here, in the first inequality, $z^2$ corresponds to the weight of the bonds $b_1^1, b_1^2$, the factor $\gj$ comes from the LA $A_0$, the factor $\tilde G_{z}$ from the LA $A_1$, and the factor $\bar G_{z}$ from the LA $A_2$.
Using the same arguments we obtain the factor $1/2$ for all other terms in \refeq{split-LA-doublePrimitive}. In our final numerical bounds, we will drastically improve upon this bound by using the repulsive nature of the two paths $p_1, p_2$.

\section{The diagrammatic bounds}
\label{secBounds}
In the preceding section we have explained the technical details of our method of bounding LT/LA diagrams.
In this section we first combine the repulsive diagrams \refeq{defLTBubble}-\refeq{defLTSquare}, to larger diagrams that we call {\em building blocks}.
Then, we combine these building blocks to the create bounds on the NoBLE coefficients.
We close this section with an overview of where to find all bounds required for the analysis, as stated in \cite[Assumption 4.3]{FitHof13b}.

In Section \ref{secBoundProof} we prove the bounds for $N=0$ and sketch how the remaining bounds are proven in the example of $N=1,2$. The rigorous proof of these bounds is tedious and is based on techniques that are relatively standard in lace-expansion analyses. For this reason we omit many details.

\subsection{Building blocks}
\label{secBuildingBlocks}
In this section we combine the simple diagram, defined in \refeq{defLTBubble}-\refeq{defLTSquare}, to larger diagrams, that we call {\em building blocks}. These building blocks are designed
to encode the following information:
\begin{itemize}
  \item[$\rhd$] the repulsiveness of the paths involved;
  \item[$\rhd$] the lengths of lines shared by two loops;
  \item[$\rhd$] the role of the lines shared by two loops, meaning whether it is part of the backbone or not, and, if so, how the backbone is connected to the line in question.
\end{itemize}
We encode the repulsiveness by defining the blocks using the repulsive diagram \refeq{defLTBubble}-\refeq{defLTSquare}. In the following we discuss the other two features.

\paragraph{Aim: encoding backbone-paths.}
In Figure \ref{LTXi4Ribweight}, we have shown an example of the diagram contributing to $\Xi^\ssc[4]_z$.
However, the form shown in Figure \ref{LTXi4Ribweight} is only one of eight possible patterns that the backbone can form. These patterns arise by the different cases for the underlying lace, i.e. whether $s_{i+1}=t_i$ or $s_{i+1}<t_i$ for $i=1,2,3$. In Figure \ref{LTXi4Struc} we show all the possible backbone patterns for the $\Xi^{\ssc[4]}$-diagrams. One function of the building block will be to encode all possible patterns at once, so as to encode the roles that the shared lines play compared to the backbone.
\begin{figure}[h!]
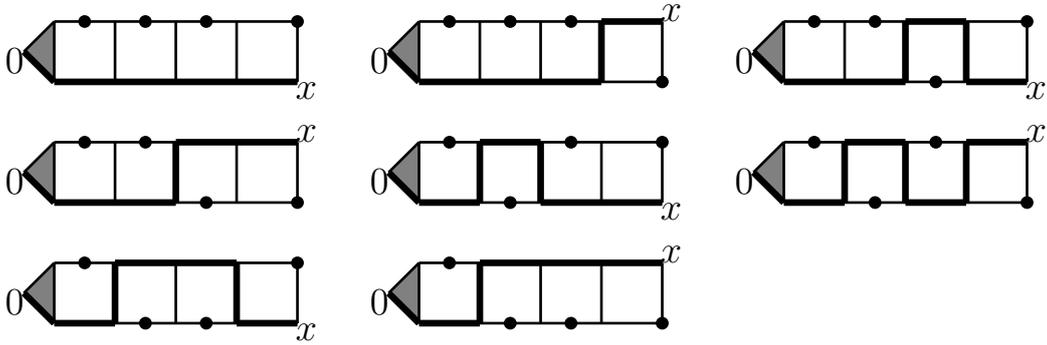

\begin{center}
{\Large
\picPiFourTreePicture[0.8]}
\caption{Picture of all possible $\Xi^{\ssc[4]}_z$ diagrams.
The backbone is marked by a thick line and the four \emph{first intersection points} are marked as dots.
The leftmost diagram on the second row corresponds to the diagram shown in Figure \ref{LTXi4Ribweight}.}
\label{LTXi4Struc}
\end{center}
\end{figure}

\noindent
\paragraph{Aim: encoding of the line lengths.}
When creating a bound on the NoBLE coefficients we want to make full use of all the avoidance constraints of a sausage-walk, as well as the fact that any closed loop consists of at least four bonds. To be able to do this, we consider different cases for the lengths of paths that are shared by two squares. Here, the length of a path is given by the number of bonds it uses.

\paragraph{Our solution.}
We use the indices $l,m\in\{-2,-1,0,1,2\}$ to encode the information of the length and role of a shared line.
We explain their meaning at the example of the diagram given in Figure \ref{LTXisharedLabeling}.
\begin{figure}[h!]
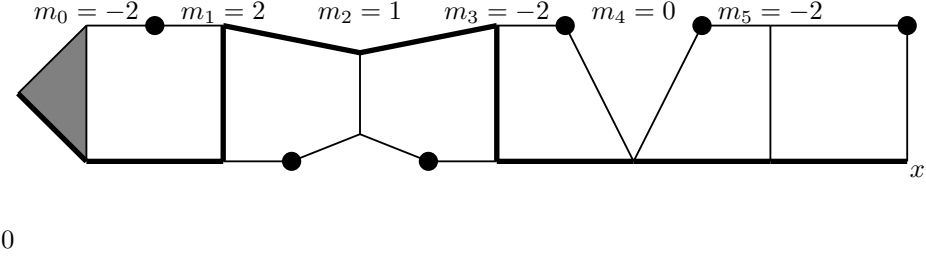

\begin{center}
{\Large
\picPiSharedLabelings[0.9]}
\caption{Example of a description of the backbone-path and the length of the shared line using the indices $(a_i)_i$. }
\label{LTXisharedLabeling}
\end{center}
\end{figure}
As indicated in Figures \ref{LTXi4Ribweight}, \ref{LTXi4Struc} and \ref{LTXisharedLabeling}, we always draw the diagrams in such a way that the origin is on the left side and the first piece of the sausage-walk from $\bb_0$ to $\bb_{s_2}$ is the lower line of the first square. A positive index $m_i$ indicates that the backbone ends on the upper part of a piece of the diagram.
Interactions between neighboring squares will be particularly important, in which case there will be two indices, for example in $A^{m_{i-1},m_i}$. In this case, $m_{i-1}>0$ indicates that the backbone in the $(i-1)$st square starts on the upper part of the diagram, while $m_i>0$ indicates that the backbone ends on the upper part of a piece of the diagram.
By $|m_i|=1$, we denote that the vertical line consists of exactly one bond and $|m_i|=2$ denotes that
the vertical line consists of at least two bonds. If $m_i=0$, then the vertical line is trivial, i.e., it consists of no bonds.
In the case that $m_i=0$, we draw the following piece of the backbone on the lower part of the following square.

\paragraph{Informal definition.}
In this section, we informally define the diagrams, also called {\em building blocks}, that we use to bound the NoBLE coefficients. The precise properties of the building blocks originate from the properties of the NoBLE coefficients, as we will see when we bound the coefficients in Section \ref{secBoundProof}. The list below consists of 9 basic building blocks:
\begin{enumerate}[(i.)]
\item Open square $A^{l,m}$.\\
We use this building block to bound the intermediate pieces of the diagrams appearing in the NoBLE coefficients. It consists of an open square, with 4 of the 5 vertices fixed and one being summed over.
\begin{center}
\picAGeneral[0.6]
\end{center}
We label the vertices in the diagram $A^{l,m}$ such that $u$ is the starting vertex and $x$ is the ending vertex of the backbone.
The indices $l,m$ identify the lengths of the connections between $u,v$ and $x,y$, respectively, i.e., the number of bonds in these connections.
For example: If $m=0$ then $x=y$. If $|m|=1$ then $x$ and $y$ are direct neighbors and are directly connected. The case where $|m|=2$ summarizes all the other cases.

A positive index $l,m$ indicates that the backbone starts/ends at the top. This means that if $lm\geq 0$, then the connection $x$ to $y$ is not part of the backbone.
The horizontal parts of the backbones (i.e., the paths from $u$ to $x$ or from $u$ to $y$) are always non-trivial, in that they use at least one bond.
If the backbone $b^A(u,x)$ consists of only one bond, then we know that $w\nin \{u,x\}$.
Closing the diagram, by including $u\conn v$, always creates a diagram with at least four bonds.

\item Double open triangle $\bar A^{l,m}$.

\begin{center}
\picAbarGeneral[0.6]
\end{center}
This diagram follows the same rules as the open square $A^{l,m}$, the difference being that the connection $x$ to $y$ does not contribute (i.e., that line is absent and used in another building block). We only draw the example with $l,m>0$. The role of $l$ and $m$ in $\bar A^{l,m}$ is the same as for $A^{l,m}$.
\item Weighted open triangle $C^{l,m}$.

\begin{center}
\begin{figure}[ht]
\picDGeneral[0.8]
\caption{This diagram follows the same rules as the double open triangle $\bar A^{l,m}$.
Additionally, the lower connections are not planted animals. This means that the connection with spatial weight
might have a non-trivial one-point function at both end-points, thus should be bounded by $\bar G_z$.}
\label{C-ml-def}
\end{figure}
\end{center}

\item Closed fundamental block $P^{\ssc[1],m}.$\\
This diagram is used to bound the first square of $\Xi^{\ssc[N]}_z$ and has the following shape for $m>0$:
\begin{center}
\begin{figure}[ht]
\picPOneLA[0.6] \\
\picPOneRibLA[0.6]
\caption{For $m=0$, we have that $x=y$, apart from that, the diagram remains the same.}
\label{P1pieceSketch}
\end{figure}
\end{center}
Each triangle, square and pentagon consists of at least four bonds. The first diagram characterizes the case where the first step of the backbone starts at the origin, i.e., $\bb_1^\omega=0$.
In the following, we use $u=\bar{b}_1^\omega$ to denote the first step of the backbone. Note that only $\bb_1^\omega= 0$ contributes for LT.
If the backbone between $u$ and $x$ consists of only one step, then $w\nin\{u,x\}$.
The pieces of the backbone $b^A(u,x)$, $b^A(0,y)$ and $b^A(x,y)$, respectively, are non-trivial.
Note that the left diagram is the same diagram as $A^{0,m}(0,0,x,y)$.
\item Open fundamental block $\bar P^{\ssc[1],m}.$\\
This diagram corresponds to the closed fundamental block $P^{\ssc[1],m}$, in which the connection $x$ to $y$ does not contribute (but we keep information about its length using the index $m$).

\item Weighted fundamental block $\Delta^{\text {start},m}$.\\
This diagram corresponds to the open fundamental block $\bar P^{\ssc[1],m}$, where the lower connection is weighted by its displacement, i.e., $|x|^2$ for $m\leq 0$ and $|y|^2$ for $m\geq 0$.

\item Weighted terminal block $\Delta^{\text {end},m}$.\\
This diagram is defined by $\Delta^{\text {end},m}(u,v,x)=C^{m,0}(u,v,x,x)$ and is used to bound the last square of $\Xi^{\ssc[N]}_z$ and $\Xi^{\ssc[N],\iota}_z$.

\item Iota-fundamental block $P^{\ssc[1],\iota,m}.$\\
This diagram is used to bound the first square of $\Xi^{\ssc[N],\iota}_z$ and has the following structure for $m<0$:
\begin{center}
\picXilotaInitialSketch[0.6]
\end{center}
This diagram is similar to $P^{\ssc[1],m}$. The sausage-walks that we bound with this diagram have the property that either $\ve[\iota]$ is part of the first sausage or that $\bb_1^\omega$ ends in $\ve[\iota]$. If $\ve[\iota]$ is in the first sausage, then we extend the idea of \refeq{split-LA-sausage-second}.
We identify the last point $v$ that the double connection $0$ to $\bb_1$ and a connection $0$ to $w$ share.
Then, we identify the last common point $u$ of the connection $0$ to $\ve[\iota]$ to the rest.
This point $u$ could be on all possible connections $0$ to $\bb_1$, $0$ to $v$, $\bb_1$ to $v$ as well as $v$ to $w$, as indicated in the figure above.

\item Weighted iota-fundamental block $\Delta^{\text {iota},{\sss I},m},\Delta^{\text {iota},{\sss II},m}$.\\
These diagrams have the same shape as $P^{\ssc[1],\iota,m}$, but the connection $x$ to $y$ does not contribute. Further, $\Delta^{\text {iota},{\sss I},m}$ is weighted by a factor $|x|^2$ and $\Delta^{\text {iota},{\sss II},m}$ by a factor $|x-\ve[\iota]|^2$ for $m\leq 0$.
\end{enumerate}

\iflongversion
The formal definition of these diagrams is given in Appendix \ref{sec-formaldefinition}.
\else
In this version we omit the detailed definition of these diagrams, as they do not create additional  insight and are rather lengthly. There definition is given in the appendix of the extended version \cite{FitHof18ext}.
\fi

\paragraph{Elements of the bounds.}
Here we define the quantities that we use to state the bounds on the NoBLE coefficients.
In the following, we define the vectors $\vec P,\vec P^{\iota},\vec E,\vec \Delta^{\rm{start}},\vec \Delta^{\rm{end}},\vec \Delta^{\iota}\in\Rbold^{5}$
and the matrices ${ \bf {\bar A}},{\bf A},{\bf C}\in\Rbold^{5\times 5}$ by their entries. For convenience, we use $\{-2,-1,0,1,2\}$ as indices for the entries of the vectors and matrices.
Let
	\begin{align}
	\lbeq{defOfBoundingElement1}
	(\vec P)_m&= \sum_{x,y} P^{\ssc[1],m}(x,y),\qquad\qquad
	(\vec P^\iota)_m= \sum_{x,y} P^{\ssc[1],\iota,m}(x,y),\\
	(\vec \Delta^{\rm{start}})_m&= \sup_{y}\sum_{x} \Delta^{\rm{start},m}(x,x+y), \quad
	(\vec \Delta^{\rm{end}})_m= \sup_{y} \sum_{x} \Delta^{\rm{start},m}(x,x+y),\\
	(\vec \Delta^{\iota,{\sss I}})_m &=\sup_{y} \sum_{x} \Delta^{\iota,{\sss I},m}(x,x+y),\qquad
	(\vec \Delta^{\iota,{\sss II}})_m =\sup_{y} \sum_{x} \Delta^{\iota,{\sss II},m}(x,x+y),
	\end{align}
and
	\begin{align}
    \lbeq{defOfBoundingElement2a}
	({\bf A})_{l,m}&= \sup_{v} \sum_{x,y} A^{l,m}(0,v,x,y),\qquad\qquad
	({\bf C})_{l,m}= \sup_{v,y} \sum_{x} C^{l,m}(0,v,x,x+y),\\
	(\vec E^{\rm{open}})_m&= \sum_{v,x} A^{m,0}(x,x,v,0), \qquad\qquad (\vec E^{\rm{closed}})_m= \sum_{v,x} A^{0,m}(x,x,v,0).
	\lbeq{defOfBoundingElement2}
	\end{align}
Using the bootstrap assumption $f_i(z)<\Gamma_i$ we can compute explicit numerical bound for each of these quantities. We use these quantities to formulate the bounds on the NoBLE coefficients.
\subsection{The bounds}
\label{secStatingtheBounds}
Now we state the required bound on the coefficients. For this we use the notation $\indAnimal$, as used in \refeq{conj-zc}, where $\indAnimal$ is $1$ in bounds for LAs and $0$ in bounds for LTs. We have to be especially careful in our bounds for $N=0$ and $N=1$, which turn out to be the major contributions to the NoBLE coefficients. Therefore, we distinguish between the bounds for $N=0$, $N=1$ and $N\geq 2$. We start by formulating the bounds on $\Xi^{\ssc[0]}$ and $\Psi^{\ssc[0],\kappa}$:

\begin{lemma}[Bounds on $\Xi^{\ssc[0]}$ and $\Psi^{\ssc[0],\kappa}$]
\label{BoundNZero-Xi}
Let $0\leq z<z_c$. Then,
	\begin{align}
	\lbeq{Bound-Xi0}
	\sum_{x\in\Zd}\Xi^{\ssc[0]}_z(x)&\leq \frac {\indAnimal } {g_z} \sum_x\diagRepulsiveLetter{D}_{1}(x),\qquad\quad
	\sum_{x\in\Zd}\Xi^{\ssc[0]}_{{\sss R},z}(x) \leq \frac {\indAnimal} {g_z} \diagRepulsiveLetter{D}_{2}(x),\\
	\lbeq{Bound-Psi0RemII}
	\sum_{x\in\Zd} \Psi^{\ssc[0],\kappa}_{{\sss R, II},z}(x)&\leq \frac {2d-2}{2d} \frac {\indAnimal} {g^\iota_z} \diagRepulsiveLetter{D}_{2}(x,0),
	\end{align}
and
	\begin{align}
	\lbeq{Bound-Xi0W}
	\sum_{x\in\Zd}|x|^2\Xi^{\ssc[0]}_z(x)&\leq \frac {\indAnimal} {2}
	\sum_x|x|^2  G_z(x) 2dz (D \star \tilde G_z)(x),\\
	\lbeq{Bound-Xi0RemW}
	\sum_{x\in\Zd}|x|^2\Xi^{\ssc[0]}_{{\sss R},z}(x)&\leq \frac {\indAnimal} {2}
	\sum_x|x|^2  G_z(x) (2dz)^2g_z (D^{\star 2} \star \tilde G_z)(x),\\
	\lbeq{Bound-Psi0RemIIW}
	\sum_{x\in\Zd} |x|^2 \Psi^{\ssc[0],\kappa}_{{\sss R, II},z}(x)&\leq \frac {2d-2}{2d} \frac {\indAnimal} {2}
	\sum_x|x|^2  G_z(x)  (2dz)^2g_z (D^{\star 2} \star \tilde G_z)(x).
	\end{align}
Further,
	\begin{align}
	\lbeq{Bound-Psi0RemI}
	\sum_{x\in\Zd} \Psi^{\ssc[0],\kappa}_{{\sss R, I},z}(x)\leq& \frac {\indAnimal} {g_z^\iota}\big(
	(2d-1) \diagRepulsiveLetter{B}_{\underline 1,3}(\ve[1],0)+\frac {2d-2}{2d} \sum_x\diagRepulsiveLetter{B}_{2,2}(x,0)\big),\\
	\lbeq{Bound-Psi0RemIW}
	\sum_{x\in\Zd} |x-\ve[\kappa]|^2 \Psi^{\ssc[0],\kappa}_{{\sss R, I},z}(x)\leq &
	4d\indAnimal \diagRepulsiveLetter{B}_{\underline 1,3}(\ve[1],0)\\
	&+ \frac {\indAnimal} {2}\sum_x (1+|x|^2)(2dz)^2g_z (D^{\star 2} \star \tilde G_z)(x) G_z(x).\nn
	\end{align}
	\end{lemma}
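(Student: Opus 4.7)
First observe that for lattice trees all the left-hand sides vanish by \refeq{split-LTs-N0} and \refeq{split-LTs-N0a}, while the right-hand sides also vanish through the factor $\indAnimal=0$. I will therefore treat only the lattice animal case $\indAnimal=1$ below.

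For the unweighted bounds \refeq{Bound-Xi0} and \refeq{Bound-Psi0RemII} the strategy is to read the left-hand sides directly off the definitions in Section~\ref{sec-defineCoeff}. From \refeq{defXiNanimalPrime} with $N=0$, for $x\neq 0$ one has $\Xi^{\ssc[0]}_z(x)=g_z^{-1}\sum_{A\ni 0,x}\indic{0\dbct{A}x}z^{|A|}$, and the condition $x\neq 0$ already forces $d_A(0,x)\geq 1$; summing over $x$ gives $g_z^{-1}\sum_x\diagRepulsiveLetter{D}_1(x)$. Subtracting the split piece $\Xi^{\ssc[0]}_{\alpha,z}$, which by \refeq{LA-Split-Def-1} isolates $d_A(0,x)=1$, leaves exactly $d_A(0,x)\geq 2$, i.e.\ $\diagRepulsiveLetter{D}_2$. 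The estimate \refeq{Bound-Psi0RemII} is obtained analogously after dropping the extra constraint $x-\ve[\kappa]\notin A$; averaging over $\kappa$ and invoking the symmetry verified in \cite[Assumption 4.1]{FitHof13b} produces the prefactor $(2d-2)/(2d)$.

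For the weighted bounds \refeq{Bound-Xi0W}, \refeq{Bound-Xi0RemW} and \refeq{Bound-Psi0RemIIW}, I will apply the symmetry decomposition of double connections from Section~\ref{secBoundsDouble}. Using the lexicographic convention on the two bond-disjoint paths $p_1,p_2$ from $0$ to $x$ as in \refeq{split-LA-double-tmp2}, the over-counting by a factor of $2$ is removed and produces the prefactor $\tfrac12$. Extracting the first step of $p_1$ as $2dz\,D$ and bounding the remaining sausages of its planted animal by $\tilde G_z$ yields the factor $2dz(D\star\tilde G_z)(x)$; the complementary animal containing $p_2$ is bounded by $\bar G_z(x)=g_z G_z(x)$. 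Multiplying by $|x|^2$ and summing gives \refeq{Bound-Xi0W}. For \refeq{Bound-Xi0RemW} the extra condition $d_A(0,x)\geq 2$ forces $|p_1|\geq 2$, so I extract two bonds of $p_1$ instead of one, upgrading $2dz(D\star\tilde G_z)(x)$ to $(2dz)^2(D^{\star 2}\star\tilde G_z)(x)$ and producing the extra factor $(2dz)g_z$ visible on the right. The bound \refeq{Bound-Psi0RemIIW} follows by the same route, with the additional prefactor $(2d-2)/(2d)$ from discarding $x-\ve[\kappa]\notin A$.

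The estimates \refeq{Bound-Psi0RemI} and \refeq{Bound-Psi0RemIW} are the most intricate and constitute the main obstacle. Here I split the remainder according to $|x-\ve[\kappa]|$: the indicator $x-\ve[\kappa]\notin A$ automatically kills $x=\ve[\kappa]$ (else $0=x-\ve[\kappa]\notin A$, contradicting $0\in A$), and the $\alpha$-part cancels the configurations with $|x-\ve[\kappa]|=1$ and $d_A(0,x)=2$. What remains is supported on (i) the $2d-1$ points $x$ with $|x-\ve[\kappa]|=1$, $x\neq 0$, and $d_A(0,x)\geq 3$, and (ii) points with $|x-\ve[\kappa]|\geq 2$ and $d_A(0,x)\geq 2$. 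In case (i) I fix $\kappa=1$ by rotational symmetry and route one of the two bond-disjoint paths through $\ve[1]$, so that the first connection is the single bond $(0,\ve[1])$ and the second has length at least $3$; summing over the $2d-1$ choices of $x$ produces $(2d-1)\diagRepulsiveLetter{B}_{\underline 1,3}(\ve[1],0)$. In case (ii) both paths have length at least $2$, giving the double-connection contribution $\sum_x \diagRepulsiveLetter{B}_{2,2}(x,0)$ together with the usual $(2d-2)/(2d)$ factor from dropping the $x-\ve[\kappa]\notin A$ constraint. The weighted inequality \refeq{Bound-Psi0RemIW} is obtained by splitting the weight $|x-\ve[\kappa]|^2$ along the chosen path and applying the same case analysis: the short-range part produces the term proportional to $\diagRepulsiveLetter{B}_{\underline 1,3}$ and the long-range part produces a contribution of exactly the same shape as \refeq{Bound-Xi0RemW}. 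The hardest step is organising the case analysis in (i) so that the combinatorial prefactors $(2d-1)$ and $(2d-2)/(2d)$ come out exactly as stated, since their numerical sharpness is essential for the Mathematica-assisted verification in low dimensions.
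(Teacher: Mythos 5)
Your treatment of \refeq{Bound-Xi0}--\refeq{Bound-Psi0RemIIW} follows the paper's route (reading the coefficients off the definitions, the factor $\tfrac12$ from the interchangeability of the two bond-disjoint paths in Section \ref{secBoundsDouble}, and averaging over $\kappa$ for the $\Psi$-bounds) and is essentially sound, though for \refeq{Bound-Psi0RemII} you should say explicitly \emph{why} averaging over $\kappa$ yields $(2d-2)/(2d)$ rather than $1$: a double connection to $x\neq 0$ forces at least two neighbours of $x$ to lie in $A$, so $\sum_\kappa\indic{x-\ve[\kappa]\nin A}\leq 2d-2$; ``symmetry'' alone does not produce the prefactor.

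The case analysis for \refeq{Bound-Psi0RemI}--\refeq{Bound-Psi0RemIW} contains a genuine gap. You split the support of the remainder according to $|x-\ve[\kappa]|$, i.e.\ by proximity to $\ve[\kappa]$; the correct split (and the one the paper uses) is by proximity to the \emph{origin}. Your case (i) consists of the $2d-1$ neighbours of $\ve[\kappa]$ other than $0$; these sit at $\ell^1$-distance $2$ from the origin, so no length-one path from $0$ to such an $x$ exists and they cannot be bounded by $\diagRepulsiveLetter{B}_{\underline 1,3}(\ve[1],0)$, which is a bubble at a neighbour of the origin with one arc of exactly one bond. Worse, your two cases together miss precisely the configurations that do generate the term $(2d-1)\diagRepulsiveLetter{B}_{\underline 1,3}(\ve[1],0)$, namely $x=\ve[\iota]$ with $\iota\neq\kappa$ and $d_A(0,x)=1$: nothing is extracted by $\Psi^{\ssc[0],\kappa}_{\alpha,{\sss I},z}$ at these points (the first extraction needs $x=\ve[\kappa]$, the second needs $|x-\ve[\kappa]|=1$, which fails for two distinct unit vectors by parity), and they satisfy $|x-\ve[\kappa]|_1=2$ but $d_A(0,x)=1$, so they fall into neither of your cases (i) or (ii). The correct argument bounds these $2d-1$ points by $\diagRepulsiveLetter{B}_{\underline 1,3}(\ve[1],0)$ (direct bond plus a return path of length at least $3$) and handles all $x$ with $|x|>1$ via $\Psi^{\ssc[0],\kappa}_{{\sss R, I},z}(x)\leq\Psi^{\ssc[0],\kappa}_{{\sss R, II},z}(x)$ and the $(2d-2)/(2d)$-averaging of \refeq{Bound-Psi0RemII}. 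Likewise, for \refeq{Bound-Psi0RemIW} ``splitting the weight along the path'' is not how the constants arise: the $4d$ comes from evaluating $\sum_\iota|\ve[\iota]-\ve[\kappa]|^2=2(2d-2)+4=4d$ over the neighbours of the origin, and the second term comes from writing $|x-\ve[\kappa]|^2=|x|^2-2x_\kappa+1$ and letting the cross term vanish by symmetry; without these two explicit computations the stated prefactors, whose sharpness the numerical verification depends on, do not come out.
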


We continue by formulating the bounds on $\Xi^{\ssc[0],\iota}$ and $\Pi^{\ssc[0],\iota,\kappa}$. To state the bound on $\Xi^{\ssc[0],\iota}$ and the remainder of the split $\Xi^{\ssc[0],\iota}_{{\sss R, I},z}$ and $\Xi^{\ssc[0],\iota}_{{\sss R, II},z}$, we use the abbreviations $\alpha_{n,m}$, $\beta_{n}$ and $\gamma_{n}$, that will only be used in this lemma, and which, for $n,m\in\{0,1,2\}$, are given by
	\begin{align}
	\lbeq{BoundNZero-XiIota-alpha}
	\alpha_{n,m}=&
	\frac {\indAnimal}{g_z}\tilde G_z(\ve[\iota]) \sum_{x} \diagRepulsiveLetter{D}_{n}(x)
	+\frac {\indAnimal}{g_z}  \indic{m=0} \diagRepulsiveLetter{D}_{1}(\ve[\iota])
    +\frac {\indAnimal}{g_z}  \sum_{x} \diagRepulsiveLetter{T}_{1,\max\{1,m\},n} (\ve[\iota],x,0)\\
	&+\frac {\indAnimal}{g_z}
\big( \indic{n\leq 1}  \sup_{y} \tilde G_{2,z}(y) (2d-1)zg_z \tilde G_{3,z}(\ve[\iota])
+\sup_{y} \tilde G_{m,z}(y) \sum_{x\neq 0} \diagRepulsiveLetter{D}_{2}(x)\big)\nnb
&+\frac {\indAnimal}{g_z}   \sum_{u}\diagRepulsiveLetter{B}_{1,1}(u,\ve[\iota])  \sup_{u} \sum_{x\neq 0 }\diagRepulsiveLetter{B}_{n,1}(x,u)\big),\nnb
	\lbeq{BoundNZero-XiIota-beta}
	\beta_{n}=&2dz\frac {\indAnimal}{g_z}\tilde G_z(\ve[\iota]) \sum_{x} |x|^2\bar G_z(x)
	\big(\frac 1 2  (D\star \tilde G_{n,z})(x)+(D\star \tilde G_z)(x-\ve[\iota])\big)\\
	&+2dz\frac {\indAnimal}{g_z}  \sup_{y} \tilde G_{1,z}(y)\sum_{x}|x|^2\bar G_z(x)
	\big(\frac 1 2 (D\star \tilde G_{1,z})(x)+(D\star \tilde G_{z}\star \tilde G_{1,z})(x)\big),\nnb
	\lbeq{BoundNZero-XiIota-gamma}
	\gamma_n=& \frac {\indAnimal}{g_z}\tilde G_z(\ve[\iota]) \sum_{x}
	\big(\diagRepulsiveLetter{D}_{1}(x)+|x|^2\bar G_z(x) dz (D\star \tilde G_z)(x)\big)\\
	&+\frac {\indAnimal}{g_z} \big(\tilde G_z(\ve[\iota])+ \sup_{y} G_{n,z}(y)\big) \sum_{x} |x-\ve[\iota]|^2\bar G(x-\ve[\iota]) 2dz (D\star \tilde G_z)(x)\nnb
	&+2\frac {\indAnimal}{g_z}  \sum_{u}\diagRepulsiveLetter{B}_{1,1}(u,\ve[\iota])\sup_{u} \sum_{x}\diagRepulsiveLetter{B}_{1,1}(x,u)\nnb
	&+4dz \frac {\indAnimal}{g_z}  \sup_{y} \tilde G_{1,z}(y)\sum_{x}|x|^2\bar G_z(x)(D\star \tilde G_z\star \tilde G_{1,z})(x).\nn
\end{align}


Then the bounds on $\Xi^{\ssc[0],\iota}$ and $\Pi^{\ssc[0],\iota,\kappa}$ read as follows:
\begin{lemma}[Bounds on $\Xi^{\ssc[0],\iota}$ and $\Pi^{\ssc[0],\iota,\kappa}$]
\label{BoundNZero-XiIota}
Let $0\leq z< z_c$. Then,
	\begin{align}
	\lbeq{Bound-XiIota0}
	\sum_{x\in\Zd}\Xi^{\ssc[0],\iota}_z(x)\leq& G_z(\ve[\iota]) +\alpha_{1,0},\\
	\lbeq{Bound-XiIota0Wa}
	\sum_{x\in\Zd}|x|^2\Xi^{\ssc[0],\iota}_z(x)\leq&\beta_{0},\\
	\lbeq{Bound-XiIota0Wb}
	\sum_{x\in\Zd}|x-\ve[\iota]|^2\Xi^{\ssc[0],\iota}_z(x)\leq& G_z(\ve[\iota]) +\gamma_1.
	\end{align}
The remainder terms of the split are bounded in a similar way as
	\begin{align}
	\lbeq{Bound-XiIota0RemI}
	\sum_{x\in\Zd}\Xi^{\ssc[0],\iota}_{{\sss R, I},z}(x)\leq& \alpha_{1,2},\qquad \qquad
	\sum_{x\in\Zd}|x-\ve[\iota]|^2\Xi^{\ssc[0],\iota}_{{\sss R, I},z}(x)\leq \gamma_2,\\
	\lbeq{Bound-XiIota0RemII}
	\sum_{x\in\Zd}\Xi^{\ssc[0],\iota}_{{\sss R, II},z}(x)\leq& \alpha_{2,1},\qquad\qquad
	\sum_{x\in\Zd}|x|^2\Xi^{\ssc[0],\iota}_{{\sss R, II},z}(x)\leq\beta_{1}.
	\end{align}
The explicit terms of the splits are bounded by
	\begin{align}
	\lbeq{Bound-XiIotaAlphaI}
	\Xi^{\ssc[0],\iota}_{\alpha,{\sss I},z}(\ve[\iota])\leq &\frac {\indAnimal}{g_z}\diagRepulsiveLetter{D}_{1}(\ve[\iota]),\quad\qquad\qquad
	\Xi^{\ssc[0],\iota}_{\alpha,{\sss II},z}(0)\leq G_z(\ve[\iota]),\\
    	\lbeq{Bound-XiIotaAlphaIWa}
	\sum_\kappa \Xi^{\ssc[0],\iota}_{\alpha,{\sss I},z}(\ve[\kappa]+\ve[\iota])\leq&
    	G_z(\ve[\iota])+ \frac {\indAnimal}{g_z}\sum_{\kappa\colon \kappa\neq -\iota}\diagRepulsiveLetter{T}_{1,\underline 1,2}(\ve[\iota],\ve[\kappa]+\ve[\iota],0)
    	+\frac {zg_z^\iota}{g_z}\indAnimal\sum_{\kappa\colon \kappa\neq -\iota}\diagRepulsiveLetter{B}_{2,2}(\ve[\iota]+\ve[\kappa],0),\\
    	\lbeq{Bound-XiIotaAlphaIIWb}
    	\sum_\kappa \Xi^{\ssc[0],\iota}_{\alpha,{\sss II},z}(\ve[\kappa])\leq&
   	\frac {\indAnimal}{g_z}\diagRepulsiveLetter{D}_{1}(\ve[\iota])+\frac {\indAnimal z g_z^\iota}{g_z}\sum_{\kappa\colon \kappa\neq -\iota}\sup_{x\neq 0}\tilde G_{1,z}(x)\sum_{u}\diagRepulsiveLetter{B}_{1,1}(u,\ve[\kappa]),\\
   	&+ \indAnimal\sum_{\kappa\colon \kappa\neq -\iota} \Big(\frac{1}{g_z}\diagRepulsiveLetter{T}_{1,2,\underline 1} (\ve[\iota],\ve[\kappa],0)
    	+ \indAnimal  z G_{3,z}(\ve[\kappa])\big(\tilde G_{1,z}(\ve[\iota])+\tilde G_{2,z}(\ve[\kappa]-\ve[\iota])\big)\Big).\nn
	\end{align}
We bound the terms in the split of $\Pi^{\ssc[0],\iota,\kappa}_z$ as
	\begin{align}
	\lbeq{Bound-Pi0AlphaI}
	\sum_{\kappa}\Pi^{\ssc[0],\iota,\kappa}_{\alpha,z}(\ve[\iota])\leq&  \indAnimal(2d-1)z\diagRepulsiveLetter{D}_{1}(\ve[\iota]),\\
	\lbeq{Bound-Pi0RemI}
	\sum_{x\in\Zd}\sum_{\iota}\Pi^{\ssc[0],\iota,\kappa}_{{\sss R},z}(x)\leq& (2d-1)zg_z G_z(\ve[\iota]) +zg_z\alpha_{1,1},\\
	\lbeq{Bound-Pi0RemIW}
	\sum_{x,\iota,\kappa}|x-\ve[\iota]|^2\Pi^{\ssc[0],\iota,\kappa}_{{\sss R},z}(x+\ve[\kappa])\leq&
	2(2d)^2zg_z G_z(\ve[\iota]) +(2d)^2zg_z\gamma_1+4d(d+1)zg_z\alpha_{1,1}.
	\end{align}
\end{lemma}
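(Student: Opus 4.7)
For lattice trees the content is vacuous: \refeq{split-LTs-N0b} gives $\Xi^{\ssc[0],\iota}_z(x) = \delta_{0,x} G_z(\ve[\iota])$ and hence both remainders vanish identically, while \refeq{split-LTs-N0-Pi} forces $\Pi^{\ssc[0],\iota,\kappa}_z(x) = 0$ for $x \neq 0$; together with the factor $\indAnimal$ that switches off every LA-only summand in $\alpha_{n,m}, \beta_n, \gamma_n$, every stated inequality becomes immediate. The substantive content is the LA case, which I would handle by a single combinatorial decomposition of the constraint set $\{A : 0, \ve[\iota], x \in A,\ 0 \dbct{A} x\}$ following the planted-animal recipe of Section~\ref{secBoundsOnePointF}: for each such $A$ choose the two bond-disjoint paths $p_1, p_2$ from $0$ to $x$ in lexicographic order (as in \refeq{split-LA-double-tmp2}), the first intersection point $w$ of the corresponding sausages (Definition~\ref{defLTFirstPointIntersect}), and the last point $u$ shared by the path $0 \to \ve[\iota]$ in $A$ and $p_1 \cup p_2$. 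Attributing the rest of $A$ to the four resulting connecting planted animals gives a sum over the locations of $u$ and $w$ of products of $\tilde G_z$-factors.

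The bound \refeq{Bound-XiIota0} is then obtained by grouping the configurations by the location of $u$: the case $u = 0$ with $\ve[\iota]$ detached from the double connection contributes $\tilde G_z(\ve[\iota]) \sum_x \diagRepulsiveLetter{D}_n(x)$; $u$ lying in the interior of $p_1 \cup p_2$ produces the triangle contribution $\sum_x \diagRepulsiveLetter{T}_{1,\max\{1,m\},n}(\ve[\iota], x, 0)$; further branchings at $u$ give the bubble product $\diagRepulsiveLetter{B}_{1,1}(u, \ve[\iota])\, \diagRepulsiveLetter{B}_{n,1}(x, u)$; and the case $x = 0$ supplies the explicit $G_z(\ve[\iota])$ summand. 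The indices $n, m$ in $\alpha_{n,m}$ track the minimum backbone lengths forced by $d_A(0, x)$ and $d_A(\ve[\iota], x)$ respectively, so the remainder bounds \refeq{Bound-XiIota0RemI}, \refeq{Bound-XiIota0RemII} follow by rerunning the same case analysis with the indicators $\indic{d_A(\ve[\iota],x)=1}$ and $\indic{d_A(0,x)=1}$ excluded, which promotes a $1$ to a $2$ in exactly one index. The explicit-term bounds \refeq{Bound-XiIotaAlphaI}--\refeq{Bound-XiIotaAlphaIIWb} come from the same decomposition under the stronger constraints $\delta_{\ve[\iota],x}$, $\delta_{0,x}$, or $d_A = 1$, which collapse one or more planted animals to a single bond.

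For the weighted bounds \refeq{Bound-XiIota0Wa}, \refeq{Bound-XiIota0Wb}, I would apply the standard path-splitting $|x|^2 \leq N \sum_i |x_i|^2$ along whichever backbone segment carries the displacement being weighted, distribute each partial weight $|x_i|^2$ onto the corresponding step, and recognise the resulting factor $|y|^2 \bar G_z(y)$ as controlled by the bootstrap function $f_3$ of \refeq{defFunc3}. The factor $\tfrac{1}{2}$ appearing in $\beta_n$ is the symmetry gain from interchanging $p_1$ and $p_2$, exactly as in the passage from \refeq{split-LA-doublePrimitive} to \refeq{split-LA-double-tmp2}. The main obstacle at this step is the one-point-function allocation described in Remark~\ref{rem-rib-weight2}: the weighted line must be bounded by $\bar G_z$ rather than $\tilde G_z$, so one has to verify that in every case the remaining lines can still be absorbed into $\tilde G_z$-factors without double-counting any rib at a shared corner. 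This is a bookkeeping problem once the weighted step has been fixed, but it is the only part where the split of Section~\ref{secBoundsOnePointF} has to be chosen adaptively, and it is what I expect to be the hardest piece to write down cleanly.

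Finally, for the $\Pi^{\ssc[0],\iota,\kappa}$-bounds, \refeq{Bound-Pi0AlphaI} is a one-line direct computation from \refeq{LA-Split-Def-1a}: at $x = \ve[\iota]$ the term $\delta_{\ve[\iota]+\ve[\kappa],\ve[\iota]}$ vanishes and $\sum_\kappa \indic{\ve[\iota]-\ve[\kappa] \nin A} \leq 2d - 1$ (the value $\kappa = \iota$ is excluded because $0 \in A$), leaving $(2d-1) z \diagRepulsiveLetter{D}_1(\ve[\iota])$. For the remainders \refeq{Bound-Pi0RemI}, \refeq{Bound-Pi0RemIW} I would invoke Assumption~4.2 in the form \refeq{XidominatespsiImproved} to pass from $\Pi^{\ssc[0],\iota,\kappa}_{{\sss R},z}$ to $zg_z$ times a $\Xi^{\ssc[0],\iota}$-type object carrying the extra constraint $x - \ve[\kappa] \nin A$; this constraint promotes one index in the associated $\alpha$-expression, producing $(2d-1) z g_z G_z(\ve[\iota]) + z g_z \alpha_{1,1}$ after the sum over $\iota$ and $\kappa$ as above. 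The weighted $\Pi$-bound then follows from the previous paragraph combined with $|x - \ve[\iota]|^2 \leq 2 |x - \ve[\iota] - \ve[\kappa]|^2 + 2$, which produces the $\gamma_1$ and $\alpha_{1,1}$ pieces together with the explicit $(2d)^2 z g_z G_z(\ve[\iota])$ contribution.
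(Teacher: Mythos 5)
Your route is the paper's route: dispose of LTs via \refeq{split-LTs-N0b} and \refeq{split-LTs-N0-Pi}, treat LAs through the non-trivial-sausage split of Section \ref{secBoundsOnePointF} with a case analysis on the branch point $u$ where the connection to $\ve[\iota]$ attaches to the double connection, allocate the spatial weight to a single $\bar G_z$-line so that $f_3$ applies, and reduce \refeq{Bound-Pi0RemI}--\refeq{Bound-Pi0RemIW} to the $\Xi^{\ssc[0],\iota}$-bounds via \refeq{XidominatespsiImproved}. However, your case analysis for $\alpha_{n,m}$ is incomplete and partly mislabelled. The triangle term $\sum_x\diagRepulsiveLetter{T}_{1,\max\{1,m\},n}(\ve[\iota],x,0)$ is the case $u=\ve[\iota]$, i.e.\ $\ve[\iota]$ lying \emph{on} the double connection, not the case of $u$ in the interior of $p_1\cup p_2$; the latter is the generic case $u\nin\{0,x,\ve[\iota]\}$ and yields the bubble product $\sum_u\diagRepulsiveLetter{B}_{1,1}(u,\ve[\iota])\sup_u\sum_{x\neq 0}\diagRepulsiveLetter{B}_{n,1}(x,u)$. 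More importantly, you omit the case $u=x\nin\{0,\ve[\iota]\}$ entirely: there the diagram is a double connection $0\dbct{A}x$ with a separate planted animal from $x$ to $\ve[\iota]$, and it is responsible for the fourth term of $\alpha_{n,m}$, the one carrying the $\sup_y\tilde G_{m,z}(y)$ factors and the parity refinement $\indic{n\leq 1}$. Without that term your bound on $\sum_x\Xi^{\ssc[0],\iota}_z(x)$ does not match $\alpha_{1,0}$, and the same omission propagates into the remainder bounds.

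The second concrete problem is \refeq{Bound-Pi0RemIW}. Your inequality $|x-\ve[\iota]|^2\leq 2|x-\ve[\iota]-\ve[\kappa]|^2+2$ points the wrong way: after the shift $x\mapsto x+\ve[\kappa]$ the quantity to control is $|x-\ve[\iota]-\ve[\kappa]|^2$ in terms of $|x-\ve[\iota]|^2$, and even the corrected crude bound $|x-\ve[\iota]-\ve[\kappa]|^2\leq 2|x-\ve[\iota]|^2+2$ gives $2(2d)^2zg_z\gamma_1$, twice the coefficient stated in the lemma. The paper instead expands exactly, $|x-\ve[\iota]-\ve[\kappa]|^2=|x-\ve[\iota]|^2+1-2(x_\kappa-\delta_{\iota,\kappa})$, and uses the total rotational symmetry of $x\mapsto\sum_\iota\Xi^{\ssc[0],\iota}_z(x)$ so that the linear cross term vanishes upon summation over $x$, after which $\sum_\kappa(1+2\delta_{\iota,\kappa})=2d+2$ and the sum over $\iota$ produce precisely $(2d)^2zg_z\gamma_1+4d(d+1)zg_z\alpha_{1,1}$. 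This is the same symmetry cancellation you correctly invoke for the factor $\tfrac12$ in $\beta_n$; it has to be applied here in place of the quadratic inequality, otherwise the stated constants are not attained.
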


\begin{lemma}[Bounds on the coefficients for $N=1$]
\label{LemmaBoundXiLTOne}
Let $0\leq z< z_c$. Then,
	\begin{align}
	\lbeq{Bound-Xi1}
	\sum_{x\in\Zd}\Xi^{\ssc[1]}_z(x)&\leq \frac {g_z^\iota}{g_z} (\vec P)_0,\\
	\lbeq{Bound-Xi1Rem}
	\sum_{x\in\Zd}\Xi^{\ssc[1]}_{{\sss R},z}(x) &\leq \frac {g_z^\iota}{g_z}\Big( (\vec P)_0
	- \sum_{x\in\Zd} A^{0,0}(0,0,x,x)+2\diagRepulsiveLetter{B}_{2,2}(x,0)+\sum_w \diagRepulsiveLetter{T}_{2,1,1}(x,w,0)
	 \Big),\\
	\lbeq{Bound-Psi1RemI}
	\sum_{x\in\Zd} \Psi^{\ssc[1],\kappa}_{{\sss R, I},z}(x)&\leq
	\frac {g_z^\iota}{g_z} \Big( (\vec P)_0 - \sum_{x\in\Zd} A^{0,0}(0,0,x,x)+2\indic{\ve[\kappa]\neq x}\diagRepulsiveLetter{B}_{\underline 1,3}(x,0)\Big)\\
	&+\frac {g_z^\iota}{g_z} \sum_{x\in\Zd} \Big(2\diagRepulsiveLetter{B}_{2,2}(x,0)+\sum_w \diagRepulsiveLetter{T}_{1,1,1}(x,w,0)\Big),\nnb
	\lbeq{Bound-Psi1RemII}
	\sum_{x\in\Zd} \Psi^{\ssc[1],\kappa}_{{\sss R, II},z}(x)&\leq \frac {g_z^\iota}{g_z} \Big( (\vec P)_0
	- \sum_{x\in\Zd} A^{0,0}(0,0,x,x)+2\diagRepulsiveLetter{B}_{2,2}(x,0)+\sum_w \diagRepulsiveLetter{T}_{2,1,1}(x,w,0)
	 \Big),
	\end{align}
and
	\begin{align}
	\lbeq{Bound-Xi1W}
	\sum_{x\in\Zd}|x|^2\Xi^{\ssc[1]}_z(x)&\leq (\vec \Delta^{\rm{start}})_0, \qquad
	\sum_{x\in\Zd}|x|^2\Xi^{\ssc[1]}_{{\sss R},z}(x)\leq(\vec \Delta^{\rm{start}})_0,\\
	\sum_{x\in\Zd} |x|^2 \Psi^{\ssc[1],\kappa}_{{\sss R, II},z}(x)&\leq(\vec \Delta^{\rm{start}})_0,\qquad
	\lbeq{Bound-Psi1RemIW}
	\sum_{x\in\Zd} |x-\ve[\kappa]|^2 \Psi^{\ssc[1],\kappa}_{{\sss R, I},z}(x)\leq(\vec \Delta^{\rm{start}})_0+\sum_{x\in\Zd} \Psi^{\ssc[1],\kappa}_{{\sss R, I},z}(x).
	\end{align}
Further, $\Xi^{\ssc[1],\iota}_z$ is bounded by
	\begin{align}
	\lbeq{Bound-XiIota1}
	\sum_{x\in\Zd}\Xi^{\ssc[1],\iota}_z(x)\leq& \frac {g_z}{\gj}(\vec P^\iota)_0,\\
	\lbeq{Bound-XiIota1Wa}
	\sum_{x\in\Zd}|x|^2\Xi^{\ssc[1],\iota}_z(x)\leq&(\vec \Delta^{\iota,{\sss I}})_0,\\
	\lbeq{Bound-XiIota1Wb}
	\sum_{x\in\Zd}|x-\ve[\iota]|^2\Xi^{\ssc[1],\iota}_z(x)\leq &(\vec \Delta^{\iota,{\sss II}})_0.
\end{align}
\end{lemma}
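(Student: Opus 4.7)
The plan is to start from the definitions \refeq{defXiNanimal}--\refeq{defPsiNanimal} specialized to $N=1$. Here the only lace in $\Lcal^{\ssc[1]}[0,|\omega|]$ is $L=\{0\,|\omega|\}$, so $J^{\ssc[1]}[0,|\omega|]=-\Ucal_{0,|\omega|}\prod_{st\neq 0|\omega|}(1+\Ucal_{st})$, meaning the contributing sausage-/rib-walks are those for which $S^\omega_0$ and $S^\omega_{|\omega|}$ must share at least one vertex while all other pairs of sausages are disjoint. Introduce the first intersection point $w=w_1(\omega)$ of $S^\omega_0$ and $S^\omega_{|\omega|}$ (Definition \ref{defLTFirstPointIntersect}) and split the walk into (i) the piece of $S^\omega_0$ containing $0$, $\bb^\omega_1$ and $w$; (ii) the piece of $S^\omega_{|\omega|}$ containing $\tb^\omega_{|\omega|}=x$ and $w$; (iii) the backbone together with the intermediate sausages from $\bb^\omega_1$ to $\tb^\omega_{|\omega|}$. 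For LAs with $\bb^\omega_1\neq 0$ the first sausage is a non-trivial double connection, which we decompose as in Figure \ref{split-LA-sausage-Gbar-Table} to obtain a bound as a product of four planted two-point functions $\tilde G_z$ through a common vertex $u$; for LTs (and $\bb^\omega_1=0$) this collapses to the simpler three-line decomposition. The middle piece in (iii) forms a self-avoiding polygonal loop with the two pieces of $S^\omega_0$ and $S^\omega_{|\omega|}$, so we can enforce repulsiveness of the paths meeting at $w$ and $x$. By splitting on the length type of the final vertical line $x$--$y$ (with $m\in\{-2,-1,0,1,2\}$ encoding trivial, one-bond, longer, and orientation w.r.t.\ the backbone), the resulting diagram is exactly the closed fundamental block $P^{\ssc[1],m}$ of Figure \ref{P1pieceSketch}. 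After properly assigning the one-point-function weights as in Section \ref{secBoundsOnePointF} (extracting a single $g_z^\iota$ at the root), summing over $x,y$ and over $m$ yields \refeq{Bound-Xi1}.

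For the remainder $\Xi^{\ssc[1]}_{{\sss R},z}$ and the two versions of $\Psi^{\ssc[1],\kappa}_{{\sss R,\cdot},z}$, use \refeq{defXiNanimalPrime}--\refeq{defPsiNanimalPrime} together with the definition \refeq{LA-Split-Def-None}--\refeq{LA-Split-Def-10} of the extracted parts $\Xi^{\ssc[1]}_{\alpha,z}$, $\Psi^{\ssc[1],\kappa}_{\alpha,{\sss I/II},z}$. Each extracted indicator removes a single sub-configuration from the above bound: the case $\delta_{0,x}$ corresponds to the closed diagram $\sum_x A^{0,0}(0,0,x,x)$, the case $\delta_{|\omega|,1}\delta_{\tb^\omega_0,x}$ together with the first-intersection constraint $w\in\{0,x\}$ contributes the bubble $\diagRepulsiveLetter{B}_{2,2}$, and the cases with $d_{S_0}(0,x)=1$ or $d_{S_{|\omega|}}(0,x)=1$ reduce to a triangle $\diagRepulsiveLetter{T}_{2,1,1}$ (or $\diagRepulsiveLetter{T}_{1,1,1}$, when the final step is also forced). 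Subtracting these contributions from the full $P^{\ssc[1],0}$ bound yields \refeq{Bound-Xi1Rem}--\refeq{Bound-Psi1RemII}; the indicator $\indic{\ve[\kappa]\neq x}$ in \refeq{Bound-Psi1RemI} reflects that the case-$\ssss[I]$ extraction does not remove the sub-configuration with $x=\ve[\kappa]$.

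For the displacement-weighted bounds, start from the same decomposition and distribute $|x|^2$ along the path from $0$ to $x$ using $|x|^2\leq n\sum_{i=1}^n|x_i|^2$ along a chosen path of length $n$ in the diagram, as discussed in Remark \ref{rem-rib-weight}. The allocation of one-point functions is now constrained: the unique line carrying the $\bar G_z$-factor (rather than a $\tilde G_z$-factor) must coincide with the line carrying the spatial weight, since only $f_3$ gives bounds on $|y|^2\bar G_z(y)$ convolved with further two-point functions. This forces us to re-route the rib weights for each choice of partial weight, but after the re-routing each case is covered by $\Delta^{\rm{start},m}$ (as in Figure \ref{C-ml-def}), giving \refeq{Bound-Xi1W}--\refeq{Bound-Psi1RemIW}. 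For \refeq{Bound-Psi1RemIW} we use $|x-\ve[\kappa]|^2\leq 2|x|^2+2$, which produces the extra $\sum_x\Psi^{\ssc[1],\kappa}_{{\sss R,I},z}(x)$ term on the right-hand side. The main obstacle in this part is checking that every sub-case of the backbone geometry admits a weight allocation for which the weighted line is a planted animal rooted at the weight-carrying endpoint; this is where the bookkeeping of the indices $m$ and of the rules for $P^{\ssc[1],m}$ versus $\bar P^{\ssc[1],m}$ becomes delicate.

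Finally, \refeq{Bound-XiIota1}--\refeq{Bound-XiIota1Wb} follow the same route applied to $\Xi^{\ssc[1],\iota}_z$, except that the extra indicator $\1_\iota(\omega)$ in \refeq{defXiNIotaanimal} forces either $\ve[\iota]\in S^\omega_0$ or $\bb^\omega_1=\ve[\iota]$. In the second case one simply plants the initial bond at $\ve[\iota]$ as in the left diagram of the $P^{\ssc[1],\iota,m}$ block. In the first case, which is only non-trivial for LAs, decompose $S^\omega_0$ by identifying the last common vertex $v$ of the double connection $0\leftrightarrow\bb^\omega_1$ with the connection $0\leftrightarrow w$, and then the last common vertex $u$ of the connection $0\leftrightarrow\ve[\iota]$ with the already-assembled structure; depending on which of the four connections $u$ lies on, one gets exactly the four variants pictured in the definition of $P^{\ssc[1],\iota,m}$. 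Summing over these cases and over the length index $m$ yields \refeq{Bound-XiIota1}; weighting the final line by $|x|^2$ or $|x-\ve[\iota]|^2$ and re-allocating ribs as before produces the $\vec\Delta^{\iota,{\sss I}}$ and $\vec\Delta^{\iota,{\sss II}}$ bounds. The hard part of the whole lemma is the combinatorial case-splitting needed to prove that each extracted configuration in \refeq{LA-Split-Def-None}--\refeq{LA-Split-Def-10} really accounts for exactly one sub-diagram of the $P^{\ssc[1],0}$ bound, so that the remainders \refeq{Bound-Xi1Rem}--\refeq{Bound-Psi1RemII} take the clean form stated.
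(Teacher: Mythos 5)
Your overall route matches the paper: first prove pointwise bounds of the form $\Xi^{\ssc[1]}_z(x)\leq\tfrac{\gj}{g_z}P^{\ssc[1],0}_z(x,x)$ via the first-intersection-point decomposition and the planted-animal split, then sum and distribute spatial weights. For \refeq{Bound-Xi1} itself, note that only $m=0$ is relevant (the endpoint $x$ plays both roles, so $x=y$ and the vertical line is trivial); the indices $m\neq 0$ only come into play as initializers for the recursion at $N\geq 2$. ``Summing over $m$'' is therefore not the right description of how $(\vec P)_0$ arises.

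The real gap is in your treatment of the remainder bounds \refeq{Bound-Xi1Rem}--\refeq{Bound-Psi1RemII} and of \refeq{Bound-Psi1RemIW}. You describe the mechanism as ``subtract each extracted sub-configuration from the full $P^{\ssc[1],0}$ bound,'' and in particular you assign the bubble $\diagRepulsiveLetter{B}_{2,2}$, the triangle $\diagRepulsiveLetter{T}_{2,1,1}$, and the $\indic{\ve[\kappa]\neq x}\diagRepulsiveLetter{B}_{\underline 1,3}$ to the extracted cases and propose subtracting them. But in the stated bounds these terms are \emph{added}, not subtracted: the formula is $(\vec P)_0 - \sum_x A^{0,0}(0,0,x,x) + 2\diagRepulsiveLetter{B}_{2,2}(x,0) + \sum_w\diagRepulsiveLetter{T}_{2,1,1}(x,w,0)$. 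What the paper does is different from what you propose: because the extracted pieces all live inside the $\bb_1=0$ contribution bounded by $A^{0,0}$, it removes that \emph{entire} block from $(\vec P)_0$ and then adds back a tighter bound on the part of $A^{0,0}$ that survives the extraction (the configurations with a path of length $\geq 2$ from $0$ to $x$, hence the shift from $\diagRepulsiveLetter{B}_{2,1},\diagRepulsiveLetter{T}_{1,1,1}$ in the definition of $A^{0,0}$ to $\diagRepulsiveLetter{B}_{2,2},\diagRepulsiveLetter{T}_{2,1,1}$). Your ``subtract the extracted configurations'' version does not produce the stated right-hand side and, as written, the signs are wrong.

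Separately, for \refeq{Bound-Psi1RemIW} you invoke $|x-\ve[\kappa]|^2\leq 2|x|^2+2$, which would give $2(\vec\Delta^{\rm start})_0+2\sum_x\Psi^{\ssc[1],\kappa}_{{\sss R,I},z}(x)$, not the stated bound with coefficient one. The paper instead first crudely bounds $\Psi^{\ssc[1],\kappa}_{{\sss R,I},z}\leq\Xi^{\ssc[1]}_z$ (since extraction does not help numerically for the weighted diagrams), then uses the exact identity $|x-\ve[\kappa]|^2=|x|^2-2x_\kappa+1$ together with the spatial symmetry that kills the $x_\kappa$ cross term after summation, exactly as was done for $N=0$ in \refeq{Bound-Psi0RemIIW-proofLine}. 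You need that sharper argument to avoid the extraneous factor of $2$.
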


\begin{prop}[Bounds on the coefficients for $N\geq 2$]
\label{LemmaBoundXiLTTwo}
Let $0\leq z<z_c$ and $N\geq 2$. Then,
\begin{align}
\lbeq{Bound-Xi2}
\sum_{x\in\Zd}\Xi^{\ssc[N]}_z(x)&\leq \frac {g_z}{\gj} \vec P^T{\bf A}^{N-2}{\vec E}^{\rm{open}}, \quad \sum_{x\in\Zd}\Xi^{\ssc[N],\iota}_z(x)\leq \frac {g_z}{\gj} (\vec P^{\iota})^T{\bf A}^{N-2}{\vec E}^{\rm{open}},\\
\sum_{x\in\Zd}|x|^2\Xi^{\ssc[N]}_z(x)\leq& \ N  \left( (\vec \Delta^{\rm{start}})^T{\bf A}^{N-2}{\vec E^{\rm{closed}}}+\vec P^T{\bf A}^{N-2}\vec \Delta^{\rm{end}}\right)\nn \\[-5mm]
&\qquad\qquad\qquad+N\indic{N> 2}\sum_{M=0}^{N-3} \vec P {\bf A}^{M} {\bf C} {\bf A}^{N-3-M}\vec E^{\rm{closed}},
\lbeq{Bound-Xi2W}
\\
\sum_{x\in\Zd}|x|^2\Xi^{\ssc[N],\iota}_z(x)\leq& \ N \left((\vec \Delta^{\iota,{\sss I}})^T{\bf A}^{N-2}{\vec E^{\rm{closed}}}+(\vec P^{\iota})^T{\bf A}^{N-2}\vec \Delta^{\rm{end}}\right)\nn \\[-5mm]
&\qquad\qquad\qquad+N\indic{N> 2}\sum_{M=0}^{N-3} \vec P^\iota {\bf A}^{M} {\bf C} {\bf A}^{N-3-M}\vec E^{\rm{closed}},
\lbeq{Bound-XiIota2Wa}
\\
\sum_{x\in\Zd}|x-\ve[\iota]|^2\Xi^{\ssc[N],\iota}_z(x)\leq& \ N  \left((\vec \Delta^{\iota,{\sss II}})^T{\bf A}^{N-2}{\vec E^{\rm{closed}}}+(\vec P^{\iota})^T{\bf A}^{N-2}\vec \Delta^{\rm{end}}\right)\nn \\[-5mm]
&\qquad\qquad\qquad+N\indic{N> 2}\sum_{M=0}^{N-3} \vec P^\iota {\bf A}^{M} {\bf C} {\bf A}^{N-3-M}\vec E^{\rm{closed}}.
\lbeq{Bound-XiIota2Wb}
\end{align}
\end{prop}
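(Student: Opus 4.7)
The plan is to decompose a sausage/rib-walk contributing to $\Xi^{\ssc[N]}_z(x)$ along its $N$-edge lace $L=\{s_1t_1,\dots,s_Nt_N\}$ into $N$ consecutive ``squares'' (loops), bound the first square by a fundamental block, each intermediate square by an open-square block, and the terminal square by $\vec E^{\rm open}$, and to iterate the argument already illustrated in Section \ref{secBoundsExplained} for $N=1,2$. The variables $l_i,m_i\in\{-2,-1,0,1,2\}$ introduced in Section \ref{secBuildingBlocks} encode exactly the length and backbone orientation of the line shared between the $(i-1)$st and $i$th square, so summing over these shared states in matrix form turns the iteration into the matrix product $\vec P^T{\bf A}^{N-2}\vec E^{\rm open}$ in \refeq{Bound-Xi2}. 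The case $\Xi^{\ssc[N],\iota}_z$ is identical except that the first square has the extra constraint involving $\ve[\iota]$, so the vector $\vec P$ is replaced by $\vec P^\iota$.

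More concretely, I will first fix the lace $L$ and the intersection points $w_1,\dots,w_N$, choosing each $w_i$ to be the \emph{first intersection point} of $S^\omega_{s_i}$ and $S^\omega_{t_i}$ in the sense of Definition \ref{defLTFirstPointIntersect}. I will then apply the rib/sausage-weight allocation of Section \ref{secBoundsOnePointF}: each shared corner carries exactly one one-point function via the connecting planted animals $B^\omega_{s_i}(\cdot,\cdot)$ and $B^\omega_{t_i}(\cdot,\cdot)$. The resulting bond-disjoint path structure fits the skeleton $S^{s_1,\dots,s_n}_{j_1,\dots,j_n}$ of Definition \ref{defSkeletonLA}, so each square can be bounded by a repulsive diagram. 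Summing these repulsive diagrams with prescribed boundary states $(l_i,m_i)$ gives precisely the matrix entries $({\bf A})_{l,m}$ of \refeq{defOfBoundingElement2a}, while the first and last squares give $(\vec P)_m$ and $(\vec E^{\rm open})_m$ respectively (and $(\vec P^\iota)_m$ in the $\iota$-case). The prefactor $g_z/\gj$ arises from converting the bound on the first square, which is normalized by $\gj$ through $\bar G^\iota_z(0)=\gj$, into the normalization $g_z$ that the NoBLE coefficient is written in (cf.\ \refeq{LALTCoefficientNormalization}).

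For the weighted bounds \refeq{Bound-Xi2W}--\refeq{Bound-XiIota2Wb} I will split the weight along the backbone. Writing $x=\sum_{i=1}^{n}x_i$ where $x_i$ is the displacement across the $i$th square (as in Remark \ref{rem-rib-weight}), the elementary inequality $|x|^2\leq N\sum_{i=1}^{N}|x_i|^2$ reduces the task to bounding $N$ diagrams in which the weight $|x_i|^2$ is carried by exactly one square. When that square is the first (resp.\ the last), the bound yields $\vec\Delta^{\rm start}$ (resp.\ $\vec\Delta^{\rm end}$); when it is an intermediate $M$th square with $1\leq M\leq N-2$, the bound yields the matrix $\bf C$ sandwiched between $\bf A$-powers, giving the sum $\sum_{M=0}^{N-3}\vec P\,{\bf A}^{M}{\bf C}\,{\bf A}^{N-3-M}\vec E^{\rm closed}$ that only appears for $N>2$. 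The terminal vector is $\vec E^{\rm closed}$ rather than $\vec E^{\rm open}$ because the rib-weight allocation needed to bound a line by the \emph{weighted} two-point function requires using $\bar G_z$ (not $\tilde G_z$) at one line of the square carrying the weight, which is precisely how $\vec E^{\rm closed}$ differs from $\vec E^{\rm open}$ in \refeq{defOfBoundingElement2}. The $\iota$-variants are handled identically with $\vec P$ replaced by $\vec P^\iota$ and $\vec\Delta^{\rm start}$ replaced by $\vec\Delta^{\iota,\sss I}$ or $\vec\Delta^{\iota,\sss II}$ depending on whether the weight is $|x|^2$ or $|x-\ve[\iota]|^2$.

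The main obstacle, as anticipated in Remarks \ref{rem-rib-weight} and \ref{rem-rib-weight2}, is performing the rib/sausage-weight allocation in a way that is simultaneously compatible with (i) the non-backtracking repulsiveness used to get the matrix $\bf A$ (all $4N+2$ lines need to be placed into planted-animal blocks), (ii) the requirement that the single line carrying the spatial weight $|x_i|^2$ be bounded by $\bar G_z$ rather than $\tilde G_z$, and (iii) the case analysis $s_{i+1}=t_i$ versus $s_{i+1}<t_i$, which changes how the sausage $S^\omega_{t_i}$ is split (compare the left and right panels of Figure \ref{BoundLA-Figure-Xitwo}). For LA the further complication of double connections inside $S^\omega_0$ when $\bb_1\neq 0$ is absorbed into the definition of $\vec P$ exactly as in \refeq{bound-xi1-structure}. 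Because all of these ingredients have already been proved for $N=0,1,2$ in Lemmas \ref{BoundNZero-Xi}--\ref{LemmaBoundXiLTOne} and the building blocks have been set up in Section \ref{secBuildingBlocks} so that the five states of $(l,m)$ exhaust the possible shared-line configurations, the induction on $N$ reduces to matching block factors to matrix entries, and the claimed inequalities follow.
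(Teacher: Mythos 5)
Your plan follows essentially the same route as the paper's proof in Section \ref{secLTProofBound}: decompose the walk along the lace into $N$ squares with first-intersection points, bound the initial square by $\vec P$ (or $\vec P^{\iota}$) and the intermediates by ${\bf A}$ via the five shared-line states, split the weight by $|x|^2\leq N\sum_i|x_i|^2$, and convert the sums over shared-line configurations into the matrix products of \refeq{Split-weight-bound}--\refeq{BoundLT-tmp-8}. The only imprecision is your explanation of why $\vec E^{\rm closed}$ replaces $\vec E^{\rm open}$ in the weighted bounds: by \refeq{defOfBoundingElement2} the two differ in whether the index $m$ sits on the first or the second superscript of $A$ (reflecting the reversed orientation of the terminal block in the backwards recursion for $Q^{\ssc[N],m}$ in \refeq{Q1ConstructLT}--\refeq{QnConstructLT}), not in a $\bar G_z$ versus $\tilde G_z$ allocation, but this does not affect the validity of the argument.
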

\begin{figure}[ht!]
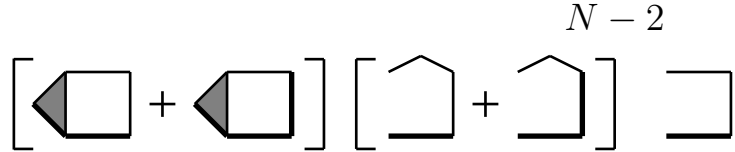

\begin{center}
{\Large
\picPiFourTreeDecomposePicture[0.85]}
\caption{Schematic representation of the bound on $\hat \Xi^{\ssc[N]}_z(0)$, given in \refeq{Bound-Xi2}. The backbone is marked by a thicker line.
See also Figures \ref{BoundLA-Figure-Xitwo} and \ref{LTXi4Ribweight}.}
\label{TreeXiFourDeomposedStructure}
\end{center}
\end{figure}

Having stated our main bounds on the NoBLE coefficients, we next state some additional bounds that improve the numerical accuracy of our method:

\begin{lemma}[Bounds on differences]
\label{lemmapercboundXi0minus1}
Let $0\leq z<z_c$. Then,
	\begin{align}
	\lbeq{Differencebound-1}
	\Xi^\ssc[0]_{\alpha,z}(0)-\Xi^\ssc[1]_{\alpha,z}(0)\leq&0,\qquad \qquad \Xi^\ssc[0]_{\alpha,z}(\ve[1])-\Xi^\ssc[1]_{\alpha,z}(\ve[1])\leq 0,\\
	\lbeq{Differencebound-2}
\Xi^\ssc[1]_{\alpha,z}(0)-\Xi^\ssc[0]_{\alpha,z}(0)\leq & 2dz\gj G_{3,z}(\ve[\iota]),\\
	\Xi^\ssc[1]_{\alpha,z}(\ve[1])-\Xi^\ssc[0]_{\alpha,z}(\ve[1])\leq&
	\frac {g_z^\iota} {g_z} \Big(2\diagRepulsiveLetter{B}_{3,\underline 1}(\ve[1],0)+\sum_v \diagRepulsiveLetter{T}_{\underline 1,1,1}(\ve[1],v,0)\Big),
\lbeq{Differencebound-3}
	\end{align}
and
	\begin{align}
	\lbeq{Differencebound-4}
	&\sum_\iota \big(\Psi^{\ssc[1],1}_{\alpha,{\sss I},z}(\ve[1]+\ve[\iota])-\Psi^{\ssc[0],1}_{\alpha,{\sss I},z}(\ve[1]+\ve[\iota])\big) \leq\sum_\iota\sum_w\diagRepulsiveLetter{T}_{0,0,\underline 2}(w,\ve[1]+\ve[\iota],0),
\\
	\lbeq{Differencebound-5}
	&\sum_\iota \big(\Psi^{\ssc[0],1}_{\alpha,{\sss I},z}(\ve[1]+\ve[\iota])  -\Psi^{\ssc[1],1}_{\alpha,{\sss I},z}(\ve[1]+\ve[\iota]) \big)\leq {\indAnimal}{g^\iota_z}\sum_\iota\diagRepulsiveLetter{B}_{2,\underline 2}(\ve[1]+\ve[\iota],0).
	\end{align}
Further,
	\begin{align}
	\sum_\iota \big(\Psi^{\ssc[1],1}_{\alpha,{\sss II},z}&(\ve[\iota])  -\Psi^{\ssc[0],1}_{\alpha,{\sss II},z}(\ve[\iota])\big)\leq
	(2d-1) \Big(2\diagRepulsiveLetter{B}_{3,\underline 1}(\ve[1],0)+\sum_v \diagRepulsiveLetter{T}_{\underline 1,1,1}(\ve[1],v,0)\Big),
	\lbeq{Differencebound-6}\\
	\lbeq{Differencebound-7}
	\sum_\iota \big(\Psi^{\ssc[0],1}_{\alpha,{\sss II},z}&(\ve[\iota])  -\Psi^{\ssc[1],1}_{\alpha,{\sss II},z}(\ve[\iota])\big)
	\leq 0.
	\end{align}
\end{lemma}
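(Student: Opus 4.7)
The plan is to prove the seven bounds by identifying exact correspondences between the sausage-walk configurations contributing to the two coefficients being compared, and to estimate any surplus on either side by the first-intersection-point and repulsive-diagram machinery of Section \ref{secBoundsExplained}. I focus on the LA setting ($\indAnimal=1$), the LT case being strictly easier because $\Xi^{\ssc[0]}_{\alpha,z}$ and $\Psi^{\ssc[0],\kappa}_{\alpha,{\sss I},z}$, $\Psi^{\ssc[0],\kappa}_{\alpha,{\sss II},z}$ then vanish by the absence of double connections (see \refeq{split-LTs-N0a}), reducing every bound to an estimate on the corresponding $N=1$ coefficient that follows from the same arguments used for Lemma \ref{LemmaBoundXiLTOne}.

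\textbf{Containment inequalities.} In \refeq{Differencebound-1} at $x=0$, the factor $(1-\delta_{0,x})$ in \refeq{LA-Split-Def-1} gives $\Xi^{\ssc[0]}_{\alpha,z}(0)=0$, so that bound is immediate. For \refeq{Differencebound-1} at $x=\ve[1]$ and for \refeq{Differencebound-7}, I construct a weight-preserving injection from the LAs contributing to the $N=0$ coefficient into the $N=1$ sausage-walks: each LA $A$ with $0\dbct{A}\ve[1]$ and $d_A(0,\ve[1])=1$ contains the bond $(0,\ve[1])$ together with a bond-disjoint second path, and is mapped to the one-step sausage-walk $\omega=((b_1),(S_0,S_1))$ with $b_1=(0,\ve[1])$, $S_0=A\setminus\{(0,\ve[1])\}$ and trivial $S_1=\{\ve[1]\}$. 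Then $J^{\ssc[1]}[0,1](\omega)=1$ by the intersection at $\ve[1]$, the weight matches $z\cdot z^{|A|-1}=z^{|A|}$, and the indicator $\delta_{|\omega|,1}\delta_{\tb^\omega_0,\ve[1]}$ fires in \refeq{LA-Split-Def-None} and \refeq{LA-Split-Def-10}.

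\textbf{Surplus bounds.} For \refeq{Differencebound-2}--\refeq{Differencebound-6}, I estimate the sausage-walks contributing to one side but not the other. In \refeq{Differencebound-2}, every walk contributing to $\Xi^{\ssc[1]}_{\alpha,z}(0)$ has $|\omega|\geq 4$ by non-backtracking, and extracting the first step (factor $2dz$) and first sausage (bounded by $\gj$) leaves a closed walk of $\geq 3$ further bonds bounded by $G_{3,z}(\ve[\iota])$, giving the bound after summing over $\iota$. In \refeq{Differencebound-3}, the surplus over the image of the injection consists of walks with $|\omega|\geq 2$ or with $|\omega|=1$ and an extra first intersection point $v\notin\{0,\ve[1]\}$, bounded respectively by $\diagRepulsiveLetter{B}_{3,\underline 1}(\ve[1],0)$ and $\sum_v\diagRepulsiveLetter{T}_{\underline 1,1,1}(\ve[1],v,0)$ via the rib-weight allocation of Section \ref{secBoundsOnePointF}. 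For \refeq{Differencebound-4} and \refeq{Differencebound-5} the two families are not contained in one another because of the asymmetric indicator $\indic{|\ve[\kappa]-x|=1}\indic{d_A(0,x)=2}$ in \refeq{LA-Split-Def-10}; the surplus of $\Psi^{\ssc[1]}$ over $\Psi^{\ssc[0]}$ comes from non-trivial backbones whose first intersection point traces a triangle $\sum_w\diagRepulsiveLetter{T}_{0,0,\underline 2}(w,\ve[1]+\ve[\iota],0)$, while the surplus of $\Psi^{\ssc[0]}$ over $\Psi^{\ssc[1]}$ comes from configurations with $d_A(0,\ve[1]+\ve[\iota])=2$ and is bounded by $\diagRepulsiveLetter{B}_{2,\underline 2}(\ve[1]+\ve[\iota],0)$. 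Bound \refeq{Differencebound-6} is proved analogously to \refeq{Differencebound-3}.

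\textbf{Main obstacle.} The delicate part is matching the four indicator terms inside \refeq{LA-Split-Def-None} and \refeq{LA-Split-Def-10} to the right sausage-walk configurations, and tracking the normalisation factors $\gj/g_z$ from \refeq{LALTCoefficientNormalization} throughout, so that each LA is counted exactly once on each side and the bookkeeping lands precisely on the stated diagrammatic bounds without leftover slack.
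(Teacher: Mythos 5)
Your overall route coincides with the paper's: the $x=0$ case of \refeq{Differencebound-1} is the vanishing of $\Xi^\ssc[0]_{\alpha,z}(0)$, the containments \refeq{Differencebound-1} at $\ve[1]$ and \refeq{Differencebound-7} are proved by an injection from $N=0$ animals into $N=1$ sausage-walks, and \refeq{Differencebound-2}--\refeq{Differencebound-6} follow from one-sided diagrammatic estimates. On the latter point, note that the paper simply combines non-negativity of the subtracted coefficient with the standard bound on the remaining one; your ``surplus over the image of the injection'' framing is more than is needed and would force you to characterise that image precisely, for no gain, since the stated right-hand sides are exactly the bounds on the full coefficients $\Xi^\ssc[1]_{\alpha,z}(0)$, $\Xi^\ssc[1]_{\alpha,z}(\ve[1])$, $\Psi^{\ssc[1],1}_{\alpha,{\sss I},z}$ and $\Psi^{\ssc[0],1}_{\alpha,{\sss I},z}$ themselves.

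There is, however, a genuine gap in the injection you use for \refeq{Differencebound-1} at $x=\ve[1]$ and for \refeq{Differencebound-7}: the image of your map does not lie in the set of configurations counted by the $N=1$ coefficient. You set $b_1=(0,\ve[1])$, $S_0=A\setminus\{(0,\ve[1])\}$ and $S_1$ trivial. Since $0\dbct{A}\ve[1]$, the animal $A$ contains a second, bond-disjoint path from $0$ to $\ve[1]$, so $S_0$ still contains the vertex $\ve[1]=\tb^\omega_1$. This violates the non-backtracking requirement $\tb^\omega_1\nin S^\omega_0$ of Definition \ref{defLASausagewalks}(v), so $\omega\nin\Wcal(\ve[1])$ and the image contributes nothing to $\Xi^\ssc[1]_{\alpha,z}(\ve[1])$ (putting the double connection into $S_1$ instead fails for the symmetric reason $\bb^\omega_1=0\in S_1$). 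The repair is what the paper does: choose as backbone one full self-avoiding connection from $0$ to $\ve[1]$ inside $A$ and allocate the remaining bonds to sausages in a unique, non-backtracking way. For instance, let the backbone follow the second path and place the bond $(0,\ve[1])$ into $S_0$; then $\ve[1]\in S_0\cap S_{|\omega|}$ with $d_{S_0}(0,\ve[1])=1$, which is precisely the third indicator in \refeq{LA-Split-Def-None} and \refeq{LA-Split-Def-10}, and the intermediate trivial sausages are disjoint so that $J^{\ssc[1]}[0,|\omega|]$ fires. With that correction, your weight-preservation and injectivity arguments go through.
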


\subsection{Summary of the bounds}
\label{secBoundsSummary}
In this section, we give an overview of where to find the bounds stated in \cite[Assumption 4.3]{FitHof13b}. We emphasize that, next to the diagrammatic bounds proven in this document, \cite[Assumption 4.3]{FitHof13b} also requires a method to conclude numerical bounds. In \cite[Section 5]{FitHof13b}, we describe this method. In \cite{FitNoblePage}, we provide the source code that computes the numerical bounds. This source code allows us to apply the analysis of \cite{FitHof13b} to obtain our mean-field results. We give a summary of where to find most bounds in Table \ref{TableLinktoBounds}.

\begin{table}
\centering
\begin{tabular}{c|c || c|c || c|c }
  Bound             & defined in & Bound &  defined in & Bound &  defined in \\
  \hline
$\beta_{\sss \Xi}^\ssc[0]$ & \refeq{Bound-Xi0} &
$\beta_{\sss \Xi}^\ssc[1]$ &  \refeq{Bound-Xi1} &
$\beta_{\sss \Xi}^\ssc[N],N\geq 2$ & \refeq{Bound-Xi2}\\
\hline
$\beta_{\sss \Xi^\iota}^\ssc[0]$ & \refeq{Bound-XiIota0}&
$\beta_{\sss \Xi^\iota}^\ssc[1]$ & \refeq{Bound-XiIota1}&
$\beta_{\sss \Xi^\iota}^\ssc[N],N\geq 2$ & \refeq{Bound-Xi2}\\
\hline
$\beta_{{\sss \Delta \Xi}}^\ssc[0]$ & \refeq{Bound-Xi0W}&
$\beta_{{\sss \Delta \Xi}}^\ssc[1]$ & \refeq{Bound-Xi1W}&
$\beta_{{\sss \Delta \Xi}}^\ssc[N],N\geq 2$ &  \refeq{Bound-Xi2W}\\
\hline
$\beta_{{\sss \Delta \Xi^{\iota}},0}^\ssc[0]$ & \refeq{Bound-XiIota0Wa}&
$\beta_{{\sss \Delta \Xi^{\iota}},0}^\ssc[1]$ & \refeq{Bound-XiIota1Wa}&
$\beta_{{\sss \Delta \Xi^{\iota}},0}^\ssc[N],N\geq 2$ & \refeq{Bound-XiIota2Wa}\\
\hline
$\beta_{{\sss \Delta \Xi^{\iota}},\iota}^\ssc[0]$ & \refeq{Bound-XiIota0Wb}&
$\beta_{{\sss \Delta \Xi^{\iota}},\iota}^\ssc[1]$ & \refeq{Bound-XiIota1Wb}&
$\beta_{{\sss \Delta \Xi^{\iota}},\iota}^\ssc[N],N\geq 2$ &   \refeq{Bound-XiIota2Wb}\\
\hline
$ \beta_{{\sss\Xi_\alpha(0)}}^\ssc[0-1]$ &\refeq{Differencebound-1}&
$ \beta_{{\sss\Xi_\alpha(\ve[1])}}^\ssc[0-1]$ &\refeq{Differencebound-1}&
$ \beta_{{\sss\Xi_\alpha(0)}}^\ssc[1-0]$ &\refeq{Differencebound-2}\\
 \hline
 $\beta_{{\sss\Xi_\alpha(\ve[1])}}^\ssc[1-0]$&\refeq{Differencebound-3}&
$\beta_{{\sss\Xi^{\iota}_\alpha,I}}^\ssc[0]$ & 	\refeq{Bound-XiIotaAlphaI} &
$\beta_{{\sss \sum}{\sss\Xi^{\iota}_\alpha,I}}^\ssc[0]$& \refeq{Bound-XiIotaAlphaIWa} \\
\hline
$\beta_{{\sss\Xi^{\iota}_\alpha,II}}^\ssc[0]$ & 	\refeq{Bound-XiIotaAlphaI} &
$\beta_{ {\sss\sum \Xi^{\iota}_\alpha,II}}^\ssc[0]$ & 	\refeq{Bound-XiIotaAlphaIIWb}  &
$\beta_{{\sss\sum \Psi^{\iota}_\alpha,I}}^\ssc[0-1]$  & \refeq{Differencebound-5} \\
\hline
$\beta_{{\sss\sum \Psi^{\iota}_\alpha,II}}^\ssc[0-1]$ & \refeq{Differencebound-7} &
$\beta_{{\sss\sum \Psi^{\iota}_\alpha,I}}^\ssc[1-0]$  & \refeq{Differencebound-4} &
$\beta_{{\sss\sum \Psi^{\iota}_\alpha,II}}^\ssc[1-0]$  & \refeq{Differencebound-6} \\
\hline
$\beta_{\sss\sum \Pi_\alpha}^\ssc[0]$& \refeq{Bound-Pi0AlphaI} &&&&\\
\hline
$\beta_{\sss\Xi,R}^\ssc[0]$ & \refeq{Bound-Xi0} &
$\beta_{\Delta \sss\Xi,R}^\ssc[0]$ &  \refeq{Bound-Xi0RemW}&
$\beta_{\sss\Psi,R,I}^\ssc[0]$ & \refeq{Bound-Psi0RemI}\\
\hline
$\beta_{\Delta \sss\Psi,R,I}^\ssc[0]$ & \refeq{Bound-Psi0RemIW}&
$\beta_{\sss\Psi,R,II}^\ssc[0]$ &  \refeq{Bound-Psi0RemII}&
$\beta_{\Delta \sss\Psi,R,II}^\ssc[0]$ & \refeq{Bound-Psi0RemIIW}\\
\hline
$\beta_{\sss\Xi,R}^\ssc[1]$ &     \refeq{Bound-Xi1Rem} &
$\beta_{\Delta \sss\Xi,R}^\ssc[1]$ &  \refeq{Bound-Xi1W}&
$\beta_{\sss\Psi,R,I}^\ssc[1]$ & \refeq{Bound-Psi1RemI} \\
\hline
$\beta_{\Delta \sss\Psi,R,I}^\ssc[1]$ & \refeq{Bound-Psi1RemIW} &
$\beta_{\sss\Psi,R,II}^\ssc[1]$ & \refeq{Bound-Psi1RemII} &
$\beta_{\Delta \sss\Psi,R,II}^\ssc[1]$ & \refeq{Bound-Psi1RemIW} \\
\hline
$\beta_{\sss\Xi^\iota,R,I}^\ssc[0]$ & \refeq{Bound-XiIota0RemI}&
$\beta_{\Delta \sss\Xi^\iota,R,I}^\ssc[0]$ & \refeq{Bound-XiIota0RemI}&
$\beta_{\sss\Xi^\iota,R,II}^\ssc[0]$ & \refeq{Bound-XiIota0RemII}\\
\hline
$\beta_{\Delta \sss\Xi^\iota,R,II}^\ssc[0]$ & \refeq{Bound-XiIota0RemII}&
$\beta_{\sss\Pi,R}^\ssc[0]$ &\refeq{Bound-Pi0RemI}&
$\beta_{\Delta \sss\Pi,R}^\ssc[0]$ & \refeq{Bound-Pi0RemIW}
\end{tabular}
\caption{An overview where to find the bounds stated in  \cite[Assumption 4.3]{FitHof13b}.}
\label{TableLinktoBounds}
\end{table}

In the following, we discuss some bounds that are not stated in Table \ref{TableLinktoBounds}.  These are also required for the analysis, but are in general very easy to obtain.
We start by bounding $\aabz/\aaz$, as formulated in \cite[first inequality in (4.30) in Assumption 4.3]{FitHof13b},
	\begin{align*}
	\frac {\aabz}{ \aaz}=\frac {zg_z}{zg_z^\iota}=1+\frac {g_z-g_z^\iota}{g_z^\iota}=1+\frac {\bar G_z(\ve[\iota])}{g_z^\iota} \leq 1 +2dz (D\star \bar G_z)(\ve[\iota])
	:=\betaaa.
	\end{align*}
Regarding $\betaaa$, we remark that by the definition of $z_I$ in \refeq{DefinitionOfzI}, for each $z>z_I$,
	\begin{align*}
	zg_z \geq {z_I}g_{z_I}=\frac 1 {(2d-1)\e}:=\underline{\beta}_{\mu},
	\end{align*}
as formulated in \cite[second inequality in (4.30) in Assumption 4.3]{FitHof13b}.
\cite[Assumption 4.3]{FitHof13b} also states lower bonds on $\Psi^\ssc[0]$ and $\Pi^\ssc[0]$, which we could obtain using simple combinatorics. We numerically found that the benefit of these extra bounds is minimal, so that we simply use the trivial lower $\underline{\beta}_{\sss \Psi}^\ssc[0]  = \underline{\beta}_{\sss \sum \Pi}^\ssc[1]=\underline{\beta}_{ \sss \sum \Pi_\alpha}^\ssc[0]=0$ for these coefficients, which are by definition non-negative.

Further, we assume that the geometric series bounding our NoBLE coefficients converge, which follows if their base is smaller than one as formulated in \cite[(4.34) in Assumption 4.3]{FitHof13b}, i.e.,
	\begin{align}
	\lbeq{analys-assumed-invertablecondition}
	\frac {(2d-1)\aabz}{1-\aaz}\sum_{N=0}^\infty\beta_{{\sss \Xi}^{\iota}}^\ssc[N]<1.
	\end{align}
For example, this shows that the matrix inverse in \refeq{lace-exp-eq} is well defined. Finally, the inequalities $\lowaf-\betadeltaRfzlow>0$ and $\lowcp-\betaap-\beta_{{\sss R,\Phi}}>0$,  which guarantees that the numerator and denominator of $\hat{G}_z(k)$ in \refeq{lace-exp-eq} are non-negative, see \cite[one-but-last sentence of Assumption 2.7]{FitHof13b}.
These are numerical conditions that are verified explicitly in the Mathematica notebooks. These conditions are relatively weak, in the sense that empirically, we observe that the bootstrap analysis, which in particular includes an improvement of bounds, is much more likely to fail.


\section{Proof of the bounds on the NoBLE coefficients}
\label{secBoundProof}

In this section, we explain how the bounds stated in the preceding section are proven. In Section \ref{secBoundsExplained}, we have already explained the basic idea of the bounds. We start by proving the bounds on the relatively simple coefficients arising for $N=0$ and $N=1$. Then, we explain how the coefficients for general $N\geq 2$ are bounded. After this, we discuss bounds on the difference of coefficients.

\subsection{Proof of bounds for $N=0$}
Here we prove Lemma \ref{BoundNZero-Xi} and \ref{BoundNZero-XiIota}.
 For $N=0$, the coefficients are created by LAs in which $0$ and $x$ are doubly connected, see \refeq{defJNanimal}. In Section
\ref{secBoundsDouble} we have explained how to bound the contribution of such LAs, so we add here only the missing details.

\begin{proof}[Proof of Lemma \ref{BoundNZero-Xi}.]
In Section \ref{secBoundsDouble} we have already proven the bound on $\Xi^{\ssc[0]}_{z}$.  From $\Xi^{\ssc[0]}_{{\sss R},z}$, we have extracted the contributions in which $0$ and $x$ are connected by a direct bond, see \refeq{LA-Split-Def-1}. Considering this, the bounds \refeq{Bound-Xi0} and \refeq{Bound-Xi0RemW} follow directly.
The only difference between $\Xi^{\ssc[0]}_{{\sss R},z}$ and $\Psi^{\ssc[0],\kappa}_{{\sss R, II},z}$ is that $x-\ve[\kappa]\nin A$ for $\Psi^{\ssc[0],\kappa}_{{\sss R, II},z}$.
Since each double connection requires at least two neighbors of $x$ to be part of $A$, the spatial symmetry of the LAs implies that
	\begin{align}
	\sum_{x\in \Zd} \Psi^{\ssc[0],\kappa}_{{\sss R, II},z}(x)=\frac 1 {2d} \sum_{x\in \Zd, \kappa} \Psi^{\ssc[0],\kappa}_{{\sss R, II},z}(x)\leq \frac {2d-2} {2d} \sum_{x\in \Zd}\Xi^{\ssc[0]}_{{\sss R, II},z}(x),
	\end{align}
which implies \refeq{Bound-Psi0RemII} and \refeq{Bound-Psi0RemIIW}.

The coefficient $\Psi^{\ssc[0],\kappa}_{{\sss R, I},z}(x)$ was created by extracting the shortest contributions with $x=\ve[\kappa]$ and $x$'s that are neighbors of $\ve[\kappa]$. The contribution of the remaining $(2d-1)$ neighbors of the origin is bounded using $\diagRepulsiveLetter{B}_{\underline 1,3}(\ve[1],0)$. For all other $x$, i.e., those with $|x|>1$, we use $\Psi^{\ssc[0],\kappa}_{{\sss R, II},z}(x)\leq \Psi^{\ssc[0],\kappa}_{{\sss R, I},z}(x)$ to conclude the bounds stated in \refeq{Bound-Psi0RemI}.

In \refeq{Bound-Psi0RemIW}, we bound the contribution due to all neighbors of origin by the first term.
The contribution due to all other $x$ are bounded by the second term. Regarding the neighbors $x=\ve[\iota]$ of the origin, with $\iota\in\{\pm1,\dots,\pm d\}$, we note that
	\begin{align}
	|x-\ve[\kappa]|^2=|\ve[\iota]-\ve[\kappa]|^2=
	\begin{cases}
	2 & \text{ for~} |\iota|\neq |\kappa|,\\
	4 & \text{ for } \iota=-\kappa,\\
	0 & \text{ for } \iota= \kappa.
	\end{cases}
	\end{align}
Summing over $\iota$ creates the stated factor $4d$.
For the other $x$, we know that any path to $x$ contains at least two bonds.
To be able to use a symmetry argument, we ignore that we extracted terms and as well as the condition $x-\ve[\kappa]\nin A$, to obtain the simple bound
	\begin{align}
	(2dz)^2g_z (D^{\star 2} \star \tilde G_z)(x)  \bar G_z(x)
	\end{align}
Then, we use $|x-\ve[\kappa]|^2=(|x|^2 -2x_\kappa+1)$ and that the term involving $-2x_\kappa$ sums to zero by symmetry, to obtain
	\begin{align}
	\lbeq{Bound-Psi0RemIIW-proofLine}
	\sum_x |x-\ve[\kappa]|^2 (2dz)^2g_z (D^{\star 2} \star \tilde G_z)(x)  \bar G_z(x)=\sum_x (|x|^2+1 ) (2dz)^2g_z (D^{\star 2} \star \tilde G_z)(x)  \bar G_z(x),
	\end{align}
which is the second term in \refeq{Bound-Psi0RemIW}.
This proves \refeq{Bound-Psi0RemIW}, and completes the proof of Lemma \ref{BoundNZero-Xi}.
\end{proof}

\begin{proof}[Proof of Lemma \ref{BoundNZero-XiIota}.]
By definition,
	\begin{align}
	\Xi^{\ssc[0],\iota}_z(x)= \frac {1}{g_z}    \sum_{A\colon 0,x,\ve[\iota]\in A} z^{|A|}\indic{0\dbct{A} x}.
	\end{align}
For LTs, this simplifies to
	\begin{align}
	\Xi^{\ssc[0],\iota}_z(x)= \frac {\delta_{0,x}}{g_z}  \bar G_z(\ve[\iota])=\delta_{0,x}G_z(\ve[\iota]).
	\end{align}
For LAs we recall that we have bounded such animals already between \refeq{split-LA-sausage-Gbar} and \refeq{split-LA-sausage-second}.

We improve these bounds by using {\em repulsive diagrams} (recall Section \ref{secBoundsRepdia}) and by considering four cases for the point $u$, where $u$ is the summand appearing in \refeq{split-LA-sausage-Gbar}, thus yielding the bound $\alpha_{1,1}$ (see \refeq{BoundNZero-XiIota-alpha}). Let us explain the four cases for $u$:\\
(a) For $u=0$, the diagram is given by a double connection from $0$ to $x$ and a simple path from $0$ to $\ve[\iota]$ that we bound by
	\begin{align}
	G_z(\ve[\iota])+ \frac {1}{g_z} \tilde G_z(\ve[\iota]) \sum_{x\neq 0} \diagRepulsiveLetter{D}_{1}(x,0).
	\end{align}
The first term is given explicitly in \refeq{Bound-XiIota0} and the second term corresponds to the first term of $\alpha_{1,1}$.\\
(b) For $u=\ve[\iota]$, the point $\ve[\iota]$ is on the double connection between $0$ and $x$.
For $x=\ve[\iota]$, we have a simple double connection to $\ve[\iota]$, otherwise a repulsive triangle, as stated in the second and third term of $\alpha_{1,1}$.\\
(c) In the contributions in which $u=x\nin \{0,\ve[\iota]\}$ we find a double connection between $0$ and $x$, bounded by $\diagRepulsiveLetter{D}_{1}(x,0)$  and a connection from $x$ to $\ve[\iota]$, bounded by $\tilde G_{m,z}(x-\ve[\iota])$. We improve this bound slightly using the parity of the lattice:
If $0$ and $x$ are directly connected, the length of the connection between $x$ to $\ve[\iota]$ is at least $2$.\\
(d) In the remaining contributions, we have that $u\neq \{0,x,\ve[\iota]\}$, so that these LAs can be bounded by
	\begin{align}
 	\sum_{u}\diagRepulsiveLetter{B}_{1,1}(u,\ve[\iota])  \sup_{u} \sum_{x}\diagRepulsiveLetter{B}_{1,1}(x,u),
	\end{align}
which corresponds to the last term of $\alpha_{1,1}$. This completes the proof of \refeq{Bound-XiIota0}.

The diagrams with the additional weights $|x|^2$ and $|x-\ve[\iota]|^2$ are bounded in a similar manner. As the numerical bounds, using the bootstrap function $f_3$, require a $G_z$ to carry the spatial weight, we express the bounds using two-point functions instead of repulsive diagrams. For the weight $|x-\ve[\iota]|^2$, two additional steps are made: Firstly, for $u=0\neq x$ we first split the weights using symmetry, as shown in \refeq{Bound-Psi0RemIIW-proofLine}, and then bound the two arising terms. This leads to the first line of $\gamma_1$ in \refeq{BoundNZero-XiIota-gamma}. Secondly, for $u\neq \{0,\ve[\iota],x\}$, we use that $|x-\ve[\iota]|^2\leq 2(|x|^2+1)$, see \cite[(2.24)]{FitHof13b}, and bound the arising terms individually. This completes the proofs of \refeq{Bound-XiIota0Wa} and \refeq{Bound-XiIota0Wb}.

For the bounds on $\Xi^{\ssc[0],\iota}_{{\sss R, I},z}$ and $\Xi^{\ssc[0],\iota}_{{\sss R, II},z}$ stated in \refeq{Bound-XiIota0RemI}-\refeq{Bound-XiIota0RemII}, we review which terms have been extracted from $\Xi^{\ssc[0],\iota}_z(x)$ to yield $\Xi^{\ssc[0],\iota}_{{\sss R, I},z}$ and $\Xi^{\ssc[0],\iota}_{{\sss R, II},z}$. For $\Xi^{\ssc[0],\iota}_{{\sss R, II},z}(x)$, the LAs in which $x$ is directly connected to the origin have been removed. For this reason, in $\Xi^{\ssc[0],\iota}_{{\sss R, II},z}$ all paths from $0$ to $x$ have a length of at least $2$ instead of $1$. This is  expressed by the subscripts $n$ of $\alpha_{n,m}$ and $\beta_{n}$. To obtain $\Xi^{\ssc[0],\iota}_{{\sss R, I},z}(x)$ from  $\Xi^{\ssc[0],\iota}_z(x)$, we have extracted all LAs in which $x$ is directly connected to $\ve[\iota]$. This means that in $\Xi^{\ssc[0],\iota}_{{\sss R, I},z}(x)$, the path from $\ve[\iota]$ to $x$ in contributing LAs have length at least $2$.
The bounds \refeq{Bound-XiIota0RemI} and \refeq{Bound-XiIota0RemII} are created by repeating the consideration of cases for $u$, as given above for the bound on bound $\Xi^{\ssc[0],\iota}_z(x)$, while keeping these additional restrictions.

We continue with the bounds on $\Xi^{\ssc[0],\iota}_{\alpha,{\sss I},z}$ and $\Xi^{\ssc[0],\iota}_{\alpha,{\sss II},z}$ in \refeq{Bound-XiIotaAlphaI}--\refeq{Bound-XiIotaAlphaIIWb}. Note that for LTs, most of these terms are zero due to the lack of double connections, so we restrict to LAs. Recall \refeq{split-LTs-N0}, as well as \refeq{split-LTs-N0a}--\refeq{split-LTs-N0b}.

By their definitions in \refeq{LA-Split-Def-XiIotaAlphaa} and \refeq{LA-Split-Def-XiIotaAlphab},
	\begin{align}
	\Xi^{\ssc[0],\iota}_{\alpha,{\sss I},z}(\ve[\iota])= &  \frac {\indAnimal}{g_z} \sum_{A\colon 0,\ve[\iota]\in A} z^{|A|}\indic{0\dbct{A} \ve[\iota]}
	\leq \frac {\indAnimal}{g_z}\diagRepulsiveLetter{D}_{1}(\ve[\iota]),\\
	\Xi^{\ssc[0],\iota}_{\alpha,{\sss II},z}(0)=&\Xi^{\ssc[0],\iota}_{\alpha,{\sss I},z}(0)=\frac 1 {g_z} \bar G_z(\ve[\iota])= G_z(\ve[\iota]),
	\end{align}
which proves \refeq{Bound-XiIotaAlphaI}. To obtain the bound in \refeq{Bound-XiIotaAlphaIWa}, we again use \refeq{LA-Split-Def-XiIotaAlphaa}, and remark that, for $x\neq 0$ with $|x-\ve[\iota]|^2=1$,
we can write $x=\ve[\kappa]+\ve[\iota]$ with $\kappa\neq -\iota$ to get
	\begin{align}
	\Xi^{\ssc[0],\iota}_{\alpha,{\sss I},z}(\ve[\kappa]+\ve[\iota])=&\frac {1}{g_z} \sum_{A\colon 0,\ve[\kappa]+\ve[\iota],\ve[\iota]\in A} z^{|A|}\indic{0\dbct{A} \ve[\kappa]+\ve[\iota]} \indic{d_A(\ve[\iota],\ve[\kappa]+\ve[\iota])=1}.
	\end{align}
Thus, $\ve[\kappa]+\ve[\iota]$ and $\ve[\iota]$ are directly connected in the LA. We distinguish between the case where $\ve[\iota]$ is on the double connection or not to obtain
	\begin{align}
	\Xi^{\ssc[0],\iota}_{\alpha,{\sss I},z}(\ve[\kappa]+\ve[\iota])\leq&\frac {\indAnimal}{g_z}\diagRepulsiveLetter{T}_{1,\underline 1,2}(\ve[\iota],\ve[\kappa]+\ve[\iota],0)
	+\frac {zg_z^\iota \indAnimal}{g_z}(1-\delta_{\kappa,-\iota})\diagRepulsiveLetter{B}_{2,2}(\ve[\iota]+\ve[\kappa],0).
	\end{align}
Summing over all neighbors $\ve[\kappa]+\ve[\iota]$ of $\ve[\iota]$, and also using that the contribution of $\kappa=-\iota$ is $\Xi^{\ssc[0],\iota}_{\alpha,{\sss I},z}(0)=G_z(\ve[\iota])$, proves \refeq{Bound-XiIotaAlphaIWa}.

For $\Xi^{\ssc[0],\iota}_{\alpha,{\sss II},z}(\ve[\kappa])$, we recall \refeq{LA-Split-Def-XiIotaAlphab}, and consider $\kappa\neq \iota$, distinguish between the cases where $\ve[\iota]$ is part of the double connection from $0$ to $\ve[\kappa]$ or not. For this, let $u$ be the first point that any path between $0$ and $\ve[\iota]$, and the path between $\ve[\kappa]$ and $\ve[\iota]$, share.
We split between $u=0, \ve[\kappa],\ve[\iota]$ and $u\not\in \{0,\ve[\kappa], \ve[\iota]\}$, to obtain
	\begin{align}
    	\Xi^{\ssc[0],\iota}_{\alpha,{\sss II},z}(\ve[\kappa])=&\frac {\indAnimal}{g_z}\sum_{A\colon 0,\ve[\kappa],\ve[\iota]\in A} z^{|A|}\indic{0\dbct{A} \ve[\kappa]} \indic{d_A(0,\ve[\kappa])=1}\\
    	\leq & z \indAnimal G_{3,z}(\ve[\kappa]) \big(\tilde G_{1,z}(\ve[\iota])+\tilde G_{2,z}(\ve[\kappa]-\ve[\iota])\big)\nn\\
    	&+\frac {\indAnimal}{g_z}\diagRepulsiveLetter{T}_{1,2,\underline 1} (\ve[\iota],\ve[\kappa],0)+z \frac{g_z^\iota}{g_z} \indAnimal \sup_{x\neq 0}\tilde G_{1,z}(x)\sum_{u}\diagRepulsiveLetter{B}_{1,1}(u,\ve[\kappa]).\nn
	\end{align}
Summing this over all neighbors $\ve[\kappa]$ of the origin, and using that the contribution for $\kappa=\iota$ can be bounded by $\frac {\indAnimal}{g_z}\diagRepulsiveLetter{D}_{1}(\ve[\iota])$, proves \refeq{Bound-XiIotaAlphaIIWb}.


For the bounds on $\Pi^{\ssc[0],\iota,\kappa}_{z}, \Pi^{\ssc[0],\iota,\kappa}_{\alpha,z}$  and  $\Pi^{\ssc[0],\iota,\kappa}_{{\sss R},z}$ stated in \refeq{Bound-Pi0AlphaI}--\refeq{Bound-Pi0RemIW}, it is useful to write out the definition of $\Pi^{\ssc[0],\iota,\kappa}_{z}$ in \refeq{defPiNanimal}, and simplify it for the special case $N=0$. This leads to
	\begin{align}
	\Pi^{\ssc[0],\iota,\kappa}_z(x)&=
	\sum_{A\colon 0,x\in A} z^{|A|+1}\indic{0\dbct{A} x}
	\big(\indic{\ve[\iota]\in A_0^\omega}+ \indic{|\omega|=0}\indic{x\neq 0}\indic{x-\ve[\kappa]=\ve[\iota]}\big)\indic{x-\ve[\kappa]\nin A}.
	\end{align}
Using the split $\Pi^{\ssc[0],\iota,\kappa}_{\alpha,z}$ (see \refeq{LA-Split-Def-1a}), we extract from this the contributions where $x=\ve[\iota]$, as well as those with
$x=\ve[\iota]+\ve[\kappa]$ in which the edge $(\ve[\iota],\ve[\iota]+\ve[\kappa])$ is in the LA. For $x=\ve[\iota]$, we see that
	\begin{align}
	\lbeq{no-problems-Xi-Pi}
	\Pi^{\ssc[0],\iota,\kappa}_{\alpha,z}(\ve[\iota])&=\Pi^{\ssc[0],\iota,\kappa}_z(\ve[\iota])= \indAnimal\sum_{A\colon 0,\ve[\iota]\in A} z^{|A|+1}\indic{0\dbct{A} \ve[\iota]}\indic{\ve[\iota]-\ve[\kappa]\nin A}\nnb
	&\leq \indAnimal (1-\delta_{\iota,-\kappa}) z \diagRepulsiveLetter{D}_{1}(\ve[\iota]),
	\end{align}
as the LAs contain a simple double connection to a neighbor of the origin. Summing this over $\kappa$ proves \refeq{Bound-Pi0AlphaI}.
The bounds \refeq{Bound-Pi0RemI} and \refeq{Bound-Pi0RemIW} basically follow from the fact that $\Pi^{\ssc[0],\iota,\kappa}_z(x)\leq zg_z\Xi^{\ssc[0],\iota}_z(x)$, see \refeq{XidominatespsiImproved}.
To understand \refeq{Bound-Pi0RemI} we compare the bound with \refeq{Bound-XiIota0}.
When summing over $\iota$ for $x=0$, $\iota=-\kappa$ does not contribute explaining the factor $(2d-1)$.
Further, we have extracted the case that $x=\ve[\iota]$, thus we can us $\alpha_{1,1}$ instead of $\alpha_{1,0}$ for this bound.

For \refeq{Bound-Pi0RemIW}, we need to redistribute the weights. We write
	\eqan{
	\lbeq{Pi(0)-bd}
	\sum_{x,\iota,\kappa}|x-\ve[\iota]|^2\Pi^{\ssc[0],\iota,\kappa}_{{\sss R},z}(x+\ve[\kappa])
	&=\sum_{\iota,\kappa}\sum_{x\neq \ve[\iota]}|x-\ve[\iota]-\ve[\kappa]|^2\Pi^{\ssc[0],\iota,\kappa}_{{\sss R},z}(x)\\
	&\leq\sum_{\iota,\kappa} |\ve[\iota]+\ve[\kappa]|^2\Pi^{\ssc[0],\iota,\kappa}_{{\sss R},z}(0)\nn\\
	&\qquad+\indAnimal zg_z
\sum_{\iota,\kappa}\sum_{x\nin\{0,\ve[\iota]\}}[|x-\ve[\iota]|^2+1-2(x_\kappa-\delta_{\iota,\kappa})]\Xi^{\ssc[0],\iota}_z(x),\nn
	}
since $\Pi^{\ssc[0],\iota,\kappa}_z(x)\leq zg_z\Xi^{\ssc[0],\iota}_z(x)$. The first contribution is bounded by $2(2d)^2zg_z G_z(\ve[\iota])$. We use \refeq{Bound-XiIota0Wb}
for the $|x-\ve[\iota]|^2$ contribution.
We further use \refeq{Bound-XiIota0} (without the term created by $x\in\{0,\ve[\iota]\}$) to obtain a bound $(2d)^2zg_z\alpha_{1,1}$ on the contribution due to $+1$. Finally, $\sum_{\iota} \Xi^{\ssc[0],\iota}_z(x)$ is symmetric, so that the $x_\kappa$ contribution vanishes, while the $\delta_{\iota,\kappa}$ contribution gives $2\sum_{\iota, x\nin\{0,\ve[\iota]\}} \Xi^{\ssc[0],\iota}_z(x)$, which can be taken together with the $+1$ contribution using \refeq{Bound-XiIota0}.
\end{proof}

\subsection{Proof of bounds for $N=1$}
\label{secLTProofBoundN1}
In this section we explain how we prove the bounds stated in Lemma \ref{LemmaBoundXiLTOne}.
We explain in detail how we obtain the bounds on $\Xi^{\ssc[1]}_{z}$, stated in
\refeq{Bound-Xi1} and \refeq{Bound-Xi1W}, and then discuss how to
modify the statements to derive the bounds on the other coefficients.

The first step to obtain the bounds are the following $x$-space bounds:

\begin{lemma}[Pointwise bounds]
\label{pDominatespTreeNOne}
For all $x\in\Zd$, $\iota$ and $0\leq z\leq z_c$,
	\begin{eqnarray}
	\lbeq{PnDominatesXiNOne}
	\Xi^{\ssc[1]}_{z}(x)&\leq& \frac {g_z}{\gj} P^{\ssc[1],0}_{z}(x,x),\\
	\lbeq{PnDominatesXiIotaNOne}
	\Xi^{\ssc[1],\iota}_{z}(x)&\leq& \frac {g_z}{\gj} P^{\ssc[1],\iota,0}_{z}(x,x).
	\end{eqnarray}
\end{lemma}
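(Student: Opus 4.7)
The plan is to translate the definition of $\Xi^{\ssc[1]}_z(x)$ from \refeq{defXiNanimal}--\refeq{defXiNanimalPrime}, a sum over sausage-walks weighted by $J^{\ssc[1]}[0,|\omega|]$, into the building block $P^{\ssc[1],0}_z(x,x)$ introduced in Section \ref{secBuildingBlocks}. Since $J^{\ssc[1]}$ restricts the sum to sausage-walks for which exactly the pair $(S_0^\omega, S_{|\omega|}^\omega)$ of sausages intersects while all other pairs are disjoint, the geometry of the contributing walks matches the shape of the $P^{\ssc[1],0}$ diagram in Figure \ref{P1pieceSketch}. The natural splitting is according to whether $\bar b_1=0$ (the only possibility for LTs) or $\bar b_1\neq 0$ (a non-trivial first sausage, relevant only for LAs).

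For $\bar b_1=0$, I introduce the first intersection point $w=w_1(\omega)$ from Definition \ref{defLTFirstPointIntersect} and decompose the walk exactly as in \refeq{bound-xi1-structure}: the planted animal inside $S_0^\omega$ from $0$ to $w$; the planted animal inside $S_{|\omega|}^\omega$ from $x$ to $w$; and the backbone together with intermediate sausages from $0$ to $x$, which is itself a planted animal since $J^{\ssc[1]}$ forces the intermediate sausages to be mutually disjoint and disjoint from $S_0^\omega$ and $S_{|\omega|}^\omega$. Choosing $w$ as the \emph{first} intersection point ensures the bond-disjointness required by the skeleton in Definition \ref{def-repulsive-diamgrams}, so the three pieces assemble into the first summand defining $P^{\ssc[1],0}_z(x,x)$. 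The case $\bar b_1\neq 0$ requires an additional split of the doubly-connected first sausage along the lines of Figure \ref{split-LA-sausage-Gbar-Table}: identifying the last common point $u$ of the paths $0\leftrightarrow w$ and $0\leftrightarrow \bar b_1$ inside $S_0^\omega$ yields the pentagonal piece that appears as the second summand of $P^{\ssc[1],0}$, while the remainder of the walk is treated as in the previous case. Together, the two cases reproduce $P^{\ssc[1],0}_z(x,x)$ term by term.

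The proof of \refeq{PnDominatesXiIotaNOne} proceeds analogously, with the extra indicator $\1_\iota(\omega)=1$ from \refeq{def-Eiota-forLTLA}. The non-backtracking property makes the two alternatives $\ve[\iota]\in S_0^\omega$ and $\bar b_1=\ve[\iota]$ mutually exclusive. The second alternative contributes a single summand matching the explicit $\ve[\iota]$-bond term of $P^{\ssc[1],\iota,0}$, while for the first alternative I further split $S_0^\omega$ by locating the last common vertex $u$ of the path to $\ve[\iota]$ with the planted pieces already identified; the possible placements of this additional vertex produce the remaining summands listed in item (viii) of Section \ref{secBuildingBlocks}. The main obstacle, and what makes the bookkeeping subtle, is the one-point function allocation stressed in Remarks \ref{rem-rib-weight} and \ref{rem-rib-weight2}: each interior vertex of the decomposition must carry exactly one factor $\gj$, all but one line should be bounded by $\tilde G_z$, and a single line is allowed to receive a $\bar G_z$ weight, the placement of which must be chosen consistently across the two cases. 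The prefactor $g_z/\gj$ in both bounds then emerges from the normalization in \refeq{LALTCoefficientNormalization} of $\Xi^{\ssc[1]}_z$ and $\Xi^{\ssc[1],\iota}_z$ by $g_z$, combined with the fact that the building block $P^{\ssc[1],0}_z$ is naturally normalized by $\gj$ through the first rib/sausage attached at the origin.
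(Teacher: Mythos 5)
Your proposal follows the paper's own argument very closely: split according to whether $\bar b_1=0$ or $\bar b_1\neq 0$, identify the first intersection point $w_1(\omega)$ from Definition \ref{defLTFirstPointIntersect} to get bond-disjoint planted pieces, use the construction of Figure \ref{split-LA-sausage-Gbar-Table} for the non-trivial first sausage, and for $\Xi^{\ssc[1],\iota}$ split on $\tb_1=\ve[\iota]$ versus $\ve[\iota]\in S^{\omega}_0$ (mutually exclusive by the non-backtracking property) and then enumerate the possible locations of the splitting vertex $u$ to obtain the sixteen summands of $P^{\ssc[1],\iota,0}$. That is precisely the route the paper takes, including the passage from \refeq{bound-xi1-structure} to repulsive diagrams and then to the case analysis of Table~\ref{boundXiOne}.

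One point deserves a second look: you accept the prefactor $g_z/\gj$ from the Lemma as stated and then try to explain it by saying that $\Xi^{\ssc[1]}_z$ is normalized by $g_z$ while $P^{\ssc[1],0}_z$ carries a natural factor $\gj$ from the first rib. But that reasoning, carried to its conclusion, gives $\bar\Xi^{\ssc[1]}_z \leq \gj\, P^{\ssc[1],0}_z$ and hence $\Xi^{\ssc[1]}_z\leq (\gj/g_z)\,P^{\ssc[1],0}_z$ --- the \emph{reciprocal} of what you assert. The paper's own proof of this lemma indeed arrives at $\Xi^{\ssc[1]}_z(x)\leq (\gj/g_z)\,P^{\ssc[1],0}_z(x,x)$, which is also what is needed to produce \refeq{Bound-Xi1}, $\sum_x\Xi^{\ssc[1]}_z(x)\leq (\gj/g_z)(\vec P)_0$, when summed over $x$. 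The factor $g_z/\gj$ in the displayed statement of this lemma (for the $\Xi^{\ssc[1]}_z$ bound, not the $\Xi^{\ssc[1],\iota}_z$ bound) is an apparent misprint, and your attempt to derive it should have alerted you to the sign flip rather than have you paper over it. Apart from that, the argument is sound.
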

\noindent
We define $P^{\ssc[N],0}$ in Table \ref{boundXiOne}.
\iflongversion
The definition of $P^{\ssc[1],\iota,0}$ is deferred to Appendix \ref{Appendix-def-Initial-XiIota}.
\else
The formal definition of $P^{\ssc[1],\iota,0}$ is provided in Appendix C.2.2 of the extended version \cite{FitHof13g-ext}.
\fi

\begin{table}[ht]
\begin{center}
\begin{tikzpicture}[auto,scale=1]

\begin{scope}[shift={(1,-1.25)},rotate=0]
\draw[line width=0.5 pt] (6,2) to (5,2.5);
\draw[line width=0.5 pt] (5,2.5) to (6,3);
\draw[line width=0.5 pt] (6,2) to (6,3);
\draw[line width=0.5 pt] (6,3) to (7,3);
\draw[line width=0.5 pt] (6,2) to (7,2);
\draw[line width=0.5 pt] (7,3) to (7,2);
\node[right]   at(7,3)      {$w$};
\node[right]   at(7,2)      {$x$};
\node[left]   at(5,2.5)      {$0$};
\node[above]   at(6,3)      {$u$};
\node   at(6.15,2.15)      {$\underline b$};
\end{scope}

\draw[line width=1 pt] (9,1) to (9,-4.5);
\draw[line width=1 pt] (0,0.5) to (15,0.5);

\draw[dotted] (0,-0.5) to (14,-0.5);
\draw[dotted] (0,-1.5) to (14,-1.5);
\draw[dotted] (0,-2.5) to (14,-2.5);
\draw[dotted] (0,-3.5) to (14,-3.5);

 \begin{scope}[shift={(0,1.3)},rotate=0]
    \node[right]   at(1.6,0)      {Terms of $P^{\ssc[1],0}_{z}(x,x)$};
    \node[right]   at(9.5,0)    {represent the case};
   \end{scope}

   \begin{scope}[shift={(0,0)},rotate=0]
    \node[right]   at(0,0)      {$ \sum_{w\in\Zd}
    \left(\diagRepulsiveLetter{T}_{\underline 1,1,1}(x,w,0)+\diagRepulsiveLetter{T}_{2,0,0}(x,w,0)\right)$};
    \node[right]   at(9.25,0)      {$\bb=0$ with $\tb=x$ or not};
   \end{scope}

\begin{scope}[shift={(0,-1)},rotate=0]
    \node[right]   at(0,0)      {$\sum_{w,\bb\in\Zd }
    \diagRepulsiveLetter{B}_{1,1}(0,\bb)\diagRepulsiveLetter{T}_{\underline 1,1,1}(x-\bb,w-\bb,u-\bb)$};
    \node[right]   at(9.25,0)      {$\bb\neq 0$, $v\in\{0,\bb\}$, $\tb=x$};
\end{scope}

\begin{scope}[shift={(0,-2)},rotate=0]
    \node[right]   at(0,0)      {$\sum_{w,\bb\in\Zd }
    \diagRepulsiveLetter{B}_{1,1}(0,\bb)\diagRepulsiveLetter{T}_{2,0,0}(x-\bb,w-\bb,u-\bb)$};
    \node[right]   at(9.25,0)      {$\bb\neq 0$ with $u\in\{0,\bb\}$ and $\tb\neq x$};
\end{scope}

 \begin{scope}[shift={(0,-3)},rotate=0]
    \node[right]   at(0,0)      {$\sum_{w,\bb,u\in\Zd}
\diagRepulsiveLetter{T}_{1,1,1}(0,\bb,u)\diagRepulsiveLetter{T}_{\underline 1,1,0}(x-\bb,w-\bb,u-\bb)$};
\node[right]   at(9.25,0)      {$\bb\neq 0$ with $u\nin\{0,\bb\}$ and $\tb=x$};
 \end{scope}
 \begin{scope}[shift={(0,-4)},rotate=0]
    \node[right]   at(0,0)      {$\sum_{w,\bb,u\in\Zd} \diagRepulsiveLetter{T}_{1,1,1}(0,\bb,u)\diagRepulsiveLetter{T}_{2,0,0}(x-\bb,w-\bb,u-\bb)$};
    \node[right]   at(9.25,0)      {$\bb\neq 0$ with $v\nin\{0,\bb\}$ and $\tb\neq x$};
 \end{scope}
\end{tikzpicture}
\caption{$P^{\ssc[1],0}_{z}(x,x)$ is defined as the sum of the quantities in the left column, and is our bound on $\Xi^{\ssc[1]}_z(x)$.}
\label{boundXiOne}
\end{center}
\end{table}

\begin{proof}
The coefficients that we consider here are defined by sausage-walks for which only the first and last sausage intersect. In Section \ref{sec-BoundIdea-Heuristic}, we have discussed how to bound $\Xi^{\ssc[1]}_z(x)$ for LTs and later obtained the bound \refeq{bound-xi1-structure} for LAs. Using that all connections are repulsive, we can improve \refeq{bound-xi1-structure} to
	\begin{align}
	\Xi^{\ssc[1]}_z(x) \leq& \frac {\gj}{g_z} \sum_{w\in\Zd}\diagRepulsiveLetter{T}_{1,0,0}(x,w,0)
	+\frac 1 {g_z} \sum_{u,w,\bb}
	\diagRepulsiveLetter{S}_{1,0,0,0}(x-\bb,w-\bb,u-\bb,0)
	\diagRepulsiveLetter{B}_{1,0}(-\bb,w-\bb).\nn
	\end{align}
As discussed in the third improvement in Section \ref{sec-BoundIdea-Heuristic}, each loop in this diagram consists of at least four
bonds, and we have that $w\nin b$. All this is implied by the non-backtracking condition in the NoBLE.
We make use of this by considering some special cases for $b$ and $u$, as summarized in Table \ref{boundXiOne},
which improves the bound to obtain \refeq{Bound-Xi1}:
	\begin{align}
	\Xi^{\ssc[1]}_z(x) \leq& \frac {\gj}{g_z} P^{\ssc[1],0}_{z}(x,x).
	\end{align}
\paragraph{Bounds on $\Xi^{\ssc[1],\iota}$.}
For $\Xi^{\ssc[1],\iota}$ we know that either $\tb_1=\ve[\iota]$ or $\ve[\iota]\in A_0^\omega$ as enforced by the indicator $\1_\iota(\omega)$ defined in \refeq{def-Eiota-forLTLA}.
To create a bound we consider $16$ possible forms of the diagram and then bound each of them separately. Then, we define $P^{\ssc[1],\iota,0}_{z}(x,x)$, as well as $\Delta^{\iota,{\sss I},0}$ and $\Delta^{\iota,{\sss II},0}$, as the sum of these $16$ individual diagrams.
\iflongversion
The necessary explanation and definitions of these diagrams are presented in Appendices \ref{Appendix-def-Initial-XiIota}-\ref{Appendix-def-Initial-XiIota-Delta}
of this extended version.
\else
The necessary derivation and definitions are only given in the extended version of this article.
\fi

This completes the proof of Lemma \ref{pDominatespTreeNOne}.
\end{proof}
\medskip

We continue with the proof of Lemma \ref{LemmaBoundXiLTOne}:
\medskip

{\it Proof of Lemma \ref{LemmaBoundXiLTOne}.}
The bounds in \refeq{Bound-Xi1} and \refeq{Bound-XiIota1} follow by summing the pointwise bounds in Lemma \ref{pDominatespTreeNOne} over $x$.

\paragraph{Weighted diagrams for $\Xi_z^{\ssc[1]}(x)$.}
Now we derive a bound for the weighted diagram $|x|^2\Xi_z^{\ssc[1]}(x)$.
Below \refeq{bound-xi1-structure}, we have argued that we can split the sausage-walk such that (only) one line is bounded by $\bar G_z$.
Multiplying such a bound with $|x|^2$ we obtain
\begin{align}
	\lbeq{BoundLT-tmp-2}
	\sum_{x}|x|^2 \bar \Xi^{\ssc[1]}_z(x) \leq&
	\sum_{w,x} |x|^2 \bar G_{z}(x) \tilde G_{z}(w) \tilde G_{z}(w-x)\\
	&+\indAnimal\sum_{x,w,u} \sum_{\bb_1\neq 0} |x|^2  \bar G_{z}(u-\bb_1) 2dz\gj (D\star \tilde G_{z})(u)\nnb
    &\qquad \qquad \times \tilde G_{z}(u-w) \bar G_{z}(\bb_1) \tilde G_{z}(w-x)\tilde G_{z}(x-\bb_1).
	\lbeq{BoundLT-tmp-2.5}
\end{align}
For the second term, which is not present for LTs, we have the problem
that there is no single line connecting $0$ and $x$ to which we allocate the weight $|x|^2$. This is problematic as the bootstrap function $f_3$ only allows us to bound $|x|^2\bar G_z(x)$.
For this reason, we go back to an earlier step in the bounding procedure.
We first apply $|x|^2\leq 2 |\bb_1|^2+2 |x-\bb_1|^2$ when still summing over the sausage-walks. Only, then we split the sausages and chose which line is bounded by $\bar G_z$. This creates the bound
	\begin{align}
	\refeq{BoundLT-tmp-2.5}\leq &
	2 \indAnimal \sum_{u,w,x} \sum_{\bb_1\neq 0}
(2dz\gj) (D\star \tilde G_{z})(u) \tilde G_{z}(u-\bb_1)\tilde G_{z}(x-w)\tilde G_{z}(w-u)\nnb
& \qquad \times \big( |\bb_1|^2 \bar G_{z}(\bb_1) \tilde G_{z}(x-\bb_1)+\tilde G_{z}(\bb_1) |x-\bb_1|^2\bar G_{z}(x-\bb_1) \big)
	\lbeq{BoundLT-tmp-2.6}
	\end{align}
This term can be bounded using our bootstrap functions.
We improve this bound using three ideas, that we now explain.
To improve the first line \refeq{BoundLT-tmp-2}, we consider the cases $w=0$, $w=x$ and $w\not\in\{0,x\}$ to obtain the bound
	\begin{align}
	\sum_{w,x} |x|^2 \bar G_{z}(x) \tilde G_{z}(w) \tilde G_{z}(w-x)
	\leq \sum_{x}|x|^2  \bar G_{z}(x) \left(4dz (D\star G_z)(x)+ (2dz)^2 (D^{\star 2}\star G^{\star 2}_z)(x)\right).\nn
	\end{align}
To improve \refeq{BoundLT-tmp-2.6} we use a total of five special cases for $u,w$ and $\bb_1$ and will omit the details here.

The third idea is to use symmetry, of which we provide an example when $u=\bb_1$ and $x=y+u$. In this case, we split the weight using $|u+y|^2=|y|^2+|u|^2+2\sum_{i=1}^d u_iy_i$ and conclude a bound of the form
	\begin{align}
	&\sum_{u,y,w}  \left(|u|^2+|y|^2+2\sum_{i=1}^d u_iy_i\right)
 	\bar G_{z}(u)  \tilde G_{z}(u) \bar G_{z}(y) \tilde G_{z}(w)\tilde G_{z}(y-w)\nnb
 	&\qquad =\sum_{u} |u|^2\bar G_{z}(u)  \tilde G_{z}(u) \sum_{y,w} \bar G_{z}(y) \tilde G_{z}(w)\tilde G_{z}(y-w)\nnb
	&\qquad\qquad +\sum_{u} \bar G_{z}(u)  \tilde G_{z}(u) \sum_{y,w} |y|^2\bar G_{z}(y) \tilde G_{z}(w)\tilde G_{z}(y-w).
	\lbeq{BoundLT-tmp-3}
	\end{align}
The mixed weights $u_iy_i$ cancel since the bubble to $u$, as well as the triangle to $y$, are symmetric in each coordinate. This saves us a factor $2$, for $u=\bb_1$ when compared to \refeq{BoundLT-tmp-2.6}. The building block $(\vec \Delta^{\rm{start}})_0$ is defined as the sum of these terms. This proves \refeq{Bound-Xi1W}.

\paragraph{Bounds on $\Xi^{\ssc[1]}_{{\sss R},z},\Psi^{\ssc[1]}_{{\sss R, I},z},\Psi^{\ssc[1]}_{{\sss R, I},z}$.}
These coefficients have been obtained by extracting some explicit contributions from $\Xi^{\ssc[1]}_{z}$  resulting in very similar bounds as for $\Xi^{\ssc[1]}_{{\sss R},z}$. The diagrammatic representation of these coefficients are all similar to $P^{\ssc[1],0}_{z}$ depicted in Figure \ref{P1pieceSketch}. In all cases we have extracted contributions to the left square of Figure \ref{P1pieceSketch}, which is bounded by $A^{0,0}$.
We derive the bounds in \refeq{Bound-Xi1Rem}, \refeq{Bound-Psi1RemI} and \refeq{Bound-Psi1RemII} by removing the contribution $A^{0,0}$, and then add the parts of this square that are actually required.

For $\Xi^{\ssc[1]}_{{\sss R},z}$ and $\Psi^{\ssc[1]}_{{\sss R, II},z}$, we have extracted the case $x=0$, as well as all contributions in which $0$ and $x$ are directly connection via a bond, see \refeq{LA-Split-Def-None}. For this reason, all connections from $0$ to $x$ consist of at least $2$ steps, which is reflected in \refeq{Bound-Xi1Rem} and \refeq{Bound-Psi1RemII}.

For $\Psi^{\ssc[1]}_{{\sss R, I},z}$, we have only extracted the contribution of $x=0$ and $x=\ve[\kappa]$.
Thus, when comparing \refeq{Bound-Psi1RemI} and \refeq{Bound-Psi1RemII}, we see that we have to add to
\refeq{Bound-Psi1RemI} also the contribution of points $x\neq \ve[\kappa]$ that can be reached within one step.

Regarding the weighted bounds stated in \refeq{Bound-Xi1W}-\refeq{Bound-Psi1RemIW}, it (numerically) turns out that we cannot benefit from the fact that we have extracted terms. For this reason we simply note that $\Xi^{\ssc[1]}_{{\sss R},z}(x)\leq \Xi^{\ssc[1]}_{z}(x)$ and $\max\{\Psi^{\ssc[1]}_{{\sss R, I},z}(x),\Psi^{\ssc[1]}_{{\sss R, II},z}(x)\}\leq \Xi^{\ssc[1]}_{z}(x)$,
and apply the original bounds for these terms.

\paragraph{Weighted bounds on $\Xi^{\ssc[1], \iota}_{z}$.} The bounds on the weighted diagrams of $\Xi^{\ssc[1], \iota}_{z}$ in \refeq{Bound-XiIota1Wa}--\refeq{Bound-XiIota1Wb} follow by carefully analysing the weighted version of $P^{\ssc[1],\iota,0}_{z}(x,x)$, which appear as sums of $32$ individual diagrams.
\iflongversion
The necessary explanation and definitions of these diagrams are again presented in Appendix \ref{Appendix-def-Initial-XiIota-Delta}
of this extended version.
\else
The necessary derivation and definitions are only given in the extended version of this article.
\fi
\medskip

The above bounds prove Lemma \ref{LemmaBoundXiLTOne}.
\qed

\subsection{Proof of bounds for $N\geq 2$}
\label{secLTProofBound}
In this section we will explain how we bound the NoBLE coefficients with $N\geq2$ and thereby complete the proof of Proposition \ref{LemmaBoundXiLTTwo}.

\paragraph{Decomposition of weights.}
We bound the diagram for $N\geq 2$ by splitting the diagram into individual pieces, as seen in Figure \ref{TreeXiFourDeomposedStructure}. When adding the spatial weight $|x|^2$ we need to distributed the weight along the connections of the diagram, as already seen in the case of $\Xi^{\ssc[1]}$ in \refeq{BoundLT-tmp-2.5}-\refeq{BoundLT-tmp-2.6}.
For this, we define $(x_i)_i$ as displayed in Figure \ref{BoundLT-splitchoice} and use
	\begin{align}
	    \lbeq{Split-weight-ineq}
	   \big|\sum_{i=1}^J x_i\big|^2=
	&\sum_{i=1}^J |x_i|^2+ \sum_{i=2}^J x_i^T \Big(\sum_{j=1}^{i-1} x_j\Big),
	\qquad
	  &
	  \big|\sum_{i=1}^J x_i\big|^2\leq& J\sum_{i=1}^J|x_i|^2.
	\end{align}

\begin{figure}[ht!]
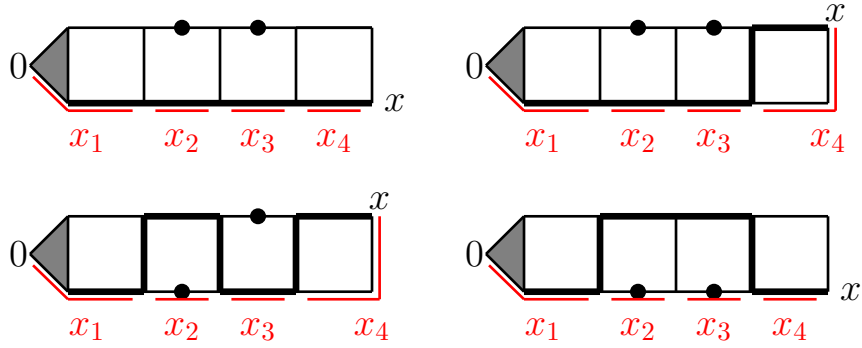

\begin{center}
{\Large
\picLtchoiceOfXXX[1]
\\}
\caption{The choice of the element for the rewrite of $x_i$ at the example of the diagrammatic representation of $\Xi^{\ssc[4]}_z(x)$. The wide line represents the path of the backbone.
We choose to distribute the displacement along the bottom-line of the diagram.}
\label{BoundLT-splitchoice}
\end{center}
\end{figure}

After applying \refeq{Split-weight-ineq} we bound each term of the sum with its partial spatial weight $|x_i|^2$ individually. This will be done in \refeq{Split-weight-bound} below. As the partial spatial weight depends on the lattice walk and lace, writing $|x_i|^2\Xi^{\ssc[N]}_{z}(x)$ is not meaningful. To fix this technicality, we define
for $N\geq 2$ and $i=1,\dots,N$,
	\begin{align}
	\Xi^{\ssc[N]}_z(x,\Delta_i) &= \frac {1}{g_z} \sum_{\omega\in\Wcal(x)} z^{|\omega|} Z[0,|\omega|]  \nnb
	&\qquad\times \sum_{L\in\Lcal^\ssc[N][0,|\omega|]} \prod_{st\in L}(-\Ucal_{st})\prod_{s't'\in \Ccal(L)}(1+\Ucal_{s't'})
	|x_i(\omega,L)|^2
	\end{align}
with $|x_i(\omega,L)|^2$  shown in Figure \ref{BoundLT-splitchoice}.
We define $\Xi^{\ssc[N],\iota}_z(x,\Delta_i)$ accordingly with the additional indicator $\1_{\iota}(\omega)$
(compare \refeq{defXiNanimal} and \refeq{defXiNIotaanimal}). It is possible to give the formal definition of $x_i(\omega,L)$ using the lace $L={s_1t_1,\dots, s_Nt_N}$, the backbone $(b^\omega_j)_{j=1}^{|\omega|}$, and the sausages that are being split $S_{s_i}^\omega$, $S_{t_i}^\omega$. As this would not be very informative, we omit this and just refer to Figure \ref{BoundLT-splitchoice} instead. Figure \ref{BoundLT-splitchoice} allows for a straightforward adaptation for all $N\geq 3$. Then we obtain, using \refeq{Split-weight-ineq},
	\eqn{
	\lbeq{weighted-diag-Xi}
	\sum_{x\in \Zd} |x|^2 \Xi^{\ssc[N]}_z(x)\leq N\sum_{i=1}^N \Xi^{\ssc[N]}_z(x,\Delta_i).
	}

\paragraph{Construction of the bounding diagram.}
We combine the building blocks, see Section \ref{secBuildingBlocks}, to construct the bounds on the NoBLE coefficients: We recursively define
	\begin{eqnarray}
	\lbeq{PnConstructLT}
	P^{\ssc[N],m}(x,y)&=&\sum_{u,v\in\Zd} \sum_{l=-2}^2 P^{\ssc[N-1],l}(u,v) A^{l,m}(u,v,x,y),\\
	P^{\ssc[N],\iota,m}(x,y)&=&\sum_{u,v\in\Zd} \sum_{l=-2}^2 P^{\ssc[N-1],\iota,l}(u,v) A^{l,m}(u,v,x,y),\\
	Q^{\ssc[1],m}(t,y;x)&=& A^{0,m}(x,x,t,y),
	\lbeq{Q1ConstructLT}\\
	Q^{\ssc[N],m}(t,y;x)&=& \sum_{u,v\in\Zd}\sum_{l=-2}^2 Q^{\ssc[N-1],l}(u,v;x) A^{l,m}(u,v,t,y),
	\lbeq{QnConstructLT}
\lbeq{PnConstructLT-Qm}
	\end{eqnarray}
for $N\geq 2$ and $m\in\{-2,-1,\dots,2\}$.
\iflongversion
The formal definition of $P^{\ssc[1],m}$ and $P^{\ssc[1],\iota,m}$, which serve as the initialisation of $P^{\ssc[N],m}$ and $P^{\ssc[N],\iota,m}$ for $N=1$ is given in Appendix \ref{Appendix-def-Initial-Xi} and \ref{Appendix-def-Initial-XiIota}.
\else
The formal definition of $P^{\ssc[1],m}$ and $P^{\ssc[1],\iota,m}$ are only provided in the
Appendix B of the extended version \cite{FitHof13g-ext}. Here we give only the definition of $P^{\ssc[N],0}$ in Figure \ref{boundXiOne} as example.
\fi

Using these constructs we bound the coefficients as follows:
\begin{lemma}[Pointwise bounds]
\label{pDominatespTree}
For all $x\in\Zd$, $\iota$ and $0\leq z\leq z_c$, and $N\geq 2$
	\begin{eqnarray}
	\lbeq{PnDominatesXiNTwo}
	\Xi^{\ssc[N]}_{z}(x)&\leq& \frac {\gj}{g_z} \sum_{u,v,l} P^{\ssc[N-1],l}(u,v) A^{l,0}(u,v,x,x),\\
	\lbeq{PnDominatesXiNTwoWeight}
	\Xi^{\ssc[N]}_{z}(x,\Delta_N)&\leq&  \sum_{u,v,l} P^{\ssc[N-1],l}(u,v) \Delta^{{\rm end},l}(u,v,x,x),\\
	\lbeq{PnDominatesXiNTwoPrime}
	\Xi^{\ssc[N]}_{z}(x,\Delta_1)&\leq& \sum_{u,v,l} \Delta^{{\rm start},l} (x_1,v) Q^{\ssc[N-1],l}(u,v;x).
	\end{eqnarray}
Further, for  $N,M\geq 1$,
	\begin{align}
	\lbeq{PnDominatesXiN}
	\Xi^{\ssc[N+M+1]}_{z}(x,\Delta_{N+1})&\leq& \sum_{u,v,w,y}\sum_{l,m}P^{\ssc[N],l}(u,w)C^{l,m}(u,v,w,y)Q^{\ssc[M],m}(w,y;x).
	\end{align}
The same bounds hold for $\Xi^{\ssc[N],\iota}$, when $P^{\ssc[N-1],l}$ and $\Delta^{{\rm start},l} $
 are replaced by $P^{\ssc[N-1],\iota,l}$ and $\Delta^{{\rm iota},{\sss I},l} $ or $\Delta^{{\rm iota},{\sss II},l}$, respectively.
\end{lemma}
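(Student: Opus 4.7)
The plan is to prove \refeq{PnDominatesXiNTwo} by induction on $N\geq 2$, and then derive \refeq{PnDominatesXiNTwoWeight}--\refeq{PnDominatesXiN} from the same decomposition by substituting the weighted building block at the location of the spatial weight. Fix a sausage-walk $\omega$ and a lace $L=\{s_1t_1,\dots,s_Nt_N\}\in\Lcal^\ssc[N][0,|\omega|]$ contributing to $\Xi^{\ssc[N]}_z(x)$. Successive laces overlap only via a single shared piece, which is either the portion of the backbone between $\bar b_{s_{i+1}}$ and $\bar b_{t_i}$ (when $s_{i+1}<t_i$) or a sub-connection inside the sausage $S^{\omega}_{t_i}$ (when $s_{i+1}=t_i$); these are precisely the two geometric patterns drawn in Figure \ref{BoundLA-Figure-Xitwo}. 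In either case, the length of the shared piece and the role (backbone vs.\ rib/sausage) it plays is summarised by an index $m\in\{-2,-1,0,1,2\}$, with the sign indicating the side on which the backbone continues and $|m|\in\{0,1,2\}$ indicating whether the shared piece is trivial, a single bond, or has length at least two (recall Figure \ref{LTXisharedLabeling}).

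For the base case $N=2$, apply the split displayed in Figure \ref{BoundLA-Figure-Xitwo} together with the rib-weight allocation of Section \ref{secBoundsOnePointF} (so that each one-point function at a corner of the diagram is counted exactly once, as in Remark \ref{rem-rib-weight2}). The left square is then exactly bounded by $P^{\ssc[1],m}(u,v)$, summed over the possible indices $m$ and over the corner points $u,v$; the right square is bounded by $A^{m,0}(u,v,x,x)$, since the endpoint $x$ is fixed and no further shared line follows. This yields \refeq{PnDominatesXiNTwo} for $N=2$, with $P^{\ssc[1],l}$ playing the role of $P^{\ssc[N-1],l}$. For the inductive step, peel off the rightmost square of the diagram of $\Xi^{\ssc[N]}_z(x)$: it joins the rest at a shared line of type $m$, and is bounded by $A^{m,0}(u,v,x,x)$. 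The remaining portion is itself a contribution to a bound on the first $N-1$ laces, so the induction hypothesis together with the recursive definition \refeq{PnConstructLT} of $P^{\ssc[N-1],m}$ closes the argument.

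For \refeq{PnDominatesXiNTwoWeight}, place the weight $|x_N|^2$ on the bottom line of the last square; the decomposition is the same as in \refeq{PnDominatesXiNTwo}, except that the terminating block $A^{m,0}(u,v,x,x)$ is replaced by the weighted terminal block $\Delta^{\rm end,m}(u,v,x,x)=C^{m,0}(u,v,x,x)$. For \refeq{PnDominatesXiNTwoPrime}, the weight $|x_1|^2$ sits on the first bottom line, so the role of $P^{\ssc[1],m}$ at the left end is played by $\Delta^{\rm start,m}$, and the remaining $N-1$ squares are traversed from the right (with the endpoint $x$ fixed), which is exactly what $Q^{\ssc[N-1],m}(u,v;x)$ is designed to bound by \refeq{Q1ConstructLT}--\refeq{QnConstructLT}. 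For \refeq{PnDominatesXiN}, the weight falls on an intermediate square: cut the diagram there into three parts, bound the first $N$ squares by $P^{\ssc[N],l}(u,w)$, the weighted middle square by $C^{l,m}(u,v,w,y)$, and the remaining $M$ squares (read right-to-left from the fixed endpoint $x$) by $Q^{\ssc[M],m}(w,y;x)$. The bounds for $\Xi^{\ssc[N],\iota}_z$ are proved identically, with the initial block $P^{\ssc[1],l}$ replaced by $P^{\ssc[1],\iota,l}$ (and its weighted counterparts $\Delta^{\iota,{\sss I},l}$, $\Delta^{\iota,{\sss II},l}$) to accommodate the extra indicator $\1_\iota(\omega)$ in \refeq{defXiNIotaanimal}.

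The main obstacle is not the induction itself, which is essentially algebraic, but the matching between the combinatorial cases of the lace (the position of $s_{i+1}$ relative to $t_i$, the parity of the backbone excursion, and the length of the shared piece) and the five indices $m\in\{-2,-1,0,1,2\}$ that label the building blocks. Doing this correctly is what ensures that each loop of the resulting diagram is bounded by a repulsive building block that takes advantage of the NoBLE constraint that every loop uses at least four bonds (as already exploited in the third improvement in Section \ref{sec-BoundIdea-Heuristic}). The rib-weight accounting outlined in Remark \ref{rem-rib-weight2}, which determines which connection in each square is bounded by $\bar G_z$ versus $\tilde G_z$, must be preserved by the recursion; it is this bookkeeping, together with the five-fold index $m$, that turns the bound into the matrix product structure appearing in Proposition \ref{LemmaBoundXiLTTwo}.
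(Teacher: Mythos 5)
Your proof sketch follows essentially the same route as the paper's own argument: decompose the diagram into squares along the first intersection points, classify the shared lines by the five-valued index $m$, bound the leftmost block by $P^{\ssc[1],m}$ (or $P^{\ssc[1],\iota,m}$, $\Delta^{\rm start,m}$, $\Delta^{\iota,\bullet,m}$), the intermediate blocks by $A^{l,m}$ (or $C^{l,m}$ when weighted), and the terminal block by $A^{l,0}$ (or $\Delta^{\rm end,l}$), and close via the recursive definitions \refeq{PnConstructLT}--\refeq{PnConstructLT-Qm}. The paper's proof for $N\geq 3$ is likewise an explicit statement that one argues by induction on $N$ after encoding the lace structure by a skeleton, with the case-by-case verification omitted, so your proposal operates at the same level of detail as the paper itself.

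Two points where your wording glosses over details the paper makes slightly more explicit. First, the induction hypothesis cannot be \refeq{PnDominatesXiNTwo} itself, since the ``remaining portion'' after peeling off the last square is an \emph{open} diagram with two free endpoints and a pending index $l$ — the correct induction is on a statement about $P^{\ssc[N],m}(u,v)$ bounding the partial sausage-walk with prescribed shared-line type at both ends (what the paper calls classifying $(\omega,L)$ by their skeleton); your appeal to \refeq{PnConstructLT} implicitly acknowledges this, but it deserves to be said. Second, you do not address where the prefactor $\gj/g_z$ in \refeq{PnDominatesXiNTwo} originates (one $\bar G_z\leq \gj\,\tilde G_z$ extraction plus the normalisation by $g_z$) nor why it is absent from the weighted bounds (there the surviving $\bar G_z$ carries the spatial weight and is not converted to $\gj\,\tilde G_z$); your reference to Remark \ref{rem-rib-weight2} covers the substance, but making this explicit would make the sketch self-contained. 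Neither point is a gap in the approach — both are resolved exactly as the paper does.
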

Combining all these bounds we are able to bound the weighted diagrams. First, we split the weight $|x|^2$ using
\refeq{Split-weight-ineq}, as formulated in \refeq{weighted-diag-Xi}. Then, we use the bounds in Lemma \ref{pDominatespTree} to obtain
	\begin{align}
	\sum_{x\in\Zd}|x|^2 \Xi^{\ssc[N]}(x)\leq
	&N \sum_{l,x_1,v,x} \Delta^{{\rm start},l} (x_1,v) Q^{\ssc[N-1],l}(x_1,v;x)\nnb
	&+N\sum_{M=1}^{N-2}
	\sum_{\footnotesize\begin{array} {c} l,m,u,\\ v,w,x,y\end{array}}
	P^{\ssc[M],l}(u,v) C^{l,m}(u,v,w,y) Q^{\ssc[N-M-1],m}(w,y;x)\nnb
	&+N\sum_{l,x_N,v,x}  P^{\ssc[N-1],l}(x-x_N,x+v) \Delta^{{\rm end},l}(0,v,x_N,x_N),
	\lbeq{Split-weight-bound}
	\end{align}
which implies the bound stated in \refeq{Bound-Xi2W}.

Let us highlight once more how crucial the order is in which we create our bounds:
First, we group each combination $(\omega,L)$ according to the length and role of shared lines $(l_i)_{i=1,\dots,N-1}$.
Secondly, we identify the first intersection points at which we split the sausage-walks.
Thirdly, we use \refeq{Split-weight-ineq} to split the weight along the bottom lines, as shown in
 Figure \ref{BoundLT-splitchoice}.
In the fourth step, we apply, depending on the weight $|x_i|^2$, one of the bounds of  Lemma \ref{pDominatespTree}.
Then we also allocate the one-point functions such that the weight is along a line corresponding to a two-point function $G_z$.

\begin{figure}[ht]
\begin{center}
\begin{tikzpicture}[auto,scale=1.5]
\begin{scope}
\draw[line width=2.5 pt] (0,0) to (2,0);
\draw[line width=1 pt] (0,1) to (2,1);
\draw[line width=1 pt] (0,0) to (0,1);
\draw[line width=1 pt] (1,0) to (1,1);
\draw[line width=1 pt] (2,0) to (2,1);
 \fill (0,1) circle (2pt);
 \fill (2,1) circle (2pt);
 \node[left]   at(0,1)      {$w_1$};
 \node[left]   at(0,0)      {$0$};
 \node[right]   at(2,1)      {$w_2$};
 \node[right]   at(2,0)      {$x$};
  \node[above]   at(1,1)      {$u$};
    \node[below]   at(1,0)      {$\underline b^\omega_{s_2}=\underline b^\omega_{t_1}=y$};
   \node[right]   at(1,0.5)      {$=-m$};
   \end{scope}

   \begin{scope}[shift={(4,0)},rotate=0]
\draw[line width=2.5 pt] (0,0) to (1,0);
\draw[line width=1 pt] (1,0) to (2,0);
\draw[line width=1 pt] (0,1) to (1,1);
\draw[line width=2.5 pt] (1,1) to (2,1);
\draw[line width=1 pt] (0,0) to (0,1);
\draw[line width=2.5 pt] (1,0) to (1,1);
\draw[line width=1 pt] (2,0) to (2,1);
 \fill (0,1) circle (2pt);
 \fill (2,1) circle (2pt);
 \node[left]   at(0,1)      {$w_1$};
 \node[left]   at(0,0)      {$0$};
 \node[right]   at(2,1)      {$x$};
 \node[right]   at(2,0)      {$w_2$};
  \node[above]   at(1,1)      {$\underline b^\omega_{t_1}=u$};
    \node[below]   at(1,0)      {$\underline b^\omega_{s_2}=y$};
   \node[right]   at(1,0.5)      {$=m$};
   \end{scope}
\end{tikzpicture}
\caption{Labeling in the proof of the bound on $\Xi^\ssc[2]_z$ for $\bb_0=0$.
The left diagram corresponds to $s_2=t_1$ and the right to $s_2<t_1$.}
\label{structureInProveTreeNtwo}
\end{center}
\end{figure}
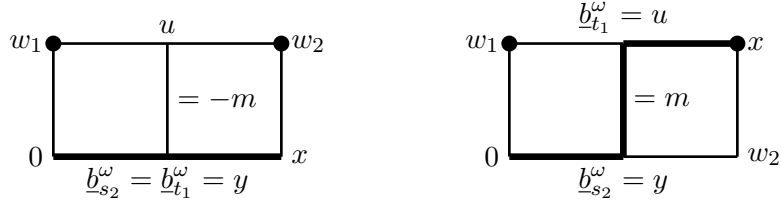
\noindent

\begin{proof}[Proof of Lemma \ref{pDominatespTree}] We divide the proof depending on whether $N=2$ and $N\geq 3$.

\paragraph{Proof for $N=2$.} As we have already discussed how to bound the non-trivial sausage for $\Xi^\ssc[1]_z$, occurring for $\bb_1\neq 0$, we will restrict to $\bb_1= 0$ in our discussion for $\Xi^\ssc[2]_z $. We have to consider the coefficients as shown in Figure \ref{structureInProveTreeNtwo}: For a sausage-walk $\omega$ and a lace $L=\{0t_1,s_2|\omega|\}$ to which $\omega$ contributes we define $w_1=w_1(\omega),\ w_2=w_2(\omega)$ and $y=\tb^\omega_{s_2}$. If $s_2<t_1$, then we define $u=\omega_{s_1}$ and $l=\max\{d_\omega(u,y),2\}$. For $s_2=t_1$ we identify the last point $u$ that the paths $b^{S^\omega_{s_2}}(\bb^\omega_{s_2+1}, v)$ and $b^{S^\omega_{s_2}}(\bb^\omega_{s_2+1}, w)$ have in common and define $l=-\max\{d_{A^\omega_{s_2}}(\bb^\omega_{s_2+1},u),2\}$.

By definition of a sausage-walk (Definition \ref{defLASausagewalks}), using the non-backtracking condition, and the selection of the first intersection points $w_1,w_2$, we know that each square is repulsive and consists of at least four bonds. Especially, if $t_1=1$, so that $S^\omega_{0}$ and $S^\omega_{1}$ intersect, then the non-backtracking condition states that this intersection cannot occur at the backbone bond, i.e., $w_1\nin b_1^\omega$. For the same reason we know that $w_2\nin b_{|\omega|}^\omega$ in the cases of $|\omega|-s_2=1$ and $\tb_{|\omega|}^\omega=x$.
This last observations look like a minute special case, but it is numerically a major contribution.

To obtain \refeq{PnDominatesXiNTwo}, we let the line $u,y$ contribute to the left square.
When we add the weight $|x-y|^2$, we obtain \refeq{PnDominatesXiNTwoWeight}.
When considering the weight $|y|^2$, we choose the split of the diagram such that
the line $u,y$ contributes to the right square instead. This creates the bound \refeq{PnDominatesXiNTwoPrime}.
To prove \refeq{PnDominatesXiNTwo}-\refeq{PnDominatesXiNTwoPrime}, we have to go through all cases for $l$ and
check that all possible sausage-walks are captured by the bounds on the right-hand sides. These tedious arguments
follow the same steps as in the proof of the bounds on the coefficients for $N=1$ in the previous section, and we omit the details.
The bound on $\Xi^{\ssc[2],\iota}_z$ is obtained in the same way, using the initial piece  $P^{\ssc[1],\iota,m}_{z}$, sketched in Section \ref{secBuildingBlocks}.

\paragraph{Proof for $N\geq 3$.}
The bounds for $N\geq 3$ are proven following the same procedure as described for $N=2$. For each sausage-walk $\omega$ and lace $L$,
we identify the points at which we split $\omega$, these are $\bb_{s_i}$, $\bb_{t_i}$, as well as the first intersection points $w_i(\omega)$.
Whenever $s_i=t_{i-1}$, we also identify the point at which we split the sausage $S_{s_i}^\omega$, see Figures \ref{BoundLA-Figure-Xitwo} and \ref{LTXi4Ribweight}.
Then, we define the lengths of the shared lines of neighboring loops $l_i$ accordingly.
For a bound on the weighted diagrams, we then split the weight using \refeq{Split-weight-ineq}, leading to a sum of weighted diagrams as in \refeq{weighted-diag-Xi}, and for each of these summands, we derive a bound as in \refeq{PnDominatesXiNTwoWeight}-\refeq{PnDominatesXiN} depending on which line is weighted.

We now give some insight into how these bounds can be proven. As all these arguments do not offer any further
insight, we omit their formal proof and refer the interested reader to \cite[Section 4.3.5]{Fit13}, where more details are given.
It involves defining skeletons that encode (i) the length of the pieces of shared lines of neighboring loops; (ii) the location of the
backbone; and (iii) the partial displacements $(x_i)_{i=1}^{N}$.
Then, we classify all pairs or rib-/sausage-walks and laces $(\omega,L)$ uniquely according to their skeleton and
prove the bound using induction on $N$. The proof strategy uses ideas already presented above and techniques described
in the standard reference \cite{Slad06}.

\paragraph{Bound on $\Xi^{\ssc[N],\iota}_z$.}
The coefficients $\Xi^{\ssc[N]}_z$ and  $\Xi^{\ssc[N],\iota}_z$ differ only by the first sausage,  as informally explained in the foundation block $P^{\ssc[1],m}$ and $P^{\ssc[1],\iota,m}$.
Thus, also the bounds differ only in the first block. The block $P^{\ssc[1],\iota,m}$ can have very many forms and the bounds on its weighted versions $\Delta^{\iota,{\sss I},m}$ and
$\Delta^{\iota,{\sss II},m}$ have to be done with care, as they are numerically important.

\iflongversion
We provide a complete definition of the bounding diagrams $P^{\ssc[1],\iota,m}$, $\Delta^{\iota,{\sss I},m}$ and
$\Delta^{\iota,{\sss II},m}$ that list all possible cases, and rigorously derive their corresponding bounds in
Appendices \ref{Appendix-def-Initial-XiIota}-\ref{Appendix-def-Initial-XiIota-Delta}.
\else
In the extended version of this article \cite{FitHof13g-ext}, we provide the complete definitions of $P^{\ssc[1],\iota,m}$, $\Delta^{\iota,{\sss I},m}$ and $\Delta^{\iota,{\sss II},m}$ that list all possible cases/forms and their corresponding bounds. See \cite[Appendices C.2.2-C.2.4]{FitHof13g-ext} As this takes 19 pages we omit these definition here. In this version we only added the sketch of $P^{\ssc[1],\iota,m}$ in Section \ref{secBuildingBlocks}.

\fi
\end{proof}

\paragraph{Decomposition using matrices.}
In the recursive definitions of $P^{\ssc[N],m}$, $P^{\ssc[N],\iota,m}$, $Q^{\ssc[N],m}$, in
\refeq{PnConstructLT}-\refeq{PnConstructLT-Qm}, as well as in the bounds of Lemma \ref{pDominatespTree}, we sum over the lengths of lines that are shared by two loops. We do this as it allows us to make use of the fact that any loop consists of at least four bonds.
We convert this sum over line lengths into a product of matrices to obtain the bounds stated in Proposition \ref{LemmaBoundXiLTTwo}.
Here we explain how we do this. We start at the definition of $P^{\ssc[N],m}$ in \refeq{PnConstructLT} and obtain
	\begin{align}
	\lbeq{BoundLT-tmp-7}
	P^{\ssc[N],m}_z&=\sum_{x,y\in\Zd}P^{\ssc[N],m}_z(x,y)=	\sum_{l=-2}^2 \sum_{u,v,x,y\in\Zd}  P^{\ssc[N-1],l}_z(u,u+v) A^{l,m}(u,u+v,x,y)\\
	&\leq \sum_{l=-2}^2 \sum_{u,v\in\Zd}  P^{\ssc[N-1],l}(u,v)
	\sup_{v\in\Zd} \sum_{x,y} A^{l,m}(0,v,x,y)=\sum_{l} P^{\ssc[N-1],m}_z ({\bf A})_{l,m},\nn
	\end{align}
where we recall \refeq{defOfBoundingElement2a}. Using this recursion, we can prove by induction and using \refeq{defOfBoundingElement1}, that
	\begin{align}
	P^{\ssc[N],m}_z\leq (\vec P^T{\bf A}^{N-1})_m.\nn
	\end{align}
Combining this with \refeq{PnDominatesXiNTwo},
we obtain the bound on $\Xi^{\ssc[N]}_{z}$ as stated in \refeq{Bound-Xi2} by bounding
	\begin{align}
	\sum_x \Xi^{\ssc[N]}_{z}(x)\leq& \frac {\gj}{g_z} \sum_{u,v,l,x} P^{\ssc[N-1],l}_{z}(u,v) A^{l,0}(u,v,x,x)\nnb
	=&\frac {\gj}{g_z}  \sum_{u,v,l,x} P^{\ssc[N-1],l}_{z}(u,u+v) A^{l,0}(0,v,x,x)\nnb
\leq & \frac {\gj}{g_z} \sum_l \left( \sum_{u,v} P^{\ssc[N-1],l}_{z}(u,u+v)\right) \left( \sup_{v\in\Zd} \sum_{x} A^{l,0}(0,v,x,x)\right)\nnb
    \stackrel{\refeq{defOfBoundingElement2}}{\leq} &
   \frac {\gj}{g_z}  \sum_l (\vec P^T{\bf A}^{N-2})_l (\vec E^{\rm{closed}})_l= \frac {\gj}{g_z}  \vec P^T{\bf A}^{N-2}{\vec E}^{\rm{open}}.
 	\lbeq{BoundLT-tmp-5}
 	\end{align}
The bound on $\Xi^{\ssc[N],\iota}_{z}$ is obtained in the same way, where $P^{\ssc[N],l}_{z}$ is replaced with $P^{\ssc[N],\iota,l}_{z}$.\\
Using translation invariance for ${Q}_z^{\ssc[M],m}$, defined in \refeq{Q1ConstructLT}--\refeq{QnConstructLT}, we obtain
	\begin{align}
	\lbeq{BoundLT-tmp-8} \sum_{v,w\in\Zd}{Q}_z^{\ssc[M],m}(w,y;x)=&\sum_{w,y\in\Zd}Q_z^{\ssc[M],m}(w,y;0)=\sum_{y,x\in\Zd}Q_z^{\ssc[M],m}(0,y;x)\\
\stackrel{\refeq{PnConstructLT-Qm}}=  &\sum_{l=-2}^2 \sum_{u,v,x,y\in\Zd} Q_z^{\ssc[M-1],l}(0,v;x) A^{l,m}(u,u+v,0,y)\nn\\
	\leq& \sum_{l=-2}^2 \left(\sum_{x,v\in\Zd}  Q^{\ssc[M-1],l}(0,v;x)\right)\left(\sup_{v\in\Zd} \sum_{u,y} A^{l,m}(u,u+v,0,y)\right)\nn\\
	\leq& ({\bf A}^{M-1}\vec E^{\rm{closed}})_m.\nn
	\end{align}
Combining \refeq{BoundLT-tmp-7}, \refeq{BoundLT-tmp-8} with \refeq{Split-weight-bound}, we obtain
	\begin{align}
	\sum_{x\in\Zd}|x|^2 \Xi_z^{\ssc[N]}(x)\leq
	&N \sum_{l} \Big(\sup_v \sum_{x_1} \Delta^{{\rm start},l} (x_1,v)\Big)
	({\bf A}^{N-2}\vec E^{\rm{closed}})_l\nnb
	&+N\sum_{M=1}^{N-2}
	(\vec P^T{\bf A}^{M-2})_m  \Big(\sup_{v,y}\sum_{w} C^{m,l}(0,v,w,w+y)\Big) ({\bf A}^{N-M-2}\vec E^{\rm{closed}})_l\nnb
	&+N\sum_{l} (\vec P^T{\bf A}^{N-2})_l  \Big(\sup_v \sum_{x_N}\Delta^{{\rm end},l}(0,v,x_N,x_N)\Big)\nnb
	\stackrel{\refeq{defOfBoundingElement1},\refeq{defOfBoundingElement2}}{=}&
	N  (\vec \Delta^{\rm{start}})^T{\bf A}^{N-2}{\vec E^{\rm{closed}}}+N\sum_{M=1}^{N-2} \vec P^T {\bf A}^{M-2} {\bf C} {\bf A}^{N-M-2}\vec E^{\rm{closed}}\nnb
	&+N\vec P^T{\bf A}^{N-2}\vec \Delta^{{\rm end}},
	\end{align}
which proves \refeq{Bound-Xi2W}. The bound on $\Xi^{\ssc[N],\iota}$ are proven in the same way, using other initial diagrams.

Let us briefly mention a numerical aspect here. For $|m|=1$, we only count $u,v$ that are neighbors.
By spatial symmetry,
	\begin{align}
	\sup_{v}\sum_{u} P_z^{\ssc[N],m}(u,u+v)=\sum_{u}P_z^{\ssc[N],m}(u,u+\ve[1])=\frac 1 {2d}\sum_{u,w} P_z^{\ssc[N],m}(u,u+w).
	\lbeq{BoundLT-tmp-8.1}
	\end{align}
This means that the step of taking the supremum over $v\in\Zd$, while disassembling $P_z^{\ssc[N],m}$ and ${Q}_z^{\ssc[M],m}$ in \refeq{BoundLT-tmp-7} and \refeq{BoundLT-tmp-8}, actually holds with an equality for $|m|\leq 1$. We also apply this method to $P^{\ssc[N],\iota,m}_{z}$, when we simply sum over $\iota$, to obtain
	\begin{align}
	\sum_{u,\iota} P^{\ssc[N],\iota,m}_{z}(u,u+\ve[\kappa])=\sum_{u,\iota}P^{\ssc[N],\iota, m}_{z}(u,u+\ve[1]).
	\end{align}

\subsection{Bounds on differences: Proof of Lemma \ref{lemmapercboundXi0minus1}}

Now we turn to the bounds stated in Lemma \ref{lemmapercboundXi0minus1}.
By definition $\Xi^\ssc[0]_{\alpha,z}(0)=0$, see \refeq{LA-Split-Def-1}.
Further, we recall that, now by \refeq{LA-Split-Def-None},
	\begin{align}
	\Xi^\ssc[1]_{\alpha,z}(x)
	=\frac 1 {g_z} \sum_{\omega\in\Wcal(x)} z^{|\omega|} Z[0,|\omega|]J^{\ssc[1]}[0,|\omega|]\indic{\bb^\omega_0=0}.
	\end{align}
For $x=0$, the sausage-walk needs to return to the origin.
Note that the indicator $J^{\ssc[1]}[0,|\omega|]$ enforces an intersection between the zeroth and the last sausage, which in fact is implied also by the fact that $\omega\in\Wcal(0)$ and $\bb^\omega_0=0$. However, $J^{\ssc[1]}[0,|\omega|]$ also forces the other sausages to avoid each other. Thus, the sausage-walk, without the initial sausage and $b^\omega_0$, form a lattice animal and can be bounded by $\bar G_{3,z}(\tb^\omega_0)$. We bound the first sausage by the factor $z\gj$, resulting in the bound
	\begin{align}
		0\leq \Xi^\ssc[1]_{\alpha,z}(0)\leq \frac {2dz\gj} {g_z} \bar G_{3,z}(\ve[\iota] ),
	\end{align}
which, together with $\Xi^\ssc[0]_{\alpha,z}(0)=0$, shows the bounds in \refeq{Differencebound-1} for $x=0$, as well as \refeq{Differencebound-2}. Using the same reasoning we obtain
	\begin{align}
	\Xi^\ssc[1]_{\alpha,z}(\ve[1])\leq  &\frac {g_z^\iota} {g_z} \Big(2\diagRepulsiveLetter{B}_{3,\underline 1}(\ve[1],0)+\sum_v \diagRepulsiveLetter{T}_{\underline 1,1,1}(\ve[1],v,0)\Big),
	\end{align}
which, together with $\Xi^\ssc[1]_{\alpha,z}(\ve[1])\geq 0$, implies \refeq{Differencebound-3}.

To prove \refeq{Differencebound-1} for $x=\ve[1]$, we now show that $\Xi^\ssc[0]_{\alpha,z}(\ve[1])<\Xi^\ssc[1]_{\alpha,z}(\ve[1])$. We do this by arguing that each LA contributing to $\Xi^\ssc[0]_{\alpha,z}(\ve[1])$ can be decomposed into a unique contribution to $\Xi^\ssc[1]_{\alpha,z}(\ve[1])$. For any  LA that contributes to $\Xi^\ssc[0]_{\alpha,z}(\ve[1])$, choose one of the connections between $0$ and $\ve[1]$ to be the backbone, and allocate the rest of the LA to be sausages of the walk in any unique way you want (as long as it is unique, the map will yield an injection). As there is a double connection in the original LA, the first and last sausages in the created sausage-walks intersect, and thus provide a contribution to $\Xi^\ssc[1]_{\alpha,z}(\ve[1])$.
The bounds \refeq{Differencebound-6}-\refeq{Differencebound-7} are obtained in a similar way.

In $\Psi^{\ssc[0],1}_{\alpha,{\sss I},z}(\ve[1]+\ve[\iota])$ and $\Psi^{\ssc[1],1}_{\alpha,{\sss I},z}(\ve[1]+\ve[\iota])$, the origin and $\ve[1]+\ve[\iota]$ are connected using two disjoint paths, each consisting of two bonds.
Thus,
	\begin{align}
	0\leq \Psi^{\ssc[0],1}_{\alpha,{\sss I},z}(\ve[1]+\ve[\iota])\leq& \indAnimal \diagRepulsiveLetter{B}_{2,\underline 2}(\ve[1]+\ve[\iota],0),\\
	0\leq  \Psi^{\ssc[1],1}_{\alpha,{\sss I},z}(\ve[1]+\ve[\iota])\leq& \sum_{w_1}\diagRepulsiveLetter{T}_{0,0,\underline 2}(w_1,\ve[1]+\ve[\iota],0).
	\end{align}
Combining these two bounds yields \refeq{Differencebound-4} and \refeq{Differencebound-5}.
\qed

\section{Proof of main results: Critical exponents $\nu, \gamma$ and $\eta$}
\label{sec-proof-main}
In this section, we prove our main results. We start by noting that Theorem \ref{thm-IRB} immediately follows from the fact that $f_2(z)\leq \gamma_2$ {\em uniformly} in $z<z_c$. This implies Theorem \ref{thm-IRB}. Theorem \ref{thm-bds-crit} follows from the numerical estimates and the facts that $f_i(z)\gamma_i$ uniformly in $z<z_c$ for $i=1,2$. We prove Corollary \ref{cor-TC-crit-exp} in Section \ref{sec-gamma-nu}, Theorem \ref{thm-k-space} in Section \ref{sec-k-space}, and Theorem \ref{thm-x-space} in Section \ref{sec-x-space}.

\subsection{Proof of $\gamma=1/2$ and $\nu=1/4$ in Corollary \ref{cor-TC-crit-exp}}
\label{sec-gamma-nu}
The finiteness of the square diagram follows from the infrared bound in Theorem \ref{thm-IRB}, together with the Fourier inversion theorem, which together imply that
	\eqn{
	\square(z_c)=\lim_{z\nearrow z_c}\square(z)=
	\lim_{z\nearrow z_c}\bar{G}^{\star 4}_z(0)=\lim_{z\nearrow z_c} \int_{-[\pi,\pi]^d} \hat{\bar{G}}_z(k)^4\frac{d^dk}{(2\pi)^d}<\infty.
	}
The finiteness of the integral being valid for $d>8$, under the infrared bound which holds for $d\geq \dmintree$ and $d\geq \dminanimal$ for lattice trees and lattice animals, respectively, by Theorem \ref{thm-IRB}. The fact that $\gamma=1/2$ follows directly from the finiteness of the square diagram, as proved in \cite{BovFroGla86b, Tasa86, HarTas87}.

In the following, we will show that $\nu=1/4$. The critical exponent $\nu$ governs the explosion for $z\nearrow z_c$ of $\xi_2(z)$ defined in \refeq{genTreestatement-xi}.
For this we rely on notation and techniques used in the accompanying paper \cite{FitHof13g-ext}. Indeed, we write, using \refeq{lace-exp-eq},
	\begin{align}
	\hat G_z(k)=\frac {\hat \Phi_z(k)}{1-\hat F_z(k)},
	\end{align}
with
	\begin{align}
      \lbeq{DefPhi}
	\hat \Phi_z(k) :=& 1+\hat \Xi_z(k)- \aaz(\v1+\vPsiz[k])^T\left[ \mD[k] + \aaz\mJ +\mPi[k]\right]^{-1} \vXiz[k],\\
	\lbeq{DefFFunction}
	\hat F_z(k) :=&\aaz(\v1 + \vPsiz[k])^T\left[ \mD[k] + \aaz\mJ +\mPi[k]\right]^{-1} \v1.
	\end{align}
By the completion of the bootstrap argument, we know that all the above NoBLE coefficients are well defined for all $z\leq z_c$.

In \cite[Section 3.3.1]{FitHof13g-ext}, we define the continuous Laplace operator $\bigtriangleup$: For a differentiable function $g$ and $s\in\{1,2,\dots,d\}$, let $\partial_s g(k)=\frac \partial {\partial k_s} g(k)$  and $\bigtriangleup g(k)=\sum_{s=1}^d\partial_s^2 g(k)$.
For the Laplace operator $\bigtriangleup$, we note that
	\begin{align}
	\lbeq{Laplace-Observation}
 	\sum_{x}\sum_{A\ni 0,x}|x|^2z^{|A|}=-\bigtriangleup \hat G_z(k)\big|_{k=0},
	\end{align}
and compute, now restricting to $z<z_c$,
	\begin{align}
 	\bigtriangleup \hat G_z(k)\big|_{k=0}=&
  	\frac {\bigtriangleup\hat \Phi_z(k)|_{k=0}}{1-\hat F_z(0)} +\frac {\hat \Phi_z(0)\bigtriangleup \hat F_z(k)|_{k=0}}{(1-\hat F_z(0))^2}\\
	& +2 \frac {\sum_s (\partial_s  \hat \Phi_z(k)\big|_{k=0})^2}{(1-\hat F_z(0))^3}
	+2 \frac {\sum_s \partial_s  \hat \Phi_z(k)\big|_{k=0}\partial_s  \hat F_z(k)\big|_{k=0}  }{(1-\hat F_z(0))^2}.\nn
	\end{align}
By the symmetry inherited from the symmetry of lattice trees and lattice animals, we know that all first order derivatives are zero at $k=0$, so that all elements in the second line are zero. Thus, we conclude that
	\begin{align}
 	\bigtriangleup \hat G_z(k)\big|_{k=0}=&
  	\frac {\bigtriangleup\hat \Phi_z(k)|_{k=0}}{\hat \Phi_z(0)} \chi(z)
  	+\frac {\bigtriangleup \hat F_z(k)|_{k=0}}{\hat \Phi_z(0)}\chi(z)^2.
	\end{align}
Starting from $\xi_2(z)$, defined in \refeq{genTreestatement-xi}, and \refeq{Laplace-Observation}, we conclude that
	\begin{align}
	\xi_2(z)^2  =&  -\frac {\bigtriangleup \hat F_z(k)|_{k=0}}{\hat \Phi_z(0)}\chi(z)
  	-\frac {\bigtriangleup\hat \Phi_z(k)|_{k=0}}{\hat \Phi_z(0)}\\
  	=& \chi(z) \frac {\sum_{x} |x|^2F_z(x)} {\hat \Phi_z(0)} + \frac {\sum_{x} |x|^2\Phi_z(x)} {\hat \Phi_z(0)}.\nn
	\end{align}
Our numerical estimates imply numerical lower and upper bounds of these factors in dimensions above $\dmintree$ and $\dminanimal$ for lattice trees and lattice animals, respectively, uniformly in $z<z_c$. In \cite[Appendix D]{FitHof13g-ext} we give a detailed description on how to bound such factors using the bounds on the coefficients summarized in
Section \ref{secBoundsSummary}. In particular, we obtain that, since $\sum_{x} |x|^2F_z(x)$ and $\sum_{x} |x|^2\Phi_z(x)$ are power-series with radius of convergence $z_c$, by Abel's theorem,
	\eqn{
	\lim_{z\nearrow z_c}\sum_{x} |x|^2F_z(x)=\sum_{x} |x|^2F_{z_c}(x),
	\qquad
	\text{and}
	\qquad
	\lim_{z\nearrow z_c}\sum_{x} |x|^2\Phi_z(x)=\sum_{x} |x|^2\Phi_{z_c}(x),
	}
and these sums are absolutely convergent. We thus conclude that
	\begin{align}
	\lim_{z\nearrow z_c}\frac{\xi_2(z)^2}{\chi(z)}  &=\frac {\sum_{x} |x|^2F_{z_c}(x)} {\hat \Phi_{z_c}(0)}.
	\end{align}
The fact that $\nu=1/4$ now follows from the fact that $\gamma=1/2$.
\qed

\subsection{Proof of $\eta=0$ in $k$-space in Theorem \ref{thm-k-space}}
\label{sec-k-space}
By Abel's theorem, and the fact that $1-\hat F_{z_c}(k)>0$ for $k\neq 0$ (which follows from Theorem \ref{thm-IRB}),
	\begin{align}
	\hat G_{z_c}(k)=\frac {\hat \Phi_{z_c}(k)}{1-\hat F_{z_c}(k)}.
	\end{align}
Note that $1-\hat F_{z_c}(0)=0$ (which follows from the finiteness of $\hat \Phi_{z_c}(0)$ and the fact that $\chi(z)$ blows up as $z\nearrow z_c$). Since $\bigtriangleup \hat F_z(k)|_{k=0}$ is well-defined, we thus immediately conclude that
	\begin{align}
	\hat G_{z_c}(k)=-\frac {\hat \Phi_{z_c}(0)}{|k|^2\bigtriangleup \hat F_{z_c}(k)|_{k=0}}(1+o(1)),
	\end{align}
so that Theorem \ref{thm-k-space} follows with
	\eqn{
	\bar{A}(d)=-\frac {\hat \Phi_{z_c}(0)}{\bigtriangleup \hat F_{z_c}(k)|_{k=0}}.
	}
\qed

\subsection{Proof of $\eta=0$ in $x$-space in Theorem \ref{thm-x-space}}
\label{sec-x-space}
The proof of Theorem \ref{thm-x-space} follows by using the $x$-space asymptotics proved by Takashi Hara in \cite{Hara08}. See in particular \cite[Proposition 1.3]{Hara08}.
%
For ease of reference, we also mention some useful bounds that follow from our analysis of LAs in dimension $d=17$. We only give the bounds for LAs, as these bounds also bound the corresponding lace-expansion coefficients for LTs. While we only prove Theorem \ref{thm-x-space} for $d\geq 27$, these bounds suggest that the classical lace expansion converges for $d\geq 17$ by our numerical estimates.

Using the bound on $f_2(z_c)$ directly, with $\bar{A}_2(17)=3.78,$ we obtain the rough bounds
	\begin{align}
	\bar{G}_{z_c}^{\star 4}(0)-1&\leq 292,
	\qquad \sup_{x\neq 0} \bar{G}_{z_c}^{\star 4}(x)\leq 45.
	\end{align}	
In the following, we improve these rough bounds using the ideas explained in detail in Section \ref{sec-BoundIdea-Heuristic}.

For this, we explain how to modify the bounds on the lace-expansion coefficients for the classical expansion, by allocating one-point functions efficiently, and repulsiveness of the corresponding diagrams. We emphasize that we apply all these bounds to the coefficients of the classical lace expansion, not to the NoBLE, to make them compatible with the classical lace expansion that \cite{Hara08} is based upon. Allocating the one-point functions properly, and avoiding overcounting them as described in Section \ref{secBoundsOnePointF}, we obtain
	\begin{align}
	\tilde{G}_{z_c}^{\star 4}(0)-1&\leq 2.777,
	\qquad
  	\sup_{x\neq 0} \tilde{G}_{z_c}^{\star 4}(x)\leq 0.5924.
	\end{align}	
Next, we use repulsiveness as defined in Section \ref{secBoundsRepdia}. Recall that repulsiveness implies that all paths of the square that actually connect the five (for open squares) or four (for closed squares) boundary points are mutually bond-disjoint. We can bound the squares arising as bounds on intermediate diagrams in the coefficients of the classical lace expansion by
	\begin{align}
	\label{un-weighted-LAs}
	\sum_{w,u,y}\diagRepulsiveLetter{S}_{1,0,0,0}(w,u,y,0)&\leq 0.2247,
	\qquad \sup_{x\neq 0}\sum_{w,u,y}\diagRepulsiveLetter{S}_{1,0,0,0}(w,u,y,x)\leq 0.0798561.
	\end{align}
We further obtain bounds on the (open and closed) weighted triangles $\bar{T}^{\sss(2,0)}=\sup_{y}T^{\sss(2,0)}(y)$, as required for \cite{Hara08}, given by
	\begin{align}
	\label{weighted-LAs}
	T^{\sss(2,0)}(0)&=\sum_{x}|x|^2G_z(x)(\tilde{G}_z\star \tilde{G}_z)(x)\leq 2.09182,\\
	\sup_{y\neq 0}T^{\sss(2,0)}(y)&=\sup_{y\neq 0}\sum_{x}|x|^2G_z(x)(\tilde{G}_z\star \tilde{G}_z)(x-y)\leq 0.456256.\nn
	\end{align}
In communication with Takashi Hara early 2019, we learned that finiteness of the diagrams in \eqref{weighted-LAs}, together with the maximum of the diagrams in \eqref{un-weighted-LAs} being strictly smaller than $\tfrac{1}{2}$, suffices for the $x$-space bounds in Theorem \ref{thm-x-space} to hold. The reason for the latter is that the base of the geometric convergence of the classical lace-expansion coefficients is at most twice the maximum of the squares in \eqref{un-weighted-LAs}. When this number if strictly smaller than one, the classical lace expansion converges, which is needed to apply \cite{Hara08}. These bounds also hold in any $d\geq 17$, using a numerical check in dimensions $17-29$, and the monotonicity arguments in Appendix \ref{sec-monoton-inD} for $d\geq 30$. The fact that we only obtain Theorem \ref{thm-x-space} for $d>27$ is due to a technical restriction in the proof in \cite{Hara08}.

The fact that the constant $A(d)$ in Theorem \ref{thm-x-space} equals the one in Theorem \ref{thm-k-space} follows directly from the proof in \cite{Hara08}.
\qed


\appendix
\section{Lower bound on the initial point}
\label{sec-lemmaAnalysisLABound}
In this section we prove that $z_I\geq (2d-1)^{-1}\e^{-1}$ by proving upper bounds on $g_{z_I}$ and $g^\iota_{z_I}$:
\begin{lemma}[Lower bound on $z_I$]
\label{lem-LBzI}
Let $z_I$ satisfy
	\eqn{
	z_I g_{z_I}^\iota =\frac{1}{2d-1}.
	}
Then, $z_I\geq (2d-1)^{-1}\e^{-1}$.
\end{lemma}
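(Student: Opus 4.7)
The plan is to derive the functional inequality
\[
g^\iota_z \le (1+zg^\iota_z)^{2d-1}, \qquad z\in[0,z_c),
\]
valid for both LTs and LAs, and then evaluate it at $z=z_I$. Since $z_I g^\iota_{z_I}=(2d-1)^{-1}$ by assumption, the inequality would give
\[
g^\iota_{z_I}\le \left(1+\frac{1}{2d-1}\right)^{2d-1}\le \e,
\]
by the elementary bound $(1+1/n)^n\le \e$ for $n\ge 1$. Rearranging then yields $z_I = (2d-1)^{-1}/g^\iota_{z_I} \ge ((2d-1)\e)^{-1}$, which is exactly the claim.

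To establish the functional inequality, I would decompose each LT/LA $A$ containing $0$ (with $\ve[\iota]\nin A$) at the origin. Let $N(A) := \{\kappa : (0,\ve[\kappa])\in A\} \subseteq\{\pm 1,\dots,\pm d\}\setminus\{\iota\}$ be the directions of bonds incident to $0$, and for each $\kappa\in N(A)$ let $C_\kappa$ be the connected component of $A$ after removing the bonds at $0$ containing $\ve[\kappa]$. Each $C_\kappa$ is itself a LT/LA containing $\ve[\kappa]$ and avoiding both $0$ and $\ve[\iota]$; by translation invariance and rotational symmetry, the generating function of such configurations is bounded above by $g^\iota_z$. Summing the tuple weights $z^{|N|}\prod_\kappa z^{|C_\kappa|}$ over arbitrary choices of $N$ and $C_\kappa$'s (a sum that majorizes the count restricted to tuples arising from an actual LT/LA) produces
\[
\sum_{N\subseteq\{\pm 1,\dots,\pm d\}\setminus\{\iota\}} (zg^\iota_z)^{|N|} = (1+zg^\iota_z)^{2d-1}.
\]
It then suffices to check that this product sum dominates $g^\iota_z=\sum_A z^{|A|}$.

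For LTs this is immediate, since removing the bonds at $0$ splits $T$ into vertex-disjoint subtrees, producing a weight-preserving bijection with $z^{|T|}=z^{|N(T)|}\prod_\kappa z^{|C_\kappa|}$. The main obstacle is the LA case: when $A$ contains cycles through $0$, several $\kappa,\kappa'\in N(A)$ may lie in the same component of $A\setminus\{\text{bonds at }0\}$, so the naive decomposition underweights the contribution of shared bonds. To handle this, I will construct a canonical tuple per LA: for each distinct component $D$ of $A\setminus\{\text{bonds at }0\}$ (which must satisfy $I_D:=\{\kappa\in N(A):\ve[\kappa]\in D\}\neq\emptyset$ by connectedness of $A$), designate one representative $\kappa^*_D\in I_D$, set $C_{\kappa^*_D}=D$, and assign the trivial one-vertex LA $\{\ve[\kappa]\}$ (with weight $z^0=1$) to each remaining $\kappa\in I_D\setminus\{\kappa^*_D\}$. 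The total weight of this canonical tuple equals $z^{|N(A)|}\prod_D z^{|D|} = z^{|A|}$, furnishing at least one term of weight $z^{|A|}$ per LA in the product sum and completing the argument.
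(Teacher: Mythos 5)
Your argument is correct, but it takes a genuinely different route from the paper's. The paper first proves the counting bound of Lemma \ref{lemmaAnalysisLABound}, namely $t_n(0)\leq 2d(2d-1)^{n-1}(n+1)^n/(n+1)!$, via an embedding of Poisson branching random walks (a construction that also corrects an error in \cite[Lemma 3.1]{HarSla90b}); it then sums the resulting series with the total-progeny identity $\sum_{n\geq 1}n^{n-1}\e^{-n}/n!=1$ to obtain $g_z\leq \e+(\e-1)/(2d-1)$ for all $z\leq ((2d-1)\e)^{-1}$, sharpens this to $g_z^\iota\leq \e$ by a symmetry argument, and concludes. You instead derive the functional inequality $g^\iota_z\leq (1+zg^\iota_z)^{2d-1}$ by a one-step decomposition at the origin, bounding each hanging component by $g^\iota_z$ (legitimate, since the component contains $\ve[\kappa]$ and avoids its neighbour $0$, which after translation and rotation is exactly the constraint defining $g^\iota_z$), and then evaluate at $z_I$, where $z_Ig^\iota_{z_I}=(2d-1)^{-1}$ makes the right-hand side exactly $(1+1/(2d-1))^{2d-1}\leq\e$. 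Your route is shorter and entirely elementary, and it avoids the branching-random-walk machinery altogether; what it does not deliver is the explicit per-$n$ bound on $t_n(0)$, which in the paper is a result of independent interest, nor an explicit bound on $g_z$ over a whole interval of $z$ (your inequality is only clean at the special point $z_I$). Two small points to tidy up: (i) a direction $\kappa\in N(A)$ for which $\ve[\kappa]$ is incident to no bond of $A\setminus\{\text{bonds at }0\}$ lies in no $I_D$, so it too must be assigned the trivial animal (your prescription only covers $\kappa\in I_D\setminus\{\kappa^*_D\}$); and (ii) you should say explicitly that the map $A\mapsto (N(A),(C_\kappa)_{\kappa\in N(A)})$ is injective --- it is, since $A$ is recovered as $\{(0,\ve[\kappa]):\kappa\in N\}\cup\bigcup_{\kappa}C_\kappa$ --- because injectivity is precisely what allows the unrestricted product sum to dominate $g^\iota_z$.
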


To prove Lemma \ref{lem-LBzI}, we use a similar argument as in \cite[Proof of Lemma 3.1]{HarSla90b}. Unfortunately, one step in \cite[Proof of Lemma 3.1]{HarSla90b} is not correct as claimed. We correct the argument by using Lemma \ref{lemmaAnalysisLABound} below. We explain the mistake of \cite[Proof of Lemma 3.1]{HarSla90b} in the proof of Lemma \ref{lemmaAnalysisLABound}.

The proof of Lemma \ref{lem-LBzI} relies on the following lemma, that we prove thereafter, and which is of independent interest:
\begin{lemma}[Upper bound on the number of $n$-bond LAs]
\label{lemmaAnalysisLABound}
The number of $n$-bond LAs that contain the origin is bounded above by
\begin{align}
\lbeq{lemmaAnalysisLABound-statement}
2d (2d-1)^{n-1}\frac {(n+1)^{n}}{(n+1)!}.
\end{align}
\end{lemma}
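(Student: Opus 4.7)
The plan is to establish the bound by adapting the combinatorial argument of \cite[Proof of Lemma 3.1]{HarSla90b}, correcting the step whose rigor is at stake. The right-hand side factors naturally as $2d(2d-1)^{n-1}$, the number of non-backtracking walks (NBWs) of length $n$ starting at the origin in $\Zd$, times $\frac{(n+1)^n}{(n+1)!}=\frac{(n+1)^{n-1}}{n!}$, where, by Cayley's formula, $(n+1)^{n-1}$ is the number of unrooted labeled trees on the vertex set $\{0,1,\ldots,n\}$. The key reduction is therefore to produce an injection from labeled LAs into the Cartesian product of NBWs and labeled trees, the factor $n!$ in the denominator accounting for a choice of labeling.

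I would first reduce to the labeled setting: every $n$-bond LA $A$ containing the origin has $|V(A)|\le n+1$ vertices, which I would augment with phantom isolated vertices so that the vertex set has size exactly $n+1$, and then consider all $n!$ labelings of the non-origin vertices by $\{1,\ldots,n\}$. The desired bound is then equivalent to showing
\begin{align*}
n!\cdot t_n^{\text{LA}}(0)\le 2d(2d-1)^{n-1}\cdot (n+1)^{n-1}.
\end{align*}

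Next, to each labeled LA I would associate a pair $(\omega,T)$ in which $T$ is a canonical labeled spanning tree of $A$ (for instance the BFS spanning tree with ties broken lexicographically by the labels) and $\omega$ is a non-backtracking walk of length $n$ in $\Zd$ obtained from a canonical label-driven traversal of $A$. The tree $T$ is a labeled tree on $n+1$ vertices, of which there are $(n+1)^{n-1}$ by Cayley's formula, while $\omega$ ranges over the $2d(2d-1)^{n-1}$ NBWs of length $n$ from the origin. The crucial claim is that the map sending a labeled LA to its pair $(\omega,T)$ is injective, which then yields the bound by multiplication and division by $n!$.

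The principal obstacle will lie in the construction and injectivity of the walk $\omega$. The walk must be non-backtracking at every step and must, together with $T$, encode all the information in the labeled LA; the step in \cite[Lemma 3.1]{HarSla90b} that needs correcting is precisely this construction. A natural scheme is to interleave a depth-first traversal of $T$ with insertions corresponding to the cycle-closing bonds of $A$, with the order of insertion dictated by the labels of the endpoints; verifying that this scheme yields a genuine NBW and that the reconstruction from $(\omega,T)$ is unambiguous constitutes the technical heart of the proof and requires a careful case analysis according to the local structure at each newly visited vertex.
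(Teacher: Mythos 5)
Your numerology is right---the bound does factor as the number of non-backtracking objects times $(n+1)^{n-1}/n!$---but the encoding at the heart of your injection cannot exist, so the step you defer as ``the technical heart'' is not a technicality; it is where the approach breaks down. You want to attach to each labeled $n$-bond LA $A$ an $n$-step non-backtracking walk $\omega$ in $\Zd$ obtained from a traversal of $A$. For $\omega$ to have length $n$, each of its steps must account for one bond of $A$; but the bonds of a lattice animal in general admit no ordering as a walk at all, let alone a non-backtracking one. Take $d=2$, $n=4$ and the ``plus sign'' consisting of the four bonds joining $0$ to its neighbours $\pm\ve[1],\pm\ve[2]$: every bond contains the origin, so any walk covering them must return to $0$ between consecutive bonds, hence backtracks and has length strictly larger than $4$. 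Interleaving a depth-first traversal of a spanning tree with cycle-edge insertions does not repair this: the DFS alone already takes $2(|V(A)|-1)$ steps, which differs from $n$ except for cycles. There is simply no candidate for $\omega$ of the right length, and consequently no injection of labeled LAs into $\{\text{NBWs of length }n\}\times\{\text{labeled trees on }n+1\text{ vertices}\}$ of the kind you describe.

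The paper's proof resolves exactly this point by reading $2d(2d-1)^{n-1}$ not as a count of walks but as the number of non-backtracking \emph{embeddings} of an abstract tree $\absT$ with $n$ edges: each bond of $A$ corresponds to one edge of $\absT$, and distinct vertices of $\absT$ may be mapped to the same lattice site, which is how cycles arise. Moreover, no injection is constructed at all. The factor $(n+1)^{n}/(n+1)!$ enters as $\e^{n+1}\prob(|\absT|=n)$ for a critical Poisson branching process (the Borel distribution, via $\sum_n n^{n-1}\e^{-n}/n!=1$), and the whole argument rests on showing that the induced measure on LAs is uniform, which in turn reduces to the identity $\nu(A)=\prod_{x\in A}d_x!$ for the number of (abstract tree, embedding) pairs with image $A$; establishing this for LAs requires the careful exploration-process definition of the offspring numbers $d_x$. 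Finally, note that you have misdiagnosed the flaw in \cite{HarSla90b}: it is not a walk construction but precisely the ``divide by a clean factorial'' step, since unlabeling a tree with automorphisms does not divide the count by the full factorial. Your plan to pass to labeled objects and divide by $n!$ walks straight back into that trap (and is further muddied by the indistinguishability of your phantom vertices); it is the product-of-factorials identity, not a global $n!$, that makes the count close.
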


Let us first use Lemma \ref{lemmaAnalysisLABound} to prove Lemma \ref{lem-LBzI}:

\begin{proof}[Proof of Lemma \ref{lem-LBzI}]
We use Lemma \ref{lemmaAnalysisLABound} to show that $z_I\geq (2d-1)^{-1}\e^{-1}$.
For this, we bound $g_z$ using \refeq{lemmaAnalysisLABound-statement}, as
	\begin{align}
	\lbeq{LTAbstractBound}
	g_z&\leq 1 + \sum_{n=1}^{\infty}z^n (2d)(2d-1)^{n-1} \frac {(n+1)^{n}}{(n+1)!}.
	\end{align}
We note that
	\begin{eqnarray}
	\lbeq{LTAbstractBound-tool}
	\sum_{n=1}^\infty\frac {n^{n-1}}{n!}\e^{-n}=1,
	\end{eqnarray}
since the summand in \refeq{LTAbstractBound-tool} equals the probability that the total progeny of a branching process with Poisson offspring distribution with parameter one equals $n$, and the sum of these probabilities equals 1, since the tree is a.s.\ finite.
We obtain for $z \leq((2d-1)\e)^{-1}$ that
	\begin{align}
	g_z\leq&  1 + \sum_{n=2}^{\infty}z^{n-1} (2d)(2d-1)^{n-2} \frac {n^{n-1}}{n!}\leq 1 + \sum_{n=2}^{\infty}\frac{(2d)(2d-1)^{n-2}}{(2d-1)^{n-1}\e^{n-1}} \frac {n^{n-1}}{n!}\nnb
	=& 1 + \frac {2d \e}{(2d-1)}\sum_{n=2}^{\infty} \frac {n^{n-1}}{n!} \e^{-n}=1 + \frac {2d \e}{(2d-1)} \left( 1 - \frac {1}{\e}\right)= \e+ \frac {\e-1}{2d-1}.
	\end{align}
Since $\gj\leq g_z$ this is also a bound for $\gj$. We can improve this bound on $\gj$ by using the condition that $\ve[\iota]$ is not part of the tree/animal. A non-trivial LT/LA contains at least one edge connecting the origin to a neighboring point.
Therefore, at least at $1$ out of $2d$ LTs/LAs counted in $g_z$ does not contribute to $\gj$. We conclude that
	\begin{eqnarray}
	\lbeq{gjgzrelation-Initial}
	\gj\leq 1+\frac {2d-1}{2d}(g_z-1)=1+\frac {(2d-1)(\e-1)}{2d} \left(1+\frac 1 {2d-1}\right)=\e,
	\end{eqnarray}
and thus, for all $z\leq z_I$,
	\begin{align}
    \lbeq{Bound-Initial-f1}
	(2d-1)\gj z&\leq 1,\ &(2d-1)\bar \alpha_z =(2d-1)g_z z\leq 1+ \frac {1-\e^{-1}}{2d-1},
	\end{align}
as well as, by the definition of $z_I$ in \refeq{DefinitionOfzI},
	\begin{align}
	1=(2d-1)g^\iota_{z_I} {z_I}\leq(2d-1) \e z_I.
	\end{align}
This implies that $z_I\geq (2d-1)^{-1}\e^{-1}$.
\end{proof}

We complete this section by proving Lemma \ref{lemmaAnalysisLABound}:

\begin{proof}[Proof of Lemma \ref{lemmaAnalysisLABound}]
The proof of \cite[Lemma 3.1]{HarSla90b} uses a similar bound as stated in \refeq{lemmaAnalysisLABound-statement} in Lemma \ref{lemmaAnalysisLABound}, but contains an error.
Namely it is used that there are $(n+1)^{n}/(n+1)!$ abstract unlabeled rooted trees with $n$ edges. However, while Cayley's theorem states that there are $(n+1)^{n}$ labeled rooted trees, the removal of the labels does not create a factor $1/(n+1)!$ as stated. Indeed, there are less than $(n+1)!$ ways to label an abstract tree, e.g., for a tree with two edges there are only $3$ different ways to label the tree as the only difference is the label of the vertex that is part of two edges.

To prove the bound stated in \refeq{lemmaAnalysisLABound-statement} we first use the techniques of \cite[Sections 2 and 5]{BorChaHofSla99} to show the lemma only for LTs. Then, we adapt the arguments to LAs.

We begin by defining a non-backtracking branching random walk with Poisson offspring distribution. Abstract trees are the family trees of a critical branching process with Poisson offspring distribution. In more detail, we begin with a single individual having $\xi$ offspring, where $\xi$ is a Poisson random variable of mean $1$, i.e., $\prob(\xi=m)=(\e m!)^{-1}$. Each of the offspring then independently has offspring of its own with the same critical Poisson distribution.
We denote by $|\absT|$ the number of bonds of the tree $\absT$. For an abstract rooted tree $\absT$,
with the $i$th individual having $\xi_i$ offspring, this associates to $\absT$ the weight
	\begin{align}
	\prob(\absT)=\e^{-(|\absT|+1)}\prod_{i\in \absT}\frac 1 {\xi_i!}.
	\end{align}
We define an embedding of $\varphi$ of $\absT$ into $\Zd$ to be a mapping of the vertices of $\absT$ into $\Zd$, such that the root is mapped to the origin and adjacent vertices in the tree are mapped to nearest-neighbors in $\Zd$. Further, we restrict to embeddings in which the children are not mapped to the location of the grandparents. We define the pair $(\absT,\varphi)$ to be a non-backtracking branching random walk with Poisson offspring distribution (Poisson-NBBRW). Given an abstract tree $\absT$ with $n$ bonds, there are $2d (2d-1)^{n-1}$ possible embeddings $\varphi$ of $\absT$. The Poisson-NBBRW measure is given by
	\begin{align}
	\lbeq{Analysis-P-measure}
		\prob(\absT,\varphi)=\frac {1}{2d(2d-1)^{|\absT|-1}} \prob(\absT) =\frac {1}{2d(2d-1)^{|\absT|-1}}\e^{-(|\absT|+1)}\prod_{i\in \absT}\frac 1 {\xi_i!}.
	\end{align}
for all $\absT$ and $\varphi$.	We define $\Ibold^{\sss (t)}(\absT,\varphi)$ to be the indicator that the embedding $\varphi$ of $\absT$ is a LT, i.e., $\varphi\colon \absT\mapsto \Zd$ is injective. For a LT $T$, we write $T=\varphi(\absT)$ if the embedding $\varphi$ of $\absT$ equals $T$.
For abstract trees with $n$ bonds, we define the measure
	\begin{align}
	\Qbold^{\sss (t)}_n(\absT,\varphi)=\frac  1 {Z^{\sss (t)}_n} \prob(\absT,\varphi)\Ibold^{\sss (t)}(\absT,\varphi),
	\end{align}
with normalization
	\begin{align}
	Z^{\sss (t)}_n=\sum_{(\absT,\varphi)\colon |\absT|=n}\prob(\absT,\varphi)\Ibold^{\sss (t)}(\absT,\varphi).
	\end{align}
We prove that the number of $n$-bond LTs that contain the origin is given by
	\begin{align}
	\lbeq{Analysis-LT-Normalizer-Solved}
	t^{\sss (t)}_n(0) = Z^{\sss (t)}_n \e^{n+1}2d(2d-1)^{n-1},
	\end{align}
which is equivalent to
	\eqn{
	\lbeq{Analysis-LT-Normalizer-Solved-2}
	Z^{\sss (t)}_n=t^{\sss (t)}_n(0) \e^{-(n+1)}(2d)^{-1}(2d-1)^{-(n-1)}.
	}
To do that, we show that, for each $n$-bond LT $T$,
	\begin{align}
	\lbeq{Analysis-Q-is-Uniform-LT-measure}
	\sum_{(\absT,\varphi)\colon \varphi(\absT)=T} \Qbold^{\sss (t)}_n(\absT,\varphi)=\frac {\e^{-(n+1)}}{2d(2d-1)^{n-1}},
	\end{align}
which implies that the measure $\Qbold^{\sss (t)}_n$ corresponds to the uniform measure on all $n$-bond LTs and thus implies \refeq{Analysis-LT-Normalizer-Solved-2}, and thus \refeq{Analysis-LT-Normalizer-Solved}.

From \refeq{Analysis-P-measure} we conclude that, for all $T$ with $|T|=n$,
	\begin{align}
	\sum_{(\absT,\varphi)\colon \varphi(\absT)=T} \prob(\absT,\varphi) = \frac {\e^{-(n+1)}}{2d(2d-1)^{n-1}}
	\sum_{(\absT,\varphi)\colon \varphi(\absT)=T} \prod_{i\in \absT}\frac 1 {\xi_i!}.
	\end{align}
	
We continue by proving that, for any LT $T$,
	\begin{align}
	\lbeq{Analysis-LT-Poisson-isOne}
	\sum_{(\absT,\varphi)\colon \varphi(\absT)=T} \prod_{i\in \absT}\frac 1 {\xi_i!}=1.
	\end{align}
Let $d_0$ be the degree of $0$ in $T$, and given non-zero $x\in T$, let $d_x$ be the degree of $x$ in $T$
minus $1$. The set $\{d_x\colon x\in T\}$ must be equal to the set of $\xi_i$ for any $\absT$ that can be mapped to $T$.
Defining $\nu(T)=\#\{(\absT,\varphi)\colon \varphi(\absT)=T\}$, \refeq{Analysis-LT-Poisson-isOne} is therefore equivalent to
	\begin{align}
	\lbeq{Analysis-LT-Poisson-isOne-Tool}
	\nu(T)=\prod_{x\in T}d_x!.
	\end{align}
We prove \refeq{Analysis-LT-Poisson-isOne-Tool} by induction on the number $N$ of generations of $T$. By this, we mean the length of the longest self-avoiding path in $T$, starting from the origin. The identity \refeq{Analysis-LT-Poisson-isOne-Tool} clearly holds for $N=0$. Our induction hypothesis is that \refeq{Analysis-LT-Poisson-isOne-Tool} holds if there are $N-1$ or fewer generations. Suppose $T$ has $N$ generations, let $T_1,\dots,T_{d_0}$ denote the LTs resulting from deleting from $T$ all bonds incident on the origin. We regard each $T_a$ as rooted at the neighbor of the origin in the corresponding deleted bond. It suffices to show that $\nu(T)=b_0!\prod_{a=1}^{d_0}\nu(T_a)$, since each $T_a$ has fewer than $N$ generations. To prove this, we note that each $(\absT,\varphi)$ with $\varphi(\absT)=T$ induces a Poisson-NBBRW $(\absT_a,\varphi_a)$ such that $\varphi_a(\absT_a)=T_a$. This correspondence is $d_0!$ to $1$, since $(\absT,\varphi)$ is determined by the set of $(\absT_a,\varphi_a)$, up to permutations of the branches of $\absT$ at its root. This proves $\nu(T)=d_0!\prod_{a=1}^{d_0}\nu(T_a)$.

We finally complete the proof of the lemma for LTs. We rearrange \refeq{Analysis-LT-Normalizer-Solved} and use \refeq{Analysis-P-measure} to obtain
	\begin{align}
	t^{\sss (t)}_n(0) =& 2d(2d-1)^{n-1}\e^{n+1} \sum_{T\colon |T|=n}\sum_{(\absT,\varphi)\colon \varphi(\absT)=T} \prob(\absT,\varphi)\nnb
	=& \e^{n+1} \sum_{T\colon |T|=n} \sum_{(\absT,\varphi)\colon \varphi(\absT)=T} \prob(\absT)
	\leq  \e^{n+1}   \sum_{(\absT,\varphi)\colon |\absT|=n} \prob(\absT).
	\lbeq{Analysis-LT-Lemma-nextto-laststep}
	\end{align}
Since there are $2d (2d-1)^{n-1}$ embeddings $\varphi$, we then know that
	\begin{align}
	t^{\sss (t)}_n(0) \leq & 2d (2d-1)^{n-1} \e^{n+1} \sum_{\absT\colon |\absT|=n} \prob(\absT)\nnb
	= &2d(2d-1)^{n-1}\e^{n+1}       \prob(|\absT|=n).
	\lbeq{Analysis-LT-Lemma-laststep}
	\end{align}
The probability distribution of the total number of bonds of a Poisson branching process is given by $\prob(|\absT|=n)=\e^{-(n+1)} (n+1)^{n}/(n+1)!$
(as this is the number of vertices, also called the total progeny, minus 1). We insert this into \refeq{Analysis-LT-Lemma-laststep} and obtain the claimed bound for LTs.
\medskip

The claim for LAs is obtained using similar ideas, but requires some adaptations as LAs can contain loops. The main difficulty is to obtain a relation similar to \refeq{Analysis-LT-Poisson-isOne-Tool}.

We define $\Ibold^{\sss (a)}(\absT,\varphi)$ to be the indicator for the event that
\begin{enumerate}[(1)]
\item no two bonds of the abstract tree $\absT$ are mapped to the same bond in $\Zd$ by $\varphi$,
\item for all $i\in\absT$ either $\xi_i=0$ or the following two conditions hold:
  \begin{enumerate}
\item  there exist no $j\in\absT\setminus\{i\}$, such that $\varphi(i)=\varphi(j), \xi_j>0$ and ${\rm height}(j)={\rm height}(i)$,
\item  there exist no $j\in\absT\setminus\{i\}$, such that $\varphi(i)=\varphi(j)$ and ${\rm height}(j)<{\rm height}(i)$,
\end{enumerate}
where the height of a individual point $a\in\absT$ is intrinsic distance of $a$ to the root in $\absT$.
\end{enumerate}
For a Poisson-NBBRW $(\absT_a,\varphi_a)$ (1) guarantees that each bond is only used once by the process and (2) is the condition that whenever a point is visited by multiple individuals then only the first of them can have offspring. Thus, (1) and (2) together imply that the image $\varphi(\absT)$ is a LA. Here, for a LA $A$, we write $A=\varphi(\absT)$ if the embedding $\varphi$ of $\absT$ equals $A$.

Now we proceed as for the LT by defining
	\begin{align}
	\Qbold^{\sss (a)}_n(\absT,\varphi)=&\frac  1 {Z^{\sss (a)}_n} \prob(\absT,\varphi)\Ibold^{\sss (a)}(\absT,\varphi),
	\end{align}
where
	\eqn{
	Z^{\sss (a)}_n=\sum_{(\absT,\varphi)\colon |\absT|=n}\prob(\absT,\varphi)\Ibold^{\sss (a)}(\absT,\varphi).
	}
We next prove that the number of $n$-bond LAs containing the origin is given by
	\begin{align}
	\lbeq{Analysis-LA-Normalizer-Solved}
	t^{\sss (a)}_n(0) = Z^{\sss (a)}_n \e^{n+1}2d(2d-1)^{n-1}.
	\end{align}
This is analogous to the proof for LT in \refeq{Analysis-LT-Normalizer-Solved}, which the exception of the proof that
	\begin{align}
	\lbeq{Analysis-LA-Poisson-isOne}
	\sum_{(\absT,\varphi)\colon \varphi(\absT)=A} \Ibold^{\sss (a)}(\absT,\varphi) \prod_{i\in \absT}\frac 1 {\xi_i!}=1.
	\end{align}
We now explain how to adapt the argument so as to obtain \refeq{Analysis-LA-Poisson-isOne}. To prove this we define $d_x$ such that it corresponds to the number of offspring of the Poisson branching process. In contrast to the LT case, for LAs it can happen that $\varphi$ maps multiple individuals of $\absT$ to $x$. Therefore, it is not obvious how to choose $d_x$ such that $(\xi_i)_i$ and $(d_x)_x$ are equivalent.

For a LA $A$, and two vertices $x,y\in A$, we define $d_{A}(x,y)$ to be the intrinsic distance between $x$ and $y$, and call the distance $d_A(0,x)$ the {\em age} of $x$.  
For $N\in\Nbold$, we call the set of all vertices with age $N$ the {\em generation} $N$. We define an exploration process, that we use to define $(d_x)_x$, as follows:

\begin{enumerate}[1)]
\item We define $d_0$ to be the degree of $0$ in $A$;
\item We define $S$ to be the set of all vertices directly connected to the origin in $A$;
\item We define $B\subset A$ to be the subset of all bonds in $A$ that do not contain the origin;
\item Let $N$ be the age of the youngest vertex in $S$;
\item For the unique $x$ in the $N$th generation with the lowest lexicographic order:
\begin{enumerate}[i)]
\item we define $d_x$ to be the degree of $x$ in $B$;
\item we update $S$ by adding the endpoints of the bonds in $B$ adjacent to $x$, and give these endpoints an age that is equal to the age of $x$ plus one;
\item we remove $x$ from $S$ and the bonds that contain $x$ from $B$.
\end{enumerate}
\item If $S$ is non-empty, then we repeat the procedure starting in 4) with the updated sets $S$ and $B$.
\end{enumerate}
This procedure terminates when $d_x$ has been defined for all vertices of $A$.

Similarly to the LT case, we define $\nu(A)=\#\{(\absT,\varphi):\varphi(\absT)=A, \Ibold^{\sss (a)}(\absT,\varphi)=1\}$ and now show that
	\begin{align}
	\lbeq{Analysis-LA-Poisson-isOne-Tool}
	\nu(A)=\prod_{x\in A}d_x!,
	\end{align}
which implies \refeq{Analysis-LA-Poisson-isOne}.

We again prove \refeq{Analysis-LA-Poisson-isOne-Tool} by induction on the number $N$ of generations of $A$.
The number of generations corresponds to the age of the oldest particle in $A$. The identity \refeq{Analysis-LT-Poisson-isOne-Tool}
clearly holds for $N=0$. Our induction hypothesis is that \refeq{Analysis-LT-Poisson-isOne-Tool} holds if there are $N-1$ or fewer generations.

For a LT, the removal of the $d_0$ bonds adjacent to the origin splits the tree into $d_0$ unique, non-intersecting subtrees $T_1,\dots,T_{d_0}$.
Due to possible double connections in a LA, the removal of the bonds adjacent to the origin does not automatically split the animal into $d_0$ uniquely defined sub-animals $A_a$. We will now create a unique split of the animal $A$, that is consistent with the exploration process that we have used to define $(d_x)_x$.

For this, we first remove a certain set of bonds to create a LT $T\subseteq A$. Then we split the $T$ into $d_0$ subtrees as done above. In the last step we add each of the removed bonds to the subtrees to create a unique decomposition $A_1,\dots,A_{d_0}$. This procedure is visualized in Figure \ref{BoundLA-Figure-deconstructing-Animal}. In what follows, we explain how the set of bonds is being chosen. We fix a LA $A$.

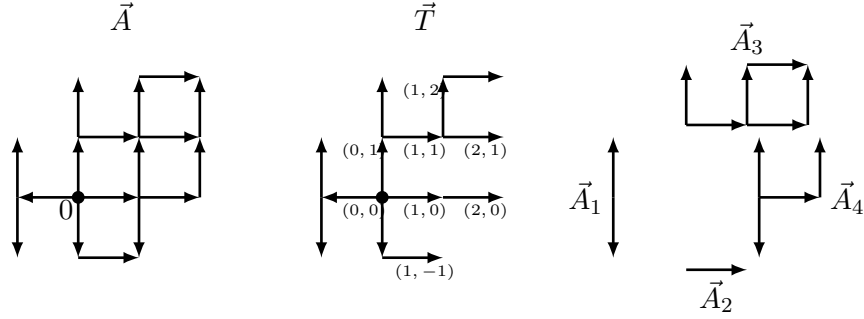
\begin{figure}[ht!]
\begin{center}
\begin{tikzpicture}[auto,scale=0.8]

\begin{scope}[shift={(0,0)},rotate=0]
\draw[->,line width=1 pt] (0,0) to (-1,0);
\draw[<-,line width=1 pt] (-1,1) to (-1,0);
\draw[<-,line width=1 pt] (-1,-1) to (-1,0);
\draw[->,line width=1 pt] (0,0) to (0,1);
\draw[->,line width=1 pt] (0,1) to (0,2);
\draw[->,line width=1 pt] (0,1) to (1,1);
\draw[->,line width=1 pt] (0,0) to (1,0);
\draw[->,line width=1 pt] (0,0) to (0,-1);
\draw[->,line width=1 pt] (0,-1) to (1,-1);
\draw[<-,line width=1 pt] (1,-1) to (1,0);
\draw[->,line width=1 pt] (1,0) to (2,0);
\draw[->,line width=1 pt] (1,1) to (2,1);
\draw[->,line width=1 pt] (1,2) to (2,2);
\draw[->,line width=1 pt] (1,0) to (1,1);
\draw[->,line width=1 pt] (2,0) to (2,1);
\draw[->,line width=1 pt] (1,1) to (1,2);
\draw[->,line width=1 pt] (2,1) to (2,2);

\fill (0,0) circle (3pt);

\node at(-0.2,-0.2)   {$0$};
\node at(0.7,3)   {$\vec A$};
\end{scope}

\begin{scope}[shift={(5,0)},rotate=0]
\draw[->,line width=1 pt] (0,0) to (-1,0);
\draw[<-,line width=1 pt] (-1,1) to (-1,0);
\draw[<-,line width=1 pt] (-1,-1) to (-1,0);
\draw[->,line width=1 pt] (0,0) to (0,1);
\draw[->,line width=1 pt] (0,1) to (0,2);
\draw[->,line width=1 pt] (0,1) to (1,1);
\draw[->,line width=1 pt] (0,0) to (1,0);
\draw[->,line width=1 pt] (0,0) to (0,-1);
\draw[->,line width=1 pt] (0,-1) to (1,-1);
\draw[->,line width=1 pt] (1,0) to (2,0);
\draw[->,line width=1 pt] (1,1) to (2,1);
\draw[->,line width=1 pt] (1,2) to (2,2);
\draw[->,line width=1 pt] (1,1) to (1,2);

\begin{scope}[shift={(-0.3,-0.23)},rotate=0]
{\tiny
\node at(0,1)   {$(0,1)$};
\node at(0,0)   {$(0,0)$};
\node at(1,-1)   {$(1,-1)$};
\node at(1,0)   {$(1,0)$};
\node at(1,1)   {$(1,1)$};
\node at(2,0)   {$(2,0)$};
\node at(1,2)   {$(1,2)$};
\node at(2,1)   {$(2,1)$};
}
\end{scope}

\fill (0,0) circle (3pt);
\node at(0.7,3)   {$\vec T$};
\end{scope}

\begin{scope}[shift={(10,0)},rotate=0]
\begin{scope}[shift={(-0.2,0)},rotate=0]
\draw[<-,line width=1 pt] (-1,1) to (-1,0);
\draw[<-,line width=1 pt] (-1,-1) to (-1,0);
\node[left] at(-1,0)   {$\vec A_1$};
\end{scope}

\begin{scope}[shift={(0,0.2)},rotate=0]
\draw[->,line width=1 pt] (0,1) to (0,2);
\draw[->,line width=1 pt] (0,1) to (1,1);
\draw[->,line width=1 pt] (1,1) to (2,1);
\draw[->,line width=1 pt] (1,1) to (1,2);
\draw[->,line width=1 pt] (2,1) to (2,2);
\draw[->,line width=1 pt] (1,2) to (2,2);
\node[above ] at(1,2)   {$\vec A_3$};
\end{scope}
\begin{scope}[shift={(0.2,0)},rotate=0]
\draw[->,line width=1 pt] (1,0) to (2,0);
\draw[<-,line width=1 pt] (1,-1) to (1,0);
\draw[->,line width=1 pt] (1,0) to (1,1);
\draw[->,line width=1 pt] (2,0) to (2,1);
\node[right] at(2,0)   {$\vec A_4$};
\end{scope}

\begin{scope}[shift={(0,-0.2)},rotate=0]
\draw[->,line width=1 pt] (0,-1) to (1,-1);
\node[below] at(0.5,-1)   {$\vec A_2$};
\end{scope}
\end{scope}


\end{tikzpicture}
\caption{The first picture shows how we orient the bonds of $A$ to create $\vec A$.
In the second picture we see the LT that is created by the removal of the bonds. The coordinates of the vertices that we compare in step (2) of the algorithm below are indicated. The last picture shows the result of the partition of $\vec A\setminus \{0\}$, which is the oriented graph obtained by removing the vertex 0 and all edges incident to it from $\vec A$.}
\label{BoundLA-Figure-deconstructing-Animal}
\end{center}
\end{figure}

\begin{enumerate}[(1)]
\item We orient all bonds of $A$ away from the origin in the intrinsic distance, see Figure \ref{BoundLA-Figure-deconstructing-Animal}, and denote by $\vec A$ the created LA of oriented bonds. Thus, the oriented bonds $b=(x,y)$ are such that the age of $y$ equals that of $x$ plus one (equal ages cannot happen due to the bipartite structure of the hypercubic lattice $\Z^d$);

\item For all $x\in \Zd$ which have multiple ingoing bonds we remove all ingoing bonds from $\vec A$ with the exception of the ingoing bond $(b_i=(\bb_i,\tb_i=x))_i$ that contains the starting point $\bb_i$ with the lowest lexicographic order;

\item We denote the resulting LA by $\vec T$ and the set of all removed oriented bonds by $\vec S$;

\item We know that $\vec A=\vec T\cup \vec S$ and that in fact $\vec T$ is a LT since step (3) leaves at most one ingoing bond for every $x$ (and precisely one for all vertices except for 0). Further, we see that $\vec T$ contains the same set of vertices as $A$, so that $\vec{T}$ can also be considered a spanning tree of the graph $A$;

\item We delete from $\vec T$ all bonds incident to the origin and denote the created LTs by $\vec T_1,\dots,\vec T_{d_0}$;
\item For $i=1,\dots, d_0$, we define
	\begin{align}
	\vec A_i=\vec T_i\cup\{ b\in \vec S \mid \bb\in \vec T_i\},
	\end{align}
and $A_i$ to be the LA obtained by removing the orientation of the bonds of $\vec A_i$.
\end{enumerate}
We regard $A_{i}$ as rooted at the unique vertex in it that was directly connected to the origin in $A$.
We see that for all $i=1,\dots,d_0$ and $x\in A_i$, the number $d_x$ corresponds to the number of outgoing edges of $x$ in $A_i$.
We note that for all bonds $b\in \vec S\cup \vec A_i$, the end-vertex $\tb$ has no outgoing bonds. This is related to the condition $(2)$ of the definition of $\Ibold^{\sss (a)}(\absT,\varphi)$.

It suffices to show that $\nu(A)=d_0!\prod_{a=1}^{d_0}\nu(A_{a})$, since each $A_{i}$ has at most $N-1$ generations. We note that each $(\absT,\varphi)$ with $\varphi(\absT)=A$ induces a Poisson-NBBRW $(\absT_{a},\varphi_{a})$ such that $\varphi_a(\absT_{a})=A_{a}$. This correspondence is $d_0!$ to $1$, since $(\absT,\varphi)$ is determined by the set of $(\absT_{a},\varphi_{a})$, up to permutations of the branches of $\absT$ at its root. This proves $\nu(A)=d_0!\prod_{a=1}^{d_0}\nu(A_{a})$. This completes the proof of \refeq{Analysis-LA-Normalizer-Solved}.

Knowing that \refeq{Analysis-LA-Normalizer-Solved} holds, we repeat the steps between \refeq{Analysis-LT-Lemma-laststep} and \refeq{Analysis-LT-Lemma-nextto-laststep} for LAs, and obtain the claimed bound on $t^{\sss (a)}_n(0)$, which completes the proof of Lemma \ref{lemmaAnalysisLABound}.
\end{proof}


\section{Extension to all dimensions $d\geq 30$}
\label{sec-monoton-inD}

In this section we prove that when a modified version of our analysis is applied successfully in some dimension $d'$, which we choose to be $d'=30$, then this analysis will also succeed in {\em any} dimension $d\geq d'$. For this, we need to modify the analysis somewhat, so that the bounds that apply in dimension $d'$ will be guaranteed to apply to any dimension $d\geq d'$. The basic ideas, that also guide the structure of this appendix, are the following:
\begin{itemize}
\item[(1)] In Appendix \ref{sec-monoton-inD-1} we prove that all SRW integrals that are used in our bounds for $d\geq d'$ are monotone decreasing in $d$.
\item[(2)] In Appendix \ref{sec-monoton-inD-2} we prove that the contributions of short explicit paths within our bounds are all monotone decreasing in $d$.
\item[(3)] In Appendix \ref{sec-monoton-inD-3} we argue that the first two observations imply that the bounds on all simple diagrams are monotone decreasing in $d$.
This will imply that all bounds on the coefficients $\beta_{\bullet}$, as summarized in Section \ref{secBoundsSummary}, are monotone decreasing in $d$.
\item[(4)] In Appendix \ref{sec-monoton-inD-4} we show that, since our bounds on the coefficients are all monotone in the dimension $d$, all bounds on the coefficients arising in the rewrite used in the general analysis \cite{FitHof13b}, are also monotone.
\item[(5)] In Appendix \ref{sec-monoton-inD-5} we describe how the above steps imply that whenever the bootstrap succeeds with a given set of parameters $(\gamma_i,\Gamma_i)_{i=1}^3$
in $d=d'$, then it also succeeds in {\em all} dimension $d\geq d'$. This in turn implies our results in all $d\geq d'$.
\end{itemize}
We emphasize that the bounds derived here are slightly worse than the ones used for a specific dimension. This is due to the fact that we do not know that {\em all} SRW integrals that we rely upon in the bounds for explicit dimensions are monotone decreasing. Thus, we need to rely on the subset of SRW integrals that we can prove to be monotone decreasing.

We implement these bounds, which are provably monotone in $d$, using the notebook \verb|LAmonotone|.
Before starting this, let us clarify the notion of monotonicity in $d$:

\begin{definition}[Monotonicity in dimension]
\label{def-mon-dim}
Let $g^{\sss(d)}\colon \Z^d\to \R$ be a family of functions. Recall that we say that $g^{\sss(d)}(x^{\sss(d)})$ is monotone in $d$ when $g^{\sss(d+1)}((x^{\sss(d)},0))\leq g^{\sss(d)}(x^{\sss(d)})$, where $(x^{\sss(d)},0)=(x_1, \ldots, x_d,0)$ is $x^{\sss(d)}$ appended with an extra coordinate that takes the value zero.
\end{definition}
Prominent examples of such families are the critical SRW Green's function in $d$ dimensions $C_{1/(2d)}^{\sss(d)}$, for which we investigate its monotonicity for $x^{\sss(d)}$ equal to the origin, as well as some points close to the origin.

\subsection{SRW Integrals}
\label{sec-monoton-inD-1}
We start by showing that all the bounds on the SRW integrals $(I_{n,m},K_{n,m},T_{n,m},U_{n})$ and $\Isupx_{n,l}(x)$ that we use in our analysis can be uniformly bounded in $d$, using bounds that are monotone decreasing in $d$. We emphasize that we do not claim that all integrals, such as $U_n$, are monotone decreasing in $d$, only that the bounds that we rely upon are.
The reason for this distinction is that the only monotonicity that we can actually prove is given by the two lemmas below:

\begin{lemma}[Monotonicity of SRW Green's function in $d$]
\label{Lemma-Mono-SRW}
Let $|x|_{\sss \infty}$ be the supremum norm of $x\in \Z^d$. Then, for every $n\geq 1$ and $x^{\sss(d)}\in \Z^d$ with $|x^{\sss(d)}|_{\sss \infty}\leq 2$,
	\eqn{
	I_{n,0}^{\sss(d)}(x^{\sss(d)})\geq I_{n,0}^{\sss(d+1)}(x^{\sss(d+1)}),
	}
where $x^{\sss(d+1)}_i=x^{\sss(d)}_i$ for $i\in\{1,\dots,d\}$, while $x^{\sss(d+1)}_{d+1}=0$.
\end{lemma}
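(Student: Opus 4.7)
The plan is to reduce the claim for general $n$ to a pointwise statement about individual SRW transition probabilities. Using the Fourier expansion $(1-\hat D^{(d)}(k))^{-n}=\sum_{m\ge 0}\binom{m+n-1}{n-1}\hat D^{(d)}(k)^{m}$ and Fourier inverting, one obtains
\begin{equation}
I_{n,0}^{(d)}(x)=\sum_{m=0}^{\infty}\binom{m+n-1}{n-1}D^{\star m}(x),
\end{equation}
in which the weights $\binom{m+n-1}{n-1}$ are independent of $d$ and every summand is non-negative. It therefore suffices to establish the pointwise bound
\begin{equation}\label{eq-mono-plan}
D^{(d+1)\star m}((x^{\sss(d)},0))\leq D^{(d)\star m}(x^{\sss(d)}),\qquad m\ge 0,\ x^{\sss(d)}\in\Z^d,\ |x^{\sss(d)}|_\infty\le 2,
\end{equation}
after which summation against the common non-negative weights yields the lemma.

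To prove \eqref{eq-mono-plan} I will pass to the exponential generating function in $m$. With $W^{(d)}_m(x)=(2d)^m D^{(d)\star m}(x)$ the number of $m$-step SRW paths from $0$ to $x$, the classical Bessel identity
\begin{equation}
\sum_{m=0}^{\infty}\frac{t^m}{m!}W^{(d)}_m(x)=\prod_{i=1}^{d}I_{|x_i|}(2t),
\end{equation}
combined with the substitutions $t\mapsto dt$ on the left side of \eqref{eq-mono-plan} and $t\mapsto (d+1)t$ on the right side, converts \eqref{eq-mono-plan} into the coefficient-wise Bessel inequality
\begin{equation}\label{eq-mono-bessel}
[t^m]\!\left(I_0(2dt)\prod_{i=1}^{d}I_{|x_i|}(2dt)\right)\leq [t^m]\prod_{i=1}^{d}I_{|x_i|}(2(d+1)t),\qquad m\ge 0.
\end{equation}
Expanding $I_0(s)=\sum_{k\ge 0}s^{2k}/((k!)^2 4^k)$ and setting $b_j(y):=[t^j]\prod_i I_{|y_i|}(t)$, inequality \eqref{eq-mono-bessel} is equivalent to
\begin{equation}\label{eq-mono-polyn}
\Big(\tfrac{d}{d+1}\Big)^{m}\sum_{k\ge 0}\frac{1}{(k!)^{2}4^{k}}\,\frac{b_{m-2k}(x^{\sss(d)})}{b_m(x^{\sss(d)})}\leq 1.
\end{equation}

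The main obstacle is that \eqref{eq-mono-polyn} cannot be read off term by term: the $k=0$ contribution alone equals $(d/(d+1))^m$, which differs from $1$ only by an amount of order $m/(d+1)$, so the positive contributions from $k\ge 1$ must be absorbed by this small deficit. For $|x^{\sss(d)}|_\infty\le 2$ the relevant products involve only $I_0,I_1,I_2$, whose Taylor coefficients are known in closed form, so $b_j(x^{\sss(d)})$ is a polynomial in $d$ of degree at most $j/2$ with explicit coefficients. The plan is to verify \eqref{eq-mono-polyn} by clearing denominators and comparing polynomials in $d$ coefficient by coefficient: the leading-order terms match on both sides (both scale like $d^{-m}\cdot d^{\lfloor m/2\rfloor}$), so the inequality reduces to a statement about lower-order corrections that can be checked by induction on $m$, using the recursion $b_m(x^{\sss(d)})=\sum_i b_{m-|x_i|-2j}(x^{\sss(d)}\setminus\{x_i\})/(j!(j+|x_i|)!4^{2j+|x_i|})$ for the Bessel product. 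The restriction $|x^{\sss(d)}|_\infty\le 2$ enters here because it keeps the Bessel indices bounded, limiting the comparison to finitely many explicit Taylor expansions; for $|x^{\sss(d)}|_\infty\ge 3$ the same reduction goes through formally but the explicit polynomial inequalities no longer close within the three functions $I_0,I_1,I_2$ and would require a separate Bessel-function comparison that is not needed for the present application.
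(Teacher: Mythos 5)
Your reduction to the pointwise bound $D^{(d+1)\star m}((x^{\sss(d)},0))\leq D^{(d)\star m}(x^{\sss(d)})$ for every $m\ge 0$ is a valid sufficient condition, since the weights $\binom{m+n-1}{n-1}$ are non-negative and do not depend on $d$. But that reduces the lemma to a strictly stronger claim that you then do not establish: the ``clear denominators and compare polynomials in $d$ coefficient by coefficient'' step fails. Take $x=0$ and $m=6$. Using $b_0(0)=1$, $b_2(0)=d/4$, $b_4(0)=d(2d-1)/64$ and $b_6(0)=d(6d^2-9d+4)/2304$, your target inequality, after multiplying through by $d^6$ and clearing the common factor $1/2304$, reads
\begin{equation*}
d^6\bigl(6d^3+9d^2+4d+1\bigr)\;\leq\; d\,(d+1)^6\,\bigl(6d^2-9d+4\bigr),
\end{equation*}
and the difference (right minus left) equals $18d^8+36d^7+8d^6-30d^5-19d^4+12d^3+15d^2+4d$, whose coefficients at $d^5$ and $d^4$ are \emph{negative}. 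Coefficient-by-coefficient comparison of polynomials in $d$ therefore cannot certify the inequality — it does hold for all integers $d\ge 1$, but for a reason your plan does not capture — and the proposed ``induction on $m$'' is left with no mechanism. The recursion you quote for $b_m$ should also run over the Bessel index $j$ rather than over $i$, with weights $2^{-(2j+|x_i|)}$ rather than $4^{-(2j+|x_i|)}$; these slips suggest the plan was not pushed far enough to surface the sign problem above.

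The paper avoids the pointwise-in-$m$ reduction entirely. It writes $I_{n,0}^{\sss(d)}(x)=\frac{1}{(n-1)!}\int_0^\infty t^{n-1}\prod_{\mu}f_{|x_\mu|}(t/d)\,dt$, factorizes the integrand as $f_0(t/d)^d\,(f_1(t/d)/f_0(t/d))^{\ell_1}(f_2(t/d)/f_0(t/d))^{\ell_2}$, and proves each factor is monotone decreasing in $d$ for every fixed $t$: the power $f_0^d$ by $L_p$-norm monotonicity, and the ratios $f_N/f_0$ for $N\in\{1,2\}$ by computing the $z$-derivative as a covariance of two monotone functions of a single random variable built from $\cos\theta$. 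That monotonicity-in-$t$ of the integrand does the work; the restriction $|x|_\infty\le 2$ enters because the covariance argument is only verified for $N\le 2$, not because Taylor coefficients become implicit. If you want to push the pointwise route through, you would need a ratio-monotonicity or log-concavity statement in this spirit, not a term-by-term comparison of Taylor expansions.
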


\begin{lemma}[Monotonicity of SRW Green's function in $x$]
\label{Lemma-Mono-SRW-x}
For any positive integer $n\geq 1$, the function $x\mapsto I_{n,0}(x)$ is monotone decreasing in each $|x|_\mu$ with $\mu=1,2,\dots, d$.
\end{lemma}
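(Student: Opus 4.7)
The plan is to derive the monotonicity of $I_{n,0}(x)$ in $|x|_\mu$ from two ingredients: an expansion of $I_{n,0}(x)$ as a non-negative linear combination of iterated convolutions of the nearest-neighbor step distribution $D$, and the analogous monotonicity of those convolution powers in $|x|_\mu$. I would proceed in three steps.

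First, I would write out the Fourier representation of $I_{n,0}(x)$ from \cite[Appendix B]{HarSla92a} (see also \cite[Section 5]{FitHof13b}) and expand the integrand as a geometric series in $\hat D(k)$. Integrating termwise gives a representation of the form
\[
I_{n,0}(x)=\sum_{j\ge 0}c_{n,j}\,D^{\star j}(x),\qquad c_{n,j}\ge 0,
\]
where the $c_{n,j}$ are combinatorial (e.g.\ $\binom{j+n-1}{n-1}$-type) coefficients depending only on $n$ and $j$. The non-negativity of the $c_{n,j}$ is the key structural fact and is immediate once the expansion is written down.

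Second, I would prove that this full generating function is non-increasing in each $|x_\mu|$ by a reflection-principle argument at the level of walks. Fix $\mu$ and $x$ with $x_\mu\ge 0$. Given a $j$-step nearest-neighbor walk from $0$ to $x+e_\mu$, let $\tau$ be the first time its $\mu$-coordinate equals $x_\mu+1$; the step at time $\tau$ is then necessarily $+e_\mu$. Reflect the walk in the hyperplane $H=\{y:y_\mu=x_\mu+\tfrac12\}$ after time $\tau$ and remove the now-redundant coincidence at time $\tau$, producing a $(j-1)$-step walk that ends at the reflected image of $x+e_\mu$, namely $x$. A short case analysis on the step at time $\tau+1$ (either $\pm e_\mu$ or $\pm e_\nu$ with $\nu\neq \mu$) shows that the output is a legal nearest-neighbor walk and that the map is injective. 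Assembling the resulting inequalities into the full sum, using the coefficients $c_{n,j}$, yields $I_{n,0}(x+e_\mu)\le I_{n,0}(x)$.

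Third, combining this with the $x\mapsto -x$ symmetry of $I_{n,0}$ (which is built into its Fourier representation) gives the claimed monotonicity in $|x_\mu|$, as well as the same statement for every other coordinate by the full symmetry of $\Z^d$.

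The main obstacle is a parity mismatch: $D^{\star j}(x)$ and $D^{\star j}(x+e_\mu)$ are never simultaneously non-zero, so there is no term-by-term comparison, and the reflection-deletion map necessarily decreases the walk length by one. One must therefore track how the coefficients telescope under the shift $c_{n,j}\to c_{n,j+1}$ and verify the compatibility $(2d)\,c_{n,j}\ge c_{n,j+1}$ required for the length-shift bound to assemble into the claimed inequality for the series. A cleaner alternative, which I would fall back on if the direct route proves too cumbersome, is to replace $D$ by the lazy step distribution $\tilde D=(D+\delta_0)/2$: its convolution powers have no parity issue, their monotonicity in $|x_\mu|$ follows from the reflection argument by a direct induction on $j$, and the conclusion transfers back to $I_{n,0}$ via a routine change of variables in the Fourier integral.
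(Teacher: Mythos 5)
Your main route breaks at the reflection step. The map you describe (reflect after the first hitting time $\tau$ of level $x_\mu+1$ and delete the step at time $\tau$) is well defined, but it is \emph{not} injective, and the inequality it is meant to produce, $p_j(x+e_\mu)\le p_{j-1}(x)$, is false. Already in $d=1$ with $x=0$ and $j=3$ one has $p_3(1)=3$ (the walks $++-$, $+-+$, $-++$) while $p_2(0)=2$; under your map both $(0,1,2,1)$ and $(0,-1,0,1)$ are sent to the same two-step walk $(0,-1,0)$. The same failure occurs in every dimension (e.g.\ $p_3(e_1)=9>4=p_2(0)$ in $d=2$), so no compatibility condition on the coefficients $c_{n,j}$ can rescue the termwise comparison as you set it up; one would need an interleaving bound of the form $D^{\star j}(x+e_\mu)\le \alpha\, D^{\star(j-1)}(x)+\beta\, D^{\star(j+1)}(x)$, which your argument does not supply. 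Your lazy-walk fallback is the right idea, but the claim that the monotonicity of $\tilde D^{\star j}$ in $|x_\mu|$ ``follows from the reflection argument by a direct induction on $j$'' also does not close: conditioning on the last step, the induction hypothesis handles the directions $\nu\neq\mu$, but the $\pm e_\mu$ steps at $x_\mu=0$ require $\tilde D^{\star j}(x)+\tilde D^{\star j}(x+2e_\mu)\le 2\tilde D^{\star j}(x+e_\mu)$, a discrete midpoint-concavity statement strictly stronger than the unimodality you are inducting on.

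The missing idea is a representation of $I_{n,0}$ that factorizes over coordinates, and this is exactly how the paper (quoting \cite[Lemma B.3]{HarSla92b}) proceeds: writing $(1-\hat D(k))^{-n}=\frac{1}{(n-1)!}\int_0^\infty t^{n-1}\e^{-t(1-\hat D(k))}\,dt$ and performing the $k$-integral coordinate by coordinate gives $I_{n,0}(x)=\frac{1}{(n-1)!}\int_0^\infty t^{n-1}\prod_{\mu=1}^d f_{|x_\mu|}(t/d)\,dt$ with $f_N$ as in \eqref{fN-def}, i.e.\ $\e^{-z}$ times the modified Bessel function of order $N$; the lemma is then immediate from $f_0(z)\ge f_1(z)\ge f_2(z)\ge\cdots\ge 0$ for $z>0$. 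Your lazy-walk route can be repaired in the same spirit without any reflection: since $\hat{\tilde D}(k)=\frac1d\sum_i\cos^2(k_i/2)$, the multinomial theorem writes $\tilde D^{\star j}(x)$ as a non-negative combination of products $\prod_i 2^{-2j_i}\binom{2j_i}{j_i+x_i}$, each factor of which is manifestly decreasing in $|x_i|$. Either way, the one-dimensional unimodality has to be exhibited through an explicit product formula; the reflection bijection you are hoping for does not exist.
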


Lemma \ref{Lemma-Mono-SRW-x} is \cite[Lemma B.3]{HarSla92b} and follows from the monotonicity of Bessel functions,
see \eqref{fN-def}. Lemma \ref{Lemma-Mono-SRW} is already proven for all $x$ with $|x^{\sss(d)}|_{\sss \infty}\leq 1$, in the proof of \cite[Lemma C.1]{HarSla92b}. For this reason we copy parts of the proof almost verbatim from \cite[Lemma C.1]{HarSla92b}. For $|x^{\sss(d)}|_{\sss \infty}=2$, we only need to add some additional ideas.

\proof Equation \cite[(B.2)]{HarSla92b}, states, after a rescaling of the $t$ variable, that
	\eqn{
	I_{n,0}^{\sss(d)}(x^{\sss(d)})=\frac{1}{(n-1)!} \int_0^\infty dt\ t^{n-1} \prod_{\mu=1}^d f_{|x^{\sss(d)}_i|}(t/d)dt,
	}
where
	\eqn{
	\label{fN-def}
	f_{N}(z) = \int_{-\pi}^{\pi} \frac{d\theta}{2\pi} \e^{-z[1-\cos(\theta)]} \cos(N\theta).
	}
Let $\ell_1$ be the number of coordinates equal to $\pm 1$ in $x^{\sss(d)}$, and $\ell_2$ the number of coordinates equal to $\pm2$ in $x^{\sss(d)}$. Then,
	\eqn{
	I_{n,0}^{\sss(d)}(x^{\sss(d)})=\frac{1}{(n-1)!} \int_0^\infty t^{n-1} f_{0}(t/d)^d \Big(\frac{f_{1}(t/d)}{f_{0}(t/d)}\Big)^{\ell_1} \Big(\frac{f_{2}(t/d)}{f_{0}(t/d)}\Big)^{\ell_2} dt.
	}
Denoting the $L_p(-\pi,\pi)$ norm of $g_s(\theta):=\exp(-s(1-\cos\theta))$ by $\|g_s\|_p$, we obtain by \eqref{fN-def} that
$f_0(s/d)^d=\|g_s\|_{1/d}$, which is monotone decreasing in $d$ for fixed $s$. This already yields the statement for $x^{\sss(d)}=0$.

As shown by Hara and Slade \cite[Proof of Lemma C.1]{HarSla92b}, the factor $f_1(s/d)/f_0(s/d)$ is monotone decreasing in $d$, for fixed $s$, because $z\mapsto f_1(z)/f_0(z)$ is monotone increasing in $z>0$.
This can be seen from a direct calculation of its derivative, which gives
	\eqn{
	\frac d {dz} \Big(\frac{f_{1}(z)}{f_{0}(z)}\Big)=\langle\cos^2(\theta)\rangle-\langle\cos(\theta)\rangle^2\geq 0,
	}
where
	\eqn{
	\langle f(\theta)\rangle=\frac {\int_{0}^{2\pi} d\theta\ f(\theta) \e^{z \cos\theta} }{\int_{0}^{2\pi} d\theta \e^{z \cos\theta}}.
	}
This proves the statement for all $x^{\sss(d)}$ with $|x^{\sss(d)}|_{\sss \infty}=1$, and starts the point where we extend \cite[Lemma C.1]{HarSla92b}.

For even $N$, we use the periodicity of $\cos(\theta)$ and $\cos(N\theta)$, as well as $\cos(N\theta)=\cos(N(\theta+\pi))$ and \eqref{fN-def}, to see that
	\begin{align*}
	f_{2}(z) &=\e^{-z}\int_{-\pi/2}^{3\pi/2} \frac{d\theta}{2\pi} \e^{z\cos(\theta)} \cos(N\theta)\\
	&=\e^{-z}\int_{-\pi/2}^{\pi/2} \frac{d\theta}{2\pi} \big[\cos(N\theta)\e^{z\cos(\theta)}+\cos(N(\theta+\pi))\e^{z\cos(\theta+\pi)})\big]\\
	&=\e^{-z}\int_{-\pi/2}^{\pi/2} \frac{d\theta}{2\pi} \cos(N\theta) (\e^{z\cos(\theta)}+\e^{-z\cos(\theta)})
	=\e^{-z} \int_{-\pi/2}^{\pi/2} \frac{d\theta}{\pi} \cosh(z\cos(\theta)) \cos(Nz).
	\end{align*}
For odd $N$, the same can be derived where $\cosh$ is replaced by $\sinh$, as $\cos(N\theta)=-\cos(N(\theta+\pi))$ for odd $N$, but we will not rely on this fact.

Using a uniform random variable $U$ on $[-\pi/2,\pi/2]$, we can rewrite the above as
	\eqan{
	\frac{d}{d z} \frac{f_{2}(z)}{f_{0}(z)}
	&=\frac{\expec[\sinh(z\cos(U)) \cos(2U)\cos(U)]}{\expec[\cosh(z\cos(U))]}\\
	&\qquad-\frac{\expec[\cosh(z\cos(U)) \cos(2U)]\expec[\sinh(z\cos(U))\cos(U)]}{\expec[\cosh(z\cos(U))]^2}.\nn
	}
Using the random variable $X$ with density $f_{\sss X}(z)=\cosh(zx)/\int_0^\pi \cosh(zy)dy$, we can simplify
	\eqn{
	\frac{d}{d z} \frac{f_{2}(z)}{f_{0}(z)}=\mathrm{Cov}(\tanh(z\cos(X)) \cos(X),\cos(2X)).
	}
Then, we note that $\cos(2x)=2\cos(x)^2-1$, so
	\eqan{
	\frac{d}{d z} \frac{f_{2}(z)}{f_{0}(z)}&=\mathrm{Cov}(\tanh(z\cos(X)) \cos(X),2\cos(X)^2-1)\nn\\
	&=2\mathrm{Cov}(\tanh(z\cos(X)) \cos(X),\cos(X)^2).
	}
Finally, let $Y=\cos(X)$, which is non-negative as $X\in [-\pi/2,\pi/2]$, to conclude
	\eqn{
	\frac{d}{d z} \frac{f_{2}(z)}{f_{0}(z)}=2\mathrm{Cov}(\tanh(zY) Y,Y^2).
	}
Since both $y\mapsto \tanh(zy) y$ as well as $y\mapsto y^2$ are increasing on $[0,1]$, the above covariance is non-negative. Here we use that $\mathrm{Cov}(f(Y),g(Y))\geq0$ when both $f$ and $g$ are increasing on the support of $Y$, with strict inequality when both are not constant.
\qed
\medskip

In what follows, {\em every} bound on SRW-integrals that we use is concluded from
the monotonicity in $d$ of $I_{n,0}(x)$ of Lemma \ref{Lemma-Mono-SRW}, as well as the monotonicity in $|x_\mu|$ of $I_{n,0}(x)$ in Lemma \ref{Lemma-Mono-SRW-x}.
In this way all these bounds will inherit their monotonicity from these $I_{n,0}(x)$.
To compute these other bounds, we use that
	 \eqn{
 	I_{n,m}(x)= (D \star I_{n,m-1}(x))=\sum_{\iota} I_{n,m-1}(x+\ve[\iota]).
 	}
and apply this bound iteratively starting from $m=1$ until $m=10$ to create our bounds.

As we know that $I_{n,m}(x)$ is monotone in $d$ only for $m=0$ and appropriate values of $x$,
it is a priori not clear that these bounds are also monotone decreasing in $d$.
For this reason, let us create a bound that is monotone decreasing by construction.
For $x\in\Zd$, let
	\begin{align*}
	\underline{x}:=(\underline{x}_i)_{i=1}^d,
	\qquad
	\text{where}
	\qquad
	\begin{cases}
	\underline{x}_i=x_i&\text{ when }|x_i|\leq 2,\\
	\underline{x}_i=2 &\text{ otherwise}.
	\end{cases}
	\end{align*}
Thus, $\underline{x}$ is the point that is closest to $x$ and is still covered by Lemma \ref{Lemma-Mono-SRW}.
By Lemma \ref{Lemma-Mono-SRW-x}. we know that $I_{n,0}(x)\leq I_{n,0}(\underline{x})$, which allows us to use monotone bounds.

Let us now explain how we use the above ideas to derive monotone bounds on $I_{n,m}(x)$ for every $n,m,x$. We define $\bar I_{n,0}(x):= I_{n,0}(\underline{x})$ and, for $m\geq 1$, define $\bar I_{n,m}(x)$ recursively by
	\begin{align*}
	\bar I_{n,m}(x):=\sum_{\iota} \bar I_{n,m-1}(x+\ve[\iota]).
	\end{align*}
To finalize the bounds we need to show that
	\begin{align}
	\lbeq{equation-Inmbar}
	I_{n,m}(x)\leq \bar I_{n,m}(x),
	\end{align}
which we will do by induction on $m$.

For $m=0$, the inequality \refeq{equation-Inmbar} holds by definition and Lemma \ref{Lemma-Mono-SRW-x}.
For $m\geq 1$, we note that
	\begin{align*}
	I_{n,m}(x)= (D \star I_{n,m-1}(x))=\sum_{\iota} I_{n,m-1}(x+\ve[\iota])\leq \sum_{\iota} \bar I_{n,m-1}(x+\ve[\iota])=\bar I_{n,m}(x),
	\end{align*}
where the inequality follows from \refeq{equation-Inmbar} for $m-1$. This advances the induction, and thus proves \refeq{equation-Inmbar}. It also concludes our construction of bounds that are monotone in $d$, as $\bar I_{n,m}$ is monotone in the sense of Definition \ref{def-mon-dim}.

Using these monotone bounds on $I_{n,m}(x)$ we conclude bounds on $L_n(x)$, as given in \cite[(5.17)]{FitHof13b}, e.g.
	\begin{align*}
	L_n(\ve[1]+\ve[2])=& \frac {(d-2)(d-3)}{d(d-1)} I_{n,0}(\ve[1]+\ve[2]+\ve[3]+\ve[4])\\
	&+\frac {d-2}{2d(d-1)}\left(I_{n,0}(\ve[1]+\ve[2])+I_{n,0}(2\ve[1]+\ve[2]+\ve[3])\right)\\
	&+\frac {1}{4d(d-1)}\left(I_{n,0}(0)+I_{n,0}(2\ve[1]+2\ve[2])+2I_{n,0}(2\ve[1])\right)\\
    	\leq& \bar I_{n,0}(\ve[1]+\ve[2]+\ve[3]+\ve[4])+\frac {1}{2d}\left(\bar I_{n,0}(\ve[1]+\ve[2])+\bar  I_{n,0}(2\ve[1]+\ve[2])\right)\\
	&+\frac {1}{4d(d-1)}\left(\bar  I_{n,0}(0)+I_{n,0}(2\ve[1]+\ve[2])+2\bar I_{n,0}(2\ve[1])\right),
	\end{align*}
which again, by the monotonicity in the coordinates of $I_{n,0}(x)$, creates a bound on $L_n(x)$ that is monotone decreasing in $d$ as well.
Using the same idea and the methods in \cite[Section 5.2]{FitHof13b}, we can construct bounds on $K_{n,m}(x)$, $T_{n,m}(x)$, $V_{n,m}$, $T_{n,m}(x)$, $ \Isupx_{n,l}(x)$.
These bounds inherit their monotonicity in $d$ from the $\bar I_{n,m}(x)$.
Note that we also remove all terms that are subtracted, e.g. $2I_{n,l}(2\ve[1])$ in \cite[(5.25)]{FitHof13b}, and
use only points $x$ covered by Lemma \ref{Lemma-Mono-SRW}, by exploiting the monotonicity of $\bar{I}_{n,m}(x)$ in $x$.
All this can be found in the second section of the (very readable) \verb|LAMonotone| Mathematica notebook,
which can be downloaded from \cite{FitNoblePage}, in its original form, as well as in pdf-format.

\subsection{Monotonicity of small-step paths}
\label{sec-monoton-inD-2}
A corner stone of our numerical bounding procedure is to extract explicit short contributions from our diagrams,
for which we can use the repulsiveness of connections involved. For this step we require that
	$\tfrac {c_n(x)}{(2d-1)^n}$
is monotone decreasing in $d$, where $c_n$ is the size of a collection of paths, such as simple random walk, self-avoiding walk, or bond-self-avoiding walk paths, of $n$ steps, starting at the origin and ending at $x\in \Z^d$. Our argument applies to each of these collections.

\begin{lemma}[Monotonicity of small-step paths]
\label{Lemma-SmallSteps}
Fix $n$ and $x$. Let $c_n(x)$ be the number of $n$-step paths from 0 to $x$ of any type and $\mathrm{dim}(x)$ be the dimensionality of $x$, i.e.,
	\begin{align*}
	\mathrm{dim}(x)&:=\#\{\text{\rm{non-zero coordinates of $x$}}\}.
	\end{align*}
Then, $d\mapsto c_n(x)/(2d-1)^n$ is monotonically decreasing in $d$ for
$2d\geq \max\{2\mathrm{dim}(x)+n-|x|_1, 2n-2\}$.
\end{lemma}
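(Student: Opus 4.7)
The plan is to establish the monotonicity one dimension at a time, i.e.\ to show that
\[
\frac{c_n^{\sss(d+1)}(x,0)}{(2d+1)^n}\leq\frac{c_n^{\sss(d)}(x)}{(2d-1)^n},
\]
where $(x,0)\in\Z^{d+1}$ is the embedding of $x\in\Z^d$. Any $n$-step path in $\Z^{d+1}$ ending at $(x,0)$ takes some even number $2k$ of steps in the new $(d+1)$st direction, equally split between $+\ve[d+1]$ and $-\ve[d+1]$. Grouping paths by $k$ and by the positions of those $2k$ steps yields a decomposition
\[
c_n^{\sss(d+1)}(x,0)=\sum_{k=0}^{\lfloor n/2\rfloor} W_k\,c_{n-2k}^{\sss(d)}(x),
\]
where $W_k$ counts the admissible interleavings of $k$ positive and $k$ negative $\ve[d+1]$-steps with an arbitrary $(n-2k)$-step walk of the given type in $\Z^d$. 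For each walk type that enters our bounds (simple random walks, non-backtracking walks, and (bond-)self-avoiding walks), removing the $2k$ new-dimension steps from an admissible $(d+1)$-dimensional path always produces an admissible $d$-dimensional path, so one has the uniform crude estimate $W_k\leq\binom{n}{2k}\binom{2k}{k}$.

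The monotonicity is therefore equivalent to
\[
\sum_{k=0}^{\lfloor n/2\rfloor}\binom{n}{2k}\binom{2k}{k}\,\frac{c_{n-2k}^{\sss(d)}(x)}{c_n^{\sss(d)}(x)}\leq\Big(\frac{2d+1}{2d-1}\Big)^n.
\]
I would control the ratios $c_{n-2k}^{\sss(d)}(x)/c_n^{\sss(d)}(x)$ by the analogous step-count decomposition within $\Z^d$: every $n$-step walk from $0$ to $x$ consists of the $|x|_1$ essential steps plus $(n-|x|_1)/2$ pairs of canceling steps, which may be distributed over the $d-\dim(x)$ coordinates orthogonal to $x$ (or among the used ones). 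A multinomial bookkeeping then yields an estimate of the form $c_{n-2k}^{\sss(d)}(x)/c_n^{\sss(d)}(x)\leq C_{n,k,x}/(2d-1)^{2k}$, where $C_{n,k,x}$ is an explicit combinatorial constant independent of $d$. The first half of the dimension hypothesis, $2d\geq 2\dim(x)+n-|x|_1$, is exactly what is needed so that $d-\dim(x)\geq (n-|x|_1)/2$, so that each pair of canceling steps has somewhere to go; the second half, $2d\geq 2n-2$, is what makes $(2d+1)/(2d-1)$ close enough to $1$ for the geometric series on the right to dominate the binomial series on the left term by term.

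Combining the two estimates, the inequality reduces to a finite binomial comparison whose leading $k=0$ term is an equality and whose higher-order corrections are $O(1/d^k)$. The main obstacle is the bookkeeping needed to turn the above informal multinomial estimates into clean pointwise bounds that hold uniformly across walk types: for non-backtracking and (bond-)self-avoiding walks one must verify that inserting or removing a pair of canceling $\pm\ve[d+1]$-steps is compatible with the avoidance constraint, which holds because such a pair is automatically vertex-disjoint from the remainder of the walk whenever no other step uses coordinate $d+1$. Once this compatibility is in hand, both $W_k$ and the $d$-dimensional ratio admit the same multinomial bounds as in the SRW case, and the term-by-term comparison of binomial coefficients closes the proof.
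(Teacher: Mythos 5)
Your approach is genuinely different from the paper's, and as written it has two gaps that I do not see how to close. First, the decomposition $c_n^{\sss(d+1)}(x,0)=\sum_k W_k\,c_{n-2k}^{\sss(d)}(x)$ rests on the claim that deleting the $2k$ steps in the new coordinate from an admissible path yields an admissible path in $\Zd$. This is false for self-avoiding and bond-self-avoiding walks: two vertices of the reduced walk coincide exactly when the original walk visits two points differing only in the $(d+1)$st coordinate, which the self-avoidance constraint in $\Z^{d+1}$ does not forbid. Your remark that an isolated canceling pair is vertex-disjoint from the rest of the walk only covers the case where the walk makes a single back-and-forth excursion; it does not cover walks that wander in the new direction and revisit projected positions. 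So the upper bound on $c_n^{\sss(d+1)}(x,0)$ in terms of the \emph{same-type} counts $c_{n-2k}^{\sss(d)}(x)$ is unjustified. Second, the estimate $c_{n-2k}^{\sss(d)}(x)/c_n^{\sss(d)}(x)\leq C_{n,k,x}/(2d-1)^{2k}$ requires a lower bound on $c_n^{\sss(d)}(x)$ obtained by inserting $k$ canceling pairs into an $(n-2k)$-step walk. Inserting an adjacent pair $+\ve[j],-\ve[j]$ creates an immediate reversal, which is forbidden already for non-backtracking walks; inserting the two steps at separated positions translates the intervening segment by $\ve[j]$ and can create intersections for (bond-)self-avoiding walks. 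Neither construction produces the claimed $(2d-1)^{2k}$ many admissible extensions uniformly over walk types, so this step is not merely bookkeeping.

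The paper avoids both problems by never comparing walks across dimensions or across lengths. It classifies the $n$-step paths from $0$ to $x$ by the number $d'$ of coordinate directions they actually use; the count $c_n^{\sss(d')}(x)$ of paths of exact dimensionality $d'$ is intrinsic to the path type and independent of the ambient dimension, so all of the $d$-dependence sits in the explicit factor $\binom{d-\mathrm{dim}(x)}{d'-\mathrm{dim}(x)}/(2d-1)^n$. Monotonicity in $d$ then reduces to showing that this ratio of binomials to $(2d-1)^n$ is decreasing for each fixed $d'$, which is an elementary computation using exactly the two hypotheses $2d\geq 2\mathrm{dim}(x)+n-|x|_1$ (so every summand is present) and $2d\geq 2n-2$ (so the ratio estimate closes). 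If you want to salvage your route, you would at minimum need to restrict it to simple random walks and then argue the other walk types separately, at which point the paper's dimensionality decomposition is both simpler and uniform over walk types.
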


\proof
We define the dimension of a path be the maximal dimension of the points it traverses, and let
	\begin{align*}
	c_n^{\sss(d)}(x)&:=\#\{\text{number of paths from $0$ to $x$ which have exactly $d$ dimension}\}.
	\end{align*}
This $c_n^{\sss(d)}$ has the following properties, which hold for any type of nearest-neighbor paths:
	\begin{enumerate}[(i)]
  	\item $c_n^{\sss(d)}(x)\equiv 0$ when $n<|x|_1$, as there are simply not enough steps $n$ to reach $x$
  	\item $c_n^{\sss(d)}(x)\equiv 0$ when $d< \mathrm{dim}(x)$, as the point $x$ lives in larger dimension than $d$.
  	\item $c_n^{\sss(d)}(x)\equiv 0$ when $n-|x|_1$ is odd, by parity of the lattice.
  	\item $c_n^{\sss(d)}(x)\equiv 0$ when $d>\mathrm{dim}(x)+\tfrac {n-|x|_1}2$, as the path can simply not use all the required dimensions $d$ and still reach $x$ in $n$ steps.
	\end{enumerate}
By grouping the paths from $0$ to $x$ according to the number of dimensions $d'$ used, regarding the properties above,  we obtain
	\begin{align*}
	c_n(x)=&\sum_{d'=\mathrm{dim}(x)}^{\mathrm{dim}(x)+\tfrac {n-|x|_1} 2} {d-\mathrm{dim}(x) \choose  d'-\mathrm{dim}(x)}c^{\sss(d')}_n(x)
	=\sum_{d'=0}^{\tfrac {n-|x|_1}2} {d-\mathrm{dim}(x) \choose  d'}c^{\sss(d'+\mathrm{dim}(x))}_n(x).
	\end{align*}
The binomial coefficient represents the free choices for the additional dimensions $d'$, not imposed by $x$ itself, that the path uses.
We then write
	\eqn{
	\frac {c_n(x)}{(2d-1)^n}=\sum_{d'=0}^{\tfrac {n-|x|_1}2 } f_n(d',d,x)c^{\sss(d'+\mathrm{dim}(x))}_n(x),
	}
where
	\eqn{
	f_n(d',d,x):=  \frac { {d-\mathrm{dim}(x) \choose  d'} }{(2d-1)^n}.
	}
Our restriction to $2d\geq 2\mathrm{dim}(x)+n-|x|_1$ ensures that each summand contributes, which allows us to conclude the
required monotonicity of $d\mapsto c_n(x)/(2d-1)^n$ by showing that $d\mapsto f_n(d',d+1)$ is monotonically decreasing for every $d'$ fixed.
 For this, we need to show that
	\eqn{
	\frac {f_n(d',d+1,x)}{f_n(d',d,x)}=\frac {d+1-\mathrm{dim}(x)}{d+1-\mathrm{dim}(x)-d'} \Big(\frac {2d-1}{2d+1}\Big)^n,
	}
is smaller than one. The first fraction is monotone increasing in $\mathrm{dim}(x)$ and $d'$.
We use that $d'\leq \tfrac {n-|x|_1}2$, $\mathrm{dim}(x)\leq |x|_1$ and then $n\geq |x|_1\geq 0$ to obtain
	\eqan{
	\label{ratio-bd-1}
	\frac {d+1-\mathrm{dim}(x)}{d+1-\mathrm{dim}(x)-d'} &\leq
	\frac {d+1-|x|_1}{d+1-|x|_1-\tfrac {n-|x|_1}2}=1+\frac {n-|x|_1}{2d+2-|x|_1-n}\\
	&\leq 1+\frac {n}{2d+2-n}=\frac{2d+2}{2d+2-n},
	}
where the final inequality follows since $n\leq d+1$.
For the second fraction we note that $(1+y)^n\leq 1+yn+y^2n(n-1)/2$
for all $n\leq y$ and $y>-1$, so that
	\begin{align*}
 	\Big(\frac {2d-1}{2d+1}\Big)^n=\Big(1-\frac {2}{2d-1}\Big)^n\leq 1-\frac {2n}{2d-1}+\frac{2n(n-1)}{(2d-1)^2}
	=\frac {(2d-1)^2-2n(2d-1)+2n(n-1)}{(2d-1)^2}.
	\end{align*}
Combining both bounds, we obtain
	\eqan{
	\frac {f_n(d',d+1,x)}{f_n(d',d,x)}&\leq \frac {(2d-1)^2-2n(2d-1)+2n(n-1)}{(2d-1)^2}\frac{2d+2}{2d+2-n}\\
	&=1+\frac {n(2d-1)^2-2n(2d-1)(2d+2)+n(n-1)(2d+2)}{(2d-1)^2(2d+2-n)}\nn\\
	&=1+n\frac{(2d-1)^2-2(2d-1)(2d+2)+(n-1)(2d+2)}{(2d-1)^2(2d+2-n)}.\nn
	}
We finally note that, since $n\leq d$,
	\eqan{
	&(2d-1)^2-2(2d-1)(2d+2)+(n-1)(2d+2)\\
	&\qquad\leq (2d-1)^2-2(2d-1)^2+(d-1)(2d+2)
	=2(d^2-1)-(2d-1)^2<0,\nn
	}
when $d\geq 1$.
	
\qed

\subsection{Monotonicity of bounds on the NoBLE coefficients}
\label{sec-monoton-inD-3}
We bound simple and repulsive diagrams using the bootstrap assumption,
SRW-integrals and by extracting short explicit contributions. This is explained in detail in \cite[Section 5.3.1]{FitHof13b}.
For example,
	\begin{align*}
 	\diagRepulsiveLetter{T}_{1,1,0}(x)
	\leq& \sum_{i=2}^{M-1} \frac {(i-1)i}{2}p_{i}(x) (z\gj)^{i}+\frac{(M-2)(M-3)}{2}(2dz\gj)^{M} (D^{\star M}\star G_{z})(x) \\
	&+(M-2)(2dz\gj)^{M} (D^{\star M}\star G^{\star 2}_{z})(x)+(2dz\gj)^{M} (D^{\star M}\star G^{\star 3}_{z})(x),
	\end{align*}
where $p_n$ is the number of $n$-step bond-self-avoiding paths from $0$ to $x$, i.e., paths that do not use the same bond twice. For dimension $d=16,\dots, 29$ we use $M=10$.  For our analysis in $d\geq 30$,  we extract only $7$ steps, so that $M=8$.

To obtain numerical values for these bounds, we use
	\begin{align*}
	z\gj \leq& \frac {1} {2d-1} \Gamma_1\quad \text{ and }(D^{\star M}\star G^{\star n}_{z})(x)\leq \big(\frac {2d-2}{2d-1} \Gamma_2 \big)^n K_{n,M}(x)\leq  \Gamma_2^n K_{n,M}(x),
	\end{align*}
and use the following bound that can be numerically computed:
	\begin{align*}
 	\diagRepulsiveLetter{T}_{1,1,0}(x)
	\leq& \sum_{i=2}^{7} \frac {(i-1)i}{2} \frac {p_{i}(x)} {(2d-1)^i} \Gamma^i_1
        +15 \left(\frac {2d}{2d-1}\Gamma_1\right)^8 \Gamma_2 K_{1,8}(x)\\
	&  +6 \left(\frac {2d}{2d-1}\Gamma_1\right)^8 \Gamma_2^2 K_{2,8}(x)
        +\left(\frac {2d}{2d-1}\Gamma_1\right)^8 \Gamma_2^3 K_{3,8}(x).
	\end{align*}
By Lemma \ref{Lemma-SmallSteps} the first term is monotone decreasing in $d$. Further, we bound $K_{n,8}(x)$ using bounds that are monotone in $d$, see
Section \ref{sec-monoton-inD-1}, which means that our bound on $\diagRepulsiveLetter{T}_{1,1,0}(x)$ is monotone decreasing in $d$.

We use the same kind of bounds for {\em all } simple and repulsive diagrams. As we bound all NoBLE coefficients using these diagrams (recall Section \ref{secStatingtheBounds}),
the bounds that we rely upon inherit this monotonicity. Having made this point we should add that in three bounds in Lemma \ref{BoundNZero-XiIota} we see a factor of $(2d-2)/2d$,
which is not monotone decreasing. We resolve this non-monotonicity issue by simply replacing $(2d-2)/2d$ by $1$ in our bounds/implementation for $d\geq 30$.

Unfortunately our verification of the technical condition $\Pi^{\ssc[0],\iota,\kappa}(x)\leq zg_z \Xi^{\ssc[0],\iota}(x)$ is numerical. We have already proved this for all $x\neq \ve[\iota]+\ve[\kappa]$. Even though the contribution for $x=\ve[\iota]+\ve[\kappa]$ is quite small (recall \refeq{tech-cond-prob-iota-kappa}), we did not manage to find a proof that is uniform in the dimension for $x=\ve[\iota]+\ve[\kappa]$. This causes problems in the application of the general analysis in \cite{FitHof13b}. To bypass this, we use a bound that applies to $\Xi^{\ssc[0],\iota}$ and $\Pi^{\ssc[0],\iota,\kappa}$ at the same time, and replace $\beta_{\sss\Xi^\iota}^\ssc[0], \beta_{{\sss\Delta\Xi^\iota},0}^\ssc[0], \beta_{{\sss\Delta\Xi^\iota},\iota}^\ssc[0]$ that are currently used as to bound $\Xi^{\ssc[0],\iota}$ only, by a bound that applies also to $\Pi^{\ssc[0],\iota,\kappa}$. Numerically, this makes no difference, as our crude numerical bound already applies to both these terms at the same time.

This problem has repercussions in the proof of \refeq{Bound-Pi0RemI} and \refeq{Bound-Pi0RemIW}, where this bound was used. 
However, the problem does not arise in \refeq{Bound-Pi0AlphaI}, since it does not involve $x=\ve[\iota]+\ve[\kappa]$.
This problem does not arise in \refeq{Bound-Pi0RemIW} either, since it is zero when $x=\ve[\iota]+\ve[\kappa]$ (recall also \refeq{Pi(0)-bd}).

In conclusion, all bounds that are used in the general analysis of \cite{FitHof13b}, as summarized in Section \ref{secBoundsSummary}, are monotone decreasing in $d$.


\subsection{Monotonicity of bounds on the general analysis coefficients}
\label{sec-monoton-inD-4}
In this section we discuss the analysis of the generalized NoBLE analysis as explained in \cite{FitHof13b}.
Our aim is to show that the monotonicity of the bounds on the coefficients implies that
{\em if} a bootstrap, as described in Proposition \ref{prop-analysis-is-success}, is
successful in the given dimension $d'=30$, {\em then} the bootstrap will be successful in every larger dimension $d\geq d'$.
For this, we fix the assumed bounds $\Gamma_i, c_{n,l,S}$ and use them uniformly for all $d\geq 30$. We next explain how we can see that the success of
the bootstrap for $d=30$ implies that for $d\geq 30$, splitting between the various bootstrap function $f_i(z)$ for $i=1,2,3$:

\paragraph{Monotonicity of success for the initialization and improvement of $f_1$.}
We start with the bootstrap function
	\begin{align*}
	f_1(z)=\max\left\{(2d-1)zg_z,c_\mu (2d-1)zg_z^\iota\right\}.
	\end{align*}
In \refeq{Bound-Initial-f1} we have already seen the bound on $f_1(z_I)$, which is clearly monotone decreasing in $d$. This deals with the initialization.

For $z\in(z_I,z_c)$, the bounds on $f_1(z)$ as given in \cite[(3.5)]{FitHof13b} can be bounded uniformly in $d$ as follows. In our argument, we write
$\beta_\bullet(d)$ for the parameter $\beta_\bullet$ in dimension $d$. Note that the analysis in the previous section implies the monotonicity of $d\mapsto \beta_\bullet (d)$ for {\em all}
$\beta_\bullet (d)$ discussed in Section \ref{secBoundsSummary}. This leads to
	\begin{align*}
	f_1(z)\leq& \betaaa(d) \frac {1+\overline{\beta}_{\sss \Pi^{\iota}}(d) }{ 1 - \frac {2d}{2d-1}\underline {\beta}_{\sss \Psi^{\kappa}}(d)}
	\leq \betaaa(30) \frac {1+\overline{\beta}_{\sss \Pi^{\iota}}(30)}{ 1 - \frac {60}{59}\underline {\beta}_{\sss \Psi^{\kappa}}(30) }.
	\end{align*}
For $\betaaa(d)$, which bounds $g_z/g_z^{\iota}$, we note that
	\begin{align*}
	g_z-g_z^{\iota}=\bar G_z(\ve[\iota]).
	\end{align*}
This implies
	\begin{align*}
	\frac {\aabz}{\aaz}=\frac {g_z}{g_z^{\iota}}= 1+\frac {\bar G_z(\ve[\iota])}{g_z^{\iota}}=1+\frac {\aabz}{\aaz} G_z(\ve[\iota])
	\Rightarrow \frac {\aabz}{\aaz}=\frac 1 {1-G_z(\ve[\iota])}.
	\end{align*}
We bound $G_z(\ve[\iota])$ in terms of simple diagrams, see Appendix \ref{sec-monoton-inD-3}. This
creates a bound that is monotone decreasing. Thus, our bound on $\betaaa(d)$ is decreasing in $d$ as well.

\paragraph{Bounds on the rewrite.}
We bound $f_2$ and $f_3$ using a rewrite as this made the analysis much clearer.
To proceed we next show that all coefficients of the rewrite are monotone decreasing in $d$.
This is done by checking each bound on the rewrite, as given in \cite[Appendix D]{FitHof13b}, one bound at a time.

For most bounds it is obvious that they are monotone, as they only use the bounds on the coefficients $\beta_{\bullet}$ (which we showed are monotone already in the previous section).
For this reason we only comment on three issues, for which it is not obvious that the bounds are monotone decreasing.

Regarding the bounds on \cite[(D.4)]{FitHof13b} and \cite[(D.5)]{FitHof13b}, we note that we have
defined $\underline{\beta}_{\sss \Psi}^\ssc[0]   =  \underline{\beta}_{\sss \sum \Pi}^\ssc[1]=0$,
so that the last terms in these lines are not a concern regarding monotonicity.

In the bounds on the remainder terms in \cite[ Steps 2-5 in Appendix D]{FitHof13b}, the following terms appear:
	\begin{align*}
	\frac {\aa} {1-\aa^2} \quad \text{and}\quad \frac {\aab} {1-\aa^2}\leq \betaaa \frac {\aa} {1-\aa^2}.
	\end{align*}
We next derive bounds on these terms that are uniform in $d$.

The function $\tfrac {x}{1-x^2}$ is monotone increasing on the interval $[0,1]$, and
	$\aa\leq \tfrac {\Gamma_1}{2d-1}\leq \tfrac {\Gamma_1}{59}$ for $d\geq 30$, so that
	\begin{align*}
	\frac {\aa} {1-\aa^2} \leq \frac {\Gamma_1} {(2d-1)(1-\frac {\Gamma_1^2}{(2d-1)^2})} \leq \frac  {\tfrac {\Gamma_1}{59} }{1- \tfrac {\Gamma_1^2}{59^2}}.
	\end{align*}
The term $\afz\approx 1$, which plays a central role in the NoBLE rewrite and is defined in  \cite[(4.18)]{FitHof13b}, is bounded from above and below.
We require an upper bound that is decreasing and a lower bound that is increasing in $d$.
The upper bound, stated in \cite[(D.2)]{FitHof13b},
	\begin{align*}
	\afz\leq\frac {2d \aa} {1-\aa^2}	\left[1+\beta_{{\sss\sum \Psi^{\iota}_\alpha,I}}^\ssc[0-1]
	+\aa\beta_{{\sss\sum \Psi^{\iota}_\alpha,II}}^\ssc[1-0]
	-\frac {1} {1-\aa^2} \underline{\beta}_{ \sss  \sum \Pi_{\alpha}}^\ssc[0]\right]
	\end{align*}
is clearly monotone, as we have chosen $\underline{\beta}_{ \sss  \sum \Pi_{\alpha}}^\ssc[0]=0$
and bound the other two terms using repulsive diagrams, see \refeq{Differencebound-4}, \refeq{Differencebound-7}.
As $\afz$ arises in the dominator of $\hat G_z(k)$, see  \cite[(1.37)]{FitHof13b},
we require a monotone increasing lower bound on $\afz$,
to obtain then a bound on $\hat G_z(k)$ that is monotone decreasing. We use the lower bound in \cite[(D.2)]{FitHof13b} that reads
	\begin{align*}
	\afz\geq \frac {2d \aa} {1-\aa^2}
	&\left[1-\beta_{{\sss\sum \Psi^{\iota}_\alpha,I}}^\ssc[1-0]
	-\aa\beta_{{\sss\sum \Psi^{\iota}_\alpha,II}}^\ssc[0-1]
	-\frac {1} {1-\aa^2}  \bar{\beta}_{\sum \sss \Pi_{\alpha}}^\ssc[0]\right].
	\end{align*}
The term in the brackets is bounded by
	\begin{align*}
	1-\beta_{{\sss\sum \Psi^{\iota}_\alpha,I}}^\ssc[1-0](d)
	-\frac {\Gamma_1}{(2d-1)}\beta_{{\sss\sum \Psi^{\iota}_\alpha,II}}^\ssc[0-1](d)
	-\frac {1} {1-\frac {\Gamma_1^2}{(2d-1)^2}}  \bar{\beta}_{\sum \sss \Pi_{\alpha}}^\ssc[0](d):=1-\zeta(d).
	\end{align*}
This sum of bounds in $\zeta(d)$ is clearly decreasing in $d$, so that $1-\zeta(d)\geq 1-\zeta(30)$.
Let us remark that we have implicitly assume that $\zeta(d)<1$, which a relatively weak condition that we have verified numerically.
The initial factor $\tfrac {2d \aa} {1-\aa^2}$ is monotone increasing in $\aa$ and we know that
	\begin{align*}
	\aa=z\gj \geq z_Ig_{z_I}^\iota = \frac 1 {2d-1},
	\end{align*}
for all $z\geq z_I$, by definition of $z_I$. This means that
	\begin{align*}
	\frac {2d \aa} {1-\aa^2}\geq   \frac {2d-1}{2d-2}=1 +\frac {1}{2d-2}\geq 1,
	\end{align*}
so that
	\begin{align*}
	\afz\geq \frac {2d-1}{2d-2}\left[1-\zeta(30)\right].
	\end{align*}
Due to the factor $\tfrac {2d-1}{2d-2}$ this is decreasing in $d$ and not increasing as required for the lower bounds on
$\afz$. We can circumvent by simply using a uniform bound $\tfrac {2d-1}{2d-2}\leq \tfrac {59}{58}$, which we use in the implementation.

\paragraph{Monotonicity of success for the initialization and improvement of $f_2$.}
The bootstrap function $f_2$, defined in \refeq{defFunc2}, bounds $\hat G_z(k)$. The bounds for the initialization follow the same bounds as for the improvement, so we only discuss the improvement here.

From \cite[(3.9)]{FitHof13b} and the bounds on $\afz$ that we derived above, we obtain
	\begin{align*}
	|\hat G_z(k)| [1-\hat D(k)]&
	\leq \frac {\upcp(d)+\betaap(d)+\betaRp(d)}{\lowaf(d)-\betadeltaRfzlow(d)}
    	\leq \frac {\upcp(30)+\betaap(30)+\betaRp(30)}{\tfrac {2d-1}{2d-2}\left[1-\zeta(30)\right]-\betadeltaRfzlow(30)}.
	\end{align*}
The definition \refeq{defFunc2} states
	\begin{align*}
	f_2(z)=& \frac{2d-1}{2d-2}\sup_{k\in(-\pi,\pi)^d} [1-\hat D(k)]\ \hat G_z(k)\\
	\leq&  \frac {\upcp(30)+\betaap(30)+\betaRp(30)}
	{\tfrac{2d-2}{2d-1}\left(\tfrac {2d-1}{2d-2}\left[1-\zeta(30)\right]-\betadeltaRfzlow(30)\right)}
	\leq\frac {\upcp(30)+\betaap(30)+\betaRp(30)}{\left[1-\zeta(30)\right]- \betadeltaRfzlow(30)},
	\end{align*}
which creates a bound that holds uniformly for all $d\geq 30$.

\paragraph{Monotonicity of success for the initialization and improvement of $f_3$.}
The bound on $f_3$ is derived in multiple steps, so that we have to check the monotonicity of multiple expressions.
All of these are quite similar and the required monotonicity always follows from the
monotonicity of the NoBLE coefficients of the rewrite and the SRW integrals.

The bounds for the initialization of the bootstrap for $f_3$ at $z_I$ is given in \cite[(3.30)-(3.31)]{FitHof13b}.
To guarantee that we use a monotone bound we omit from $\Isupx_{n,l}(x)$, in \cite[(3.30)]{FitHof13b}
the term $-\tfrac 1 d I_{N+3,l}(x)$ and use the spatial monotonicity of $I_{n+3,l}(x)$ in $|x_\mu|$
to ensure that we only use values for $I_{n+3,l}(x+2\ve[\iota])$ for which $|x+2\ve[\iota]|_{\infty}\leq 2$.
This results in a uniform bound on $f_3(z_I)$.

The bound on $f_3(z)$ for $z\in(z_I,z_c)$  is stated in \cite[(3.87)]{FitHof13b}
and consists of many individual terms derived in the lines \cite[(3.60)-(3.85)]{FitHof13b}.
All these terms consists of simple products of bounds on the terms of the rewrite and SRW integrals,
which, as argued before, are monotone decreasing in $d$.

To remove any doubt about monotonicity, we redefine some bounding coefficients whose definition included a factor
$\tfrac {2d-2}{2d-1}$:
	\begin{align*}
	\frac 1 {\afz}&\leq \frac 1 {1-\zeta(30)}:=\lowaf^{-1},\\
	\lowK&:=\frac 1 {1-\zeta(30)-\betadeltaRfzlow(30)},\\
	\Gamma_2'&:=\Gamma_2.
	\end{align*}
This results in a bound in which each individual piece is monotone decreasing in $d$,
so that the bound on $f_3$ holds uniformly in $d\geq 30$.

\subsection{Conclusion for all $d\geq 30$}
\label{sec-monoton-inD-5}
Our aim was that to show that our results, as stated in Theorem \ref{thm-IRB}, Corollary \ref{cor-TC-crit-exp}, Theorem  \ref{thm-k-space} and Theorem \ref{thm-x-space},
hold for all $d\geq d'=30$.

The proof of these results is described in Section \ref{sec-overview}. Regarding the proof we note that Propositions \ref{prop-LE} and \ref{prop-bds-LEC} hold
regardless of the dimension, even for $d\leq 16$. Our restriction to $d\geq \dmintree$ for LT and $d\geq \dminanimal$ for LA, respectively,
is only necessary as Proposition \ref{prop-analysis-is-success} does not hold in smaller dimensions.

In Appendix \ref{sec-monoton-inD-4} we conclude a bound on the bootstrap functions that holds uniformly for $d\geq d'$.
Using our numerical verification in \verb|LAmonotone|, we check that Proposition \ref{prop-analysis-is-success} holds using (only) our monotone bounds. In other words in $d'=30$ our bootstrap was successful.
As the bounds on the bootstrap function are uniform in $d$ we know that Proposition \ref{prop-analysis-is-success} therefore also holds for $d'\geq d$. For this reason, our analysis succeeds in all $d\geq d'=30$.
\qed


\def\picAZeroZeroA[#1]{
}

\iflongversion
\section{Formal definition of building blocks}
\label{sec-formaldefinition}
In this section we provide the formal definition of the building blocks described in Section \ref{secBuildingBlocks}.
We begin with the blocks used in the bound without spatial weights as shown in Figure \ref{TreeXiFourDeomposedStructure}
This means, we define intermediates blocks $A^{m,0}(0,v,x,x)$ and  as initial piece of the diagram $P^{\ssc[1],m}, P^{\ssc[1],\iota,m}$ for $\Xi^{\ssc[N]}_z$ and $\Xi^{\ssc[N],\iota}_z$, respectively.
Then, we define the weighted analogue $C^{m,l}$ and $\Delta^{\text {start},m}$
$\Delta^{\text {iota},{\sss I},m}$, $\Delta^{\text {iota},{\sss I},m}$.

The blocks are designed to bound the NoBLE coefficients. We define them using as many avoidance/repulsiveness properties of the coefficients as possible, to create the sharpest numerical bounds. To give a visual example, we define $A^{0,-2}$, $A^{0,-1}$, $A^{0,0}$, $A^{0,1}$, $A^{0,-2}$ using Tables \ref{BoundTableAZeroZero}-\ref{BoundTableAZeroMinusTwo}. In these tables, we consider various different cases, we give a visual representation of the bounding diagram for the respective case, and give the corresponding bounding diagrams in formulas in the right column. Let us illustrate this by the example $A^{0,-1}$, which we define using Table \ref{BoundTableAZeroMinusOne} as the sum of the elements in the right-most column:
	\begin{align*}
	A^{0,-1}(0,v,x,y)=\delta_{0,v}\Big(\diagRepulsiveLetter{T}_{\underline 1,\underline 1,2}(x,y,0)+
	\sum_w \big(\diagRepulsiveLetter{S}_{\underline 1,\underline 1,1,1}(x,y,w,0)+\diagRepulsiveLetter{S}_{2,\underline 1,0,0}(x,y,w,0)\big) \Big),
	\end{align*}
where $\delta_{0,v}$ denotes the Kronecker delta.
For our definitions we use translation variance to simplify the presentation, e.g., we use
	\begin{align*}
  	A^{m,l}(u,v,x,y)=A^{m,l}(0,v-u,x-u,y-u),
	\end{align*}
and we restrict to defining $A^{m,l}(0,v,x,y)$ for $m,l\in \{-2,-1,0,1,2\}$.

\subsection{Intermediate diagrams $A^{m,l}$ and $\bar{A}^{m,l}$}
Here we formally define the bounding diagrams $A^{m,l}$ and $\bar{A}^{m,l}$ as informally described in Section \ref{secBuildingBlocks}.
\paragraph{Definition of $A^{m,l}(0,v,x,y)$ for $m,l\in \{-2,-1,0,1,2\}$.}
Tables \ref{BoundTableAZeroZero}--\ref{BoundTableAZeroMinusTwo} define $A^{0,l}$ for all values of $l\in \{-2,-1,0,1,2\}$.

\threecolomntable
{Definition of $A^{0,0}(0,v,x,y)=\delta_{0,v}\delta_{x,y}A^{0,0}(0,0,x,x)$}{BoundTableAZeroZero}{
 $\begin{array}{c} x=0 \\ \Rightarrow w=0,\\ d_\omega(0,x)\geq 4 \end{array}$ &     \picAZeroZeroA[0.5]& $\sum_\iota \diagRepulsiveLetter{B}_{3,\underline 1}(\ve[\iota],0)$  \\ \cline{1-3}
 $\begin{array}{c} x\neq 0 \\w\in\{0,x\} \\ \Rightarrow d_\omega(0,x)\geq 2 \end{array}$ &     \picAZeroZeroB[0.5]& $2\diagRepulsiveLetter{B}_{2,1}(x,0)$  \\ \cline{1-3}
$\begin{array}{c} x\neq 0,\\ w\nin\{0,x\} \end{array}$&     \picAZeroZeroC[0.6]& $\sum_w \diagRepulsiveLetter{T}_{1,1,1}(x,w,0)$  \\ \cline{1-3}
}

\threecolomntable
{Definition of $A^{0,1}(0,v,x,y)=\delta_{0,v}\indic{y\neq 0}A^{0,1}(0,0,x,y)$}{BoundTableAZeroOne}{
 $\begin{array}{c} x=0 \\ \Rightarrow w=0,\\ d_\omega(0,x)\geq 4 \end{array}$ &     \picAZeroOneA[0.5]& $\diagRepulsiveLetter{B}_{3,\underline 1}(y,0)$  \\ \cline{1-3}
 $\begin{array}{c} x\neq 0 \\w\in\{0,x\} \\ \Rightarrow d_\omega(0,x)\geq 2 \end{array}$ &     \picAZeroOneB[0.5]& $2\diagRepulsiveLetter{T}_{1,\underline 1,1}(y,x,0)$  \\ \cline{1-3}
 $\begin{array}{c} x\neq 0,\\ w\nin\{0,x\} \end{array}$&  \picAZeroOneC[0.5]& $\sum_w \diagRepulsiveLetter{S}_{1,\underline 1,1,1}(y,x,w,0)$  \\ \cline{1-3}
}

\threecolomntable
{Definition of $A^{0,2}(0,v,x,y)=\delta_{0,v}\indic{y\neq 0}A^{0,2}(0,0,x,y)$}{BoundTableAZeroTwo}{
 $\begin{array}{c} x=0 \\ \Rightarrow w=0,\\ d_\omega(0,x)\geq 4 \end{array}$ &     \picAZeroTwoA[0.5]& $\diagRepulsiveLetter{B}_{1,2}(y,0)$  \\ \cline{1-3}
 $\begin{array}{c} x\neq 0 \\w\in\{0,x\} \\ \Rightarrow d_\omega(0,x)\geq 2 \end{array}$ &     \picAZeroTwoB[0.5]& $2\diagRepulsiveLetter{T}_{1,2,1}(y,x,0)$  \\ \cline{1-3}
 $\begin{array}{c} x\neq 0,\\ w\nin\{0,x\} \end{array}$&  \picAZeroTwoC[0.5]& $\sum_w \diagRepulsiveLetter{S}_{1,2,1,1}(y,x,w,0)$  \\ \cline{1-3}
}

\threecolomntable
{Definition of $A^{0,-1}(0,v,x,y)=\delta_{0,v}A^{0,-1}(0,0,x,y)$}{BoundTableAZeroMinusOne}{
 $\begin{array}{c} y=w,d_\omega(0,x)=1 \\ \Rightarrow w\neq 0\end{array}$ &     \picAZeroMinusOneA[0.5]& $\diagRepulsiveLetter{T}_{\underline 1,\underline 1,2}(x,y,0)$  \\ \cline{1-3}
 $\begin{array}{c} y\neq w,d_\omega(0,x)=1 \\ \Rightarrow w\neq 0 \end{array}$ &     \picAZeroMinusOneB[0.5]& $ \sum_w \diagRepulsiveLetter{S}_{\underline 1,\underline 1,1,1}(x,y,w,0)$  \\ \cline{1-3}
 $\begin{array}{c} d_\omega(0,x)\geq 2 \\ w\nin\{0,y\} \end{array}$&  \picAZeroMinusOneC[0.5]& $\sum_w \diagRepulsiveLetter{S}_{2,\underline 1,0,0}(x,y,w,0)$  \\ \cline{1-3}
}

\threecolomntable
{Definition of $A^{0,-2}(0,v,x,y)=\delta_{0,v}A^{0,-2}(0,0,x,y)$}{BoundTableAZeroMinusTwo}{
 $\begin{array}{c} d_\omega(0,x)=1 \\ \Rightarrow w\neq 0\end{array}$ &     \picAZeroMinusTwoA[0.5]& $\sum_w\diagRepulsiveLetter{S}_{\underline 1,2,0,1}(x,y,w,0)$  \\ \cline{1-3}
 $\begin{array}{c} 0=w,d_\omega(0,x)\geq 2 \end{array}$ &     \picAZeroMinusTwoB[0.5]& $ \diagRepulsiveLetter{T}_{2,2,0}(x,y,0)$  \\ \cline{1-3}
 $\begin{array}{c} y\neq 0,d_\omega(0,x)\geq 2 \end{array}$&  \picAZeroMinusTwoC[0.5]& $\sum_w \diagRepulsiveLetter{S}_{2,2,0,1}(x,y,w,0)$  \\ \cline{1-3}
}

We next define $A^{-1,0}, A^{-1,-1}$ and $A^{-1,-2}$ as
	\begin{align*}
	A^{-1,0}(0,v,x,y)&=2dD(v)\delta_{x,y}\left(\sum_w \diagRepulsiveLetter{T}_{\underline 1,1,0}(x,w,0)+\diagRepulsiveLetter{T}_{2,0,0}(x,w,v)\right),\\
	A^{-1,-1}(0,v,x,y)&=2dD(v)\sum_w \diagRepulsiveLetter{S}_{1,\underline 1,0,0}(x,y,w,v),\\
	A^{-1,-2}(0,v,x,y)&=2dD(v)\sum_w \diagRepulsiveLetter{S}_{1,2,0,0}(x,y,w,v),
	\end{align*}
and $A^{-2,0}, A^{-2,-1}$ and $A^{-2,-2}$ as
	\begin{align*}
	A^{-2,0}(0,v,x,y)&=\delta_{x,y}\left(\sum_w \diagRepulsiveLetter{T}_{\underline 1,1,0}(x,w,v)+\diagRepulsiveLetter{T}_{2,0,0}(x,w,v)\right),\\
	A^{-2,-1}(0,v,x,y)&=\sum_w \diagRepulsiveLetter{S}_{1,\underline 1,0,0}(x,y,w,v),\\
	A^{-2,-2}(0,v,x,y)&=\sum_w \diagRepulsiveLetter{S}_{1,2,0,0}(x,y,w,v).
	\end{align*}
We define the remaining $A^{m,l}$ by the symmetries, for $m\in \{1,2\}$,
	\begin{align*}
	A^{m,0}(0,v,x,y)=A^{-m,0}(0,v,x,y),
	\end{align*}
and, for $m, l\in\{1,2\}$,
	\begin{align*}
	 A^{m,l}(0,v,x,y)=A^{m,-l}(0,v,y,x).
	\end{align*}
	
\paragraph{Definition of $\bar{A}^{m,l}(0,v,x,y)$ for $m,l\in \{-2,-1,0,1,2\}$.}
Next, we define the double open diagram $\bar{A}^{m,l}(0,v,x,y)$, to which the  connection $x$ to $y$, as present in $A^{m,l}(0,v,x,y)$, does not contribute.
For $l=0$, this connection is not present in $A^{m,l}(0,v,x,y)$ either, as $x=y$, so we simply define, for $m\in\{-2,-1,0, 1,2\}$,
	\begin{align*}
	\bar A^{m,0}(0,v,x,y)= A^{m,0}(0,v,x,y),
	\end{align*}
and use symmetry to define
	\begin{align*}
	\bar A^{0,m}(0,v,x,y)= A^{m,0}(x,y,0,v).
	\end{align*}
Further, we define $\bar A^{-1,-1}, \bar A^{-1,-2}$ and $\bar A^{-2,-2}$ as
	\begin{align*}
	\bar A^{-1,-1}(0,v,x,y)=&(2d)^2D(v)D(x-y) \tilde G_{1,z}(x)\sum_w\diagRepulsiveLetter{B}_{0,0}(w-v,x-v),\\
	\bar A^{-1,-2}(0,v,x,y)=& 2dD(v) \tilde G_{1,z}(x)\sum_w\diagRepulsiveLetter{B}_{0,0}(w-v,x-v),\\
	\bar A^{-2,-2}(0,v,x,y)=&\tilde G_{1,z}(x)\sum_w\diagRepulsiveLetter{B}_{0,0}(w-v,x-v),
	\end{align*}
and define the remaining diagrams by
	\begin{align*}
	\bar A^{m,l}(0,v,x,y)&=\bar A^{m,-l}(0,v,y,x), \qquad  \bar A^{m,l}(0,v,x,y)=\bar A^{-m,l}(0,v,y,x),\\
    \bar A^{m,l}(0,v,x,y)&=\bar A^{l,m}(x,y,0,v).
	\end{align*}
for $m,l\in\{-2,-1,1,2\}$ for which $\bar A^{m,l}$ was not yet defined.

\subsection{Initial diagrams of $\Xi$ and $\Xi^{\iota}$}
\label{Appendix-def-Initial-Xi-XiIota}
In this section, we define the bounds on the initial diagrams of $\Xi$ and $\Xi^{\iota}$.

\subsubsection{Initial diagrams of $\Xi$}
\label{Appendix-def-Initial-Xi}
Here we define the left-most part of the bounding diagram, $P^{\ssc[1],m}$, for $\Xi^{\ssc[N]}$. This diagram represent the first piece of the rib-/sausage-walk, until the first self-intersection. For LTs we simply defined $P^{\ssc[1],m}(x,y)$ to be equal to $ A^{0,m}(0,0,x,y)$. The LAs can contain an additional triangle, created by the case $\bb_1\neq 0 $, as discussed in Section \ref{secBoundsOnePointF}
(recall also Definition \ref{defLASausagewalks}). To include this case, we define, for LAs, recalling also Definition \ref{def-repulsive-diamgrams},
	\begin{align*}
	S^{0}(x,y) =&\delta_{x,0}\delta_{y,0} + \delta_{x,y}\diagRepulsiveLetter{D}_{1}(x),\\
	S^{-1}(x,y) =& \delta_{y,0} \diagRepulsiveLetter{B}_{\underline 1,3}(x,0)+(1-\delta_{y,0})\diagRepulsiveLetter{T}_{1,\underline 1,1}(x,y,0),\\
	S^{-2}(x,y) =& \delta_{y,0} \diagRepulsiveLetter{D}_{2}(0)+(1-\delta_{y,0})\diagRepulsiveLetter{T}_{1,2,1}(x,y,0).
	\end{align*}
Using this notation we define
	\begin{align*}
	P^{\ssc[1],m}(x,y)=\sum_{l=-2}^0\sum_{u,v\in\Zd} S^{l}(u,v)A^{l,m}(u,v,x,y).
	\end{align*}
	
\subsubsection{Initial diagrams of $\Xi^{\iota}$}
\label{Appendix-def-Initial-XiIota}
Now we define $P^{\ssc[1],\iota,m}$ that bounds the left-most diagram of $\Xi^{\ssc[N],\iota}$. These bounds are also used for $\Pi^{\ssc[N],\iota,\kappa}$.
Due to the constraint that either $\ve[\iota]\in A_0$ or $\ve[\iota]=\tb_1$, this diagram can have many different forms.
For a clear definition we group them into $16$ groups that we denote by $B^{\ssc[M],m}$ for $M=1,\dots,16$, depicted in Figure \ref{BoundLA-Figure-initial},
and define
	\begin{align*}
	P^{\ssc[1],\iota,m}(x,y)=\sum_{M=1}^{16} B^{\ssc[M],m}(x,y).
	\end{align*}
For $M\geq 5$ the diagrams for LTs are trivial, i.e. $B^{\ssc[M],m}=0$ for $M\geq 5$, as they require a double connection from $0$ to $\bb_1$ in the first rib/sausage, which is not possible for LTs.

\begin{figure}[h!]
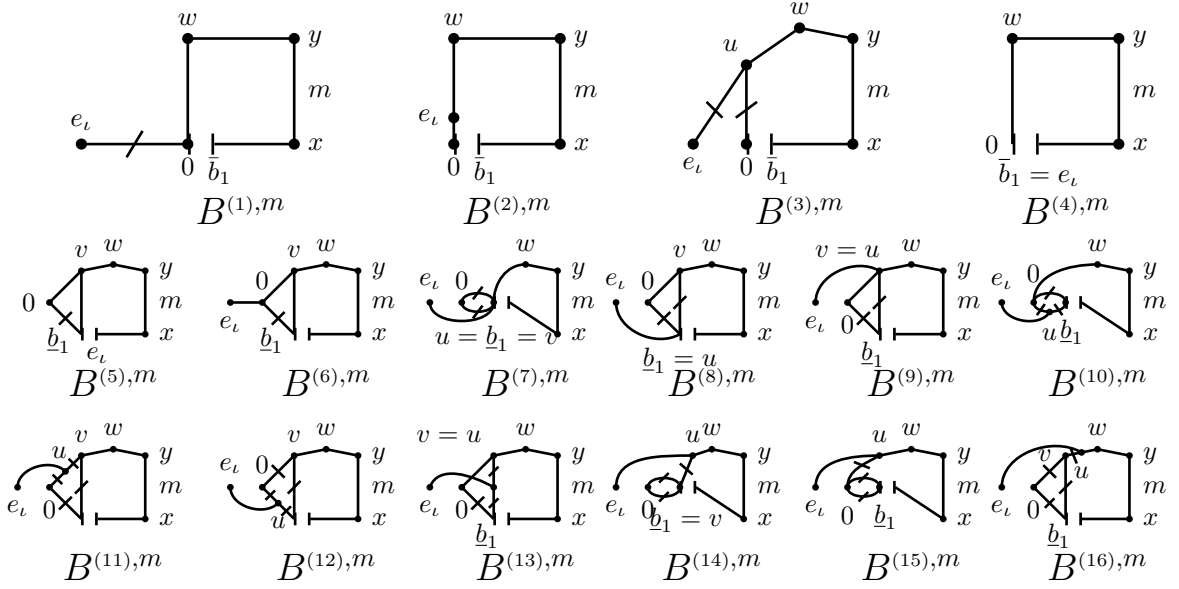

\begin{center}
{\Large
\picXilotaInitialStructure[0.7]
\\ }
\caption{
The shapes of the initial diagrams $\Xi^{\ssc[N],\iota}_z$ and $\Pi^{\ssc[N],\iota,\kappa }_z$ for $N\geq 2$. For $N=1$, these diagrams, for $x=y$ (or, equivalently, $m=0$) act as bounding diagrams for $\Xi^{\ssc[N],\iota}_z(x)$. Only the labeling for $m\leq 0$ is shown. In the diagram we sum over $u,v,w,\tb_1$, if they are present.
The lines without dash can collaps to a point, while the dashed lines include at least one step.
If $\bb_1$ is shown, then it is supposed to be unequal to zero.}
\label{BoundLA-Figure-initial}
\end{center}
\end{figure}

\noindent
In the following, we show that the sixteen $B^{\ssc[M],m}$ in Figure \ref{BoundLA-Figure-initial} capture all possible shapes. A sausage-/rib-walk contributing to $\Xi^{\ssc[N],\iota}_z$ and $\Pi^{\ssc[N],\iota,\kappa }_z$ needs to satisfy that either $\tb_1=\ve[\iota]$ or $\ve[\iota]\in A_0$. We bound the contribution to $\tb_1=\ve[\iota]$ using $B^{\ssc[4],m}$ and $B^{\ssc[5],m}$ by considering the case $\bb_1=0$ and $\bb_1\neq 0$, respectively.

The other fourteen diagrams bound contributions with $\ve[\iota]\in A_0$. For these diagrams we denote by $w$ the first intersection point and by $u$ the last point that all paths from $0 \conn w$ and $0 \conn \ve[\iota]$ have in common. The diagrams $B^{\ssc[1],m}-B^{\ssc[3],m}$ represent the case $\bb_1=0$, where they individually describe the cases where $u=0$, $u=\ve[\iota]$ and $u\nin\{0,\ve[\iota]\}$.

Thus, only the case $\ve[\iota]\in A_0$ with $\bb_1\neq 0$ remains to be considered. This contribution will be bounded by $B^{\ssc[6],m}-B^{\ssc[16],m}$ by splitting according to the location of $u$ and various other vertices involved.
In these diagrams, we consider six distinct vertices: $0$, $\ve[\iota]$, $\bb_1$, the first intersection point $w$, the last common vertex $u$, and the
last point $v$ that the paths $0\conn w$ and $0\conn \bb_1$ have in common.

We order the remaining eleven $B^{\ssc[M],m}$  by the locations of $u$.
$B^{\ssc[6],m}$-$B^{\ssc[9],m}$ bound the three cases $u\in\{0,v,\bb_1\}$, while
$B^{\ssc[7],m}$ bounds the special case where $u=\bb_1=v$.

In the other diagrams, we put $u$ on one of the four paths $0\conn v$, $0\conn \bb_1$, $\bb_1\conn v$ and $v\conn w$, but not the endpoints of these paths (as these have already been considered in $B^{\ssc[6],m}$-$B^{\ssc[9],m}$). To be precise

	\renewcommand{\labelitemi}{$\rhd$}
	\begin{itemize}
  	\item $B^{\ssc[10],m}$: $v=0$ with $u$ on the double connection between $0$ and $\bb_1$ with $u\nin\{0,\bb_1\}$;
  	\item $B^{\ssc[11],m}$: $u$ on connection $0\conn v$, for which $v\neq 0$ and $u\nin \{0,v\}$;
 	 \item $B^{\ssc[12],m}$: $u$ on connection $0\conn \bb_1$, for which $v\neq 0$ and $u\nin \{0,\bb_1\}$;
 	 \item $B^{\ssc[13],m}$: $u$ on connection $\bb_1\conn v$, for which $v\neq 0$ and $u\nin \{\bb_1,v\}$;
  	\item $B^{\ssc[14],m}$: $u$ on connection $v\conn w$ and $v=\bb_1$, for which $u\neq v$;
  	\item $B^{\ssc[15],m}$: $u$ on connection $v\conn w$ and $v=0$, for which $u\neq v$;
  	\item $B^{\ssc[16],m}$: $u$ on connection $v\conn w$ and $v\nin\{0,\bb_1\}$, for which $u\neq v$.
	\end{itemize}
The restrictions on the vertices involved (such as $u\nin \{0,v\}$ for $B^{\ssc[11],m}$) follow from the fact that $u\in \{0,\bb_1,v\}$ were already considered in $B^{\ssc[6],m}$-$B^{\ssc[9],m}$. We also split according to $v=0$ (bounded by $B^{\ssc[10],m}$ and $B^{\ssc[15],m}$) and the case where $v\neq 0$ (the other ones). This demonstrates that
$B^{\ssc[1],m}-B^{\ssc[16],m}$ really bound all possible contributions.

After this overview and validation that we have indeed considered all possible contributions, we define these diagrams formally. For $u=0$, we define, for $m=-2,\dots,2$,
	\begin{align*}
	B^{\ssc[1],m}(x,y)= \tilde G_z(\ve[\iota])A^{0,m}(0,0,x,y).
	\end{align*}
The diagram $B^{\ssc[2],m}$ represents the case that $u=\ve[\iota]$ and is very similar to the diagrams $A^{0,m}$, defined in
Tables \ref{BoundTableAZeroOne}-\ref{BoundTableAZeroMinusTwo}, and are given by
	\begin{align*}
	B^{\ssc[2],0}(x,y)=&\delta_{x,y}\sum_{w}\diagRepulsiveLetter{S}_{\underline 1,1,0,1}(x,w,\ve[\iota],0)+\diagRepulsiveLetter{S}_{2,0,0,1}(x,w,\ve[\iota],0),\\
	B^{\ssc[2],-1}(x,y)=&\sum_{w}\diagRepulsiveLetter{P}_{1,\underline 1,0,0,1}(x,y,w,\ve[\iota],0),\qquad
	\qquad B^{\ssc[2],1}(x,y)=B^{\ssc[2],-1}(y,x),\\
	B^{\ssc[2],-2}(x,y)=&\sum_{w}\diagRepulsiveLetter{P}_{1,2,0,0,1}(x,y,w,\ve[\iota],0),\qquad \qquad
	B^{\ssc[2],2}(x,y)=B^{\ssc[2],-2}(y,x).
	\end{align*}
The diagrams $B^{\ssc[3],m}$ represents the remaining possibilities for $u\nin \{0,\ve[\iota]\}$, and are defined by
	\begin{align*}
	B^{\ssc[3],0}(x,y)=&\delta_{x,y}\sum_u \diagRepulsiveLetter{B}_{1,1}(u,\ve[\iota])
	\sum_w\left(\diagRepulsiveLetter{T}_{\underline 1,1,0}(x,w,u)+\diagRepulsiveLetter{T}_{2,0,0}(x,w,u)\right),\\
	B^{\ssc[3],-1}(x,y)=&\sum_u \diagRepulsiveLetter{B}_{1,1}(u,\ve[\iota])\sum_w\diagRepulsiveLetter{S}_{1,\underline 1,0,0}(x,y,w,u),\qquad
	B^{\ssc[3],1}(x,y)=B^{\ssc[3],-1}(y,x),\\
	B^{\ssc[3],-2}(x,y)=&\sum_u \diagRepulsiveLetter{B}_{1,1}(u,\ve[\iota])\sum_w\diagRepulsiveLetter{S}_{1,2,0,0}(x,y,w,u),\qquad
	B^{\ssc[3],2}(x,y)=B^{\ssc[3],-2}(y,x).
	\end{align*}
Next we define $B^{\ssc[4],m}$, that represents the contribution of $b_1=(0,\ve[\iota])$, as
	\begin{align*}
	B^{\ssc[4],0}(x,y)=&\delta_{x,y}\left(
	\delta_{x,0}\diagRepulsiveLetter{B}_{\underline 1,3}(\ve[\iota],0)
	+2\diagRepulsiveLetter{T}_{\underline 1,1,1}(\ve[\iota],x,0)
	+\sum_{w}\diagRepulsiveLetter{S}_{\underline 1,0,1,1}(\ve[\iota],x,w,0)\right),\nnb
	B^{\ssc[4],1}(x,y)=&
	\delta_{x,0}\diagRepulsiveLetter{T}_{\underline 1,2,\underline 1}(\ve[\iota],y,0)
	+2\diagRepulsiveLetter{S}_{\underline 1,0,\underline 1,1}(\ve[\iota],y,x,0)
	+\sum_w \diagRepulsiveLetter{P}_{\underline 1,0,\underline 1,1,1}(\ve[\iota],y,x,w,0),\nnb
	B^{\ssc[4],2}(x,y)=&
	\delta_{x,0}\diagRepulsiveLetter{T}_{\underline 1,0,2}(\ve[\iota],y,0)
	+2\diagRepulsiveLetter{S}_{\underline 1,0,2,1}(\ve[\iota],y,x,0)
	+\sum_w \diagRepulsiveLetter{P}_{\underline 1,0,2,1,1}(\ve[\iota],y,x,w,0),\nnb
	B^{\ssc[4],-1}(x,y)=&
	\delta_{x,\ve[\iota]}\diagRepulsiveLetter{T}_{\underline 1,\underline 1,2}(\ve[\iota],y,0)
	+\delta_{x,\ve[\iota]}\sum_w\diagRepulsiveLetter{S}_{\underline 1,\underline 1,1,1}(\ve[\iota],y,w,0)
	+\sum_w \diagRepulsiveLetter{P}_{\underline 1,1,\underline 1,0,0}(\ve[\iota],x,y,w,0),\nnb
	B^{\ssc[4],-2}(x,y)=&
	\delta_{x,\ve[\iota]}\sum_w\diagRepulsiveLetter{S}_{\underline 1,2,0,1}(\ve[\iota],y,w,0)
	+\diagRepulsiveLetter{S}_{\underline 1,1,2,0}(\ve[\iota],x,y,0)
	+\sum_w \diagRepulsiveLetter{P}_{\underline 1,1,2,0,1}(\ve[\iota],x,y,w,0).
	\end{align*}
Further, the diagram $B^{\ssc[5],m}$ bounds the case where $\tb_1=\ve[\iota]$ and $\bb_1\neq 0$, and is defined as
	\begin{align*}
	B^{\ssc[5],0}(x,y)=&z\gj\delta_{x,y}
	\sum_{\kappa\neq-\iota}\diagRepulsiveLetter{D}_{2}(\ve[\iota]+\ve[\kappa])\nnb
	&\times \sum_{w}\big(\delta_{x,\ve[\iota]}\diagRepulsiveLetter{B}_{1,1}(w-\ve[\iota],\ve[\kappa])+\diagRepulsiveLetter{T}_{1,0,0}(x-\ve[\iota],w-\ve[\iota],\ve[\kappa])
 	+\diagRepulsiveLetter{T}_{0,0,1}(w,x,\ve[\iota])\big),\nnb &+z\gj\delta_{x,y}\sum_{\kappa\neq-\iota}\sum_{v,w}\diagRepulsiveLetter{T}_{2,1,1}(\ve[\iota]+\ve[\kappa],v,0)\diagRepulsiveLetter{T}_{0,0,0}(w-v,x-v,\ve[\iota]-v),\nnb
	B^{\ssc[5],-1}(x,y)=&z\gj \sum_{\kappa\neq-\iota}\diagRepulsiveLetter{D}_{2}(\ve[\iota]+\ve[\kappa])\sum_{w}\big(\diagRepulsiveLetter{S}_{0,0,\underline 1,0}(w,y,x,\ve[\iota])
	+\diagRepulsiveLetter{S}_{0,\underline 1,0,0}(x-\ve[\iota],y-\ve[\iota],w-\ve[\iota],\ve[\kappa])\big)\nnb
	&+z\gj\sum_{\kappa\neq-\iota}\sum_{v,w}\diagRepulsiveLetter{T}_{2,1,1}(\ve[\iota]+\ve[\kappa],v,0)
	\diagRepulsiveLetter{S}_{0,0,\underline 1,0}(w-v,y-v,x-v,\ve[\iota]-v),\nnb
	B^{\ssc[5],-2}(x,y)=&
	z\gj\sum_{\kappa\neq-\iota}\diagRepulsiveLetter{D}_{2}(\ve[\iota]+\ve[\kappa])\sum_{w}
	\big(\diagRepulsiveLetter{S}_{0,0,2,0}(w,y,x,\ve[\iota])+\diagRepulsiveLetter{S}_{0,2,0,0}(x-\ve[\iota],y-\ve[\iota],w-\ve[\iota],\ve[\kappa])\big)\nnb
	&+z\gj\sum_{\kappa\neq-\iota}\sum_{v,w}\diagRepulsiveLetter{T}_{2,1,1}(\ve[\iota]+\ve[\kappa],v,0)
	\diagRepulsiveLetter{S}_{0,0,2,0}(w-v,y-v,x-v,\ve[\iota]-v),\nnb
	B^{\ssc[5],1}(x,y)=&B^{\ssc[5],-1}(y,x),\qquad \qquad
	B^{\ssc[5],2}(x,y)=B^{\ssc[5],-2}(y,x).
	\end{align*}
The diagram $B^{\ssc[6],m}$ captures the contribution of $\bb_1\neq 0=u=0$. For this, we simply re-use the diagram $P^{\ssc[1],m}$, and define, for $m=-2,\dots,2$,
	\begin{align*}
	B^{\ssc[6],m}(x,y)=&\tilde G_z(\ve[\iota])(P^{\ssc[1],m}(x,y)-A^{0,m}(0,0,x,y)).\phantom{========================================================}
	\end{align*}
Then, we define $B^{\ssc[7],m},B^{\ssc[8],m},B^{\ssc[9],m}$ to bound the contributions due to $\bb_1=v=u$, $\bb_1=v\neq u$ and $\bb_1\neq v=u$, respectively, for $m=-2,\dots,2$, as
	\begin{align*}
	B^{\ssc[7],m}(x,y)=& \sum_{\bb_1}\diagRepulsiveLetter{D}_{1}(\bb_1)\tilde G_z(\ve[\iota]-\bb_1)A^{0,m}(\bb_1,\bb_1,x,y),	
	\phantom{===========================================================}\\
	B^{\ssc[8],m}(x,y)=& \diagRepulsiveLetter{B}_{3, 1}(\ve[\iota],0)A^{-1,m}(\ve[\iota],0,x,y)
	+\sum_v\diagRepulsiveLetter{T}_{1,1,1}(\ve[\iota],v,0)A^{-2,m}(\ve[\iota],v,x,y)\\
	&+\sum_{v,\bb_1\neq \ve[\iota]} \diagRepulsiveLetter{B}_{1,1}(\bb_1,\ve[\iota])\diagRepulsiveLetter{B}_{0,1}(v,\bb_1)A^{-2,m}(\bb_1,v,x,y),\\
	B^{\ssc[9],m}(x,y)=&
	\sum_{\bb_1} \diagRepulsiveLetter{T}_{1,\underline 1,1}(\ve[\iota],\bb_1,0) A^{-1,m}(\bb_1,\ve[\iota],x,y)
	+\diagRepulsiveLetter{T}_{1,2,1}(\ve[\iota],\bb_1,0) A^{-2,m}(\ve[\iota],\bb_1,x,y)\\
	&+\sum_{\bb_1,v\neq \ve[\iota]} \diagRepulsiveLetter{B}_{1,1}(v,\ve[\iota])\diagRepulsiveLetter{B}_{1,1}(\bb_1,v)A^{-2,m}(\bb_1,v,x,y).
	\end{align*}
In the diagrams $B^{\ssc[10],m}-B^{\ssc[13],m}$ the point $u$ is somewhere on the triangle $0,v,\bb_1$, excluding the corner points, which were already considered using $B^{\ssc[6],m}-B^{\ssc[9],m}$. Thus, we define, for $m=-2,\dots,2$,
	\begin{align*}
	B^{\ssc[10],m}(x,y)=&\sum_{v,\bb_1} \diagRepulsiveLetter{B}_{1,1}(u,\ve[\iota])
	\left(\diagRepulsiveLetter{B}_{\underline 1,1}(\bb_1,u) A^{-1,m}(\bb_1,v,x,y)+\diagRepulsiveLetter{B}_{2,1}(\bb_1,u)A^{-2,m}(\bb_1,v,x,y)\right),\\
	B^{\ssc[11],m}(x,y)=&
	\sum_{v,\bb_1}
	\left(\diagRepulsiveLetter{S}_{1,\underline 1,1,1}(\bb_1,v,\ve[\iota],0) A^{-1,m}(\bb_1,v,x,y)+\diagRepulsiveLetter{S}_{1,2,1,1}(\bb_1,v,\ve[\iota],0)A^{-2,m}(\bb_1,v,x,y)\right)\\ 
	&+\sum_{u,v,\bb_1} \diagRepulsiveLetter{B}_{1,1}(u,\ve[\iota])
	\left(\diagRepulsiveLetter{T}_{1,\underline 1,1}(\bb_1,v,u) A^{-1,m}(\bb_1,v,x,y)+\diagRepulsiveLetter{T}_{1,2,1}(\bb_1,v,u)A^{-2,m}(\bb_1,v,x,y)\right),\\
	B^{\ssc[12],m}(x,y)=&
	\sum_{v,\bb_1}
	\left(\diagRepulsiveLetter{S}_{1, 1,\underline 1,1}(\ve[\iota],\bb_1,v,0) A^{-1,m}(\bb_1,v,x,y)+\diagRepulsiveLetter{S}_{1,1,2,1}(\ve[\iota],\bb_1,v,0) A^{-2,m}(\bb_1,v,x,y)\right)\\ 
	&+\sum_{u,v,\bb_1} \diagRepulsiveLetter{B}_{1,1}(u,\ve[\iota])
	\left(\diagRepulsiveLetter{T}_{1,\underline 1,1}(v,\bb_1,u) A^{-1,m}(\bb_1,v,x,y)+\diagRepulsiveLetter{T}_{1,2,1}(v,\bb_1,u)A^{-2,m}(\bb_1,v,x,y)\right),\\
	B^{\ssc[13],m}(x,y)=&
	\sum_{v,\bb_1}\diagRepulsiveLetter{S}_{1, 1 ,1,1}(\bb_1,\ve[\iota],v,0) A^{-2,m}(\bb_1,v,x,y)\\ 
	&\qquad+\sum_{u,v,\bb_1} \diagRepulsiveLetter{B}_{1,1}(\bb_1,u)\diagRepulsiveLetter{T}_{1,1,1}(v,u,\ve[\iota])A^{-2,m}(\bb_1,v,x,y).
	\end{align*}
In the diagrams $B^{\ssc[14],m}$ ,$B^{\ssc[15],m}$,$B^{\ssc[16],m}$, the vertex $u$ is on the connection $v$ to $w$,
where $u\neq v$, while $u=w$ is possible. The three diagrams represent the cases $v=0$, $v=\bb_1$ and $v\neq \{0,v\}$.
Below, the first part corresponds to $u=\ve[\iota]$ and the second to $u\neq \ve[\iota]$. We thus define
	\begin{align*}
	B^{\ssc[14],0}(x,y)=&\delta_{x,y} \sum_{\bb_1,w,u}
	\left(\diagRepulsiveLetter{T}_{\underline 1,1,0}(\bb_1,u,\ve[\iota])\tilde G_{3,z}(\bb_1)
	+\diagRepulsiveLetter{T}_{2,1,0}(\bb_1,u,\ve[\iota])\tilde G_{2,z}(\bb_1)\right)\\
	&\qquad \times\diagRepulsiveLetter{T}_{1,0,0}(x-\bb_1,w-\bb_1,u-\bb_1),\\
	B^{\ssc[14],-1}(x,y)=& \sum_{\bb_1,w,u}
	\left(\diagRepulsiveLetter{T}_{\underline 1,1,0}(\bb_1,u,\ve[\iota])\tilde G_{3,z}(\bb_1)
	+\diagRepulsiveLetter{T}_{2,1,0}(\bb_1,u,\ve[\iota])\tilde G_{2,z}(\bb_1)\right)\\
	&\qquad \times\diagRepulsiveLetter{S}_{1,\underline 1, 0,0}(x-\bb_1,y-\bb_1,w-\bb_1,u-\bb_1),\\
	B^{\ssc[14],-2}(x,y)=& \sum_{\bb_1,w,u}
	\left(\diagRepulsiveLetter{T}_{\underline 1,1,0}(\bb_1,u,\ve[\iota])\tilde G_{3,z}(\bb_1)
	+\diagRepulsiveLetter{T}_{2,1,0}(\bb_1,u,\ve[\iota])\tilde G_{2,z}(\bb_1)\right)\\
	&\qquad \times \diagRepulsiveLetter{S}_{1,2, 0,0}(x-\bb_1,y-\bb_1,w-\bb_1,u-\bb_1),\\
	B^{\ssc[14],1}(x,y)=&B^{\ssc[14],-1}(y,x),\qquad \quad B^{\ssc[14],2}(x,y)=B^{\ssc[14],-2}(y,x),
	\end{align*}
and
	\begin{align*}
	B^{\ssc[15],0}(x,y)=&\delta_{x,y}
	\sum_{\bb_1,w}
	\left(\diagRepulsiveLetter{P}_{\underline 1,1,0,0,1}(\bb_1,x,w,\ve[\iota],0)\tilde G_{3,z}(\bb_1)
	+\diagRepulsiveLetter{P}_{2,1,0,0,1}(\bb_1,x,w,\ve[\iota],0)\tilde G_{2,z}(\bb_1)\right)\phantom{===========================================================}\\
	&+\delta_{x,y}\sum_{\bb_1,w,u\neq\ve[\iota]}\diagRepulsiveLetter{D}_{1}(\bb_1)\diagRepulsiveLetter{B}_{1,1}(u,\ve[\iota])
	\diagRepulsiveLetter{T}_{1,0,0}(x-\bb_1,w-\bb_1,u-\bb_1),\\
	B^{\ssc[15],-1}(x,y)=& \tilde G_{1,z}(\ve[\iota])\sum_{\bb_1,w}
	\left(\diagRepulsiveLetter{P}_{\underline 1,1,\underline 1,0,0}(\bb_1,x,y,w,\ve[\iota])\tilde G_{3,z}(\bb_1)
	+\diagRepulsiveLetter{P}_{2,1,\underline 1, 0,0}(\bb_1,x,y,w,\ve[\iota])\tilde G_{2,z}(\bb_1)\right)
	\phantom{===========================================================}\\
	&+\sum_{\bb_1,w,u\neq\ve[\iota]}\diagRepulsiveLetter{D}_{1}(\bb_1)\diagRepulsiveLetter{B}_{1,1}(u,\ve[\iota])
	\diagRepulsiveLetter{S}_{1,\underline 1,0,0}(x-\bb_1,y-\bb_1,w-\bb_1,u-\bb_1),
	\end{align*}
	\begin{align*}
	B^{\ssc[15],-2}(x,y)=& \sum_{\bb_1,w}
	\tilde G_{1,z}(\ve[\iota])
	\diagRepulsiveLetter{P}_{\underline 1,1,2,0,0}(\bb_1,x,y,w,\ve[\iota])\tilde G_{3,z}(\bb_1)\phantom{===========================================================}\\
	&+\sum_{\bb_1,w} \tilde G_{1,z}(\ve[\iota]) \diagRepulsiveLetter{D}_{2}(\bb_1)
	\diagRepulsiveLetter{S}_{1,2,0,0}(x-\bb_1,y-\bb_1,w-\bb_1,\ve[\iota]-\bb_1)\\
	&+\sum_{\bb_1,w,u\neq\ve[\iota]}\diagRepulsiveLetter{D}_{1}(\bb_1)\diagRepulsiveLetter{B}_{1,1}(u,\ve[\iota])
	\diagRepulsiveLetter{S}_{1,2,0,0}(x-\bb_1,y-\bb_1,w-\bb_1,u-\bb_1),\\
	B^{\ssc[15],1}(x,y)=&B^{\ssc[15],-1}(y,x),\qquad \quad B^{\ssc[15],2}(x,y)=B^{\ssc[15],-2}(y,x),
	\end{align*}
Finally, we define
	\begin{align*}
	B^{\ssc[16],0}(x,y)=&\delta_{x,y}\sum_{w,v,\bb_1} \diagRepulsiveLetter{B}_{1,1}(v,\ve[\iota])\diagRepulsiveLetter{B}_{1,1}(\bb_1,v)
	\diagRepulsiveLetter{T}_{1,0,0}(x-\bb_1,w-\bb_1,\ve[\iota]-\bb_1)\phantom{===========================================================}\\
	&+\delta_{x,y}\sum_{u,w,v,\bb_1} \diagRepulsiveLetter{T}_{1,1,1}(\bb_1,v,0) \diagRepulsiveLetter{B}_{1,1}(u-v,\ve[\iota]-v)\diagRepulsiveLetter{T}_{1,0,0}(x-\bb_1,w-\bb_1,u-\bb_1),\\
	B^{\ssc[16],-1}(x,y)=&\sum_{w,v,\bb_1} \diagRepulsiveLetter{B}_{1,1}(v,\ve[\iota])\diagRepulsiveLetter{B}_{1,1}(\bb_1,v)
	\diagRepulsiveLetter{S}_{1,\underline 1, 0,0}(x-\bb_1,y-\bb_1,w-\bb_1,\ve[\iota]-\bb_1)\\
	&+\sum_{u,w,v,\bb_1} \diagRepulsiveLetter{T}_{1,1,1}(\bb_1,v,0) \diagRepulsiveLetter{B}_{1,1}(u-v,\ve[\iota]-v)\diagRepulsiveLetter{S}_{1,\underline 1,0,0}(x-\bb_1,y-\bb_1,w-\bb_1,u-\bb_1),\\
	B^{\ssc[16],-2}(x,y)=&\sum_{w,v,\bb_1} \diagRepulsiveLetter{B}_{1,1}(v,\ve[\iota])\diagRepulsiveLetter{B}_{1,1}(\bb_1,v)\diagRepulsiveLetter{S}_{1,2,0,0}(x-\bb_1,y-\bb_1,w-\bb_1,\ve[\iota]-\bb_1)\\
	&+\sum_{u,w,v,\bb_1} \diagRepulsiveLetter{T}_{1,1,1}(\bb_1,v,0) \diagRepulsiveLetter{B}_{1,1}(u-v,\ve[\iota]-v)\diagRepulsiveLetter{S}_{1,2,0,0}(x-\bb_1,y-\bb_1,w-\bb_1,u-\bb_1),\\
	B^{\ssc[16],1}(x,y)=&B^{\ssc[16],-1}(y,x),\qquad \quad B^{\ssc[16],2}(x,y)=B^{\ssc[16],-2}(y,x).
	\end{align*}
This defines the bounding diagrams $B^{\ssc[M],m}$ for $M=1, \ldots, 16$ and $m=-2, \ldots, 2$, that bound the initial part of the diagrams for $\Xi^{\iota}$ and $\Pi^{\iota,\kappa}$.

\subsection{Weighted intermediate diagrams $C^{m,l}$}
\label{sec-formal-C}
In this section we define the weighted diagrams that we use to bound diagrams such as $\sum_{x}|x|^2 \Xi^{\ssc[N]}(x)$.
The shape of such diagrams is shown in the {\it informal definition} part of Section \ref{secBuildingBlocks}.
There are $25$ different cases for this diagram. Using some symmetry arguments, these reduce to $16$ diagrams, that we now define one by one. We describe the bounds for LAs only, as LTs are special cases of LAs. For the bounds on LTs, we simply restrict the definitions below to LTs.

Again we use translation variance in the form that, for all $m,l\in \{-2,-1,0,1,2\}$,
	\begin{align*}
  	C^{m,l}(u,v,x,y)=C^{m,l}(0,v-u,x-u,y-u).\phantom{===========================================================}
	\end{align*}
We first define $C^{0,2}, C^{-2,0}, C^{-2,2}, C^{-2,1}$ and $C^{-1,2}$ as
	\begin{align*}
	C^{0,2}(0,v,x,y)&=\delta_{0,v}|y|^2G_z(y)  \sum_{w}\diagRepulsiveLetter{B}_{0,0}(w,x),\phantom{===========================================================}\\
	C^{-2,0}(0,v,x,y)&=\delta_{x,y}|y|^2G_z(y) \sum_{w}\diagRepulsiveLetter{B}_{0,0}(w-v,y-v),\\
	C^{-2,2}(0,v,x,y)&= |y|^2 G_z(y) \sum_{w}\diagRepulsiveLetter{B}_{0,0}(w-v,x-v),\\
	C^{-2,1}(0,v,x,y)&=  2d D(x-y) |y|^2 G_z(y) \sum_{w}\diagRepulsiveLetter{B}_{0,0}(w-v,x-v),\\
	C^{-1,2}(0,v,x,y)&=  2d D(v) |y|^2 G_z(y) \sum_{w}\diagRepulsiveLetter{B}_{0,0}(w-v,x-v).
	\end{align*}
In the following definitions of $C^{2,0}, C^{2,2}, C^{2,1}$ and $C^{1,2}$, the weight is distributed along the line $v\to w\to y$:
	\begin{align*}
	C^{2,0}(0,v,x,y)&=\delta_{x,y}\sum_{w}\tilde G_{1,z}(v) 2^{\indic{w\nin \{0,y\}}}( |w|^2 G_z(w)\tilde G_z(y-w)+ |y-w|^2\tilde G_z(w) G_z(y-w) ),	
	\phantom{===========================================================}\\
	C^{2,2}(0,v,x,y)&= \tilde G_z(x-v) \sum_{w}2^{\indic{w\nin \{0,y\}}}( |w|^2 G_z(w)\tilde G_z(y-w)+ |y-w|^2\tilde G_z(w) G_z(y-w) ),\\
	C^{2,1}(0,v,x,y)&=   2d D(x-y)\tilde G_z(x-v)\\
	&\qquad \qquad \times \sum_{w}2^{\indic{w\nin \{0,y\}}}( |w|^2 G_z(w)\tilde G_z(y-w)+ |y-w|^2\tilde G_z(w) G_z(y-w)),	
	\phantom{===========================================================}\\
	C^{1,2}(0,v,x,y)&=   2d D(v)\tilde G_z(x-v)\\
	&\qquad \qquad \times \sum_{w}2^{\indic{w\nin \{0,y\}}}( |w|^2 G_z(w)\tilde G_z(y-w)+ |y-w|^2\tilde G_z(w) G_z(y-w)).
	\end{align*}
We define $C^{-1,1}$ using sums over LAs, to retain as many repulsive properties as possible:
	\begin{align*}
	C^{-1,1}(0,v,x,y)=&\frac{1}{g_z} (2d)^2 D(v)D(x-y) |x|^2 \sum_{A_1\ni v, A_2\ni y} \sum_{A_3\colon 0,x\in A_3} z^{|A_1|+|A_2|+|A_3|},
	\end{align*}
where $A_1, A_2$ are planted animals while $A_3$ is a LA, and $A_1,A_2, A_3$ are such that (a) $A_1,A_2$ share at least one vertex, while (b) $A_1,A_3$ and $A_2,A_3$ do not share a vertex. Similarly, we define $C^{1,1}$ as
	\begin{align*}
	C^{1,1}(0,v,x,y)=& \frac{1}{g_z}(2d)^2D(v)D(x-y) \sum_{A_1, A_2} \sum_{A_3\colon 0,x\in A_3} z^{|A_1|+|A_2|+|A_3|}\\
	&\qquad \times\sum_{w\in A_1, w\in A_2} 2^{\indic{w\nin \{v,y\}}} \Big(\indic{v\in A_1, y\in A_2} |v-w|^2+\indic{v\in A_2, y\in A_1} |y-w|^2\Big),
	\end{align*}
where now $A_2, A_3$ are planted animals while $A_1$ is a LA, and $A_1,A_2, A_3$ are such that (a) $w$ denote the first intersection point between $A_1$ and $A_2$, while (b) $A_1,A_3$ and $A_2,A_3$ do not share a vertex. These definitions allow us to make maximal use of repulsiveness between the lines in the bounding diagram. If we discard that repulsiveness, then we arrive instead at the bounds
	\begin{align*}
	C^{-1,-1}(0,v,x,y)&\leq (2d)^2 D(v)D(x-y)|x|^2 G_z(x) \sum_{w}\diagRepulsiveLetter{B}_{0,0}(w-v,y-v),\\
	C^{1,1}(0,v,x,y)&\leq(2d)^2D(v)D(x-y)\tilde G_{1,z}(x) \\
	&\qquad \times\sum_{w}2^{\indic{w\nin \{v,y\}}}( |w|^2 G_z(w)\tilde G_z(y-w)+ |y-w|^2\tilde G_z(w) G_z(y-w)).
	\end{align*}
Including the repulsiveness allows us to reduce the bounds on these building blocks by about 45\%, which is significant.

Each connection in our repulsive diagrams $\diagRepulsiveLetter{B},\diagRepulsiveLetter{T},\diagRepulsiveLetter{S}$ characterizes a planted animal.
For the following bounds, the line that carries the spatial weight $|x|^2$ should correspond to a regular (i.e., a non-planted) animal.
For this we modify Definitions \ref{defSkeletonLA} and \ref{def-repulsive-diamgrams}, so that the initial and last sausage of one line might be non-trivial, see part (4) of Definition \ref{defSkeletonLA}. When we ignore repulsiveness, this line will correspond to a regular two-point function $\bar{G}_z$.

Instead of repeating Definition \ref{defSkeletonLA}, let us just mention the differences that arise in it:
We extend the index range $j_i\in\{l_i,\underline l_i\}$ corresponding to lines of at least $l_i$ and exactly $\underline{l}_i$ bonds, respectively, to $j_i\in\{l_i,\underline l_i,\bar l_i,\bar {\underline l}_i\}$.
Here, the indices $\bar l_i,\bar {\underline l}_i$ correspond to the line being a regular LA, and the number of bonds being at least $\bar{l}_i$, respectively, exactly equal to $\bar {\underline l}_i$.
In more detail, in Definition \ref{defSkeletonLA}(2), $l_i$ and $\bar l_i$ both enforce that $|p^i|\leq l_i$, while
$\underline l_i$ and $\underline {\bar l}_i$ both enforce that $|p^i|=l_i$. Further, we drop the constraint in Definition \ref{defSkeletonLA}(4) when $j_i\in\{\bar l_i,\bar {\underline l}_i\}$, so that the sausage-walk
$\omega_i$ corresponds to a regular LA. Then, we extend Definition  \ref{def-repulsive-diamgrams} using these skeletons to create diagrams alike $\diagRepulsiveLetter{T}_{\bar 1,1,1}$ and $\diagRepulsiveLetter{S}_{1,\bar 0,1,\underline 1}$. To summarize, the line with a bar on top of its length corresponds to a regular animal.

We use this to define $C^{0,0}, C^{0,1}, C^{-1,0}$ and $C^{1,0}$ as
	\begin{align*}
	C^{0,0}(0,v,x,y)=&\ \frac {\delta_{0,v}\delta_{x,y}}{g_z}|x|^2\left(2\diagRepulsiveLetter{B}_{\bar 1,2}(x,0)+\sum_w \diagRepulsiveLetter{T}_{\bar 1,1,1}(x,w,0)\right),\\
	C^{0,1}(0,v,x,y)=&\ 2d D(x-y)\frac{\delta_{0,v}}{g_z}|y|^2\left(2\diagRepulsiveLetter{B}_{\bar 1,1}(-y,y-x)+\sum_w \diagRepulsiveLetter{T}_{\bar 1,1,1}(-y,w-y,x-y)\right),\\
	C^{-1,0}(0,v,x,y)=&\ 2d D(v)\frac{\delta_{x,y}}{g_z}|x|^2\left(2\diagRepulsiveLetter{B}_{\bar 1,1}(x,0)+\sum_w \diagRepulsiveLetter{T}_{\bar 1,1,1}(x,w,0)\right),\\
	C^{1,0}(0,v,x,y)=&\ 2dD(v)\delta_{x,y}|x-v|^2\frac {\gj}{g_z}\left(\diagRepulsiveLetter{B}_{\underline 1, 1}(v,x)+\sum_w \diagRepulsiveLetter{T}_{ \underline 1,1, 0}(x,w  ,v)\right)\\
&+2dD(v)\frac{\delta_{x,y}}{g_z}\sum_w2^{\indic{w\nin \{v,x\}}}
\left(|w-v|^2 \diagRepulsiveLetter{T}_{2,1,\bar 1}(x,w,v)+|x-w|^2 \diagRepulsiveLetter{T}_{2,\bar 1,1}(x,w,v)\right).
	\end{align*}
The factor $1/g_z$ is created by the normalisation of the coefficients, as introduced in \refeq{defXiNanimalPrime}.
We define the remaining cases of $m\in \{-2,-1,0,1,2\}$ and $l\in \{-1,-2\}$ using symmetry by
	\begin{align*} C^{m,-l}(0,v,x,y)=C^{m,l}(0,v,y,x).\phantom{===========================================================}.
	\end{align*}

\subsection{Weighted initial diagrams of $\Xi$ and $\Xi^{\iota}$}
\label{Appendix-def-Initial-Xi-XiIota-Delta}
In this section, we define all the weighted versions of the initial diagrams of $\Xi$ and $\Xi^{\iota}$. These initial diagrams, without any weights, were already defined in Appendix \ref{Appendix-def-Initial-Xi-XiIota}.

\subsubsection{Weighted initial diagrams of $\Xi$}
\label{Appendix-def-Initial-Xi-Delta}
Recall the discussion of $b_1$ in Section \ref{Appendix-def-Initial-Xi}. For the weighted diagram bounding $\Xi$ we first define the weighted version of the initial triangle for the case that $\bb_1\neq 0$ as
	\begin{align*}
	S^{w,0}(x,y) =&\delta_{x,y}\diagRepulsiveLetter{D}_{1}(x)|x|^2,\\
	S^{w,-1}(x,y) =& 2d D(x-y) \big( \delta_{y,0} \tilde G_{3,z}(x)+(1-\delta_{y,0})|x|^2  G_{z}(x) \tilde G_{z}(y)\big),\\
	S^{w,-2}(x,y) =& \delta_{y,0} |x|^2 G_{2,z}(x)+(1-\delta_{y,0})|x|^2 G_{z}(x) \tilde G_{z}(y).
	\end{align*}
These are indeed weighted versions of $S^{m}(x,y)$ defined in Appendix \ref{Appendix-def-Initial-Xi}.
Then, we define the weighted diagram by
	\begin{align*}
	\Delta^{{\rm start},m}(x,y)=& C^{0,m}(0,0,x,y)\\
	&+2\sum_{l=-2}^0\sum_{u,v\in\Zd} \Big( S^{w,l}(u,v) A^{l,m}(u,v,x,y)+
	\big(S^{m}(u,v)-\delta_{m,0}\delta_{u,0}\big) C^{l,m}(u,v,x,y)\Big).\nn
	\end{align*}
This serves as a bound for the weighted initial diagram $\bar P^{\ssc[1],m}$.
The first term captures the case $\bb_1=0$. For the case $\bb_1=u\neq 0$, we split the weight using $|x|^2\leq 2(|u|^2+|x-u|^2)$. For LTs the second term is not required.

\subsubsection{Weighted initial diagrams of $\Xi^{\iota}$}
\label{Appendix-def-Initial-XiIota-Delta}
In this section we define $\Delta^{\text {iota},{\sss I},m}$ and $\Delta^{\text {iota},{\sss II},m}$, which are used to bound the initial parts of the bounding diagram of the weighted NoBLE coefficients
	\begin{align}
	\label{two-weighted-Xis}
	\sum_x |x|^2 \Xi^{\ssc[N],\iota}_z(x) \qquad \text{and} \qquad \sum_x|x-\ve[\iota]|^2 \Xi^{\ssc[N],\iota}_z(x),
	\end{align}
respectively.  We defined it as the weighted versions of the diagrams shown in Figure \ref{BoundLA-Figure-initial} and indicated as $B^{\ssc[M],m}(x,y)$ with $M=1, \ldots, 16$, in which the path $x$ to $y$ is removed (so that there is no two-point function $G_z(y-x)$ or any of its close friends). We always put the weight on the bottom part of the diagram.

Recall that unweighted versions of such diagrams were defined in Appendix \ref{Appendix-def-Initial-XiIota}. There, we have explained that we need $16\times 5$ of such diagrams. When weighing them, this number increases to $2\times 16\times 5$, since we need diagrams that are weighted with $|x|^2$ as well as with $|x-\ve[\iota]|^2$ (see \eqref{two-weighted-Xis}). In more detail, we denote the $2\times 16$ diagrams by $D^{\ssc[M],{\sss I},m}(x,y)$ and $D^{\ssc[M],{\sss II},m}(x,y)$, where $I$ corresponds to diagrams with the $|x|^2$ weight as in the first term in \eqref{two-weighted-Xis}, while $II$ corresponds to the diagrams with the $|x-\ve[\iota]|^2$ weight as in the second term in \eqref{two-weighted-Xis}. We then write
	\eqn{
	\Delta^{\text {iota},{\sss I},m}(x,y)=\sum_{M=1}^{16} D^{\ssc[M],{\sss I},m}(x,y),
	\qquad
	\Delta^{\text {iota},{\sss II},m}(x,y)=\sum_{M=1}^{16} D^{\ssc[M],{\sss II},m}(x,y).
	}
We are left to define $D^{\ssc[M],{\sss I},m}(x,y), D^{\ssc[M],{\sss II},m}(x,y)$ for $M\in\{1, \ldots, 16\}$ and $m\in \{-2,\ldots, 2\}$.

For $M=1$, we define
	\begin{align*}
	D^{\ssc[1],{\sss I},m}(x,y)&= \tilde G_z(\ve[\iota])C^{0,m}(0,0,x,y),\\
	D^{\ssc[1],{\sss II},-m}(x,y)=& 2\tilde G_z(\ve[\iota]) \left(C^{0,m}(0,0,x,y)+A^{m,0}(x,y,0,0)\right).
	\end{align*}
We use the spatial symmetry, as in \refeq{BoundLT-tmp-3}, to improve the bound for $m=0$ to
	\begin{align*}
	D^{\ssc[1],{\sss II},0}(x,y)=& \delta_{x,y} \left(\tilde G_z(\ve[\iota])C^{0,0}(0,0,x,x)+G_z(\ve[\iota]) A^{0,0}(0,0,x,x)\right).
	\end{align*}
	
For $M=2$, we define
	\begin{align*}
	D^{\ssc[2],{\sss I},0}(x,y)=&\frac {\delta_{x,y}|x|^2}{g_z} \diagRepulsiveLetter{S}_{1,0,0,\bar 0}(\ve[\iota],w,x,0),\\
	D^{\ssc[2],{\sss I},-1}(x,y)=&2d D(x-y) \frac {|x|^2}{g_z}   \diagRepulsiveLetter{S}_{0,1,0,\bar 0}(w-y,\ve[\iota]-y,-y,x-y),\\
	D^{\ssc[2],{\sss I},-2}(x,y)=& \frac {|x|^2}{g_z} \diagRepulsiveLetter{S}_{0,1,0,\bar 0}(w-y,\ve[\iota]-y,-y,x-y),\\
	D^{\ssc[2],{\sss I},1}(x,y)=&D^{\ssc[2],{\sss I},-1}(y,x),\qquad \quad D^{\ssc[2],{\sss I},2}(x,y)=D^{\ssc[2],{\sss I},-2}(y,x),
	\end{align*}
and $D^{\ssc[2],{\sss II},m}(x,y)$ in the same manner where $|x|^2$ is replaced by $|x-\ve[\iota]|^2$.
For $D^{\ssc[2],{\sss II},m}(x,y)$, we note a convenient estimate, which relies on the bound
	\begin{align*}
	\sum_{\footnotesize\begin{array}{c}
                            A \text{ animal} \\
                            0,\ve[\iota],x,w_1\in A
                          \end{array}}z ^{|A|}\leq
	\sum_{\footnotesize\begin{array}{c}
                            A \text{ animal} \\
                            \ve[\iota],x,w_1\in A
                          \end{array}}z ^{|A|}\leq \tilde G_z(w-\ve[\iota]) \bar G(x-\ve[\iota]).
	\end{align*}
For example, this allows us to bound
	\begin{align*}
	D^{\ssc[2],{\sss II},0}(x,y)\leq |x-\ve[\iota]|^2 \bar G(x-\ve[\iota])  \tilde G_z(w-\ve[\iota])  \tilde G_z(w-x).
	\end{align*}
This has the major advantage that we can directly use $f_3$ to bound such diagrams. We will do so when the path connecting $\ve[\iota]$ and $x$ has length at least $6$. For shorter connections we use explicit computation to improve our bounds further.

For $M=3$, we define
	\begin{align*}
	D^{\ssc[3],{\sss I},m}(x,y)=&\diagRepulsiveLetter{B}_{\underline 1,1}(u,\ve[\iota])C^{-1,m}(0,u,x,y)+\diagRepulsiveLetter{B}_{2,1}(u,\ve[\iota])C^{-2,m}(0,u,x,y),\\
	D^{\ssc[3],{\sss II},m}(x,y)=&2D^{\ssc[3],{\sss I},m}(x,y)+2\tfrac{\gj}{g_z} \big(\tilde G_{2,z}(u-\ve[\iota]) A^{m,-1}(x,y,0,u)+\tilde G_{1,z}(u-\ve[\iota])A^{m,-2}(x,y,0,u)\big).
	\end{align*}
	
For $M=4$, we define
	\begin{align*}
	D^{\ssc[4],{\sss I},0}(x,y)=&\frac {|x|^2}{g_z}\delta_{x,y}\diagRepulsiveLetter{S}_{\bar 0,\underline 1,0,0,0}(\ve[\iota]-x,-x,w-x,y-x),\\
	D^{\ssc[4],{\sss I},-1}(x,y)=&\frac {|x|^2}{g_z} 2d D(x-y)\diagRepulsiveLetter{S}_{\bar 0,\underline 1,0,0,0}(\ve[\iota]-x,-x,w-x,y-x),\\
	D^{\ssc[4],{\sss I},-2}(x,y)=&\frac {|x|^2}{g_z} \diagRepulsiveLetter{S}_{\bar 0,\underline 1,0,0}(\ve[\iota]-x,-x,w-x,y-x),\\
	D^{\ssc[4],{\sss I},1}(x,y)=&D^{\ssc[4],{\sss I},-1}(y,x),\qquad \quad D^{\ssc[4],{\sss I},2}(x,y)=D^{\ssc[4],{\sss I},-2}(y,x),
	\end{align*}
and we define $D^{\ssc[4],{\sss II},m}(x,y)$ in the same manner were $|x|^2$ is replaced by $|x-\ve[\iota]|^2$.

For $M=5$, we define
	\begin{align*}
	D^{\ssc[5],{\sss II},0}(x,y)=&\sum_{\bb_1}\diagRepulsiveLetter{D}_{2}(\bb_1)2dz D(\bb_1-\ve[\iota]) |x-\ve[\iota]|^2 \\ &\times\left(\diagRepulsiveLetter{T}_{\bar 0,0,0}(x-\ve[\iota],w-\ve[\iota],\bb_1-\ve[\iota])
+\diagRepulsiveLetter{T}_{\bar 0,0,0}(x-\ve[\iota],w-\ve[\iota],0-\ve[\iota])\right)\nnb
	&+\sum_{\bb_1,v}\diagRepulsiveLetter{T}_{2,1,1}(\bb_1,v,0)  |x-\ve[\iota]|^2 2dz D(\bb_1-\ve[\iota]) \diagRepulsiveLetter{S}_{\bar 0,0,0,0}(x-\ve[\iota],w-\ve[\iota],u-\ve[\iota]),\nnb
	D^{\ssc[5],{\sss II},-1}(x,y)=&2d D(x-y)\sum_{\bb_1}\diagRepulsiveLetter{D}_{2}(\bb_1)2dz D(\bb_1-\ve[\iota]) |x-\ve[\iota]|^2 G(x-\ve[\iota])\\
	&\qquad \times\left(\diagRepulsiveLetter{B}_{0,0}(w-y,\bb_1-y)+\diagRepulsiveLetter{B}_{0,0}(w-y,-y)\right)\nnb
	&+2d D(x-y)\sum_{\bb_1,v}\diagRepulsiveLetter{T}_{2,1,1}(\bb_1,v,0) 2dz D(\bb_1-\ve[\iota]) |x-\ve[\iota]|^2 G(x-\ve[\iota])\diagRepulsiveLetter{B}_{0,0}(w-y,v-y),\nnb
	D^{\ssc[5],{\sss II},-2}(x,y)=&\sum_{\bb_1}\diagRepulsiveLetter{D}_{2}(\bb_1)2dz D(\bb_1-\ve[\iota]) |x-\ve[\iota]|^2
	 G(x-\ve[\iota])\left(\diagRepulsiveLetter{B}_{0,0}(w-y,\bb_1-y)+\diagRepulsiveLetter{B}_{0,0}(w-y,-y)\right)\\
	&+\sum_{\bb_1,v}\diagRepulsiveLetter{T}_{2,1,1}(\bb_1,v,0) 2dz D(\bb_1-\ve[\iota]) |x-\ve[\iota]|^2 G(x-\ve[\iota])\diagRepulsiveLetter{B}_{0,0}(w-y,v-y),\nnb
	D^{\ssc[5],{\sss II},1}(x,y)=&D^{\ssc[5],{\sss II},-1}(y,x),\qquad \quad D^{\ssc[5],{\sss II},2}(x,y)=D^{\ssc[5],{\sss II},-2}(y,x),
	\end{align*}
and we define $D^{\ssc[5],{\sss I},m}(x,y)$ in the same manner where $|x-\ve[\iota]|^2$ is replaced by $|x|^2$.

For $M=6$, we define
	\begin{align*}
	D^{\ssc[6],{\sss I},m}(x,y)=& \tilde G_z(\ve[\iota]) \big(\Delta^{\text {start},m}(x,y)-C^{0,m}(0,0,x,y)\big),\\
	D^{\ssc[6],{\sss II},m}(x,y)=&\frac 3 2 D^{\ssc[6],{\sss I},m}(x,y)
	+3 \frac{\tilde G_z(\ve[\iota])}{g_z}\sum_{u}\diagRepulsiveLetter{D}_{1}(u)A^{m,0}(u,u,x,y)\nnb
	&+3   \frac {\tilde G_z(\ve[\iota])}{g_z}\sum_{u,v}\big(\delta_{0,v}\tilde G_{2,z}(u) +\diagRepulsiveLetter{B}_{1,1}(-v,u-v)\big)\sum_{l=-2}^{-1}A^{m,l}(u,v,x,y).\nn
	\end{align*}
The factor $\tfrac 3 2$ is created by the split of weights.
For $D^{\ssc[6],{\sss I},m}(x,y)$ we split the weight using $|x|^2\leq 2 (|\bb_1|^2+|x-\bb_1|^2)$.
For $D^{\ssc[6],{\sss I},m}(x,y)$ this is replaced in by $|x-\ve[\iota]|^2\leq 3 (|\ve[\iota]|^2+|\bb_1|^2+|x-\bb_1|^2)$,
so we $\tfrac 3 2$ simply replaces the factor $2$ by $3$ for these contribution.

For $M=7$, we define
	\begin{align*}
	D^{\ssc[7],{\sss I},m}(x,y)=&
	2\sum_{u}\diagRepulsiveLetter{D}_{1}(u)\tilde G_z(u-\ve[\iota])
\big(|u|^2\frac {\gj}{g_z}A^{m,0}(x,y,u,u)+ C^{0,m}(u,u,x,y)\big),\\
	D^{\ssc[7],{\sss II},m}(x,y)=&\diagRepulsiveLetter{D}_{1}(\ve[\iota])C^{0,m}(\ve[\iota],\ve[\iota],x,y)+2\sum_{u\neq \ve[\iota]}\diagRepulsiveLetter{D}_{1}(u)\tilde G_{1,z}(u-\ve[\iota])C^{0,m}(u,u,x,y)\nnb
	&+\frac{2}{g_z}\sum_{u}|u-\ve[\iota]|^2\diagRepulsiveLetter{B}_{1,\bar 0}(u,\ve[\iota]) \tilde G_{2,z}(u) A^{m,0}(x,y,u,u).\nn
	\end{align*}
	
For $M=8$, we define
	\begin{align*}
	D^{\ssc[8],{\sss I},m}(x,y)=
	&2\diagRepulsiveLetter{D}_{1}(\ve[\iota]) C^{-1,m}(\ve[\iota],0,x,y)+2  G_{3,z}(\ve[\iota])A^{-1,m}(\ve[\iota],0,x,y)\\
	&+2\sum_{v} \left(\diagRepulsiveLetter{T}_{1,\underline 1,1}(\ve[\iota],v,0)C^{-1,m}(\ve[\iota],v,x,y)+\diagRepulsiveLetter{T}_{1,2,1}(\ve[\iota],v,0)C^{-2,m}(\ve[\iota],v,x,y)\right)\nnb
	&+\frac {2}{g_z}\sum_{v} \diagRepulsiveLetter{B}_{ 1,1} (-\ve[\iota],v-\ve[\iota])\left( A^{m,-1}(x,y,\ve[\iota],v)+A^{m,-2}(x,y,\ve[\iota],v)\right)\nnb
	&+2\sum_{u,v} \diagRepulsiveLetter{B}_{1,1}(u,\ve[\iota])\left(\diagRepulsiveLetter{B}_{0,\underline 1}(v,u)C^{-1,m}(u,v,x,y)
	+\diagRepulsiveLetter{B}_{0,2}(v,u)C^{-2,m}(u,v,x,y)\right)\nnb
	&+\frac {2}{g_z}\sum_{u,v\neq\ve[\iota]}  \diagRepulsiveLetter{B}_{\bar 1,1}(u,\ve[\iota]) \tilde G_z(v) \sum_{b=-2}^{-1} A^{a,-1}(x,y,u,v),\nnb
	D^{\ssc[8],{\sss II},m}(x,y)=&\diagRepulsiveLetter{D}_{1}(\ve[\iota]) C^{-1,m}(\ve[\iota],0,x,y)\\
	&+\sum_{v} \left(\diagRepulsiveLetter{T}_{1,\underline 1,1}(\ve[\iota],v,0)C^{-1,m}(\ve[\iota],v,x,y)+\diagRepulsiveLetter{T}_{1,2,1}(\ve[\iota],v,0)C^{-2,m}(\ve[\iota],v,x,y)\right)\nnb
	&+2\sum_{u,v} \diagRepulsiveLetter{B}_{1,1}(u,\ve[\iota])\left(\diagRepulsiveLetter{B}_{0,\underline 1}(v,u)C^{-1,m}(u,v,x,y)
	+\diagRepulsiveLetter{B}_{0,2}(v,u)C^{-2,m}(u,v,x,y)\right)\nnb
	&+\frac {2}{g_z}\sum_{u,v\neq\ve[\iota]} \diagRepulsiveLetter{B}_{1,\bar 1}(u,\ve[\iota]) \tilde G_z(v) \sum_{b=-2}^{-1} A^{a,-b}(x,y,u,v).\nnb
	\end{align*}
	
For $M=9$, we define
	\begin{align*}
	D^{\ssc[9],{\sss I},m}(x,y)=
	&2\sum_{u}\diagRepulsiveLetter{T}_{1,\underline 1,1}(\ve[\iota],u,0)C^{-1,m}(u,\ve[\iota],x,y)
+\diagRepulsiveLetter{T}_{1,2,1}(\ve[\iota],u,0)C^{-2,m}(u,\ve[\iota],x,y)\\
	&+2\sum_{u, v\neq\ve[\iota]} \diagRepulsiveLetter{B}_{1,1}(v,\ve[\iota])\diagRepulsiveLetter{B}_{1,1}(u,v)C^{-2,m}(u,v,x,y)\nnb
	&+\frac {2}{g_z}\sum_{u}|u|^2\diagRepulsiveLetter{T}_{\bar 1, 1,0}(-u,v-u,\ve[\iota]-u)
\big(A^{-1,m}(u,\ve[\iota],x,y)+ A^{-2,m}(u,\ve[\iota],x,y)\big)\nnb
	\end{align*}
and
	\begin{align*}
	D^{\ssc[9],{\sss II},m}(x,y)=
	&2\sum_{u}\diagRepulsiveLetter{T}_{1,\underline 1,1}(\ve[\iota],u,0) C^{-1,m}(u,\ve[\iota],x,y)
+\diagRepulsiveLetter{T}_{1,2,1}(\ve[\iota],u,0)C^{-2,m}(u,\ve[\iota],x,y)\\
	&\frac {2}{g_z}\sum_{u}|u-\ve[\iota]|^2\diagRepulsiveLetter{T}_{1,\bar 1,1}(\ve[\iota],u,0)\bar A^{-2,m}(u,\ve[\iota],x,y)\\
	&+3\sum_{v,u}\diagRepulsiveLetter{B}_{1,1}(v,\ve[\iota])\diagRepulsiveLetter{B}_{1,1}(u,v)C^{-2,m}(u,v,x,y)\nnb
	&+\frac {3}{g_z}\sum_{u}(1+|u|^2)\diagRepulsiveLetter{T}_{\bar 1, 1,0}(-u,v-u,\ve[\iota]-u)
\big(A^{-1,m}(u,\ve[\iota],x,y)+ A^{-2,m}(u,\ve[\iota],x,y)\big)
	\end{align*}
	
For $M=10$, we define
	\begin{align*}
	D^{\ssc[10],{\sss I},m}(x,y)=&
	2\sum_{u,\bb_1} \diagRepulsiveLetter{B}_{1,1}(u,\ve[\iota])\tilde G_{1,z}(u-\bb_1)
\left( z C^{-1,m}(\bb_1,0,x,y)+|\underline b_1|^2A^{m,-1}(x,y,\bb_1,0)\right)\\
	&+2\sum_{u,\bb_1} \diagRepulsiveLetter{B}_{1,1}(u,\ve[\iota])\left( \diagRepulsiveLetter{B}_{2,1}(\bb_1, u)C^{-2,m}(\bb_1,0,x,y)+\frac{|\bb_1|^2}{g_z} \diagRepulsiveLetter{B}_{\bar 2,1}(\bb_1, u) \bar A^{-2,m}(\bb_1,0,x,y)\right),\\
	D^{\ssc[10],{\sss II},m} (x,y)=&
	2\sum_{u,\bb_1} \diagRepulsiveLetter{B}_{1,1}(u,\ve[\iota])\tilde G_{1,z}(u-\bb_1)
\left( z C^{-1,m}(\bb_1,0,x,y)+|\underline b_1-\ve[\iota]|^2 A^{m,-1}(x,y,\bb_1,0)\right)\\
	&+3\sum_{u,\bb_1} \diagRepulsiveLetter{B}_{1,1}(u,\ve[\iota])\diagRepulsiveLetter{B}_{2,1}(\bb_1, u)C^{-2,m}(\bb_1,0,x,y)\\
	&+\frac{3}{g_z}\sum_{u,\bb_1} \diagRepulsiveLetter{B}_{ 2,1}(-\bb_1, u-\bb_1) \bar A^{-2,m}(\bb_1,0,x,y)\\
&\qquad \times \big( |u-\bb_1|^2\diagRepulsiveLetter{B}_{\bar 1,1}(u-\bb_1,\ve[\iota]-\bb_1)
+|\ve[\iota]-u|^2\diagRepulsiveLetter{B}_{1,\bar 1}(u-\bb_1,\ve[\iota]-\bb_1)
 \big).
	\end{align*}
	
For $M=11$, we define
	\begin{align*}
	D^{\ssc[11],{\sss I},m}(x,y)=&
	2\sum_{v,\bb_1} \left( \diagRepulsiveLetter{S}_{1,1,\underline 1,1}(\ve[\iota],v,\bb_1,0)C^{-1,m}(\bb_1,v,x,y)
+\frac{|\bb_1|^2}{g_z}\diagRepulsiveLetter{S}_{1,1,\underline 1,1}(\ve[\iota],v,\bb_1,0) \bar A^{-1,m}(\bb_1,v,x,y)\right)\\
	&+2\sum_{v,\bb_1} \diagRepulsiveLetter{S}_{1,1,2,1}(\ve[\iota],v,\bb_1,0)C^{-2,m}(\bb_1,v,x,y)\\
	&+2\sum_{v,\bb_1}\frac {|\bb_1|^2}{g_z}   \diagRepulsiveLetter{T}_{\bar 1,1,1}(-\bb_1,\ve[\iota]-\bb_1,v-\bb_1)  A^{m,-2}(x,y,\bb_1,v)\\
	&+2\sum_{u,v,\bb_1} \diagRepulsiveLetter{B}_{1,1}(u,\ve[\iota])
	\left(\diagRepulsiveLetter{T}_{1,\underline 1,1}(\bb_1,v,u)C^{-1,m}(\bb_1,v,x,y)+\frac {|\bb_1|^2}{g_z}\diagRepulsiveLetter{T}_{\bar 1,\underline 1,1}(\bb_1,v,u)\bar A^{-1,m}(\bb_1,v,x,y)\right)\\
	&+2\sum_{u,v,\bb_1} \diagRepulsiveLetter{B}_{1,1}(u,\ve[\iota]) \diagRepulsiveLetter{T}_{1,2,1}(\bb_1,v,u) C^{-2,m}(\bb_1,v,x,y)\\
	&+2\sum_{u,v,\bb_1} \diagRepulsiveLetter{B}_{1,1}(u,\ve[\iota]) |\bb_1|^2 G_z(\bb_1)\tilde G_{1,z}(u-v)A^{m,-2}(x,y,\bb_1,v),
	\end{align*}
and
	\begin{align*}
	D^{\ssc[11],{\sss I},m}(x,y)=&
	2\sum_{v,\bb_1} \left( \diagRepulsiveLetter{S}_{1,1,\underline 1,1}(\ve[\iota],v,\bb_1,0)C^{-1,m}(\bb_1,v,x,y)
+\frac{|\bb_1-\ve[\iota]|^2}{g_z}\diagRepulsiveLetter{S}_{1,1,\underline 1,1}(\ve[\iota],v,\bb_1,0) \bar A^{-1,m}(\bb_1,v,x,y)\right)\\
	&+2\sum_{v,\bb_1} \diagRepulsiveLetter{S}_{1,1,2,1}(\ve[\iota],v,\bb_1,0)C^{-2,m}(\bb_1,v,x,y)\\
	&+2\sum_{v,\bb_1}\frac {|\bb_1-\ve[\iota]|^2}{g_z}   \diagRepulsiveLetter{T}_{\bar 1,1,1}(-\bb_1,\ve[\iota]-\bb_1,v-\bb_1)  A^{m,-2}(x,y,\bb_1,v)\\
&+3\sum_{u,v,\bb_1} \diagRepulsiveLetter{B}_{1,1}(u,\ve[\iota])
	\diagRepulsiveLetter{T}_{1,\underline 1,1}(\bb_1,v,u)\big( C^{-1,m}(\bb_1,v,x,y)+\frac {1}{g_z}\bar A^{-1,m}(\bb_1,v,x,y)\big)\\
	&+3\sum_{u,v,\bb_1} \diagRepulsiveLetter{B}_{1,1}(u,\ve[\iota])
{|\bb_1|^2}{g_z}\diagRepulsiveLetter{T}_{\bar 1,\underline 1,1}(\bb_1,v,u)\bar A^{-1,m}(\bb_1,v,x,y)\\
	&+3\sum_{u,v,\bb_1} \diagRepulsiveLetter{B}_{1,1}(u,\ve[\iota]) \diagRepulsiveLetter{T}_{1,2,1}(\bb_1,v,u) \big(C^{-2,m}(\bb_1,v,x,y)+\frac 1 {g_z} \bar A^{-2,m}(\bb_1,v,x,y)\big)\\
 	&+3\sum_{u,v,\bb_1} \diagRepulsiveLetter{B}_{1,1}(u,\ve[\iota]) |\bb_1|^2 G_z(\bb_1)\tilde G_{1,z}(u-v)A^{m,-2}(x,y,\bb_1,v).
	\end{align*}
	
For $M=12$, we define
	\begin{align*}
	D^{\ssc[12],{\sss I},m}(x,y)=&
	2\sum_{v,\bb_1}\big(  \diagRepulsiveLetter{S}_{1,1,\underline 1,1}(\ve[\iota],\bb_1,v,0)C^{-1,m}(\bb_1,v,x,y)+\diagRepulsiveLetter{S}_{1,1,2,1}(\ve[\iota],\bb_1,v,0)C^{-2,m}(\bb_1,v,x,y)\big)\\
&+2\sum_{v,\bb_1}\frac {|\bb_1|^2}{g_z}\big( \diagRepulsiveLetter{S}_{1,\bar 1,\underline 1,1}(\ve[\iota],\bb_1,v,0)\bar A^{-1,m}(\bb_1,v,x,y)+\diagRepulsiveLetter{S}_{1,\bar 1,2,1}(\ve[\iota],\bb_1,v,0)\bar A^{-2,m} (\bb_1,v,x,y)\big)\\
	&+3\sum_{u,v,\bb_1} \diagRepulsiveLetter{B}_{1,1}(u,\ve[\iota]) \big(\diagRepulsiveLetter{T}_{1,\underline 1,1}(\bb_1,v,u)C^{-1,m}(\bb_1,v,x,y)+\diagRepulsiveLetter{T}_{1,2,1}(\bb_1,v,u)C^{-2,m}(\bb_1,v,x,y)\big)\\
	&+\frac {3}{g_z}\sum_{u,v,\bb_1}|u|^2 \diagRepulsiveLetter{B}_{\bar 1,1}(u,\ve[\iota]) \\
&\qquad \times \big(
\diagRepulsiveLetter{T}_{1,\underline 1,1}(\bb_1,v,u)\bar A^{-1,m}(\bb_1,v,x,y)
+\diagRepulsiveLetter{T}_{1,2,1}(\bb_1,v,u)\bar A^{-2,m}(\bb_1,v,x,y)\big)\\
&+\frac {3}{g_z}\sum_{u,v,\bb_1} \diagRepulsiveLetter{B}_{1,1}(u,\ve[\iota]) |u-\bb_1|^2\\
&\qquad \times \big(\diagRepulsiveLetter{T}_{\bar 1,\underline 1,1}(\bb_1,v,u)\bar A^{-1,m}(\bb_1,v,x,y)
+\diagRepulsiveLetter{T}_{\bar 1,2,1}(\bb_1,v,u)\bar A^{-2,m}(\bb_1,v,x,y)\big),
	\end{align*}
and
	\begin{align*}
D^{\ssc[12],{\sss I},m}(x,y)=&
	2\sum_{v,\bb_1}\big(  \diagRepulsiveLetter{S}_{1,1,\underline 1,1}(\ve[\iota],\bb_1,v,0)C^{-1,m}(\bb_1,v,x,y)+\diagRepulsiveLetter{S}_{1,1,2,1}(\ve[\iota],\bb_1,v,0)C^{-2,m}(\bb_1,v,x,y)\big)\\
&+2\sum_{v,\bb_1}\frac {|\bb_1-\ve[\iota]|^2}{g_z}\big( \diagRepulsiveLetter{S}_{1,\bar 1,\underline 1,1}(\ve[\iota],\bb_1,v,0)\bar A^{-1,m}(\bb_1,v,x,y)+\diagRepulsiveLetter{S}_{1,\bar 1,2,1}(\ve[\iota],\bb_1,v,0)\bar A^{-2,m} (\bb_1,v,x,y)\big)\\
	&+3\sum_{u,v,\bb_1} \diagRepulsiveLetter{B}_{1,1}(u,\ve[\iota]) \big(\diagRepulsiveLetter{T}_{1,\underline 1,1}(\bb_1,v,u)C^{-1,m}(\bb_1,v,x,y)+\diagRepulsiveLetter{T}_{1,2,1}(\bb_1,v,u)C^{-2,m}(\bb_1,v,x,y)\big)\\
	&+\frac {3}{g_z}\sum_{u,v,\bb_1}|u-\ve[\iota]|^2\diagRepulsiveLetter{B}_{ 1,\bar 1}(u,\ve[\iota]) \\
&\qquad \times \big(
\diagRepulsiveLetter{T}_{1,\underline 1,1}(\bb_1,v,u)\bar A^{-1,m}(\bb_1,v,x,y)
+\diagRepulsiveLetter{T}_{1,2,1}(\bb_1,v,u)\bar A^{-2,m}(\bb_1,v,x,y)\big)\\
&+\frac {3}{g_z}\sum_{u,v,\bb_1} \diagRepulsiveLetter{B}_{1,1}(u,\ve[\iota]) |u-\bb_1|^2\\
&\qquad \times \big(\diagRepulsiveLetter{T}_{\bar 1,\underline 1,1}(\bb_1,v,u)\bar A^{-1,m}(\bb_1,v,x,y)
+\diagRepulsiveLetter{T}_{\bar 1,2,1}(\bb_1,v,u)\bar A^{-2,m}(\bb_1,v,x,y)\big).
\end{align*}

For $M=13$, we define
	\begin{align}
	D^{\ssc[13],{\sss I},m}(x,y)=&
2\sum_{u,v,\bb_1} \diagRepulsiveLetter{T}_{1,1,0}(v,u,\ve[\iota])  \left( \diagRepulsiveLetter{B}_{1,1}(\bb_1,u)C^{-2,m}(\bb_1,v,x,y)+\frac {|\bb_1|^2}{g_z}
\diagRepulsiveLetter{B}_{\bar 1,1}(\bb_1,u)  \bar A^{-2,m}(\bb_1,v,x,y)\right),\nnb
	D^{\ssc[13],{\sss II},m}(x,y)=&
2\sum_{v,\bb_1} \diagRepulsiveLetter{B}_{1,1}(v,\ve[\iota])
\left( \diagRepulsiveLetter{B}_{1,1}(\bb_1,u) C^{-2,m}(\bb_1,v,x,y)+\frac {|\bb_1-\ve[\iota]|^2}{g_z}
\diagRepulsiveLetter{B}_{1,\bar 1}(\bb_1,\ve[\iota])  \bar A^{-2,m}(\bb_1,v,x,y)\right)\nnb
&+3\sum_{u,v,\bb_1} \diagRepulsiveLetter{T}_{1,1,1}(v,u,\ve[\iota])  \left( \diagRepulsiveLetter{B}_{1,1}(\bb_1,u)C^{-2,m}(\bb_1,v,x,y)+\frac {|\bb_1|^2}{g_z}
\diagRepulsiveLetter{B}_{\bar 1,1}(\bb_1,u)  \bar A^{-2,m}(\bb_1,v,x,y)\right)\nnb
&+3\sum_{u,v,\bb_1} \frac {|u-\ve[\iota]|^2}{g_z} \diagRepulsiveLetter{T}_{1,1,\bar 1}(v,u,\ve[\iota])
\diagRepulsiveLetter{B}_{1,1}(\bb_1,u) \bar A^{-2,m}(\bb_1,v,x,y).\nn
	\end{align}
	
For $M=14$, we define
	\begin{align}
	D^{\ssc[14],{\sss I},m}(x,y)=&
	2\sum_{u,\bb_1} \diagRepulsiveLetter{T}_{\underline 1,1,0}(\bb_1, u,\ve[\iota]) \tilde G_{3,z}(\bb_1)(1+ |x-\bb_1|^2) G_z(x-\bb_1)\diagRepulsiveLetter{B}_{0,0}(w-u,y-u)\nn\\
	&+2\sum_{u,\bb_1} \diagRepulsiveLetter{T}_{2,1,0}(\bb_1, u,\ve[\iota]) \tilde G_{2,z}(\bb_1) |x-\bb_1|^2 G_z(x-\bb_1)\diagRepulsiveLetter{B}_{0,0}(w-u,y-u)\nnb
	&+2\sum_{u,\bb_1} \frac{|\bb_1|^2}{g_z} \diagRepulsiveLetter{D}_{1}(\bb_1)  \diagRepulsiveLetter{B}_{1,0}(u-\bb_1,\ve[\iota]-\bb_1)  \bar A^{-2,m}(\bb_1,u,x,y),\nnb
	D^{\ssc[14],{\sss II},m}(x,y)=&
	2\sum_{u,\bb_1} \diagRepulsiveLetter{T}_{\underline 1,1,0}(\bb_1, u,\ve[\iota]) \tilde G_{3,z}(\bb_1)(|\bb_1-\ve[\iota]|^2+ |x-\bb_1|^2) G_z(x-\bb_1)\diagRepulsiveLetter{B}_{0,0}(w-u,y-u)\nn\\
	&+3\sum_{u,\bb_1} \diagRepulsiveLetter{T}_{2,1,0}(\bb_1, u,\ve[\iota]) \tilde G_{2,z}(\bb_1) (1+|x-\bb_1|^2) G_z(x-\bb_1)\diagRepulsiveLetter{B}_{0,0}(w-u,y-u)\nnb
	&+3\sum_{u,\bb_1}\frac{|\bb_1|^2}{g_z}|\bb_1|^2 \diagRepulsiveLetter{D}_{1}(\bb_1) \diagRepulsiveLetter{B}_{1,0}(u-\bb_1,\ve[\iota]-\bb_1)   \bar A^{-2,m}(\bb_1,u,x,y).\nn
	\end{align}
	
For $M=15$, we define
	\begin{align*}
	D^{\ssc[15],{\sss I},0}(x,y)=&\frac 2 {g_z}\delta_{x,y}\sum_{w,\bb_1}\diagRepulsiveLetter{D}_{1}(\bb_1)
\diagRepulsiveLetter{S}_{1,0,0,\bar 1}(\ve[\iota],w,x,\bb_1) (|\bb_1|^2+|\bb_1-x|^2)\nn\\ 
	&+2\delta_{x,y}\sum_{w,u,\bb_1}\diagRepulsiveLetter{D}_{1}(\bb_1) \diagRepulsiveLetter{B}_{1,1}(u,\ve[\iota]) \diagRepulsiveLetter{T}_{0,0,\bar 1}(w-u,x-u,\bb_1-u) (|\bb_1|^2+|\bb_1-x|^2),\nnb 	
	D^{\ssc[15],{\sss I},-1}(x,y)=&4d D(x-y)\sum_{w,\bb_1}\diagRepulsiveLetter{D}_{1}(\bb_1) \diagRepulsiveLetter{T}_{1,0,0}(\ve[\iota],w,y)  G_z(x-\bb_1)(|\bb_1|^2+|\bb_1-x|^2)\nn\\
	&+4d D(x-y)\sum_{w,u,\bb_1}\diagRepulsiveLetter{D}_{1}(\bb_1) \diagRepulsiveLetter{B}_{1,1}(u,\ve[\iota]) \diagRepulsiveLetter{B}_{1,0}(w-u,y-u)  G_z(x-\bb_1)\nn \\[-5mm]
	&\qquad\qquad\qquad\qquad\qquad\qquad\times (|\bb_1|^2+|\bb_1-x|^2),\nnb
	D^{\ssc[15],{\sss I},-2}(x,y)=&2\sum_{w,\bb_1}\diagRepulsiveLetter{D}_{1}(\bb_1) \diagRepulsiveLetter{T}_{1,0,0}(\ve[\iota],w,y)  G_z(x-\bb_1)(|\bb_1|^2+|\bb_1-x|^2)\nn\\
	&+2\sum_{w,u,\bb_1}\diagRepulsiveLetter{D}_{1}(\bb_1) \diagRepulsiveLetter{B}_{1,1}(u,\ve[\iota]) \diagRepulsiveLetter{B}_{1,0}(w-u,y-u)  G_z(x-\bb_1)2(|\bb_1|^2+|\bb_1-x|^2),\nnb
	D^{\ssc[15],{\sss I},1}(x,y)=&D^{\ssc[15],{\sss I},-1}(y,x),\qquad \quad D^{\ssc[15],{\sss I},2}(x,y)=D^{\ssc[15],{\sss I},-2}(y,x),\nn
	\end{align*}
and
	\begin{align}
	D^{\ssc[15],{\sss II},0}(x,y)=&\frac 3 2 D^{\ssc[15],{\sss I},0}(x,y)+3 B^{\ssc[15],0}(x,y),\nn\\
	D^{\ssc[15],{\sss II},-1}(x,y)=&\frac 3 2 D^{\ssc[15],{\sss I},-1}(x,y)+6 d D(x-y)\sum_{w,\bb_1}\diagRepulsiveLetter{D}_{1}(\bb_1) \diagRepulsiveLetter{T}_{1,0,0}(\ve[\iota],w,y)  G_z(x-\bb_1)\nn\\
	&+6d D(x-y)\sum_{w,u,\bb_1} \diagRepulsiveLetter{D}_{1}(\bb_1) \diagRepulsiveLetter{B}_{1,1}(\bb_1,\ve[\iota]) \diagRepulsiveLetter{B}_{0,0}(w-u,y-u)  G_z(x-\bb_1),\nnb
	D^{\ssc[15],{\sss II},-2}(x,y)=&\frac 3 2 D^{\ssc[15],{\sss I},-2}(x,y)+3\sum_{w,\bb_1}\diagRepulsiveLetter{D}_{2}(\bb_1) \diagRepulsiveLetter{T}_{1,0,0}(\ve[\iota],w,y) \tilde G_{1,z}(x-\bb_1)\nn\\
	&+3\sum_{w,\bb_1} \diagRepulsiveLetter{P}_{1,\underline 1, 1,0,0}(\bb_1-x,-x,\ve[\iota]-x,w-x,y-x)  G_{3,z}(\bb_1)\nnb 
	&+3\sum_{w,u,\bb_1} \diagRepulsiveLetter{D}_{1}(\bb_1) \diagRepulsiveLetter{B}_{1,1}(\bb_1,\ve[\iota]) \diagRepulsiveLetter{B}_{0,0}(w-u,y-u)  G_z(x-\bb_1),\nnb
	D^{\ssc[15],{\sss II},1}(x,y)=&D^{\ssc[15],{\sss II},-1}(y,x),\qquad \quad D^{\ssc[15],{\sss II},2}(x,y)=D^{\ssc[15],{\sss II},-2}(y,x).\nn
	\end{align}
	
For $M=16$, we define
	\begin{align*}
	D^{\ssc[16],{\sss I},0}(x,y)=&2\delta_{x,y}\sum_{u,v,w,\bb_1}
	\diagRepulsiveLetter{B}_{\bar 1,1}(\bb_1,v)\diagRepulsiveLetter{T}_{1,1,0}(v,u,\ve[\iota])\nn\\[-3mm]
	&\qquad \qquad \qquad \times\diagRepulsiveLetter{T}_{0,0,1}(w-u,x-u,\bb_1-u) |\bb_1|^2 \nnb
    &+2\delta_{x,y}\sum_{u,v,w,\bb_1}\diagRepulsiveLetter{B}_{1,1}(\bb_1,v)\diagRepulsiveLetter{T}_{1,1,0}(v,u,\ve[\iota])\nn\\[-3mm]
	&\qquad \qquad \qquad \times\diagRepulsiveLetter{T}_{0,0,\bar 1}(w-u,x-u,\bb_1-u) |\bb_1-x|^2, \nnb
	D^{\ssc[16],{\sss I},-1}(x,y)=&4dD(x-y)\sum_{u,v,w,\bb_1}
	\diagRepulsiveLetter{B}_{\bar 1,1}(\bb_1,v)\diagRepulsiveLetter{T}_{1,1,0}(v,u,\ve[\iota])\diagRepulsiveLetter{B}_{0,0}(w-u,x-u)\nn\\[-3mm]
	&\qquad \qquad \qquad \qquad\times G_z(x-\bb_1) |\bb_1|^2, \nnb
&+4dD(x-y)\sum_{u,v,w,\bb_1} \diagRepulsiveLetter{B}_{1,1}(\bb_1,v)\diagRepulsiveLetter{T}_{1,1,0}(v,u,\ve[\iota])\diagRepulsiveLetter{B}_{0,0}(w-u,x-u)\\[-3mm]
	&\qquad \qquad \qquad \qquad\times G_z(x-\bb_1)|\bb_1-x|^2, \nnb
	D^{\ssc[16],{\sss I},-1}(x,y)=&2\sum_{u,v,w,\bb_1}
	\diagRepulsiveLetter{B}_{\bar 1,1}(\bb_1,v)\diagRepulsiveLetter{T}_{1,1,0}(v,u,\ve[\iota])\diagRepulsiveLetter{B}_{0,0}(w-u,x-u)\nn\\[-3mm]
	&\qquad \qquad \qquad \qquad\times\bar G_z(x-\bb_1) |\bb_1|^2 \nnb
    &+\sum_{u,v,w,\bb_1} \diagRepulsiveLetter{B}_{1,1}(\bb_1,v)\diagRepulsiveLetter{T}_{1,1,0}(v,u,\ve[\iota])\diagRepulsiveLetter{B}_{0,0}(w-u,x-u)\nn\\[-3mm]
	&\qquad \qquad \qquad \qquad\times G_z(x-\bb_1)|\bb_1-x|^2, \nnb
	D^{\ssc[16],{\sss I},1}(x,y)=&D^{\ssc[16],{\sss I},-1}(y,x),\qquad \quad D^{\ssc[16],{\sss I},2}(x,y)=D^{\ssc[16],{\sss I},-2}(y,x),\nn
	\end{align*}
and
	\begin{align*}
	D^{\ssc[16],{\sss II},0}(x,y)=&\frac {3}{g_z}\delta_{x,y}\sum_{u,v,w,\bb_1}
	\diagRepulsiveLetter{B}_{\bar 1,1}(\bb_1,v)
\diagRepulsiveLetter{T}_{1,1,0}(v,u,\ve[\iota])\nn\\[-3mm]
	&\qquad \qquad \qquad \times\diagRepulsiveLetter{T}_{0,0,1}(w-u,x-u,\bb_1-u) (|\bb_1|^2+1), \nnb
&+\frac {3}{g_z}\delta_{x,y}\sum_{u,v,w,\bb_1}
	\diagRepulsiveLetter{B}_{ 1,1}(\bb_1,v)
\diagRepulsiveLetter{T}_{1,1,0}(v,u,\ve[\iota])\nn\\[-3mm]
	&\qquad \qquad \qquad \times\diagRepulsiveLetter{T}_{0,0,\bar 1}(w-u,x-u,\bb_1-u) |\bb_1-x|^2, \nnb
	D^{\ssc[16],{\sss II},-1}(x,y)=&3dD(x-y)\sum_{u,v,w,\bb_1}
	\diagRepulsiveLetter{B}_{\bar 1,1}(\bb_1,v)\diagRepulsiveLetter{T}_{1,1,0}(v,u,\ve[\iota])\diagRepulsiveLetter{B}_{0,0}(w-u,x-u)\nn\\[-3mm]
	&\qquad \qquad \qquad \qquad\times G_z(x-\bb_1) (|\bb_1|^2+|\bb_1-x|^2+1), \nnb
	D^{\ssc[16],{\sss II},-2}(x,y)=&3\sum_{u,v,w,\bb_1}
	\diagRepulsiveLetter{B}_{\bar 1,1}(\bb_1,v)\diagRepulsiveLetter{T}_{1,1,0}(v,u,\ve[\iota])\diagRepulsiveLetter{B}_{0,0}(w-u,x-u)\nn\\[-3mm]
	&\qquad \qquad \qquad \qquad\times G_z(x-\bb_1) (|\bb_1|^2+|\bb_1-x|^2+1), \nnb
	D^{\ssc[16],{\sss II},1}(x,y)=&D^{\ssc[16],{\sss II},-1}(y,x),\qquad \quad D^{\ssc[16],{\sss II},2}(x,y)=D^{\ssc[16],{\sss II},-2}(y,x).\nn
	\end{align*}
This defines all the bounding diagrams to the contributions to the weighted diagrams in \eqref{two-weighted-Xis}, where the weight falls on the initial piece of the diagram.
\fi

\paragraph{Acknowledgements.}
This work was supported in part by the Netherlands Organisation for Scientific Research (NWO) through VICI grant 639.033.806 and the Gravitation {\sc Networks} grant 024.002.003. We thank
David Brydges, Takashi Hara and Gordon Slade for their constant encouragement, as well as for several stimulating discussions. This work builds upon the work by Takashi Hara and Gordon Slade, originally used for self-avoiding walk and percolation. We are indebted to Takashi for his help in the proof of Theorem \ref{thm-x-space}, which relies on an improved version of this analysis in \cite{Hara08} that Takashi shared with us in 2015. Early 2019, Takashi significantly helped us once more by clarifying the requirements for  \cite{Hara08} to apply for LTs and LAs (see Section \ref{sec-x-space}). The work of RF was partially performed while being employed by the Institute for Complex Molecular Systems at Eindhoven University of Technology.

\bibliographystyle{plain}
\bibliography{robert}


\end{document}